\colorlet{DarkRed}{red!50!black}
\colorlet{DarkGreen}{green!50!black}
\colorlet{DarkBlue}{blue!50!black}
\newtheorem{theorem}{Theorem}
\newtheorem{lemma}{Lemma}[section]
\newtheorem*{lemma*}{Lemma}
\newtheorem{corollary}[lemma]{Corollary}
\newtheorem*{corollary*}{Corollary}
\newtheorem{claim}[lemma]{Claim}
\theoremstyle{definition}
\newtheorem*{theorem*}{Theorem}
\newtheorem{definition}[lemma]{Definition}
\newtheorem*{rem*}{Remark}
\newcommand{\abs}[1]{\left| #1 \right|}
\newcommand{\norm}[1]{\left\| #1 \right\|}
\newcommand{\mA}{\boldsymbol{\mathit{A}}}
\newcommand{\mB}{\boldsymbol{\mathit{B}}}
\newcommand{\mC}{\boldsymbol{\mathit{C}}}
\newcommand{\mD}{\boldsymbol{\mathit{D}}}
\newcommand{\mI}{\boldsymbol{\mathit{I}}}
\newcommand{\mJ}{\boldsymbol{\mathit{J}}}
\newcommand{\mL}{\boldsymbol{\mathit{L}}}
\newcommand{\mM}{\boldsymbol{\mathit{M}}}
\newcommand{\mP}{\boldsymbol{\mathit{P}}}
\newcommand{\mQ}{\boldsymbol{\mathit{Q}}}
\newcommand{\mR}{\boldsymbol{\mathit{R}}}
\newcommand{\mS}{\boldsymbol{\mathit{S}}}
\newcommand{\mU}{\boldsymbol{\mathit{U}}}
\newcommand{\mV}{\boldsymbol{\mathit{V}}}
\newcommand{\mW}{\boldsymbol{\mathit{W}}}
\newcommand{\mX}{\boldsymbol{\mathit{X}}}
\newcommand{\mY}{\boldsymbol{\mathit{Y}}}
\newcommand{\mZ}{\boldsymbol{\mathit{Z}}}
\newcommand{\hmX}{\boldsymbol{\widehat{\mathit{X}}}}
\newcommand{\hmY}{\boldsymbol{\widehat{\mathit{Y}}}}
\newcommand{\tmM}{\boldsymbol{\widetilde{\mathit{M}}}}
\newcommand{\E}{\mathbb{E}}
\newcommand{\R}{\mathbb{R}}
\newcommand{\assign}{\leftarrow}
\newcommand{\Tr}{\mathrm{Tr}}
\newcommand{\tomato}{\top}
\newcommand{\res}{\mathrm{res}}
\newcommand{\lev}{\mathrm{lev}}
\newcommand{\clocal}{C_{\mathrm{local}}}
\newcommand{\mSC}{\mathbf{SC}}
\newcommand{\congest}{\textnormal{{\textsf{CONGEST}}}\xspace}
\newcommand{\FindSteady}{\textnormal{\textsc{FindSteady}}\xspace}
\newcommand{\Recover}{\textnormal{\textsc{Recover}}\xspace}
\newcommand{\DiffApx}{\textnormal{\textsc{DiffApx}}\xspace}
\newcommand{\ColumnApx}{\textnormal{\textsc{ColumnApx}}\xspace}
\newcommand{\LevApx}{\textnormal{\textsc{LevApx}}\xspace}
\newcommand{\Sparsify}{\textnormal{\textsc{SpectralSparsifyKX}}\xspace}
\newcommand{\RandWalkSchur}{\textnormal{\textsc{RandWalkSchur}}\xspace}
\newcommand{\DDSubset}{\textnormal{\textsc{DDSubset}}\xspace}
\newcommand{\Jacobi}{\textnormal{\textsc{Jacobi}}\xspace}
\newcommand{\RandomWalk}{\textnormal{\textsc{RandomWalk}}\xspace}
\newcommand{\RandomWalkSchur}{\textnormal{\textsc{RandomWalkSchur}}\xspace}
\newcommand{\Split}{\textnormal{\textsc{Split}}\xspace}
\newcommand{\Unsplit}{\textnormal{\textsc{Unsplit}}\xspace}
\newcommand{\ApproxSC}{\textnormal{\textsc{ApproxSC}}\xspace}
\newcommand{\solve}{\textnormal{\textsc{Solve}}\xspace}
\newcommand{\Solve}{\textnormal{\textsc{Solve}}\xspace}
\newcommand{\BuildChain}{\textnormal{\textsc{BuildChain}}\xspace}
\newcommand{\Eliminate}{\textnormal{\textsc{Eliminate}}\xspace}
\newcommand{\Ultrasparsify}{\textnormal{\textsc{UltraSparsify}}\xspace}
\newcommand{\PseudoinverseMulti}{\textnormal{\textsc{PseudoinverseMulti}}\xspace}
\newcommand{\MaxFlow}{\textnormal{\textsc{MaxFlow}}\xspace}
\newcommand{\Augmentation}{\textnormal{\textsc{Augmentation}}\xspace}
\newcommand{\Fixing}{\textnormal{\textsc{Fixing}}\xspace}
\newcommand{\Boosting}{\textnormal{\textsc{Boosting}}\xspace}
\newcommand{\FlowRounding}{\textnormal{\textsc{FlowRounding}}\xspace}
\newcommand{\MinCostFlow}{\textnormal{\textsc{MinCostFlow}}\xspace}
\newcommand{\Initialization}{\textnormal{\textsc{Initialization}}\xspace}
\newcommand{\Perturbation}{\textnormal{\textsc{Perturbation}}\xspace}
\newcommand{\Progress}{\textnormal{\textsc{Progress}}\xspace}
\newcommand{\Repairing}{\textnormal{\textsc{Repairing}}\xspace}
\newcommand{\ShortestPaths}{\textnormal{\textsc{ShortestPaths}}\xspace}
\newcommand{\SpecialVertices}{\textnormal{\textsc{SpecialVertices}}\xspace}
\newcommand{\SplitGraph}{\textnormal{\textsc{SplitGraph}}\xspace}
\newcommand{\poly}{\mathrm{poly}}
\newcommand{\load}{\mathrm{cong}}
\newcommand{\eps}{\epsilon}
\newcommand{\T}{\mathcal{T}}
\renewcommand{\d}{\delta}
\newcommand{\defeq}{\stackrel{\mathrm{\scriptscriptstyle def}}{=}}
\newcommand{\bs}{\backslash}
\newcommand{\pe}{\preceq}
\renewcommand{\O}{\widetilde{O}}
\author{
Sebastian Forster\thanks{University of Salzburg, Austria.}
\and
Gramoz Goranci\thanks{University of Toronto, Canada.}
\and
Yang P. Liu\thanks{Stanford University, USA.}
\and
Richard Peng\thanks{Georgia Institute of Technology, USA.}
\and
Xiaorui Sun\thanks{University of Illinois at Chicago, USA.}
\and
Mingquan Ye\thanks{University of Illinois at Chicago, USA.}
}
\title{
Minor Sparsifiers and the Distributed Laplacian Paradigm\thanks{Accepted to the 62nd Annual Symposium on Foundations of Computer Science (FOCS 2021)}
}
\date{}
\begin{document}
\maketitle
\thispagestyle{empty}
\begin{abstract}
We study distributed algorithms built around minor-based vertex sparsifiers, and give the first algorithm in the $\textsf{CONGEST}$ model for solving linear systems in graph Laplacian matrices to high accuracy. Our Laplacian solver has a round complexity of $O(n^{o(1)}(\sqrt{n}+D))$, and thus almost matches the lower bound of $\widetilde{\Omega}(\sqrt{n}+D)$, where $n$ is the number of nodes in the network and $D$ is its diameter.

We show that our distributed solver yields new sublinear round algorithms for several cornerstone problems in combinatorial optimization. This is achieved by leveraging the powerful algorithmic framework of Interior Point Methods (IPMs) and the Laplacian paradigm in the context of distributed graph algorithms, which entails numerically solving optimization problems on graphs via a series of Laplacian systems. Problems that benefit from our distributed algorithmic paradigm include exact mincost flow, negative weight shortest paths, maxflow, and bipartite matching on sparse directed graphs. For the maxflow problem, this is the first exact distributed algorithm that applies to directed graphs, while the previous work by [Ghaffari et al.~SICOMP'18] considered the approximate setting and works only for undirected graphs. For the mincost flow and the negative weight shortest path problems, our results constitute the first exact distributed algorithms running in a sublinear number of rounds. Given that the hybrid between IPMs and the Laplacian paradigm has proven useful for tackling numerous optimization problems in the centralized setting, we believe that our distributed solver will find future applications.

At the heart of our distributed Laplacian solver is the notion of spectral subspace sparsifiers of [Li, Schild FOCS'18]. We present a nontrivial distributed implementation of their construction by (i) giving a parallel variant of their algorithm that avoids the sampling of random spanning trees and uses approximate leverage scores instead, and (ii) showing that the algorithm still produces a high-quality subspace spectral sparsifier by carefully setting up and analyzing matrix martingales. Combining this vertex reduction recursively with both tree and elimination-based preconditioners leads to our algorithm for solving Laplacian systems. The construction of the elimination-based preconditioners is based on computing short random walks, and we introduce a new technique for reducing the congestion incurred by the simulation of these walks on weighted graphs.
\end{abstract}

\newpage
\tableofcontents
\newpage

\section{Introduction}
\label{sec:Intro}

The steady growth of data makes it
increasingly important to control and reduce the communication of algorithms.
The \congest model~\cite{Peleg00} is a widely studied model for
low communication algorithms on large graphs and sparse matrices. In this model, each vertex/variable occupies a separate machine,
and communicates in synchronous rounds by sending messages of length $O(\log n)$ to its neighbors given by the edges of the underlying graph.
This bandwidth restriction implies a polynomial lower bound in the round complexity for many fundamental graph problems~\cite{PR00,E06,SHKKNPPW12}.
While early work on efficient algorithms in this model has focused on the minimum spanning tree problem~\cite{GHS83,GKP98,KP98}, extensive work over the past few years has led to efficient algorithms for several more fundamental graph problems, such as approximate and exact single-source shortest paths~\cite{N14,HKN16,E20,GL18,FN18,CM20}, approximate and exact all-pairs shortest paths~\cite{HW12,HNS17,ARKP18,EN18,LPP19,AR19,BN19,AR20}, approximate and exact minimum cut~\cite{GK13,NS14,DHNS19,GNT20,DEMN20}, approximate maximum flow~\cite{GKKLP15}, bipartite maximum matching~\cite{AKO18}, triangle counting~\cite{IG17,CPZ19,CS19}, and single-source reachability~\cite{GU15,LJS19}.

A major development in sequential and parallel graph algorithms is the development of hybrid algorithms that combine numerical and combinatorial building blocks. This line of work was initiated by the seminal work of Spielman and Teng \cite{ST14}, which showed that a Laplacian linear system on a graph can be solved in nearly linear time. Here, the Laplacian of a weighted undirected graph $G = (V, E)$ is defined as $\mL(G) = \mD(G) - \mA(G)$, where $\mD(G)$ is the diagonal weighted degree matrix, and $\mA(G)$ is the weighted adjacency matrix. Equivalently, if $\vec{w} \in \R^m_{>0}$ are the edge weights,
\[ \mL\left( G \right)_{uv} =
\begin{cases}
\sum_{(u,z) \in E} \vec{w}_{uz} & \text{if $u = v$},\\
-\vec{w}_{uv} & \text{otherwise}.
\end{cases}. \]
Since then, there has been extensive work
towards giving more efficient and simpler Laplacian system solvers sequentially
\cite{KMP10,KMP11,KOSZ13,CKMPPRX14,KS16}, as well as parallel versions
\cite{PS14,KLPSS16}. These results have in turn been used to give the state-of-the-art
runtimes for a variety of graph problems, including exact maximum flows, bipartite
matchings, and mincost flows \cite{M16,LS20,LS20:arxiv,CMSV17,AMV20,matching},
approximate maximum flows \cite{KLOS14,Sherman13,P16}, and approximate parallel
shortest paths \cite{Li20,AndoniSZ20}. Ideas from the latter works have found
application in the distributed setting, giving nearly optimal algorithms for
approximate maxflows \cite{GKKLP15} and approximate single-source shortest paths \cite{BKKL17} in
the \congest~model.

Our main result is an algorithm for solving graph Laplacian linear systems
in the \congest~model in $O(n^{o(1)}(\sqrt{n}+D))$ rounds (Theorem \ref{thm:Main}), where $n$ is the number of nodes in the underling graph and $D$ is its diameter. This nearly matches a lower bound of $\widetilde{\Omega}(\sqrt{n}+D)$, which we show for completeness in Appendix~\ref{sec:LowerBound} by reduction to \cite{SHKKNPPW12}.
\begin{restatable}{theorem}{Main}
\label{thm:Main}
There is an algorithm in the \congest~model that on a weighted graph $G = (V, E, \vec{w})$
with $n$ vertices and diameter $D$, vector $\vec{b}$ on the vertices of $G$, and error $\epsilon < 0.1$, produces in $O(n^{o(1)}(n^{1/2} + D)\log(1/\eps))$ rounds a vector $\vec{x}$  distributed over the vertices such that
\[ \left\|\vec{x} - \mL(G)^{\dag} \vec{b}\right\|_{\mL(G)} \le \epsilon \cdot \left\|\vec{b}\right\|_{\mL(G)}. \]
\end{restatable}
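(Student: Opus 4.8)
The plan is to follow the classical recursive preconditioning framework for Laplacian solvers, adapted so that every primitive can be implemented in $O(n^{o(1)}(\sqrt{n}+D))$ \congest rounds. The backbone is preconditioned Chebyshev (or Richardson) iteration: to solve $\mL(G)\vec{x} = \vec{b}$ to accuracy $\eps$, it suffices to apply $O(\log(1/\eps))$ rounds of an iterative method, each round requiring one matrix-vector product with $\mL(G)$ — which costs $O(D)$ rounds in \congest by aggregating along a BFS tree — and one application of a preconditioner $\mZ \approx \mL(G)^{\dag}$. So the whole problem reduces to building a preconditioner that is a good spectral approximation of $\mL(G)^{\dag}$ and that can be applied in $O(n^{o(1)}(\sqrt{n}+D))$ rounds. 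I would build this preconditioner recursively by alternating two vertex-count-reducing operations: (i) the subspace/minor sparsifier of Li--Schild, in the parallel, spanning-tree-free variant developed earlier in the paper and analyzed via matrix martingales, which on a graph with $n$ vertices produces a Schur-complement-type graph on a subset of $\sim n/k$ vertices that is a $(1+o(1))$-spectral approximation and is sparse; and (ii) elimination/Gaussian-elimination-based preconditioning (approximate Cholesky) on the eliminated vertices, so that a solve on $G$ reduces to a solve on the smaller graph plus cheap local work.

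The key steps, in order: First, set up the outer iteration — preconditioned Chebyshev with the recursively-constructed $\mZ$ — and reduce correctness and the $O(\log(1/\eps))$ factor to the claim that $\mZ$ is an $O(1)$-quality preconditioner applicable in the target round bound. Second, implement the distributed primitives for one level of recursion: a matrix-vector product with $\mL(G)$ (and with sparsified/Schur'd versions) in $O(D)$ rounds; the spanning-tree-free subspace sparsifier of Li--Schild (using the parallel martingale analysis already established) to pass from $n$ vertices to roughly $n/k$ vertices while keeping the graph sparse and controlling the number of iterations the inner solve must be called; and the elimination step on the removed vertices. Third, choose the recursion parameters: with a branching factor $k = n^{o(1)}$ and recursion depth $O(1/o(1)) = \omega(1)$ but still $n^{o(1)}$, the total round cost telescopes. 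At level $i$ the graph has $\approx n/k^i$ vertices, but crucially each level's solve must be invoked a sub-polynomial number of times by the level above (this is where the quality of the sparsifier, $1 + 1/\mathrm{polylog}$, and Chebyshev acceleration matter — the iteration count per level is $n^{o(1)}$), and each invocation costs $O(D)$ for communication plus the cost of the smaller instance; summing the geometric-ish series over all levels gives $O(n^{o(1)}(\sqrt{n}+D))$. The $\sqrt{n}$ term enters because at the bottom of the recursion, or for the densest intermediate graphs, one falls back on a low-congestion routing / Part-wise aggregation primitive (à la Ghaffari--Haeupler) whose cost on a graph of $n$ vertices is $\widetilde{O}(\sqrt{n}+D)$, and this dominates the per-level communication cost after sparsification has made the graphs sparse.

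I would also need to handle the usual technical wrinkles: the Schur complement and the sparsifier are only approximate, so one should work with the ``split'' graph and track that errors compose multiplicatively across $n^{o(1)}$ levels each of quality $1 + n^{-o(1)}$, which still yields an $O(1)$ overall preconditioner after a final Chebyshev/Richardson clean-up; and one must verify that all the graphs produced during the recursion (Schur complements, sparsifiers, eliminated blocks) can be represented and communicated in the \congest model on the original network $G$ — i.e., every vertex of every recursive instance is hosted on a real machine and every edge is simulated by a short path in $G$ with low congestion. The main obstacle is this last point together with the round-complexity accounting: making the parallel, tree-free Li--Schild sparsifier run in the \congest model requires that the leverage-score estimation, the random sampling, and the resulting Schur complement all be computed via operations that are themselves Laplacian solves on slightly smaller or sparser graphs (a chicken-and-egg situation resolved by the recursion and by bootstrapping from a crude initial solver), and ensuring the whole scheme closes with only an $n^{o(1)}$ blow-up — rather than $n^{\Omega(1)}$ — in both the iteration count and the per-level communication is the delicate part of the argument.
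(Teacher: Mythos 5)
Your proposal captures the paper's approach in its essentials: preconditioned Chebyshev for the $O(\log(1/\eps))$ factor, a recursively-built preconditioner alternating between the tree-free Li--Schild minor sparsifier (analyzed via matrix martingales) and sparsified Cholesky elimination, the chicken-and-egg circularity (leverage scores require Laplacian solves) resolved by bootstrapping, and the invariant that every recursive graph must live as a low-congestion minor of $\overline{G}$ so that per-level communication is $\widetilde{O}(\sqrt{n}+D)$. Two details are worth sharpening, though, because they are load-bearing in the paper and your sketch is vague on both.

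First, the paper uses three reduction primitives, not two. Besides \textsc{Eliminate} and \textsc{ApproxSC}, the device that actually breaks the circularity is \textsc{UltraSparsify} (Lemma~\ref{lem:UltraSparsify}): a tree-based reduction that shrinks the vertex count by a factor $k=2^{(\log n)^{2/3}}$ at the cost of a multiplicative spectral error $k$, which is then paid for by $O(\sqrt{k})=n^{o(1)}$ rounds of preconditioned Chebyshev. Your phrase ``a crude initial solver'' gestures at this, but without naming the ultrasparsifier it is not clear how you intend to obtain the inner Laplacian solves needed by the minor sparsifier without infinite regress — the elimination step alone cannot do it, because \textsc{Eliminate} blows up congestion by $\mathrm{poly}\log n$ per step and so can only run for $d=(\log\log n)^2$ steps before \textsc{ApproxSC} (which in turn needs solves) must be invoked.

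Second, your account of where the $\sqrt{n}$ comes from is slightly misattributed. It is not a bottom-of-recursion or dense-graph fallback; it is a uniform cost paid at every level, coming from the $\rho$-minor-distribution communication primitive (Lemma~\ref{lemma:Communication}): any one round of computation on a graph that is $\rho$-minor distributed over $\overline{G}$ (broadcast or aggregation within supervertices) is simulated by sampling $\Theta(\sqrt{\overline{n}})$ ``special vertices'' to chop the spanning trees into low-diameter pieces and then routing the rest globally over a BFS tree, giving $O(\rho\sqrt{\overline{n}}\log\overline{n}+D)$ rounds. This is the distributed-$N$-node-cluster-graph machinery of Ghaffari--Karrenbauer--Kuhn--Lenzen--Patt-Shamir, not the Ghaffari--Haeupler part-wise aggregation you cite. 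The remaining bookkeeping you would need — that the recursion depth is $O(\log n/(\log\log n)^2)$, that congestion doubles per level and so stays $2^{O(\log n/(\log\log n)^2)}=n^{o(1)}$, and that the $(\eps^{-6}\log^{14}n)^d$ overhead of \textsc{Eliminate} is $n^{o(1)}$ precisely because $d=(\log\log n)^2$ and $\eps=(\log n)^{-10}$ — is routine once these two structural pieces are in place.
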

\begin{restatable}{theorem}{LowerBound}
\label{thm:LowerBound}
In the \congest~model of computation, solving Laplacian systems to accuracy $ \epsilon \leq \tfrac{1}{2} $ requires at least $\widetilde{\Omega}(n^{1/2} + D)$ rounds of communication.
\end{restatable}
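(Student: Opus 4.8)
The plan is to prove the two terms separately and combine them via $\max(a,b)=\Theta(a+b)$: an $\Omega(D)$ bound comes from a direct indistinguishability argument, while the $\widetilde{\Omega}(n^{1/2})$ bound comes from a two‑party communication reduction built on the hard‑instance graphs of \cite{SHKKNPPW12}. Throughout I write $\mathrm{Reff}_G(s,t)=(\vec{1}_s-\vec{1}_t)^{\top}\mL(G)^{\dag}(\vec{1}_s-\vec{1}_t)$, and I use the following observation: if $\vec{b}=\vec{1}_s-\vec{1}_t$ on a connected $G$ and $\vec{x}$ satisfies $\norm{\vec{x}-\mL(G)^{\dag}\vec{b}}_{\mL(G)}\le\eps\norm{\mL(G)^{\dag}\vec{b}}_{\mL(G)}$, then $\norm{\mL(G)^{\dag}\vec{b}}_{\mL(G)}=\sqrt{\mathrm{Reff}_G(s,t)}$ and, since $\vec{1}_s-\vec{1}_t\perp\ker\mL(G)$, Cauchy--Schwarz in the $\mL(G)^{\dag}$/$\mL(G)$ geometry gives $|(\vec{x}_s-\vec{x}_t)-\mathrm{Reff}_G(s,t)|\le\eps\cdot\mathrm{Reff}_G(s,t)$. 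Thus a correct output reveals $\mathrm{Reff}_G(s,t)$ up to a factor $(1\pm\eps)$, so a constant‑factor gap in $\mathrm{Reff}$ exceeding $\tfrac{1+\eps}{1-\eps}=3$ (at $\eps=\tfrac12$) is distinguishable from the output at $s$ and $t$.

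For the $\Omega(D)$ term I take $G$ to be the length‑$D$ path $v_0-v_1-\cdots-v_D$ (hanging the remaining $n-D-1$ vertices off $v_0$ as a shallow tree), give every edge resistance $1/D$ except the central edge $e^{\star}=(v_{\lceil D/2\rceil-1},v_{\lceil D/2\rceil})$, and compare instance $\mathcal{A}$ with $r(e^{\star})=1$ to instance $\mathcal{B}$ with $r(e^{\star})=10$ (all weights lie between $1/10$ and $D$, so $O(\log n)$‑bit messages suffice). With $\vec{b}=\vec{1}_{v_0}-\vec{1}_{v_D}$ we have $\mathrm{Reff}(v_0,v_D)=\sum_j r(e_j)\in(1,2)$ in $\mathcal{A}$ and $\in(10,11)$ in $\mathcal{B}$, so by the observation with $\eps=\tfrac12$ any correct output has $\vec{x}_{v_0}-\vec{x}_{v_D}\in[\tfrac12,3]$ in $\mathcal{A}$ and $\in[4,\tfrac{33}{2}]$ in $\mathcal{B}$ — disjoint ranges. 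If the algorithm used fewer than roughly $D/2$ rounds, the machines at $v_0$ and at $v_D$ would each see an identical local neighborhood in $\mathcal{A}$ and $\mathcal{B}$ (neither reaches an endpoint of $e^{\star}$), hence produce the same output distribution; then $\vec{x}_{v_0}-\vec{x}_{v_D}$ would have the same distribution in both instances, which cannot concentrate in two disjoint ranges. Hence $\Omega(D)$ rounds are required, which also covers $D=\Omega(n^{1/2})$.

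For the $\widetilde{\Omega}(n^{1/2})$ term I reduce from the two‑party problem underlying \cite{SHKKNPPW12}: their hard family consists of $n$‑vertex graphs of diameter $\O(n^{1/2})$ with a vertex bipartition $(V_A,V_B)$ crossed by only $\O(n^{1/2})$ edges, carrying an encoded \textsc{Set-Disjointness}‑type instance that requires $\widetilde{\Omega}(n)$ bits of communication and is a yes‑instance exactly when two designated terminals $s\in V_A$, $t\in V_B$ are joined only by ``long'' ($\Theta(n^{1/2})$‑edge) routes. I re‑weight the edges (keeping weights $\poly(n)$) so that the long routes are always present — keeping $G$ connected — and together contribute effective resistance exactly $1$, while every ``short'' route present in a no‑instance has total resistance at most $\tfrac1{10}$; then $\mathrm{Reff}(s,t)=1$ on yes‑instances and $\le\tfrac14$ on no‑instances. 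Given a hypothetical $R$‑round \congest algorithm solving Laplacian systems to accuracy $\tfrac12$, Alice and Bob simulate it on this instance with $\vec{b}=\vec{1}_s-\vec{1}_t$: each round costs $\O(n^{1/2})$ messages across the cut, i.e.\ $\O(R\,n^{1/2})$ bits total, after which they exchange the scalars $\vec{x}_s$ and $\vec{x}_t$; by the observation, $\vec{x}_s-\vec{x}_t$ lies in $[\tfrac12,\tfrac32]$ on yes‑instances and is at most $\tfrac38$ on no‑instances, so they decide the instance. Hence $\O(R\,n^{1/2})\ge\widetilde{\Omega}(n)$, i.e.\ $R=\widetilde{\Omega}(n^{1/2})$. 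Combined with the $\Omega(D)$ bound, and since this family has diameter $\O(n^{1/2})$, we get $\widetilde{\Omega}(n^{1/2}+D)$.

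The step I expect to be the main obstacle is extracting something usable from the \emph{relative}, $\mL$‑norm error guarantee. A Cauchy--Schwarz estimate shows that any ``potential‑difference readout'' $\vec{x}_p-\vec{x}_q$ only approximates $(\vec{1}_s-\vec{1}_t)^{\top}\mL^{\dag}(\vec{1}_p-\vec{1}_q)$ to an additive $\eps\sqrt{\mathrm{Reff}(s,t)\,\mathrm{Reff}(p,q)}$, which carries no useful precision unless $\{p,q\}=\{s,t\}$ are themselves the current terminals. This forces both constructions to place the toggled structure electrically in parallel with $s$ and $t$ and physically far from them, and to arrange a \emph{constant‑factor} gap in $\mathrm{Reff}(s,t)$ rather than merely an additive one. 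A secondary point is to verify that the bit/cut accounting of \cite{SHKKNPPW12} — $\widetilde{\Theta}(n)$ encoded bits against an $\O(n^{1/2})$‑edge cut — transfers to the electrical reformulation; it does, because ``a short route is present'' and ``$\mathrm{Reff}(s,t)$ is small'' are interchangeable.
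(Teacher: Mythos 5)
Your Cauchy--Schwarz observation is a genuine simplification over the paper's error handling. The paper first boosts the $\epsilon\le\tfrac12$ solver to accuracy $1/\poly(n)$ via $O(\log n)$ rounds of iterative refinement, and then converts the $\mL$-norm error to an $\ell_\infty$ error through a crude $\lambda_2\ge1/\poly(n)$ eigenvalue bound. Your argument bypasses both steps: for $\vec{b}=\vec{1}_s-\vec{1}_t$, since $\vec{b}\perp\ker\mL$,
\[
\left|\left(\vec{x}_s-\vec{x}_t\right)-\res_G(s,t)\right|
=\left|\vec{b}^\tomato\left(\vec{x}-\mL^{\dag}\vec{b}\right)\right|
\le\left\|\vec{b}\right\|_{\mL^{\dag}}\left\|\vec{x}-\mL^{\dag}\vec{b}\right\|_{\mL}
\le\eps\cdot\res_G(s,t),
\]
which directly yields a $(1\pm\eps)$-multiplicative estimate of $\res(s,t)$ from a single call to the low-accuracy solver. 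This is cleaner and should be preferred.

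The price of giving up the boosting step is that you must exhibit a constant-factor gap in $\res(s,t)$ between yes- and no-instances, and this is where your $\widetilde{\Omega}(\sqrt{n})$ argument has a gap. You propose to re-open the two-party reduction of \cite{SHKKNPPW12}, re-engineer the weights so long routes contribute $\res=1$ while any short route present has resistance $\le1/10$, and re-verify the bit/cut accounting on the reweighted electrical instance; you assert, but do not verify, that this all goes through. That is substantially more work than is needed. The paper instead black-boxes the $s$--$t$ connectivity verification lower bound of \cite{SHKKNPPW12} and reweights with $r_e\in\{1,n^{3}\}$ according to $H$-membership, giving $\res\le n-1$ when connected vs.\ $\res\ge n^{3}/k\ge n$ when disconnected (Nash--Williams, $k\le n^2$ cut edges) --- only a $1+O(1/n)$ gap, hence their need for boosting. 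But you can keep your Cauchy--Schwarz estimate and still black-box the connectivity lower bound: just take $r_e=n^{10}$ for $e\notin E(H)$, so the disconnected case has $\res\ge n^{10}/n^{2}=n^{8}\gg 3(n-1)$, a factor-$\poly(n)$ gap that your multiplicative estimate distinguishes trivially. There is no need to re-derive the communication-complexity argument or fine-tune the hard family. (Your separate $\Omega(D)$ path-based indistinguishability argument is correct but redundant, since the black-boxed bound already carries the $D$ term.)
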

We give several applications towards designing hybrid algorithms for graph problems in the \congest~model. Specifically, by combining our Laplacian solver with interior point methods \cite{M16,CMSV17}, we obtain the first algorithms for exact computation of maximum flows, bipartite matchings, and negative-weight shortest paths that run in a sublinear number of rounds in the \congest~model on sparse graphs (Section~\ref{sec:Implications}).

At a high level, we build our \congest~model algorithm by first building a parallel/PRAM algorithm for solving Laplacian systems that only works with minors\footnote{In fact, our algorithm deals with $\rho$-minors~(Definition~\ref{def:Minor}), which can be thought of as minors with congestion $\rho$, where $\rho \geq 1$ is a parameter. However, for the sake of simplicity, we refer to them as minors throughout the informal discussions of our techniques in the introduction and overview.} of the
original graph, and show that one round of communication necessary in our algorithm (such as matrix vector multiplication)
between neighbors on a minor can be simulated in the original graph in $\tilde{O}(\sqrt{n}+D)$ rounds.
Previous methods for computing low stretch spanning trees and approximate maximum flows \cite{GKKLP15} use a similar notion
of considering a graph on clusters of nodes in the original graph, however -- to the best of our knowledge -- we are the first to work
explicitly with the notion of minors. We are optimistic that our approach based on minor vertex sparsifiers may
provide a general framework for designing \congest~model algorithms with
near optimal complexities.

The main backbone of our algorithm for solving Laplacian systems that works
with minors only is the parallel Laplacian solver of \cite{KLPSS16}. This solver relies on sparse spectral approximations of the Schur complements of an $n \times n$ matrix, which can be thought of as a smaller matrix that preserves the solutions of linear systems on a subset of coordinates in $[n]$. At a high level, the algorithm eliminates onto (sparse) Schur complements of the original graph while adding edges, leading to graphs that are not minors of the original graph.
To resolve this, a major contribution of this paper
is an efficient parallel algorithm to
construct a spectral sparsifier for a Schur complement which is a minor of the
original graph. While the existence of such a minor spectral sparsifier was known
\cite{LS18}, the algorithm required sampling a random spanning tree, and hence could
not be implemented in parallel. We instead show that a large batch of edges may be
independently sampled at the same time using leverage scores (Definition \ref{def:lev}), providing an arguably
simpler and more direct analysis than \cite{LS18}.

\subsection{Applications to Flow Problems}
We briefly discuss how our Laplacian solver can be applied to achieve results on maximum flow, bipartite matching, mincost flows, and negative weight shortest paths, and compare to previous complexities. We achieve our bounds by combining our Laplacian system solver in Theorem \ref{thm:Main} with recent interior point methods of \cite{M16,CMSV17}.

For unit capacity graphs, the runtimes we achieve in Theorems \ref{thm:max-flow},
\ref{thm:min-cost-flow}, and \ref{thm:shortest-paths} for the maximum flow problem,
mincost flow, and negative weight shortest path problems are
\[ O(m^{3/7+o(1)}(n^{1/2}D^{1/4} + D)). \]
For sparse unweighted graphs with $m = O(n)$, and polynomially small diameter $D =
n^{2/7-\Omega(1)}$, the algorithms in Theorems \ref{thm:max-flow},
\ref{thm:min-cost-flow}, and \ref{thm:shortest-paths} run in a \emph{sublinear} number
of rounds, i.e. $n^{1-\Omega(1)}$ rounds. To our knowledge, these are the first exact
sublinear round algorithms for unit maximum flows, bipartite matchings, and negative
weight shortest paths for any regime of diameter $D$. Our distributed maximum flow algorithm extends to directed graphs, while the previous work by Ghaffari et al.~\cite{GKKLP15} considered the approximate setting and works only for undirected graphs. In fact, for the maximum flow
problem, our results are -- to the best of our knowledge --
the fastest known in the low-diameter regime; see Section \ref{sec:Related} for further discussion.

At a high level, our runtime comes from two pieces. The results of \cite{M16,CMSV17} show that in $\O(m^{3/7})$ rounds of an interior point method, in each round which involves solving a Laplacian system on the underlying graph with edge weights/resistances, we can reduce the amount of residual flow to $\O(m^{3/7})$. The residual flow can routed combinatorially with $\O(m^{3/7})$ rounds of an augmenting paths or shortest paths computation. Therefore, the total number of rounds required to implement the interior point method is $O(m^{3/7+o(1)}(n^{1/2}+D))$ using Theorem \ref{thm:Main}, and the shortest path computations can be done in $\O(m^{3/7}(n^{1/2}D^{1/4} + D))$ rounds using the results of \cite{CM20}. Combining these gives the result.

\subsection{Related Work}
\label{sec:Related}

\paragraph{Distributed Graph Algorithms}

Previous works in distributed algorithms most related to our result and the corresponding techniques are
the algorithms for simulating \textbf{random walks} and generating \textbf{random spanning
trees}~\cite{DNPT13,GB20}. On unweighted, undirected graphs with diameter $D$,
the algorithms by Das Sarma, Nanongkai, Pandurangan, and Tetali~\cite{DNPT13}
generate an $\ell$-step random walk in $\O(\sqrt{\ell D} + D)$ rounds,
and a random spanning tree in $\O(\sqrt{mD})$ rounds, respectively.
There are well known connections between sampling a large number of random
walks and Laplacian solving~\cite{DST17,DGT17}. However, it is not clear how to utilize these methods in the context of our algorithms, since many of the intermediate graph structures we deal with involve dealing with weighted random walks, which in turn leads to congestion issues when trying to simulate these walks in the distributed setting. We discuss how to overcome such obstacles in Section~\ref{sec:Overview}. 

There has also been work in the distributed setting relating to \textbf{spectral graph properties}.
This in particular includes distributed sparsification~\cite{KX16}, PageRank \cite{SMPU15}, Laplacian solvers in well-mixing settings \cite{GB20}, and expander decomposition~\cite{CPZ19,CS19,CS20}.

\textbf{Continuous optimization methods} have been used to give the state-of-the-art distributed algorithms for approximate max-flow~\cite{GKKLP15} and approximate transshipment~\cite{BKKL17}.
Note however that these approximation algorithms are tailored to undirected graphs and their running time depends polynomially on $ 1/\epsilon $ (for a desired accuracy of $ \epsilon $).
Our max-flow routine also works on directed graphs and only depends polylogarithmically on $ 1/\epsilon $, which allows for computing a high-accuracy solution and rounding it to an exact one.
Furthermore, these prior approaches for $\ell_\infty$ and $\ell_1$-norm minimization, respectively, do not carry over to $\ell_2$-norm minimization (as would be needed for solving Laplacian systems) as it is not known how to efficiently sample from a collection of trees when using an $\ell_2$ variant of tree-based graph approximations to build oblivious routing schemes. 

In addition to these works, there are many papers related to the three problems we solve by applying our distributed Laplacian solver.
There have been numerous results on exact and approximate \textbf{shortest path} computation in the past decade~\cite{HW12,N14,HKN16,EN19-SICOMP,E20,HNS17,GL18,ARKP18,EN18,FN18,LPP19,EN19-SPAA,AR19,BN19,AR20,CM20}.
For the single-source shortest paths (SSSP) problem all of these works assume non-negative or positive edge weights.
It is well-known that the SSSP problem in presence of negative edge weights can be solved in $O(n)$ rounds by a variant of the Bellman-Ford algorithm. To the best of our knowledge no algorithm that improves upon this bound has been formulated (or implied) in the \congest~model so far.

For \textbf{distributed computations of maximum flows},
Ghaffari, Karrenbauer, Kuhn, Lenzen, and Patt-Shamir \cite{GKKLP15}
designed an algorithm that returns an $ (1 + \epsilon)$-approximation
in $O( (\sqrt{n} + D) n^{o(1)} \epsilon^{-3} )$ rounds.
In terms of exact algorithms, we are not aware of any paper claiming a sublinear number of rounds in the \congest~model (cf.~\cite{GKKLP15} for a detailed discussion of maximum flow for other distributed models).
To the best of our knowledge, we need to compare ourselves with the following two approaches:
\begin{itemize}
\item The problem can trivially be solved in $ O (m + D) $ rounds by collecting the whole graph topology in a single node and then solving the problem with internal computation.
\item The Ford-Fulkerson algorithm~\cite{FF56} takes $ |f^*| $ iterations (where $
|f^*| $ is the value of a maximum flow) and the running time in each iteration is
dominated by the time needed to perform an $s$-$t$ reachability computation (on a
directed graph).
The latter problem can be solved in $ \tilde O (\sqrt{n} D^{1/4} + D) $~\cite{GU15} or $ \tilde O
(\sqrt{n} + n^{1/3+o(1)} D^{2/3}) $ rounds, respectively, which yields total running time of $
\tilde O (|f^*| (\sqrt{n} D^{1/4} + D)) $ or $
\tilde O (|f^*| (\sqrt{n} + n^{1/3+o(1)} D^{2/3})) $ rounds, respectively.
In unit-capacity (``unweighted'') graphs, where $ |f^*| \leq n $, this gives a total running time of $ \tilde O (n^{3/2} D^{1/4} + nD) $ or $ \tilde O (n^{3/2} + n^{4/3+o(1)} D^{2/3}) $, respectively.
\end{itemize}

Due to a well-known reduction to maximum flow, the \textbf{bipartite maximum matching} problem
is intimately connected to the maximum flow problem.
In the \congest~model, the fastest known algorithm for computing a bipartite maximum
matching (of an unweighted graph) takes $ O (n \log n) $ rounds~\cite{AKO18} -- more
precisely the algorithm takes $ O (s^* \log s^*) $ rounds, where $ s^* $ is the size
of a maximum matching.
Obtaining a subquadratic maximum matching algorithm for networks of arbitrary topology
is a major open problem~\cite{AK20}.
In addition, there are numerous works on computing approximate matchings, which are
usually based on computing a maximal matching, using the framework of Hopcroft and
Karp~\cite{HK73}, or rounding a fractional matching (cf.~\cite{AK20} for an overview
on approximate matching algorithms in the \congest~model).

\paragraph{Laplacian Solvers}
Our algorithm combines both tree-based ultrasparsification
algorithms \cite{ST14,KMP10,KMP11,CKMPPRX14}
and elimination-based algorithms that utilize Schur
complements \cite{KLPSS16,KS16,K17:thesis}.
Both types of algorithms were originally developed for the sequential model.
The issue of round complexity was previously addressed in parallel
Laplacian solving~\cite{BGKMPT14,PS14}.

We believe that a variant of~\cite{BGKMPT14} tailored to the \congest model gives a round complexity of around $n^{3/4} + Dn^{1/4}$ as opposed to the bound in Theorem \ref{thm:Main} to because the depth of the parallel algorithm of \cite{BGKMPT14} is more than polylogarithmic.
The polylogarithmic depth parallel algorithm from~\cite{PS14} is more difficult
to convert to the \congest~setting because it explicitly adds edges to the graph,
which causes increased congestion.

The outer layer recursion of our algorithm is akin to the recursive construction
of solvers and preconditioners present in Laplacian
solving~\cite{P13:thesis,KLPSS16}, approximate max-flow~\cite{P16},
and matrix sampling~\cite{CP15,CLMMPS15,CMM17}.

Parallel Laplacian solvers and spectral algorithms have also motivated
the study of (nearly) log space variants of these
algorithms~\cite{MurtaghRSV17,MurtaghRSV19,AKMPSV19:arxiv}.
It's an intriguing question to formally connect these low space
algorithms with distributed algorithms, both of which stem from
works on low iteration count algorithms.

\paragraph{Vertex Sparsification}
Critical to our result is the construction of minor based Schur complements
by Li and Schild~\cite{LS18}.
Minor based sparsification has been studied for distances~\cite{CGH16,KrauthgamerNZ14},
and implicitly for cuts via hierarchical routing schemes~\cite{Racke02,RackeST14}.
A more systematic treatment of uses of such sparsifiers,
in dynamic graph algorithms, can be found in~\cite{Goranci19:thesis}.
Some of the cut preserving vertex sparsifiers~\cite{M09,LM10,CLLM10,EGKRTT10,KrauthgamerR13},
as well as their recent variations in small cut settings~\cite{CDLKLPSV20} produce either minors or probability distribution over minors. 

\section{Preliminaries}
\label{sec:prelimes}

We start by describing general notation we use throughout the paper.
\paragraph{General notation.}
Given a symmetric matrix $\mM$, we let $\|\mM\|_2 = \max_{\|x\|_2 = 1} |x^\top\mM x|$ denote the maximum absolute value of any eigenvalue. For a vector $v$ and matrix $\mM$, we define $\|v\|_{\mM} \defeq
\sqrt{v^\top \mM v}$. For positive real numbers $a, b$ we say that $a \approx_\eps b$ if $\exp(-\eps)a \le b \le \exp(\eps)a$.
We say that a matrix $\mM \in \R^{n \times n}$ is positive semidefinite if $x^\top\mM x \ge 0$ for all $x \in \R^n$. For matrices $\mA$ and $\mB$, we write $\mA \pe \mB$ if $\mB - \mA$ is positive semidefinite. For positive semidefinite matrices $\mA, \mB$ we say that $\mA \approx_\eps \mB$ if $\exp(-\eps)\mA \pe \mB \pe \exp(\eps)\mA$.

\paragraph{Schur complements and Cholesky factorization.}
Our algorithms are based on Schur complements and sparsified Cholesky factorization. At a high level, the Schur complement of an $n \times n$ matrix provides a matrix which is equivalent under linear system solves on a subset of coordinates in $[n]$.
\begin{definition}[Schur complement]
\label{def:sc}
For an $n \times n$ symmetric matrix $\mM$ and subset of \emph{terminals} $\T \subseteq [n]$, let $S = [n] \bs \T.$ Permute the rows/columns of $\mM$ to write
\[
\mM
=
\left[
\begin{array}{cc}
	\mM_{[S,S]} & \mM_{[S,\T]} \\
	\mM_{[\T,S]} & \mM_{[\T,\T]}.
\end{array}
\right]
\]
Then the \emph{Schur complement} of $\mM$ onto $\T$ is denoted
$\mSC(\mM, \T) \defeq \mM_{[\T,\T]} - \mM_{[\T,S]}\mM_{[S,S]}^{-1}\mM_{[S,\T]}$.

For a graph $G$ and subset $\T \subseteq V(G)$, for simplicity we write $\mSC(G, \T) \defeq \mSC(\mL_G, \T)$. It is well-known that $\mSC(G, \T)$ is also a Laplacian.
\end{definition}
\begin{lemma}[Cholesky factorization]
\label{lemma:cholesky}
Given a matrix $\mM \in \R^{n \times n}$, a subset $\T \subseteq [n]$, and $S = [n] \bs \T$, we have
	\[ \mM^{-1} = \left[
	\begin{array}{cc}
	\mI & - \mM_{[S, S]}^{-1}\mM_{[S, \T]}\\
	0 & \mI
	\end{array}
	\right]
	\left[
	\begin{array}{cc}
	\mM_{[S, S]}^{-1} & 0\\
	0 & \mSC(\mM, \T)^{-1}
	\end{array}
	\right]
	\left[
	\begin{array}{cc}
	\mI & 0 \\
	- \mM_{[\T, S]}\mM_{[S, S]}^{-1}& \mI
	\end{array}
	\right].
	\]
\end{lemma}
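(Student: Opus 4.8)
The plan is to derive the identity from the ``forward'' block factorization of $\mM$ itself, which I then invert factor by factor. First I would check, by a direct block multiplication, that
\[
\mM = \left[\begin{array}{cc} \mI & 0 \\ \mM_{[\T, S]}\mM_{[S, S]}^{-1} & \mI \end{array}\right]
\left[\begin{array}{cc} \mM_{[S, S]} & 0 \\ 0 & \mSC(\mM, \T) \end{array}\right]
\left[\begin{array}{cc} \mI & \mM_{[S, S]}^{-1}\mM_{[S, \T]} \\ 0 & \mI \end{array}\right],
\]
where the rows and columns of $\mM$ are permuted into the $(S, \T)$ order as in Definition~\ref{def:sc}. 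Multiplying the three factors on the right and matching blocks, the $(\T, \T)$ block becomes $\mM_{[\T, S]}\mM_{[S, S]}^{-1}\mM_{[S, \T]} + \mSC(\mM, \T) = \mM_{[\T, \T]}$ by the definition of the Schur complement, and the other three blocks are immediate; this step uses only that $\mM_{[S, S]}$ is invertible.

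Next I would invert both sides using $(\mA\mB\mC)^{-1} = \mC^{-1}\mB^{-1}\mA^{-1}$. The two outer factors are block-unitriangular, and one checks in one line that
\[
\left[\begin{array}{cc} \mI & 0 \\ \mX & \mI \end{array}\right]^{-1} = \left[\begin{array}{cc} \mI & 0 \\ -\mX & \mI \end{array}\right], \qquad \left[\begin{array}{cc} \mI & \mX \\ 0 & \mI \end{array}\right]^{-1} = \left[\begin{array}{cc} \mI & -\mX \\ 0 & \mI \end{array}\right],
\]
so their inverses just flip the sign of the off-diagonal block. The block-diagonal middle factor inverts entrywise to $\mathrm{diag}(\mM_{[S, S]}^{-1}, \mSC(\mM, \T)^{-1})$; this needs $\mSC(\mM, \T)$ to be invertible, which is automatic here because $\det \mM = \det \mM_{[S, S]} \cdot \det \mSC(\mM, \T)$ from the forward factorization, so $\mM$ and $\mM_{[S, S]}$ invertible force $\mSC(\mM, \T)$ invertible. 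Substituting $\mX = \mM_{[\T, S]}\mM_{[S, S]}^{-1}$ and $\mX = \mM_{[S, S]}^{-1}\mM_{[S, \T]}$ respectively, reversing the order of the three inverted factors, and simplifying the resulting product recovers exactly the claimed expression for $\mM^{-1}$.

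There is essentially no hard step here: the entire argument is a chain of elementary block-matrix identities, so the only thing to be careful about is the bookkeeping --- which off-diagonal block is negated and placed where, and keeping the $(S, \T)$ permutation consistent between $\mM$ and $\mM^{-1}$. For completeness one may also note that when $\mM$ is merely positive semidefinite (as with a Laplacian) rather than invertible, the same identity holds with every inverse replaced by the corresponding pseudoinverse, as long as the relevant kernels line up; but for the statement as given the invertible case is all that is needed.
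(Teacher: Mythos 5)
Your proof is correct and is precisely the classical derivation: you exhibit the forward block $LDU$ factorization of $\mM$ in the $(S,\T)$ ordering, verify it by checking that the $(\T,\T)$ block reproduces $\mM_{[\T,\T]}$ via the definition of the Schur complement, and then invert factor by factor using that block-unitriangular matrices invert by negating the off-diagonal block. The paper states Lemma~\ref{lemma:cholesky} without proof as a standard fact, and your argument is exactly the standard one; the bookkeeping in your writeup (which block is negated, the order reversal under inversion, the determinant identity $\det\mM = \det\mM_{[S,S]}\cdot\det\mSC(\mM,\T)$ ensuring invertibility of the Schur complement) all checks out. Your closing remark that the pseudoinverse version holds for Laplacians when the kernels align is also the right caveat, and is indeed the form the paper relies on downstream, e.g.\ in Lemma~\ref{lemma:Inverse}.
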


The Cholesky factorization directly implies that the Schur complement represents the inverse of the Laplacian on a subset of the coordinates.
\begin{lemma}[e.g. Fact 5.4 in \cite{DKPRS17}]
\label{lemma:Inverse}
Let $\mI$ be the identity matrix, and let $\mJ$ be the all $1$ matrix. For any graph
$G$, and subset $\T \subseteq V(G)$ we have that
\[ \mSC(G,\T)^\dagger = (\mI - |\T|^{-1}\mJ)(\mL_G^\dagger)_{[\T,\T]}(\mI - |\T|^{-1}\mJ). \] In addition, we have that
\[ \mSC(G,\T)(\mL_G^\dagger)_{[\T,\T]}\mSC(G,\T) = \mSC(G,\T). \]
\end{lemma}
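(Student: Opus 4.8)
The plan is to deduce both identities from the Cholesky factorization of Lemma~\ref{lemma:cholesky}, which applies only to \emph{invertible} matrices, by regularizing $\mL_G$ with a small positive rank-one term $\gamma e_t e_t^\top$ supported on one terminal $t\in\T$ ($e_t$ the $t$-th standard basis vector, $\gamma>0$). This makes the matrix invertible but changes the Schur complement onto $\T$ only by the very same rank-one term, and the extra error terms it creates all point in the $\mathbf 1$-direction on $\T$ — which is exactly what the projectors $\mI-|\T|^{-1}\mJ$ in the statement kill. Throughout I assume $G$ is connected, which is the case we use and is in fact needed for the first identity (it already fails for two disjoint edges), and $\T\ne\emptyset$ (the empty case being vacuous). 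Write $S\defeq V(G)\setminus\T$, $m\defeq|\T|$, let $\mathcal S\defeq\mSC(G,\T)$, which is a Laplacian on $\T$ and remains connected (a standard fact), and set $\mP\defeq\mI-m^{-1}\mJ=\mI-m^{-1}\mathbf 1\mathbf 1^\top$, the orthogonal projection onto $\mathbf 1^\perp$ in $\R^{\T}$.

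The one computation I need is a general fact: for a connected graph $H$ on a vertex set $U$ with $|U|=u$, any $w\in U$ and any $\gamma>0$, the matrix $\mM_H\defeq\mL_H+\gamma e_w e_w^\top$ is positive definite (if $x^\top\mM_H x=0$ then $x^\top\mL_H x=0$ and $x_w=0$, so $x\in\ker\mL_H=\mathrm{span}(\mathbf 1)$ and hence $x=0$), and
\[
\left(\mI-u^{-1}\mathbf 1\mathbf 1^\top\right)\mM_H^{-1}\left(\mI-u^{-1}\mathbf 1\mathbf 1^\top\right)=\mL_H^\dagger .
\]
To see this I would use $\mL_H\mL_H^\dagger=\mI-u^{-1}\mathbf 1\mathbf 1^\top$ (orthogonal projection onto $\mathrm{range}(\mL_H)=\mathbf 1^\perp$, by connectivity) together with $\mM_H\mathbf 1=\gamma e_w$, i.e.\ $\gamma\mM_H^{-1}e_w=\mathbf 1$: then $\mM_H\mL_H^\dagger=(\mI-u^{-1}\mathbf 1\mathbf 1^\top)+\gamma e_w(\mL_H^\dagger e_w)^\top$, so $\mL_H^\dagger=\mM_H^{-1}(\mI-u^{-1}\mathbf 1\mathbf 1^\top)+\mathbf 1(\mL_H^\dagger e_w)^\top$, and left-multiplying by $\mI-u^{-1}\mathbf 1\mathbf 1^\top$ deletes the trailing rank-one term while fixing the left-hand side (since $\mL_H^\dagger\mathbf 1=0$).

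With this I would assemble the argument as follows. Let $\mM\defeq\mL_G+\gamma e_t e_t^\top$, positive definite by the fact above. Since $t\in\T$, the perturbation lies entirely in the $\T\times\T$ block, so $\mM_{[S,S]}=(\mL_G)_{[S,S]}$ — invertible, being a proper principal submatrix of a connected graph's Laplacian — and a one-line block computation gives $\mSC(\mM,\T)=\mSC(G,\T)+\gamma e_t e_t^\top=\mathcal S+\gamma e_t e_t^\top$. Reading the $[\T,\T]$ block off the factorization of $\mM^{-1}$ in Lemma~\ref{lemma:cholesky} (the outer triangular factors contribute the identity in that block) gives $(\mM^{-1})_{[\T,\T]}=\mSC(\mM,\T)^{-1}=(\mathcal S+\gamma e_t e_t^\top)^{-1}$. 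Applying the general fact to $H=G$ gives $\mL_G^\dagger=(\mI-|V(G)|^{-1}\mathbf 1\mathbf 1^\top)\mM^{-1}(\mI-|V(G)|^{-1}\mathbf 1\mathbf 1^\top)$; expanding the $[\T,\T]$ block of the right-hand side, every summand other than $(\mM^{-1})_{[\T,\T]}$ carries a left factor equal to the all-ones vector on $\T$ or a right factor equal to its transpose, so conjugating by $\mP$ (which annihilates that vector) collapses the identity to $\mP(\mL_G^\dagger)_{[\T,\T]}\mP=\mP(\mM^{-1})_{[\T,\T]}\mP=\mP(\mathcal S+\gamma e_t e_t^\top)^{-1}\mP$. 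Now apply the general fact a second time, to the connected graph with Laplacian $\mathcal S$ on vertex set $\T$ (same $w=t$): this gives $\mP(\mathcal S+\gamma e_t e_t^\top)^{-1}\mP=\mathcal S^\dagger$. Chaining the equalities proves the first identity. The second then follows at once: $\mathcal S\mathbf 1=0$ on $\T$ gives $\mathcal S\mP=\mathcal S=\mP\mathcal S$, so $\mSC(G,\T)(\mL_G^\dagger)_{[\T,\T]}\mSC(G,\T)=\mathcal S\mP(\mL_G^\dagger)_{[\T,\T]}\mP\mathcal S=\mathcal S\,\mathcal S^\dagger\,\mathcal S=\mathcal S$.

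The only real obstacle is the singularity of $\mL_G$, which forbids a direct use of Lemma~\ref{lemma:cholesky}; the choice of regularizer $\gamma e_t e_t^\top$ with $t\in\T$ is what makes everything line up — it restores invertibility, perturbs the Schur complement by a clean rank-one term (because it is supported on $\T$), and produces only error terms aligned with $\mathbf 1$ on $\T$, which the projectors in the statement remove. The secondary point to watch is connectivity, used both to know $\ker\mL_G=\mathrm{span}(\mathbf 1)$ and to ensure $\mathcal S$ is connected so the general fact applies to it as well.
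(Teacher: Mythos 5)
Your proof is correct. The paper does not supply its own proof of this lemma (it is cited as Fact~5.4 of~\cite{DKPRS17}), so there is no in-text argument to compare against; what follows is a review on the merits. Your regularization device --- perturbing $\mL_G$ by $\gamma e_t e_t^\tomato$ with $t \in \T$ so that Lemma~\ref{lemma:cholesky} applies to the now-invertible matrix, and then stripping the rank-one corrections with the projectors $\mI - |\T|^{-1}\mJ$ --- cleanly avoids the usual pitfall that $(\mA\mB)^\dagger \neq \mB^\dagger\mA^\dagger$ for singular factors, which a direct pseudoinverse Cholesky argument would have to contend with. The ``general fact'' you isolate, that for a connected Laplacian $\mL_H$ on $u$ vertices and any vertex $w$,
\[
\left(\mI - u^{-1}\mJ\right)\left(\mL_H + \gamma e_w e_w^\tomato\right)^{-1}\left(\mI - u^{-1}\mJ\right) = \mL_H^\dagger,
\]
is verified correctly and applied twice, to $G$ and to the graph whose Laplacian is $\mSC(G,\T)$; the connectivity of the latter, which the second application needs, does follow from Lemma~\ref{lemma:SCEnergy} together with connectivity of $G$, as you note. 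Two small points are worth making explicit: (i) your observation that the first identity genuinely requires $G$ to be connected (and fails already for two disjoint edges) is correct and should be treated as a tacit hypothesis of the stated lemma, satisfied in every use in the paper; (ii) the argument is manifestly independent of $\gamma$ and requires no limit $\gamma \to 0$, which is what makes the approach tidy.
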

An equivalent view is that the quadratic form of the Schur complement gives the minimum energy
extension of a vector on the terminals to the original vertex set,
in the quadratic form of the original Laplacian \cite{G96:thesis,MP13}.
\begin{lemma}
\label{lemma:SCEnergy}
(Lemma B.2. of~\cite{MP13},
matrix version in Appendix A.5.5 of~\cite{BV04:book})
For a graph $G$ and a $\T \subseteq V(G)$,
the Schur complement of the Laplacian of $G$ onto $\T$,
$\mSC(G,\T)$ satisfies for all vectors $\vec{x}_{[\T]}$:
\[
\norm{\vec{x}_{\left[\T\right]}}_{\mSC\left( G, \T \right)}
=
\min_{\vec{x}_{\left[V \setminus \T \right]} \in \R^{V \setminus \T}}
\norm{
\left[
\begin{array}{c}
	\vec{x}_{\left[V \setminus \T \right]}\\
	\vec{x}_{\left[\T\right]}
\end{array}
\right]
}_{\mL( G)}.
\]
\end{lemma}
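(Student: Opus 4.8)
The plan is to prove the identity by ``completing the square'' in the block quadratic form of $\mL_G$. Fix the vector $\vec{x}_{[\T]}$ on the terminals, set $S \defeq V \setminus \T$, and permute the rows and columns of $\mL_G$ into the block form of Definition~\ref{def:sc}. Since $\mL_G \se 0$, its principal submatrix $\mL_{[S,S]}$ is positive semidefinite; assume first that it is invertible (equivalently, every connected component of $G$ meets $\T$), which is the setting implicitly required by Definition~\ref{def:sc}.

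The first step is to record the block $\mathrm{LDL}^{\top}$ factorization of $\mL_G$, obtained simply by inverting the factorization of $\mL_G^{-1}$ in Lemma~\ref{lemma:cholesky}: with
\[
\mathcal{U} \defeq
\left[
\begin{array}{cc}
\mI & \mL_{[S,S]}^{-1}\mL_{[S,\T]}\\
0 & \mI
\end{array}
\right],
\]
a direct expansion using $\mSC(G,\T) = \mL_{[\T,\T]} - \mL_{[\T,S]}\mL_{[S,S]}^{-1}\mL_{[S,\T]}$ and $\mL_{[\T,S]} = \mL_{[S,\T]}^{\top}$ gives
\[
\mL_G
=
\mathcal{U}^{\top}
\left[
\begin{array}{cc}
\mL_{[S,S]} & 0\\
0 & \mSC(G,\T)
\end{array}
\right]
\mathcal{U}.
\]
Since $\mathcal{U}$ sends $(\vec{x}_{[S]}, \vec{x}_{[\T]})$ to $(\vec{x}_{[S]} + \mL_{[S,S]}^{-1}\mL_{[S,\T]}\vec{x}_{[\T]},\ \vec{x}_{[\T]})$ and the middle matrix is block diagonal, this yields, for every choice of $\vec{x}_{[S]}$,
\[
\norm{
\left[
\begin{array}{c}
\vec{x}_{[S]}\\
\vec{x}_{[\T]}
\end{array}
\right]
}_{\mL_G}^2
=
\norm{\vec{x}_{[S]} + \mL_{[S,S]}^{-1}\mL_{[S,\T]}\vec{x}_{[\T]}}_{\mL_{[S,S]}}^2
+ \norm{\vec{x}_{[\T]}}_{\mSC(G,\T)}^2 .
\]
The first term on the right is nonnegative and vanishes for $\vec{x}_{[S]} = -\mL_{[S,S]}^{-1}\mL_{[S,\T]}\vec{x}_{[\T]}$, so the minimum of the left-hand side over $\vec{x}_{[S]}$ equals $\norm{\vec{x}_{[\T]}}_{\mSC(G,\T)}^2$, which is the claim. (Equivalently, one can skip the factorization and directly minimize the convex quadratic $\vec{x}_{[S]} \mapsto \vec{x}_{[S]}^{\top}\mL_{[S,S]}\vec{x}_{[S]} + 2\vec{x}_{[S]}^{\top}\mL_{[S,\T]}\vec{x}_{[\T]} + \vec{x}_{[\T]}^{\top}\mL_{[\T,\T]}\vec{x}_{[\T]}$ by setting its gradient to zero, reaching the same conclusion after simplification.)

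The only subtlety — the step I would be most careful about — is when $\mL_{[S,S]}$ is singular, i.e.\ some connected component of $G$ lies entirely in $S$. Then one replaces $\mL_{[S,S]}^{-1}$ by the pseudoinverse $\mL_{[S,S]}^{\dagger}$ throughout. The key point is that a vertex of $S$ adjacent to a terminal lies in a component meeting $\T$, so every column of $\mL_{[S,\T]}$ is supported off the $S$-only components and is therefore orthogonal to $\ker \mL_{[S,S]}$; hence $\mL_{[S,S]}\mL_{[S,S]}^{\dagger}\mL_{[S,\T]} = \mL_{[S,\T]}$, the factorization above goes through verbatim with $\mathcal{U}$ built from $\mL_{[S,S]}^{\dagger}$, and the choice $\vec{x}_{[S]} = -\mL_{[S,S]}^{\dagger}\mL_{[S,\T]}\vec{x}_{[\T]}$ still annihilates the first squared term. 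Finiteness of the minimum (justifying $\min$ over $\inf$) follows since the quadratic form is convex, bounded below by $0$, and attains its infimum on the affine set. This completes the plan.
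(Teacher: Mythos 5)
Your proof is correct. Note that the paper does not actually prove this lemma but cites it (Lemma B.2 of~\cite{MP13}, Appendix A.5.5 of~\cite{BV04:book}); your argument is precisely the standard one used in those references. The block $\mathrm{LDL}^\top$ factorization $\mL_G = \mathcal{U}^\top \operatorname{diag}(\mL_{[S,S]}, \mSC(G,\T))\,\mathcal{U}$ (which is indeed just the matrix inverse of the factorization in Lemma~\ref{lemma:cholesky}) gives the completing-the-square identity, the cross-terms vanish because the middle block is diagonal, and the minimizer $\vec{x}_{[S]} = -\mL_{[S,S]}^{-1}\mL_{[S,\T]}\vec{x}_{[\T]}$ kills the nonnegative first summand. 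Your handling of the degenerate case where $\mL_{[S,S]}$ is singular is also right and is the one subtlety worth flagging: the columns of $\mL_{[S,\T]}$ are supported on vertices of $S$ adjacent to $\T$, hence orthogonal to $\ker \mL_{[S,S]}$ (which is spanned by indicator vectors of $S$-only components), so $\mL_{[S,S]}\mL_{[S,S]}^{\dag}\mL_{[S,\T]} = \mL_{[S,\T]}$ and the factorization with $\mL_{[S,S]}^{\dag}$ in place of $\mL_{[S,S]}^{-1}$ still reproduces $\mL_G$ exactly. One tiny cosmetic point: you prove the squared-norm identity and the stated lemma is unsquared, but since both sides are nonnegative this is immediate by monotonicity of the square root.
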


\paragraph{Matrix Analysis Tools}
Our algorithm for computing Schur complement sparsifiers which are minors requires
computing and sampling via leverage scores.

\begin{definition}[Effective resistance and leverage scores]
\label{def:lev}
For a graph $G$ with resistances $r_e$, define $\res_G(e) \defeq b_e^\top \mL_G^\dagger b_e$ and $\lev_G(e) \defeq \res_G(e)/r_e$.
\end{definition}
Note that $0 \le \lev_G(e) \le 1$ and $\sum_{e \in E(G)} \lev_G(e) = n-1$ for connected graphs $G$.

Let $G \bs e$ and $G/e$ denote the graphs resulting respectively from deleting and contracting edge $e$. Note that these correspond to setting the resistance of edge $e$ to positive infinity or $0$, respectively. The Woodbury matrix formula allows us to understand changes in the quadratic form when resistances of of the edges change. 
\begin{lemma}[Woodbury matrix formula]
\label{lemma:woodbury}
For matrices $\mA, \mU, \mC, \mV$ of compatible sizes we have
\[ (\mA+\mU\mC\mV)^\dagger = \mA^\dagger - \mA^\dagger \mU(\mC^{-1}+\mV\mA^\dagger \mU)^{-1}\mV\mA^\dagger. \]
\end{lemma}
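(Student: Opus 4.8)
The plan is to reduce the identity to the classical case of genuine inverses and verify that by a single direct multiplication. Suppose first that $\mA$ and $\mC$ are invertible and that $\mC^{-1} + \mV\mA^\dagger\mU$ is invertible (with every $\dagger$ read as an inverse in this paragraph); then $\mA + \mU\mC\mV$ is invertible as well. I would left-multiply the claimed right-hand side by $\mA + \mU\mC\mV$ and expand:
\[
(\mA + \mU\mC\mV)\big(\mA^{-1} - \mA^{-1}\mU(\mC^{-1}+\mV\mA^{-1}\mU)^{-1}\mV\mA^{-1}\big)
= \mI + \mU\mC\mV\mA^{-1} - \mU\big(\mI + \mC\mV\mA^{-1}\mU\big)(\mC^{-1}+\mV\mA^{-1}\mU)^{-1}\mV\mA^{-1}.
\]
The only trick is to observe $\mI + \mC\mV\mA^{-1}\mU = \mC(\mC^{-1} + \mV\mA^{-1}\mU)$, which makes the last term collapse to $\mU\mC\mV\mA^{-1}$ and cancel the middle term, leaving $\mI$; the symmetric computation on the right shows this is a two-sided inverse. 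This establishes the formula with $^{-1}$ throughout.

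To obtain the stated version with the Moore--Penrose pseudoinverse --- which is the form used later, where $\mA = \mL_G$ is a graph Laplacian and $\mU\mC\mV$ is the symmetric rank-one update that changes the resistance of a single edge $e$, i.e.\ $\mU = \mV^\top = b_e$ and $\mC$ is a scalar --- I would restrict all operators to the subspace $W = \mathbf{1}^\perp$. Since $b_e \perp \mathbf{1}$ for every edge $e$, both $\mA$ and $\mA + \mU\mC\mV$ map $W$ into $W$ and vanish on $W^\perp = \mathrm{span}(\mathbf{1})$; for a connected graph $W = \mathrm{im}(\mL_G)$, so $\mA|_W$ and $(\mA+\mU\mC\mV)|_W$ are invertible and their pseudoinverses are exactly the inverses of these restrictions, extended by zero on $W^\perp$. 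Applying the invertible case inside $W$ and then extending by zero recovers the displayed identity verbatim.

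The one step that needs care --- and the step I would flag as the main obstacle --- is exactly this passage to the pseudoinverse: in full generality the literal identity can fail, so one must verify that $\mathrm{im}(\mU)$ and $\mathrm{ker}(\mV)$ are compatible with the common image and kernel of $\mA$ and $\mA + \mU\mC\mV$, so that $\dagger$ acts as a genuine inverse on the relevant invariant subspace and no spurious projection term appears. For all uses in this paper the update vectors are signed edge-incidence vectors $b_e \in \mathbf{1}^\perp$, so this compatibility is automatic.
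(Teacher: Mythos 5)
The paper states this lemma with no proof at all, treating the invertible Woodbury formula as standard and leaving the passage to pseudoinverses implicit; your writeup therefore supplies something the paper omits. Your two-step plan is sound: the invertible case by the standard factor-out-$\mC$ manipulation, then a restriction to the invariant subspace $W=\mathbf{1}^\perp$ where all $\dagger$'s become honest inverses, extended by zero on $W^\perp$. You are also right to flag that the literal pseudoinverse identity can fail in general and that the paper's statement is really a claim about a special case.

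Two refinements would close the remaining gaps. First, your reduction is phrased for a single rank-one update $\mU=\mV^{\tomato}=b_e$, whereas the paper actually invokes the lemma in Eq.~(\ref{eq:woodburyop}) with a batch update $\mU = \mB_{Z^{(i)}}^{\tomato}\mR_{Z^{(i)}}^{-1/2}$. The restriction argument carries over unchanged once you state the hypotheses at the right level of generality: $\mathrm{im}(\mU) \subseteq W$ and $\ker(\mV) \supseteq W^{\perp}$, both of which hold because every column of $\mU$ (row of $\mV$) is a scaled signed incidence vector. Second, the extension-by-zero step silently requires $\ker(\mA+\mU\mC\mV) = W^{\perp}$ as well — i.e.\ the perturbed Laplacian must still have kernel exactly $\mathrm{span}(\mathbf{1})$ — otherwise $(\mA+\mU\mC\mV)^{\dagger}$ is not obtained by inverting the restriction to $W$ and adding a zero block. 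This is not automatic for batch deletions even when \Split\ guarantees individual leverage scores lie in $[3/16,13/16]$ (that rules out bridges, but a set of non-bridges can still disconnect the graph); the paper handles this by terminating the martingale process if the spectral approximation $\mSC(H^{(i)},\T)\approx_{1.1}\mSC(G,\T)$ is ever violated, which in particular forces connectivity. If you cite that invariant, your restriction argument is airtight for the cases the paper actually uses.
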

We use the following to understand the matrix martingales that arise in the analysis minor-based Schur complements.
\begin{lemma}[Freedman's inequality for matrix martingales \cite{Tropp11}]
\label{lemma:freedman}
Consider a matrix martingale $(\mY^{(k)})_{k \ge 0}$ whose values are symmetric matrices with dimension $d$ and let $(\mX^{(k)})_{k \ge 1}$ be the difference sequence $\mX^{(k)} \defeq \mY^{(k)} - \mY^{(k-1)}$. Assume that the difference sequence is uniformly bounded in that $\|\mX^{(k)}\|_2 \le R$ almost surely for $k \ge 1$. Define the predictable quadratic variation random matrix
\[ \mW^{(k)} \defeq \sum_{j=1}^k \E[(\mX^{(j)})^2 | \mX^{(j-1)}]. \]
Then for all $\eps \ge 0$ and $\sigma^2 > 0$ we have that
\[ \Pr\left[\exists k > 0 : \|\mY^{(k)} - \mY^{(0)}\|_2 \ge \eps \text{ and } \|\mW^{(k)}\|_2 \le \sigma^2\right] \le 2d \cdot \exp\left(\frac{-\eps^2/3}{\sigma^2 + R\eps/3}\right). \]
\end{lemma}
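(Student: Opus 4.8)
The plan is to run the matrix Laplace transform machinery of Ahlswede--Winter and Tropp, with Lieb's concavity theorem playing the role that the tower property plays in the scalar case. First I would reduce to a one-sided statement: since $\|\mY^{(k)} - \mY^{(0)}\|_2 = \max\{\lmax(\mY^{(k)} - \mY^{(0)}),\,\lmax(\mY^{(0)} - \mY^{(k)})\}$ and $(-\mY^{(k)})_{k \ge 0}$ is again a matrix martingale with the same difference bound $R$ and the same predictable quadratic variation (because $(-\mX^{(j)})^2 = (\mX^{(j)})^2$), a union bound over the two signs accounts for the factor $2$ in front of $d$. So I may take $\mY^{(0)} = 0$ and bound $\Pr[\exists k : \lmax(\mY^{(k)}) \ge \eps \text{ and } \|\mW^{(k)}\|_2 \le \sigma^2]$. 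I interpret the conditioning in the definition of $\mW^{(k)}$ as conditioning on the natural filtration $\mathcal{F}_{k-1}$, so that $\mW^{(k)}$ is $\mathcal{F}_{k-1}$-measurable, $0 \pe \mW^{(k-1)} \pe \mW^{(k)}$, and in particular $\|\mW^{(k)}\|_2$ is nondecreasing in $k$. Fix $\theta \in (0, 3/R)$ and put $g(\theta) \defeq \tfrac{\theta^2/2}{1 - R\theta/3}$. The scalar estimate $e^{\theta y} \le 1 + \theta y + g(\theta)y^2$ for $|y| \le R$ (checked by a term-by-term comparison of the power series of $e^t - t - 1$ and $\tfrac{t^2/2}{1-t/3}$, with $t = \theta R$), applied spectrally to each $\mX^{(k)}$ and combined with $\E[\mX^{(k)} \mid \mathcal{F}_{k-1}] = 0$, yields the operator bound $\E[\exp(\theta\mX^{(k)}) \mid \mathcal{F}_{k-1}] \pe \mI + g(\theta)\,\E[(\mX^{(k)})^2 \mid \mathcal{F}_{k-1}] \pe \exp\big(g(\theta)\,\E[(\mX^{(k)})^2 \mid \mathcal{F}_{k-1}]\big)$.

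The main step is to promote this into a supermartingale. Set $Z^{(k)} \defeq \Tr\exp\big(\theta\mY^{(k)} - g(\theta)\mW^{(k)}\big)$, with $Z^{(0)} = \Tr\exp(0) = d$. Writing $\theta\mY^{(k)} = \mM + \theta\mX^{(k)}$ where $\mM \defeq \theta\mY^{(k-1)} - g(\theta)\mW^{(k)}$ is $\mathcal{F}_{k-1}$-measurable, Lieb's concavity theorem --- the map $\mA \mapsto \Tr\exp(\mM + \log\mA)$ is concave on positive definite $\mA$ --- together with Jensen's inequality lets me pull the conditional expectation inside: $\E[\Tr\exp(\mM + \theta\mX^{(k)}) \mid \mathcal{F}_{k-1}] \le \Tr\exp\big(\mM + \log\E[\exp(\theta\mX^{(k)}) \mid \mathcal{F}_{k-1}]\big)$. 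Then operator monotonicity of $\log$ and the elementary fact that $\mA \pe \mB$ implies $\Tr\exp(\mM + \mA) \le \Tr\exp(\mM + \mB)$ (differentiate along the segment; the trace of a product of positive semidefinite matrices is nonnegative) bring the right-hand side down to $\Tr\exp\big(\theta\mY^{(k-1)} - g(\theta)\mW^{(k)} + g(\theta)\E[(\mX^{(k)})^2 \mid \mathcal{F}_{k-1}]\big) = Z^{(k-1)}$, using $\mW^{(k)} - \E[(\mX^{(k)})^2 \mid \mathcal{F}_{k-1}] = \mW^{(k-1)}$. Hence $(Z^{(k)})_{k \ge 0}$ is a nonnegative supermartingale. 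I expect this step --- the invocation of Lieb's theorem, which has no scalar analogue --- to be the delicate part; everything else is bookkeeping with the Loewner order.

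Finally I would extract the tail bound by optional stopping. Let $\tau$ be the first index $k$ with $\lmax(\mY^{(k)}) \ge \eps$. On the event $\{\tau < \infty\} \cap \{\|\mW^{(\tau)}\|_2 \le \sigma^2\}$, the bounds $\lmax(\mA - \mB) \ge \lmax(\mA) - \lmax(\mB)$ for $\mB \se 0$ and $\Tr\exp(\mA) \ge \exp(\lmax(\mA))$ give $Z^{(\tau)} \ge \exp\big(\theta\eps - g(\theta)\sigma^2\big)$. Applying the optional stopping inequality to the nonnegative supermartingale $(Z^{(k \wedge \tau)})_{k \ge 0}$ and letting $k \to \infty$ (Fatou), $d = Z^{(0)} \ge \E\big[Z^{(\tau)}\mathbf{1}[\tau < \infty]\big] \ge \exp\big(\theta\eps - g(\theta)\sigma^2\big)\cdot\Pr\big[\tau < \infty \text{ and } \|\mW^{(\tau)}\|_2 \le \sigma^2\big]$. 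Since $\|\mW^{(k)}\|_2$ is nondecreasing, the event $\{\exists k : \lmax(\mY^{(k)}) \ge \eps \text{ and } \|\mW^{(k)}\|_2 \le \sigma^2\}$ is contained in $\{\tau < \infty \text{ and } \|\mW^{(\tau)}\|_2 \le \sigma^2\}$, so it has probability at most $d\exp\big(g(\theta)\sigma^2 - \theta\eps\big)$.

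To conclude I optimize over $\theta$: taking $\theta = \eps/(\sigma^2 + R\eps/3) \in (0, 3/R)$ gives $1 - R\theta/3 = \sigma^2/(\sigma^2 + R\eps/3)$, and a short computation yields $\theta\eps - g(\theta)\sigma^2 = \tfrac12\cdot\tfrac{\eps^2}{\sigma^2 + R\eps/3}$. Recombining the two signs, the probability in question is at most $2d\exp\big(-\tfrac{\eps^2/2}{\sigma^2 + R\eps/3}\big)$, which is at least as strong as the stated bound with $\eps^2/3$ in the numerator; the weaker constant in the statement is presumably just for downstream convenience.
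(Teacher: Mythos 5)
The paper does not prove this lemma; it cites it directly from Tropp~\cite{Tropp11} and uses it as a black box. Your argument is a correct reproduction of Tropp's own proof: the reduction to a one-sided bound via the $\pm$ union bound, the scalar estimate $e^{\theta y} \le 1 + \theta y + g(\theta)y^2$ with $g(\theta) = \tfrac{\theta^2/2}{1-R\theta/3}$ lifted to the Loewner order, the Lieb--Jensen step making $\Tr\exp\bigl(\theta\mY^{(k)} - g(\theta)\mW^{(k)}\bigr)$ a nonnegative supermartingale, and the optional-stopping extraction with the stopping time $\tau$; your optimization of $\theta$ even recovers the sharper exponent $\eps^2/2$ that Tropp establishes, while the lemma as stated in the paper deliberately records the looser $\eps^2/3$.
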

The induced $2$-norm of a symmetric matrix is bounded by its maximum row sum.
\begin{lemma}
\label{lemma:sdd}
For a symmetric matrix $\mM \in \R^{n \times n}$, we have that
\[ \|\mM\|_2 \le \max_{i \in [n]} \sum_{j \in [n]} |\mM_{ij}|. \]
\end{lemma}
\begin{proof}
For all vectors $\vec{x}$, note by the AM-GM inequality that
\[
\vec{x}^\top \mM \vec{x}
=
\sum_{1 \le i,j \le n} \vec{x}_i \vec{x}_j \mM_{ij}
\le
\sum_{1 \le i,j \le n} \vec{x}_i^2 \abs{\mM_{ij}}
\le
\max_{i \in [n]} \sum_{j \in [n]} \abs{\mM_{ij}} \sum_{i \in [n]} \vec{x}_i^2
\le
\max_{i \in [n]} \sum_{j \in [n]} \abs{\mM_{ij}} \norm{x}_2^2.
\]
\end{proof}

\paragraph{\congest~model}

In the \congest~model \cite{Peleg00}, we are given a communication network $ \overline{G} = (\overline{V}, \overline{E}) $ with $ \overline{n} $ nodes modelling processors that have unique $ O (\log n) $-bit IDs, $ \overline{m} $ edges modelling bidirectional communication links between the processors, and diameter $ D $.
Initially, each node knows its own ID and the IDs of its neighbors as well as the value of $ n $.
Computation in this model is carried out in rounds synchronized by a global clock.
In each round, every node sends to each of its neighbors an arbitrary $ O (\log n) $-bit message, receives the messages of its neighbors, performs arbitrary internal computation, and stores arbitrary information for the next round.
The main goal in this paper is to design algorithms for graph problems with a small number of rounds.
For problems on directed graphs, the direction of each edge is known by both of its endpoints, but the corresponding communication link is still bidirectional.
For problems on weighted graphs (involving, e.g., costs or capacities), the weight of each edge is known by both of its endpoints, but the corresponding communication link still allows for direct transmission of each message within a single round.
In particular the diameter $ D $ always refers to the underlying undirected, unweighted communication network.
Our running time bounds hold under the assumption that all weights are polynomial in $ n $, which is a standard assumption in the \congest~model literature.

\section{Overview}
\label{sec:Overview}

Here we will give the main ideas behind our algorithm that efficiently solves Laplacians in a distributed setting (Theorem \ref{thm:Main}). We start by discussing elimination-based parallel Laplacian algorithms which remove a constant fraction of vertices to reduce the size of the graph. This naturally leads to requiring sparsifiers of the Schur complement that are minors of the original graph, whose existence is shown by \cite{LS18}. Our key contribution is a nontrivial distributed implementation of their construction by giving a parallel variant of their algorithm that avoids the sampling of random spanning trees and samples by approximate leverage scores. We analyze this algorithm using matrix martingales. Finally, to achieve our main result we combine the algorithm with tree-based ultrasparsifiers, an alternate vertex reduction scheme that is not parallel but significantly reduces the size of the graph.

\paragraph{Parallel Laplacian Solvers via Elimination}

The starting point for our algorithm is based on the $\poly(\log{n})$ round Laplacian system
solvers in the PRAM model, namely the sparsified Cholesky algorithm
from~\cite{KLPSS16}.
This algorithm repeatedly finds a constant fraction of the vertices
on which the block minor is ``almost independent'' and hence easy to solve.
The inverse of this block then gives the result of eliminating these
vertices, which is the Schur complement on the rest of the vertices,
which we view as the terminal vertices $\T$.

More explicitly, this can be seen in the context of the Cholesky factorization
in Lemma \ref{lemma:cholesky}, where we let $\mM = \mL(G)$ be the Laplacian.
We find an ``almost independent'' set of vertices $S$
so that computing $\mM_{[S,S]}^{-1}$ to high accuracy is simple using a preconditioned gradient descent method.
Therefore, the remaining difficulty in computing $\mM^{-1}$ is simply from computing and inverting the Schur complement: $\mSC(\mM, \T)^{-1}$.
To do this, we first approximately compute the Schur complement $\mSC(\mM, \T)$, which is again a Laplacian, and then recursively apply
a Cholesky factorization to it again.

However, this resulting Schur complement may be dense, even if the original graph is sparse.
For example, eliminating the center of a star results in a complete graph
on the peripheral leaf vertices.
To make this more efficient, sparsified elimination
algorithms~\cite{KLPSS16,CGPSSW18,DPPR20} seek to directly construct
a sparse approximation of this Schur complement.
This can be done in a variety of ways, but algorithmically one of the simplest
interpretations is through the sampling of random walks.
Indeed, matrix concentration bounds imply that the following procedure
suffices for generating a good approximation of $\mSC(G, \T)$ with high probability:

\begin{algorithm}[ht]
  \caption{Approximate Schur Complement using Random Walks}
    Set $H \gets \emptyset$ \\
    \For{ each edge $e = uv$ in $G$}
    {
    		Repeat the following two steps $O(\epsilon^{-2}\log{n})$ times: \\
    		Random walk both endpoints $u$ and $v$ until they are in $\T$,
	to $t_u$ and $t_v$ respectively. \\
	    Add an edge to the approximate Schur complement $H$ between
	$t_u$ and $t_v$, with weight as function of the original weight,
	and the number of steps the walk took.
    }
    \Return $H$
\end{algorithm}

By picking $\T$ so that $V \bs \T$ is almost independent, that is, each vertex
not in $\T$ has a constant fraction of its weight going to $\T$, it can be ensured
that the lengths of the walks don't exceed $O(\log{n})$ with high probability.
As a result, PRAM algorithms are able to construct low error Schur complements
by sampling about $O(\epsilon^{-2}\log{n})$ walks of length $O(\log{n})$ per edge.
As the number of vertices in the Schur complement decreases by a constant factor
per step, this process yields a parallel solver with another $O(\log{n})$ factor
overhead in parallel depth.

Another contribution we make is introducing a new technique that reduces the congestion of these random walks by augmenting the terminal set $\mathcal{T}$. More concretely, recall that in the \congest~model, each edge can only pass $O(\log n)$ bits
per round. Random walks in weighted graphs on the other hand may severely congest some edges:
consider for example, a star with one very heavily weighted edge, and rest lightly
weighted. All the walks starting from the lightly weighted edges' end points will likely utilize
the heavily weighted edges, leading to a congestion of $\Omega(n)$ in the worst case. To resolve this we use a procedure to \emph{estimate} the congestion of an edge accumulated by such random walks. We use these estimates to add edges with high estimated congestion to $\mathcal{T}$ to ensure that remaining edges have low congestion.

However, a single elimination round only removes a constant fraction of the vertices,
but performing $\Omega(\log n)$ elimination rounds would result in a significant blowup in the congestion (as each elimination round accumulates $\O(1)$ congestion).
Hence, we only perform $\Theta((\log \log n)^2)$ rounds of elimination between
sparsification steps.
A formal statement of this elimination scheme is shown in Lemma \ref{lem:Elimination}.

\paragraph{Minor Sparsifiers and its Distributed Construction}

After that, the core component of our algorithm is that we must bring the Schur complement back to
being a minor of the original graph, by constructing a spectral sparsifier of the Schur complement which is a minor of the original graph (Theorem \ref{thm:MinorSC}). That is, the Schur complement results from contracting connected subsets of vertices in the original graph and reweighting edges. Minors are particularly useful for distributed algorithms because we can simulate
 one round of communication between neighbors on a minor, such as multiplying
by the incidence matrix, in $\O(\sqrt{n}+D)$ rounds (Lemma
\ref{lemma:Communication}).
They interact particularly well with the parallel Laplacian solving algorithm
which is a short sequence of matrix-vector multiplies on submatrices.
Existence and efficient sequential constructions of these objects were first
shown by Li and Schild~\cite{LS18}.

A key contribution of our work is to give a simplified parallel variant of the algorithm of Li-Schild~\cite{LS18} which leads to an efficient distributed implementation.
The algorithm of \cite{LS18} works by contracting or deleting edges $e$
with probability given by its leverage score.
The main difference is that instead of sampling edges using a random spanning tree,
we identify a large subset of edges that can be sampled independent of each other,
without affecting each edges's sampling probability too much.
This is done via localization~\cite{SRS18}, which provides an overall bound
on the total influence of edges' effective resistances.

The algorithm then comprises of three main steps, and is analyzed via matrix martingales.
\begin{enumerate}
	\item Calculating edges' influence on the Schur complement (Lemma \ref{lem:DiffApx}).
	\item Computing the mutual influence of edges's resistances,
	and picking a large set that has small mutual influence,
	which we term the steady set (Definition \ref{def:alpha_delta_steady}).
	\item Among these steady edges, randomly contract/delete them with probability
	given by an approximation of their leverage scores.
\end{enumerate}
Note that all steps actually require solving Laplacian systems in the original graph, which seems circular.
We address this using the now well understood recursive approach of \cite{P16}, which we discuss below together with the overall algorithm.

\paragraph{Overall Recursive Scheme}
Given a graph $G$, the goal of the algorithm is to return a \emph{chain} of approximate Schur complements of $G$, each with $0.99$ as many vertices as the previous. This chain has length $O(\log n)$, and after built, can be applied in $O(\log n)$ steps and $\O(\sqrt{n}+D)$ rounds to solve a Laplacian system to high accuracy in the \congest~model. The construction of the chain is as follows -- pick $d = \Theta((\log \log n)^2)$ say, and run $d$ rounds of the sparsified Cholesky elimination scheme (Lemma \ref{lem:Elimination}) to reduce the graph size to $0.99^d|V(G)|$. Now, use the minor Schur complement algorithm (Theorem \ref{thm:MinorSC}) to build a minor of $G$ which is a Schur complement sparsifier with respect to the remaining $0.99^d|V(G)|$ remaining vertices.

To compute the Schur complement sparsifier, we employ a separate recursion, because the Schur complement sparsifier construction requires Laplacian system solves to compute leverage scores (and other similar measures). To do this, we ultrasparsify the graph $G$, thus reducing the size by a factor of $k$ (Lemma \ref{lem:UltraSparsify}), and build a Schur complement chain on the ultrasparsifier. Now, we can use this solver on the ultrasparsifier to precondition a solver on $G$ with $\O(\sqrt{k})$ steps of preconditioned conjugate gradient to compute the desired leverage scores. We want to emphasize the final Schur complement chain we output for $G$ \emph{does not} involve the ultrasparsifier, and hence can still be applied in parallel.

One final technical detail is that due to needing to solve submatrices of the Laplacian (Lemma \ref{lem:DiffApx}) we require tracking graphs that embed with \emph{low congestion} in the original graph, a slight generalization of minors (Definition \ref{def:Minor}). We ensure that the congestion stays as $n^{o(1)}$ throughout the algorithm, so it does not affect the final round complexity.

\section{Full Algorithm and Analysis}
\label{sec:ClownFiesta}

The goal of this section is to formalize the notions and graph reduction algorithms
described in Section \ref{sec:Overview}, and provide a bound for the overall
performance.
\subsection{Distributed Communication on Minors of Overlay Networks}
\label{subsec:Communication}

As described, we will work with graphs that are minors
of the original graph, which doubles as the communication network.
However, some of our linear systems reductions duplicate edges,
leading to minors with slightly larger congestion.
So we will need to incorporate such congestion parameters into
our definition of minors.
The following definition is a direct extension
of the distributed $N$-node cluster graph from \cite{GKKLP15},
with congestion incorporated, and the connection with graph minors
stated more explicitly.

\begin{definition}
\label{def:Minor}
  Given a parameter $\rho \geq 1$, a graph $G$ is a $\rho$-minor of $H$ if we have the following mappings:
  \begin{enumerate}
    \item For each vertex of $G$, $u \in V(G)$:
    \begin{enumerate}
      \item A subset of vertices of $H$,
      which we term a supervertex, $S^{G \rightarrow H} (u) \subseteq V(H)$,
      with a root vertex $V_{map}^{G \rightarrow H} (u) \in S^{G \rightarrow H}(u)$.
      \item A connected subgraph of $H$ on $S^{G \rightarrow H}(u)$,
      which for simplicity we will keep as a tree, $T^{G \rightarrow H} (u)$.
    Note that this requires $S^{G \rightarrow H}(u)$ being connected in $H$.
    \end{enumerate}
    \item A mapping of the edges of $G$ onto edges of $H$, or self-loops on
    vertices of $H$,
    such that for any $u^{G} v^{G} = e^{G} \in E(G)$,
    the mapped edge $E_{map}^{G \rightarrow H}(e^{G}) = e^{H} = u^{H} v^{H}$
    satisfies $u^{H} \in S^{G \rightarrow H} ( u^{G})$
    and $v^{H} \in S^{G \rightarrow H} ( v^{G} )$.
  \end{enumerate}
and additionally:
  \begin{enumerate}
    \item Each vertex of $H$ is contained in at most $\rho$ supervertices
    $V_{map}^{G \rightarrow H}(v^{G})$ for some $v^{G}$.
    \item Each edge of $H$ appears as the image of the edge map
    $E_{map}^{G \rightarrow H}(\cdot)$, or in one of the trees
    connecting supervertices, $T^{G \rightarrow H}(v^{G})$ for some $v^{G}$,
    at most $\rho$ times.
  \end{enumerate}
When $\rho = 1$, then $G$ is simply a minor of $H$.

  Finally, we say a $\rho$-minor mapping is stored distributedly,
  or that $G$ is $\rho$-minor distributed over $H$ if it's stored by
  having all the images of the maps recording their sources.
  That is, each $v^H \in V(H)$ records
  \begin{enumerate}
    \item All $v^{G}$ for which $v^{H} \in V_{map}^{G \rightarrow H}(v^{G})$,
    \item For each edge $e^{H}$ incident to $v^{H}$
       (including self loops that may not exist in original $H$):
      \begin{enumerate}
         \item All vertices for which $e^{H}$ is in the corresponding tree
         \[
           \left\{ v^{G} \mid e^{H} \in T^{G \rightarrow H} \left( v^{G} \right)
           \right\}
         \]
         \item All edges $e^{G}$ that map to it.
      \end{enumerate}
  \end{enumerate}
\end{definition}

We will denote the original graph,
which doubles as the overlay network, using $\overline{G}$.

Note that the vertex mappings, or even the neighborhoods of $G$,
cannot be stored at one vertex in $\overline{G}$.
This is because both of these sets may have size up to $\Omega(n)$,
and passing that information to a single low degree
vertex would incur too much communication.

We store vectors on $G$ by putting the values
at the root vertices of each of its corresponding supervertices.
This notion of rooting can be made more explicit:
we can compute directions for all edges in the
spanning tree $T^{G \rightarrow \overline{G}}(v^{G})$ that point to
the corresponding root vertex $V_{map}^{G \rightarrow \overline{G}}(v^{G})$.

\begin{lemma}
\label{lemma:RootTrees}
Given a graph $G$ that's $\rho$-minor distributed over a communication network
 $\overline{G} = (\overline{V}, \overline{E})$ with $\overline{n}$ vertices,
 $\overline{m}$ edges, and diameter $D$,
  we can compute in $O(\rho\sqrt{\overline n} \log{\overline n} +
  D)$
  rounds of communication on $\overline{G}$, an orientation for each $v^{G}$
  and each edge $e \in T^{G \rightarrow \overline{G}}(v^{G})$ such that
  each vertex other than the root has exactly one edge pointing away from it,
  and following these edges leads us to the root.
\end{lemma}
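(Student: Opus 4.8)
The plan is to reduce the problem to a small number of standard distributed primitives on the overlay network $\overline{G}$: building a BFS/spanning-tree structure of $\overline{G}$ to gather global information, and then orienting, inside each supervertex's tree $T^{G \to \overline{G}}(v^G)$, all edges toward the designated root $V_{map}^{G \to \overline{G}}(v^G)$. The key observation is that the union of all these trees, counted with multiplicity, uses each edge of $\overline{G}$ at most $\rho$ times (by the congestion bound in Definition \ref{def:Minor}), and likewise each vertex lies in at most $\rho$ of the vertex sets $V_{map}^{G\to\overline G}(v^G)$; so although there can be $\Omega(n)$ supervertices and a single supervertex may span $\Omega(n)$ vertices of $\overline{G}$, the \emph{total} work summed over all of them is only $\rho$ times the size of $\overline{G}$. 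This is what makes an $\O(\rho\sqrt{\overline n}+D)$-type bound plausible.

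Concretely, I would proceed as follows. First, in $O(D)$ rounds build a BFS tree of $\overline{G}$ rooted at an arbitrary node; this gives every node a way to participate in global aggregation. Second, orient each tree $T^{G\to\overline G}(v^G)$ toward its root: the natural approach is a rooting/pointer-jumping procedure run \emph{simultaneously} for all $v^G$. Because the trees overlap with congestion at most $\rho$, each round of this procedure sends at most $\O(\rho)$ messages across any edge of $\overline{G}$, so a round of the procedure costs $\O(\rho)$ rounds of \congest. The subtlety is that a single tree $T^{G\to\overline G}(v^G)$ can have depth $\Omega(\overline n)$, so naive leaf-stripping or broadcasting from the root would take $\Omega(\overline n)$ rounds. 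This is where I expect the main obstacle to be, and the standard fix is to use the $\sqrt{\overline n}$-diameter reduction idea: either decompose each tree into $O(\sqrt{\overline n})$ "short" pieces glued at a set of $O(\sqrt{\overline n})$ "portal" vertices and handle the short pieces locally (depth $\le\sqrt{\overline n}$) and the portal structure globally via the BFS tree of $\overline G$ (depth $D$), or run an Euler-tour / pointer-doubling computation so that after $O(\log \overline n)$ doubling steps every tree vertex knows its orientation, each step again costing $\O(\rho)$ rounds because of the congestion bound. The $t$ factor in the target bound $O(t\rho\sqrt{\overline n}\log \overline n + D)$ suggests the lemma is actually applied with a parameter $t$ counting how many such orientation computations (or depth-$t$ compositions) are needed, so I would phrase the procedure to run $t$ independent instances in parallel, incurring the stated multiplicative $t$.

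Finally I would verify correctness: after the procedure, inside each $T^{G\to\overline G}(v^G)$ every non-root node has exactly one outgoing marked edge and following marked edges strictly decreases distance-to-root within that tree (so it terminates at the root), which is exactly the claimed invariant. For the round bound I would total: $O(D)$ for the global BFS, $\O(\rho\sqrt{\overline n})$ for the local short-piece orientations (using that each $\overline G$-edge/vertex carries $\le \rho$ tree fragments and that each piece has length $\le \sqrt{\overline n}$), and another $O(D)$ for propagating orientation information through the $O(\sqrt{\overline n})$ portals along the BFS tree — with the extra $\log \overline n$ coming from the pointer-doubling (or from congestion scheduling of $\O(\rho)$ messages per edge via random delays / standard routing), and the extra $t$ from running $t$ instances. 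The one genuinely delicate point to get right is the scheduling argument showing that $\rho$ overlapping tree-orientation computations can be pipelined so that no edge of $\overline G$ ever needs to transmit more than $\O(\rho)$ messages in a single round, rather than $\rho$ messages serialized into $\rho$ rounds at every level — this is the crux and I would lean on the fact that the images of the maps are stored distributedly (each $e^H$ knows exactly which trees it belongs to) so the relevant routing table is available locally without further communication.
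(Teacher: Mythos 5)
The first of the two routes you sketch --- decomposing each tree $T^{G\to\overline G}(v^G)$ into $O(\sqrt{\overline n})$-diameter pieces glued at a small set of portal vertices, orienting inside each piece locally and resolving the portal structure globally via a BFS tree of $\overline G$ --- is exactly what the paper does. The paper instantiates ``portal vertices'' via the \textsc{SpecialVertices} routine (Lemma~\ref{lemma:SpecialVertices}): sample each vertex of each tree independently with probability $\log\overline n/\sqrt{\overline n}$, propagate $O(\sqrt{\overline n})$ hops, then add lowest-common-ancestor vertices, yielding $O(\rho\sqrt{\overline n}\log\overline n)$ special vertices and guaranteeing that every tree path reaches one within $O(\sqrt{\overline n})$ hops. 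Your sketch leaves the selection of portals unspecified; making it explicit (random sampling plus a closure step) is the one concrete detail you would need to fill in, and the count is $O(\rho\sqrt{\overline n}\log\overline n)$ in total rather than $O(\sqrt{\overline n})$ per tree, though that does not change the round bound.

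Your second alternative, Euler tour / pointer doubling, has a genuine flaw in the \congest setting and should be dropped. After $k$ doubling steps, a node's pointer refers to a tree vertex $2^k$ hops away along $T^{G\to\overline G}(v^G)$. Since the tree can be a spanning subgraph of $\overline G$, that target can sit $\Omega(\overline n)$ edges away in $\overline G$, and the \congest model only allows messages along $\overline G$-edges; there is no way to ``dereference'' a far pointer in $\widetilde O(\rho)$ rounds. The $\rho$-congestion bound controls how many trees share a given $\overline G$-edge or vertex --- it limits the per-round message volume for nearest-neighbor propagation --- but it says nothing about routing a pointer to a distant node. So of your two options only the portal-decomposition route is sound; commit to it and make the portal-selection and the global aggregation along the BFS tree explicit.
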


We will make extensive usage of the following lemma,
which we prove in Appendix~\ref{sec:MinorProofs}, about simulating communications
on $G$ using rounds of \congest communications
in a graph that it $\rho$-minor distributes into.

\begin{lemma}
  \label{lemma:Communication}
  Let $G = (V, E)$ be a graph with $n$ vertices and $m$ edges
  that $\rho$-minor distributes into a communication network
  $\overline{G} = (\overline{V}, \overline{E})$ with $\overline{n}$ vertices,
  $\overline{m}$ edges, and diameter $D$.
  In the \congest model, the following operations can be performed
  with high probability using $O(t \rho\sqrt{\overline n} \log{\overline n} +
  D)$
  rounds of communication on $\overline{G}$:
  \begin{enumerate}
    \item Each $V_{map}^{G \rightarrow \overline{G}}(v^{G})$ sends
    $O(t \log{n})$ bits of information to
    all vertices in $S^{G \rightarrow \overline{G}}(v^{G})$.
    \item Simultaneously aggregate the sum/minimum
    of $O(t \log{n})$ bits, from all vertices in
    $S^{G \rightarrow \overline{G}}(v^{G})$
    to
    $V_{map}^{G \rightarrow \overline{G}}(v^{G})$
    for all $v^{G} \in V(G)$.
  \end{enumerate}
\end{lemma}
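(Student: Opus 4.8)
The plan is to prove Lemma~\ref{lemma:Communication} by reducing the two communication primitives on $G$ to routing along the supervertex trees $T^{G\to\overline G}(v^G)$ inside $\overline G$, and then invoking a standard random-delay / scheduling argument to bound the round complexity despite the fact that each edge of $\overline G$ may be reused by up to $\rho$ distinct trees (and the trees may overlap heavily). First I would observe that each operation in the statement is, conceptually, a broadcast (respectively a convergecast/aggregation) within a single rooted tree $T^{G\to\overline G}(v^G)$: in operation~(1) the root $V_{map}^{G\to\overline G}(v^G)$ pushes a message of $O(t\log n)$ bits down the tree to all of $S^{G\to\overline G}(v^G)$, and in operation~(2) every vertex of $S^{G\to\overline G}(v^G)$ sends its $O(t\log n)$ bits up toward the root, combining with a commutative-associative operator (sum or min) at internal nodes so that only $O(t\log n)$ bits ever traverse any tree edge. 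By Lemma~\ref{lemma:RootTrees} we may assume the orientations of all tree edges toward their roots are already computed, within the claimed round bound. A message of $O(t\log n)$ bits is $O(t)$ rounds' worth of $O(\log n)$-bit \congest{} packets, which is where the factor of $t$ in the final bound comes from.

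Second, I would bound the diameter of each tree and the load on each edge of $\overline G$. A supervertex tree $T^{G\to\overline G}(v^G)$ has at most $\overline n$ vertices, so a naive broadcast/convergecast within it costs $O(\overline n)$ rounds — too slow. The standard fix, going back to the cluster-graph machinery of~\cite{GKKLP15}, is to route not along the (possibly long and thin) tree itself but to use the global communication network: build a BFS tree of $\overline G$ of depth $D$, and use it together with the fact that the \emph{total} size of all supervertices is $\sum_{v^G}|S^{G\to\overline G}(v^G)| \le \rho\,\overline n$ (each vertex of $\overline G$ lies in at most $\rho$ supervertices) to pipeline the messages. Concretely, each of the $\le \rho$ tree-edge copies on a given edge of $\overline G$ needs to carry $O(t\log n)$ bits, so the congestion on any edge is $O(t\rho\log n)$ rounds; combined with a dilation of $O(\sqrt{\overline n}+D)$ achievable by routing each tree's traffic through a hybrid of short-range tree paths and a shared BFS backbone (the $\sqrt{\overline n}$ term being the price of coordinating $\Omega(\overline n)$-size objects through low-degree vertices, exactly as flagged in the remark after Definition~\ref{def:Minor}), a random-delay scheduling argument à la Leighton–Maggs–Rao gives a schedule completing in $O((\text{congestion}+\text{dilation})\log \overline n) = O(t\rho\sqrt{\overline n}\log \overline n + D)$ rounds with high probability. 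The min/sum aggregation does not increase the bound because combining at internal nodes only shrinks the data in transit.

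Third, I would handle the bookkeeping: because the minor is stored distributedly (each $v^H$ records which supervertices contain it and which tree edges incident to it belong to which $T^{G\to\overline G}(v^G)$), every vertex locally knows, for each packet it handles, which tree it belongs to and which way to forward it; no extra communication is needed to set up the routing beyond Lemma~\ref{lemma:RootTrees}. I would also note that self-loops of $G$ sitting on a vertex of $\overline G$ require no communication at all, and that edges of $G$ mapped to genuine edges $e^H=u^Hv^H$ with $u^H\in S(u^G)$, $v^H\in S(v^G)$ are irrelevant to these two primitives, which only move information \emph{within} supervertices.

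\textbf{Main obstacle.} The delicate point is obtaining the $\sqrt{\overline n}$ term rather than a $\overline n$ term: a single supervertex tree can have $\Omega(\overline n)$ vertices but a low-degree root, so naively the root cannot even receive $\Omega(\overline n)$ distinct contributions in $o(\overline n)$ rounds. The resolution — and the technical heart of the proof — is that the \emph{aggregation} operator lets each internal node of the tree compress its subtree's contribution to $O(t\log n)$ bits before forwarding, so what actually flows is bounded by the congestion on $\overline G$'s edges, not by subtree sizes; and the $\sqrt{\overline n}$ appears only through the global-coordination step (matching the cluster-graph simulation of~\cite{GKKLP15}), where one shows that routing $\rho\,\overline n$ total units of traffic organized into trees through $\overline G$ admits a schedule of length $O(\rho\sqrt{\overline n}\log\overline n + D)$. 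Verifying that this scheduling bound genuinely applies here — i.e., that the union of the supervertex trees, with multiplicity $\le\rho$ per edge, forms a routing instance of the required form — is the step I would spend the most care on; the rest is the routine $(\text{congestion}+\text{dilation})$ packet-scheduling lemma.
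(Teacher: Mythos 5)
Your high-level intuition is right, and you correctly flag the real obstacle (a supervertex tree can have $\Omega(\overline n)$ vertices feeding a low-degree root), but the mechanism you propose for getting the $\sqrt{\overline n}$ term does not actually give it, and it is also not what the paper does. The paper's proof is built around the \textsc{SpecialVertices} routine (Lemma~\ref{lemma:SpecialVertices}): sample each vertex of each tree into a set $K$ with probability $\log\overline n/\sqrt{\overline n}$, and then add lowest common ancestors of sampled vertices to $K$. This gives, with high probability, $O(\rho\sqrt{\overline n}\log\overline n)$ special vertices which break every tree $T^{G\to\overline G}(v^G)$ into pieces of tree-diameter $O(\sqrt{\overline n})$ in which every vertex sees at most two special vertices. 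The broadcast/convergecast is then a two-phase scheme: (i) local propagation inside each piece for $O(\sqrt{\overline n})$ hops, each hop costing $O(\rho)$ rounds because of the vertex-congestion bound, for $O(\rho\sqrt{\overline n})$ rounds total; (ii) collect all $O(\rho\sqrt{\overline n}\log\overline n)$ per-piece aggregates at a center via a global BFS tree of $\overline G$, which costs $O(\rho\sqrt{\overline n}\log\overline n+D)$ rounds because there are that many items to pipeline, then redistribute and do phase (i) again in reverse. The $\sqrt{\overline n}$ in the final bound is therefore the \emph{number of decomposition pieces per congestion unit}, not a dilation term, and it is created by the random decomposition.

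Your proposal replaces this with a Leighton--Maggs--Rao style $(\text{congestion}+\text{dilation})\log$ scheduling argument, and that step does not go through. First, the arithmetic is off: with your stated congestion $O(t\rho\log n)$ and dilation $O(\sqrt{\overline n}+D)$, the LMR bound gives $O\bigl((t\rho\log n+\sqrt{\overline n}+D)\log\overline n\bigr)$, which is not $O(t\rho\sqrt{\overline n}\log\overline n+D)$ -- neither the coefficient of $\sqrt{\overline n}$ nor the coefficient of $D$ comes out right. Second, and more fundamentally, LMR packet scheduling is a theorem about routing a fixed collection of packets along fixed \emph{paths}; it does not apply directly to a tree convergecast in which intermediate vertices \emph{combine} incoming messages before forwarding. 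If you treat each leaf-to-root contribution as a separate packet to make the setup path-based, the congestion becomes $\Theta(\overline n)$ for a long caterpillar tree, and LMR gives nothing useful. The compression you correctly emphasize as the saving grace is precisely what takes the problem out of the LMR framework. Finally, you defer to the cluster-graph simulation of \cite{GKKLP15} for the ``hybrid of short-range tree paths and a shared BFS backbone,'' which is roughly the idea, but without spelling out the random sampling of special vertices (or an equivalent decomposition into $O(\rho\sqrt{\overline n}\log\overline n)$ low-diameter pieces with the ``at most two special vertices reachable'' property) the argument has a genuine hole: that decomposition is the entire content of the lemma.
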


One direct use of this communication result is that it allows us to efficiently
compute matrix-vector products.
\begin{corollary}
\label{corollary:MatVec}
Given a matrix $\mA$ with nonzeroes supported on the edges of a graph $G$
that's $\rho$-minor distributed over a communication network
 $\overline{G} = (\overline{V}, \overline{E})$ with $\overline{n}$ vertices,
 $\overline{m}$ edges, and diameter $D$,
with values stored with endpoints of the corresponding edge,
and a vector $\vec{x} \in \R^{|V(G)|}$  stored distributedly on the vertices
$V_{map}^{G \rightarrow \overline{G}}(v^{G})$, we can compute the vector
$\mA \vec{x}$, also stored at $V_{map}^{G \rightarrow \overline{G}}(v^{G})$ using
$O(t \rho\sqrt{\overline n} \log{\overline n} + D)$ of communication
in the \congest model, with high probability.
\end{corollary}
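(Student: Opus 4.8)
The plan is to reduce the computation of $\mA\vec{x}$ to the two communication primitives provided by Lemma~\ref{lemma:Communication}. Recall that $\mA$ is supported on the edges of $G$, so $(\mA\vec{x})_{u^G} = \mA_{u^G u^G}\vec{x}_{u^G} + \sum_{v^G : u^Gv^G \in E(G)} \mA_{u^G v^G}\vec{x}_{v^G}$. The diagonal term is local to the root $V_{map}^{G\to\overline G}(u^G)$, so the only work is in the off-diagonal contributions, each of which is indexed by an edge $e^G = u^Gv^G$ of $G$. The key structural fact is that, by Definition~\ref{def:Minor}, each such edge is mapped to a concrete edge (or self-loop) $e^H = u^Hv^H$ of $\overline G$ with $u^H \in S^{G\to\overline G}(u^G)$ and $v^H \in S^{G\to\overline G}(v^G)$, and the value $\mA_{u^Gv^G}$ is stored with the endpoints of $e^H$; moreover this storage is recorded at $u^H$ and $v^H$ per the distributed-storage clause of the definition.

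The algorithm has three phases. \textbf{Phase 1 (broadcast down supervertices).} Invoke the first primitive of Lemma~\ref{lemma:Communication}: each root $V_{map}^{G\to\overline G}(v^G)$ sends the value $\vec{x}_{v^G}$ (which is $O(t\log n)$ bits under the polynomial-weight assumption) to every vertex of $S^{G\to\overline G}(v^G)$; this uses $O(t\rho\sqrt{\overline n}\log\overline n + D)$ rounds. \textbf{Phase 2 (local edge products).} Now every vertex $w^H \in \overline V$ knows, for each $v^G$ with $w^H \in S^{G\to\overline G}(v^G)$, the value $\vec{x}_{v^G}$. For each edge $e^H$ incident to $w^H$ and each $e^G = u^Gv^G$ mapped to $e^H$ with, say, $w^H = u^H$, the vertex $w^H$ has local access to $\vec{x}_{v^G}$ — wait: $\vec{x}_{v^G}$ lives at vertices of $S^{G\to\overline G}(v^G)$, and $v^H \in S^{G\to\overline G}(v^G)$ is the \emph{other} endpoint of $e^H$. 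So in Phase 2 the two endpoints of $e^H$ exchange the values $\vec{x}_{u^G}$ and $\vec{x}_{v^G}$ over the single edge $e^H$ of $\overline G$; since each edge of $\overline G$ carries at most $\rho$ images of the edge map, this is $O(t\rho\log n)$ bits across $e^H$, doable in $O(t\rho)$ rounds overall. Then $u^H$ (or either endpoint, fixing a convention by ID) forms the scalar $\mA_{u^Gv^G}\vec{x}_{v^G}$ and tags it as a contribution destined for the supervertex of $u^G$. \textbf{Phase 3 (aggregate up supervertices).} Each vertex of $\overline G$ now holds a partial sum, for each $v^G$ it belongs to, of the edge-product terms it computed that are destined for $v^G$. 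Invoke the second primitive of Lemma~\ref{lemma:Communication} to sum these, over all $w^H \in S^{G\to\overline G}(v^G)$, up to the root $V_{map}^{G\to\overline G}(v^G)$; add the local diagonal term $\mA_{v^Gv^G}\vec{x}_{v^G}$ there. This is again $O(t\rho\sqrt{\overline n}\log\overline n + D)$ rounds and the root now holds $(\mA\vec{x})_{v^G}$.

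The round count is dominated by the two invocations of Lemma~\ref{lemma:Communication}, giving the claimed $O(t\rho\sqrt{\overline n}\log\overline n + D)$; the high-probability qualifier is inherited from that lemma. The one point that needs a little care — and the only real obstacle — is Phase 2's bookkeeping: one must check that when several edges $e^G$ of $G$ map to the same $\overline G$-edge $e^H$, or when a vertex $w^H$ lies in several supervertices, the messages to be sent over $e^H$ (the relevant $\vec{x}$-values and, in Phase 3, the partial sums routed toward distinct roots) still total only $O(t\rho\log n)$ bits, so the congestion bound $\rho$ of the minor is exactly what licenses this step. This is a direct consequence of the two congestion clauses of Definition~\ref{def:Minor} (each vertex of $\overline G$ in $\le\rho$ supervertices, each edge of $\overline G$ the image of $\le\rho$ maps), so no new idea is needed — only careful accounting, which we omit. $\qed$
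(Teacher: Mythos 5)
Your proof is correct and follows essentially the same three-phase approach as the paper's: broadcast $\vec{x}$ down supervertices via Lemma~\ref{lemma:Communication}, pass the locally formed edge products across images of $G$-edges in $\overline G$ (paying $O(\rho)$ rounds for the edge congestion), and aggregate the partial sums back up to the roots via Lemma~\ref{lemma:Communication}. Your extra care in Phase~2 about the bookkeeping for multiply-mapped edges and overlapping supervertices is a small elaboration of what the paper leaves implicit, but it is the same argument.
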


\begin{proof}
We first invoke Lemma~\ref{lemma:Communication} to pass $\vec{x}_{V^{G}}$
to all of $S^{G \rightarrow \overline{G}}(V^{G})$.
Then in $O(\rho)$ round of distributed communication, we can pass these entries
(multiplied by the weights of $\mA$) to the corresponding row index.
That is, if
$E_{map}^{G \rightarrow \overline{G}}(u^{G} v^{G}) = u^{H} v^{H}$,
we pass
\[
\mA_{u^{G} v^{G}} \vec{x}_{v^{G}}
\]
from $v^{H}$ to $u^{H}$.
Running Lemma~\ref{lemma:Communication} again to sum together the passed values
over each super vertex then brings the values to the root vertex.
\end{proof}

Another implication is that that the Koutis-Xu distributed sparsification algorithm
can also be simulated on $\overline{G}$ in the \congest model, with a round overhead
of $\O(\sqrt{\overline{n}} + D)$~\cite{KX16}.
\begin{corollary}
\label{corollary:Sparsify}
There is an algorithm, $\Sparsify$,
that for a graph $G$ that $\rho$-minor distributes into $\overline{G}$,
and some error $0 < \epsilon < 0.1$,
$\Sparsify(G, \overline{G}, \epsilon)$
with high probability
returns in
\[
O\left(\left(\rho \sqrt{\overline n} \log \overline n + D\right)\log^{8}
\overline n/\eps^2 \right)
\]
rounds, a graph $\widetilde{G}$, distributed as a $\rho$-minor in $\overline{G}$
such that:
\begin{enumerate}
	\item $\widetilde{G}$ is a (reweighted) subgraph of $G$,
	\item $\mL(G) \approx_{\epsilon} \mL(\widetilde{G})$,
	\item $\widetilde{G}$ has $ \O ( |V(G)| / \epsilon^2) $ edges.
\end{enumerate}
\end{corollary}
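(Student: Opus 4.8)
The plan is to run the distributed spectral sparsification algorithm of Koutis and Xu~\cite{KX16} on $G$, using the communication network $\overline G$ to simulate each of its communication rounds by means of Lemma~\ref{lemma:Communication} and Corollary~\ref{corollary:MatVec}. Recall that when the input graph \emph{is} the communication network, their algorithm computes, with high probability, a reweighted subgraph $\widetilde G$ of the input satisfying $\mL(G)\approx_\eps\mL(\widetilde G)$, in $O(\poly\log(n)/\eps^2)$ rounds of a restricted form: in each round every vertex performs local computation, draws private random bits, exchanges $O(\poly\log n)$-bit messages with its graph-neighbors, and participates in an $O(1)$ number of global aggregations. The Baswana--Sen spanner routine it calls is itself $O(\log n)$ rounds of neighbor-to-neighbor communication and thus also fits this template, and at no point is an edge added -- the sparsifier is obtained purely by deleting edges and rescaling weights.

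First I would implement one such round on $\overline G$ when $G$ is only $\rho$-minor distributed over it. Each vertex $v^{G}$ of $G$ lives at its root $V_{map}^{G\rightarrow\overline G}(v^{G})$, which stores $v^{G}$'s state and samples its random bits. To deliver a round's messages along the edges of $G$: (i) every root broadcasts its $O(\poly\log n)$-bit message to all of $S^{G\rightarrow\overline G}(v^{G})$ via the first part of Lemma~\ref{lemma:Communication}; (ii) in $O(\rho)$ further rounds each message is pushed across the image edge $E_{map}^{G\rightarrow\overline G}(\cdot)$ into the neighboring supervertex, exactly as in the proof of Corollary~\ref{corollary:MatVec}; and (iii) the second part of Lemma~\ref{lemma:Communication} aggregates the arriving messages at each neighbor's root. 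Taking the message-length parameter $t=O(\poly\log n)$ in Lemma~\ref{lemma:Communication}, one round of $G$ costs $O((\rho\sqrt{\overline n}\log\overline n+D)\poly\log n)$ rounds on $\overline G$, while the $O(1)$ global aggregations per round (including one to learn $|V(G)|$) are performed directly along a BFS tree of $\overline G$ in $O(D)$ rounds. Summing over the $O(\poly\log(n)/\eps^2)$ rounds of Koutis--Xu, and noting $\log n=O(\log\overline n)$ because $|V(G)|\le\rho\overline n\le\poly(\overline n)$, yields the claimed round bound once the polylogarithmic overheads are collected into the $\log^{8}\overline n$ term; a union bound over the invocations of Lemma~\ref{lemma:Communication} keeps the total failure probability polynomially small.

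It remains to verify the two output properties. Property~(1) is immediate, since Koutis--Xu never adds an edge. For property~(2), $\widetilde G$ has $V(\widetilde G)=V(G)$ and $E(\widetilde G)\subseteq E(G)$, so we retain the supervertices, roots and connecting trees of the $G\rightarrow\overline G$ mapping and let the edge map of $\widetilde G$ be the restriction of $E_{map}^{G\rightarrow\overline G}(\cdot)$ to the surviving edges; this is a valid minor mapping in the sense of Definition~\ref{def:Minor}, with congestion no larger than that of $G$, and it is stored distributedly by one extra local step in which each surviving edge re-records itself. Property~(3) is exactly the spectral guarantee of~\cite{KX16}.

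The step I expect to be the real obstacle is the first one: confirming that the Koutis--Xu algorithm together with its Baswana--Sen subroutine genuinely decomposes into the ``local computation, then $O(\poly\log n)$-bit neighbor messages, then $O(1)$ global aggregations'' template, so that each of its rounds is faithfully reproduced by Lemma~\ref{lemma:Communication} and Corollary~\ref{corollary:MatVec} with only a $\poly\log$ overhead on top of the $\rho\sqrt{\overline n}\log\overline n+D$ factor. Once that is established, the rest is bookkeeping on the minor mapping.
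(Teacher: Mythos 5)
Your proposal is correct and takes essentially the same route as the paper: run the Koutis--Xu sparsifier on $G$, simulate each of its rounds on $\overline G$ through Lemma~\ref{lemma:Communication} and Corollary~\ref{corollary:MatVec}, and observe that the surviving edges inherit the $\rho$-minor distribution of $G$ so no new edges or congestion are introduced. The ``obstacle'' you flag at the end --- checking that Koutis--Xu together with the Baswana--Sen spanner subroutine really decomposes into the supported primitives --- is exactly what the paper's proof spends its few sentences on, by observing that Baswana--Sen only requires cluster centers to broadcast a sampling bit along their cluster trees, and each node to compare weights of its incident edges and aggregate, all of which Lemma~\ref{lemma:Communication} supports directly.
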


\begin{proof}
The algorithm by~\cite{KX16} is based on repeated spanner computations on subgraphs (which are obtained by removing edges from previous spanner computations and uniform sampling of edges).
The spanner algorithm of~\cite{BS07}, internally used in~\cite{KX16}, iteratively grows clusters -- organized as spanning trees rooted at center nodes -- and adds edges to the spanner.
In each iteration some of the existing clusters first are sampled at random, which is done by the respective center node who then forwards the information whether the cluster is sampled to all nodes in its cluster.
Then each node decides whether it joins a cluster and if so which one and also decides which of its neighboring edges it adds to the spanner.
These decisions are made by comparing the weights of its incident edges.
Thus, all the operations performed by nodes in the algorithm of~\cite{BS07}, and thus the algorithm of~\cite{KX16} fit the description of operations supported by Lemma~\ref{lemma:Communication}.
\end{proof}

We will call this sparsification routine regularly, often as
preprocessing.
This is partly because subgraphs $1$-minor distributes into itself trivially.

The minor property also compose naturally:
a minor of a minor of $G$ is also a minor of $G$.
This holds with $\rho$-minors too, up to multiplications of the congestion parameters.
We prove the following general composition result in Appendix~\ref{sec:MinorProofs}.

\begin{lemma}
\label{lemma:MinorCompose}
Given graphs $G_1$, and $G_2$ via
a $\rho_2$-minor distribution of $G_2$ into $\overline{G}$,
and a $\rho_1$-minor distribution of $G_1$ into $G_2$ stored
on the root vertices of the supervertices of $G_2$, and images
of $G_2$'s edges in $\overline{G}$,
we can, with high probability,
compute using $\O( \rho_{1} \rho_2 \cdot (\sqrt{\overline{n}} + D))$
rounds of communication in the \congest model a $\rho_1 \cdot \rho_2$-minor
distribution of $G_1$ into $G$.
\end{lemma}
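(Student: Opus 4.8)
The plan is to prove Lemma~\ref{lemma:MinorCompose} by explicitly composing the two minor mappings, arguing that the composed maps satisfy Definition~\ref{def:Minor}, and then showing how to set up the composed data structure distributedly using the communication primitives of Lemma~\ref{lemma:Communication} and Lemma~\ref{lemma:RootTrees}. Concretely, for each $u \in V(G_1)$ define $S^{G_1 \to \overline G}(u) \defeq \bigcup_{w \in S^{G_1 \to G_2}(u)} S^{G_2 \to \overline G}(w)$, i.e. replace every vertex $w$ of $G_2$ in the $G_1$-supervertex by the whole $G_2$-supervertex of $w$ in $\overline G$. For the root, take $V_{map}^{G_1 \to \overline G}(u) \defeq V_{map}^{G_2 \to \overline G}\big(V_{map}^{G_1 \to G_2}(u)\big)$. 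For the connecting tree, take the union of the trees $T^{G_2 \to \overline G}(w)$ over $w \in S^{G_1 \to G_2}(u)$, together with, for each edge $w w'$ of the tree $T^{G_1 \to G_2}(u)$, the $\overline G$-edge $E_{map}^{G_2 \to \overline G}(ww')$; one then checks this is connected (because $T^{G_1 \to G_2}(u)$ is connected in $G_2$ and each $G_2$-supervertex is connected in $\overline G$) and contract it to a spanning tree. Finally, for an edge $e = uv \in E(G_1)$ with $E_{map}^{G_1 \to G_2}(e) = ww'$, set $E_{map}^{G_1 \to \overline G}(e) \defeq E_{map}^{G_2 \to \overline G}(ww')$, which lands in $S^{G_2 \to \overline G}(w) \subseteq S^{G_1 \to \overline G}(u)$ and symmetrically for $v$, as required; self-loops compose to self-loops.

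Next I would verify the congestion bound $\rho_1 \rho_2$. A vertex $x \in \overline G$ lies in $S^{G_1 \to \overline G}(u)$ only if $x \in S^{G_2 \to \overline G}(w)$ for some $w \in S^{G_1 \to G_2}(u)$; the number of such $w$ (counted as roots $V_{map}^{G_2\to\overline G}$) is at most $\rho_2$ by the $\rho_2$-minor property, and for a fixed $w$ the number of $u$ with $w \in S^{G_1\to G_2}(u)$ is at most $\rho_1$ — strictly, one has to be a little careful, since the congestion in Definition~\ref{def:Minor} is stated in terms of containment of root vertices $V_{map}$ rather than membership in $S$, so I would phrase the count through roots and use that each $S^{G_2\to\overline G}(w)$ contributes its root $V_{map}^{G_2\to\overline G}(w)$, giving the product $\rho_1\rho_2$ bound on how many composed supervertices contain a given $\overline G$-vertex as a root. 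The edge congestion is analogous: an edge $f \in \overline E$ is used by the composed structure either as the image $E_{map}^{G_2\to\overline G}(ww')$ of some $G_2$-edge (bounded $\rho_2$-fold by the $G_2$-minor, and each such $ww'$ is then used $\le \rho_1$ times as a tree edge or edge image by the $G_1$-minor), or as a tree edge of some $T^{G_2\to\overline G}(w)$ (bounded $\rho_2$-fold, and each $w$ lies in $\le\rho_1$ supervertices $S^{G_1\to G_2}(u)$), for a total of $\le \rho_1\rho_2$.

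For the distributed construction, the content is propagating the composed mapping onto the root vertices of $\overline G$ within the stated round budget. Each $\overline G$-vertex $x$ already knows, from the stored $G_2 \to \overline G$ distribution, the $G_2$-vertices $w$ for which it is in $S^{G_2\to\overline G}(w)$ and the trees through it; and by hypothesis the $G_1 \to G_2$ distribution is stored on the root vertices $V_{map}^{G_2\to\overline G}(w)$, together with the images of $G_2$'s edges in $\overline G$. So the algorithm is: (i) for each $G_2$-vertex $w$, its root in $\overline G$ broadcasts the list ``which $G_1$-vertices $u$ have $w \in S^{G_1\to G_2}(u)$ / which have $w$ as their $G_1\to G_2$-root / which $G_1$-edges map onto $G_2$-edges incident to $w$'' down its $G_2$-supervertex $S^{G_2\to\overline G}(w)$ using the broadcast primitive of Lemma~\ref{lemma:Communication}; after this every $x$ knows, for each $w$ with $x \in S^{G_2\to\overline G}(w)$, the relevant $G_1$-level data, and can thereby record its membership in composed supervertices and which composed tree edges / edge images pass through it. Then (ii) re-run Lemma~\ref{lemma:RootTrees} on the composed minor to orient each composed tree toward the composed root. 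Each invocation of Lemma~\ref{lemma:Communication}/Lemma~\ref{lemma:RootTrees} costs $\O(t\rho\sqrt{\overline n} + D)$ rounds; applied to the $\rho_2$-minor of $G_2$ into $\overline G$ and then to the $\rho_1\rho_2$-minor of $G_1$ into $\overline G$, and recalling that each message here carries $\O(1)$ words (a constant number of IDs per source), the total is $\O(\rho_1\rho_2(\sqrt{\overline n} + D))$ rounds. The main obstacle I anticipate is bookkeeping rather than depth: making sure the congestion argument is phrased in exactly the ``root-vertex containment'' language of Definition~\ref{def:Minor} (so the product bound is legitimate), and confirming that every piece of the composed mapping that an $\overline G$-vertex must store is in fact derivable from data it receives in $\O(1)$ broadcast/aggregate rounds — i.e., that no vertex needs to learn an $\Omega(n)$-sized neighborhood, which is exactly the concern flagged after Definition~\ref{def:Minor}.
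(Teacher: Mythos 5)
Your composed maps are exactly the paper's: the new supervertex is $\bigcup_{w\in S^{G_1\to G_2}(u)} S^{G_2\to\overline G}(w)$, the new root is $V_{map}^{G_2\to\overline G}(V_{map}^{G_1\to G_2}(u))$, and edges compose by $E_{map}^{G_1\to\overline G} = E_{map}^{G_2\to\overline G}\circ E_{map}^{G_1\to G_2}$. You actually spell out something the paper leaves implicit, namely that the union of the $T^{G_2\to\overline G}(w)$ trees alone is not connected and you must also throw in the $\overline G$-edges $E_{map}^{G_2\to\overline G}(ww')$ for each $ww' \in T^{G_1\to G_2}(u)$ -- that is a genuine clarification of the paper's writeup, which only lists the union of $G_2$-trees. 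Your congestion accounting (product bound via $\rho_2$ supervertex memberships in $\overline G$ and $\rho_1$ memberships in $G_2$, for both vertices and edges) is also the paper's argument, and your caution about the exact ``root-vertex containment'' phrasing of Definition~\ref{def:Minor} is well placed.

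The gap is in the step you wave away as ``contract it to a spanning tree.'' Because the $\rho_2$-minor distribution allows $S^{G_2\to\overline G}(w)$ and $S^{G_2\to\overline G}(w')$ to overlap, the connected subgraph you build for each composed supervertex can have cycles, so a spanning tree must actually be \emph{selected} and recorded distributedly (each $\overline G$-vertex and edge must learn which composed trees it belongs to). That selection is the main algorithmic content of the paper's proof: it runs $O(\log n)$ iterations of a Bor\r{u}vka-style procedure combined with parallel tree contraction (contracting leaves and an independent set of degree-2 edges), each iteration relying on Lemma~\ref{lemma:Communication} over the partially-contracted minor, giving the $\O(\rho_1\rho_2(\sqrt{\overline n}+D))$ total. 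Your step (ii) invokes Lemma~\ref{lemma:RootTrees}, but that lemma \emph{takes trees as input} -- it orients an already-determined tree toward the root, it does not extract a spanning tree from a connected subgraph. So as written your proposal jumps from ``a connected subgraph exists'' to ``a rooted spanning tree is distributedly stored'' without the distributed spanning-forest routine that the bound actually has to pay for, and this is where the argument is incomplete.
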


We will always work with congestions in the $n^{o(1)}$ range:
this essentially means we can perform distributed algorithms on $G$,
while paying an overhead of about $\sqrt{\overline{n}} + D$ in round
complexity to simulate on the original graph.

The composition of minors from Lemma~\ref{lemma:MinorCompose} implies,
among others, that a subset of edges can be quickly contracted.
\begin{corollary}
\label{corollary:ActualMinor}
Given $G$ that's $\rho$-minor distributed on $\overline{G}$,
along with a subset of edges $F \subseteq E(G)$
then we can obtain a $\rho$-minor distribution of $G / F$
($G$ with $F$ contracted),
into $\overline{G}$ in $\O( \rho ( \sqrt{\overline{n}} + D))$ rounds,
under the \congest model of computation.
\end{corollary}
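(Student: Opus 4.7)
The plan is to realize $G/F$ as a $1$-minor of $G$ and then compose with the given $\rho$-minor distribution of $G$ into $\overline{G}$ via Lemma~\ref{lemma:MinorCompose} with $\rho_1 = 1$ and $\rho_2 = \rho$. Since $1 \cdot \rho = \rho$, this immediately yields a $\rho$-minor distribution of $G/F$ into $\overline{G}$, and the stated round bound $\tilde{O}(\rho(\sqrt{\overline{n}} + D))$ is exactly what Lemma~\ref{lemma:MinorCompose} charges for the composition step.

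The natural $1$-minor distribution of $G/F$ into $G$ is the following: each vertex of $G/F$ corresponds to a connected component $C$ of the subgraph $(V(G), F)$; its supervertex $S^{G/F \rightarrow G}(v^{G/F})$ is $C$; its tree $T^{G/F \rightarrow G}(v^{G/F})$ is any spanning tree of $C$ built from edges of $F$; and every edge $e \in E(G/F) = E(G) \setminus F$ maps to itself in $G$. Every $G$-vertex lies in exactly one component, every edge of $F$ appears in exactly one tree, and every edge of $E(G) \setminus F$ is the image of exactly one $G/F$-edge, so this assignment has congestion $1$.

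To supply Lemma~\ref{lemma:MinorCompose} with this object, each root $V_{map}^{G \rightarrow \overline{G}}(v^G)$ must learn which $F$-component of $G$ the vertex $v^G$ belongs to; the local tree-edge versus edge-map data is already known since $F$ is part of the input. To compute $F$-components distributedly, I would run a standard CONGEST spanning-forest algorithm on the subgraph
\[ \overline{H} \defeq \bigcup_{v^G \in V(G)} T^{G \rightarrow \overline{G}}(v^G) \;\cup\; \left\{ E_{map}^{G \rightarrow \overline{G}}(e) : e \in F \right\} \]
of $\overline{G}$. Each $\bar{v} \in V(\overline{G})$ can decide whether each of its incident $\overline{G}$-edges lies in $\overline{H}$ directly from the data it already stores under Definition~\ref{def:Minor}. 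Because every $S^{G \rightarrow \overline{G}}(v^G)$ is internally connected in $\overline{H}$ via its supervertex tree and each $F$-image edge links super­vertices of $F$-adjacent $v^G$'s, the components of $\overline{H}$ are in bijection with the $F$-components of $G$; hence one such spanning-forest computation labels each root $V_{map}^{G \rightarrow \overline{G}}(v^G)$ with its $F$-component identifier in $\tilde{O}(\sqrt{\overline{n}} + D)$ rounds.

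With this $1$-minor stored at the roots, invoking Lemma~\ref{lemma:MinorCompose} returns the desired $\rho$-minor distribution of $G/F$ in $\overline{G}$ in $\tilde{O}(\rho(\sqrt{\overline{n}} + D))$ rounds, absorbing the auxiliary spanning-forest cost. The main obstacle I expect is not the algebra---congestion bookkeeping is already handled by Lemma~\ref{lemma:MinorCompose}---but verifying that the spanning-forest step does run in $\tilde{O}(\sqrt{\overline{n}} + D)$ rounds when we are interested in the connectivity of a proper subgraph $\overline{H}$ while communicating on $\overline{G}$; this is standard since existing CONGEST connected-components algorithms use all edges of the communication network for message passing and can simply ignore non-$\overline{H}$ edges when accumulating connectivity, so their round bound depends on $\overline{G}$ rather than on the possibly larger diameter of $\overline{H}$.
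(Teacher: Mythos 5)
Your overall skeleton---realize $G/F$ as a $1$-minor of $G$ (supervertices $=$ $F$-components, trees $=$ spanning forests of $F$, edges map to themselves) and then compose once with the given $\rho$-minor distribution via Lemma~\ref{lemma:MinorCompose}, so the congestion stays $1\cdot\rho=\rho$---is exactly the right shape, and it matches the paper's intent. The gap is in the step where you compute the $F$-component labels. You reduce this to standard subgraph connectivity on the subgraph $\overline{H}\subseteq\overline{G}$ formed by all supervertex trees together with the images of the $F$-edges, and you claim the components of $\overline{H}$ are in bijection with the $F$-components of $G$. That claim is false for $\rho>1$: Definition~\ref{def:Minor} only bounds by $\rho$ the number of supervertices containing a given vertex of $\overline{G}$, so the sets $S^{G\rightarrow\overline{G}}(u^{G})$ and $S^{G\rightarrow\overline{G}}(v^{G})$ of two distinct, $F$-unrelated vertices $u^{G},v^{G}$ may share vertices (and tree edges) of $\overline{G}$. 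Their trees then lie in the same connected component of $\overline{H}$, so your labeling would merge $u^{G}$ and $v^{G}$ even when $F=\emptyset$. In other words, connectivity must be computed on the \emph{virtual} graph $(V(G),F)$, which is only $\rho$-minor distributed over $\overline{G}$, and a black-box \congest connected-components routine run on a subgraph of the physical network does not see the multiplicity information that keeps overlapping supervertices apart.

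The repair is to compute the components on the minor itself using the paper's communication primitives, and this is what the paper's proof does: it runs $O(\log n)$ Bor\r{u}vka/parallel-tree-contraction phases directly on $G$, where each phase contracts a set of vertex-disjoint stars of $F$. Each such star contraction is a genuinely trivial $1$-minor of the current graph (the star center generates the new ID, which is propagated within supervertices and across the $F$-edge images using Lemma~\ref{lemma:Communication}), and composing it via Lemma~\ref{lemma:MinorCompose} keeps the congestion at $\rho$ while costing $\O(\rho(\sqrt{\overline{n}}+D))$ per phase, hence $\O(\rho(\sqrt{\overline{n}}+D))$ overall. So your one-shot composition would be fine \emph{if} you obtained the component IDs by an aggregation scheme that respects the $\rho$-minor structure (e.g.\ the same iterated star-contraction), but as written the $\overline{H}$-connectivity shortcut does not establish the corollary for general $\rho$.
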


The proof of this requires running $O(\log{n})$ rounds of parallel
contraction on the edges of $F$.
We defer it to Appendix~\ref{sec:MinorProofs} as well.

\subsection{Laplacian Building Blocks}

Some of our algorithm require working with submatrices of Laplacians, which may not be Laplacians anymore, but are still SDD.
Fortunately, we can reduce solving an SDD matrix on a graph to solving a Laplacian on
a $2$-minor.
The proof is straightforward and can be found in \cite{G96:thesis} or \cite{ST14}.
\begin{lemma}[Gremban \cite{G96:thesis}]
  \label{lem:SDDtoL}
  Given an $n$-by-$n$ SDD matrix $\mM$ that $\rho$-minor distributes
  into $\overline{G}$,
  we can construct a graph $H$ on $2n$ vertices,
  along with a $2\rho$-minor distribution of $H$ into $\overline{G}$
  with vertices $i$ and $n + i$'s roots mapping to the same root vertex,
  so that for any vector $\vec{b} \in \R^{n}$
  and any vectors $\vec{x} \in \R^{2n}$ such that
  \[
  \norm{\vec{x} - \mL\left( H \right)^{\dag}
  \left[\begin{array}{c} \vec{b} \\ \vec{b} \end{array} \right]
	}_{\mL\left( H \right)}
	\leq
	\epsilon \norm{\left[\begin{array}{c} \vec{b} \\ \vec{b} \end{array}
	\right]}_{\mL\left( H \right)^{\dag}},
	\]
	we have
	\[
		\norm{\frac{\vec{x}_{1:n} - \vec{x}_{n+1:2n}}{2}
			 - \mM^{\dag}\vec{b}}_{\mM}
		 \leq
		\epsilon \norm{\vec{b}}_{\mM^{\dag}}.
	\]
\end{lemma}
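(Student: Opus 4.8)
The plan is to recall Gremban's classical construction and then verify the claimed minor-distribution and error-transfer properties. Write the SDD matrix as $\mM = \mD + \mA_{+} + \mA_{-}$, where $\mD$ collects the diagonal (including any ``excess'' diagonal weight not cancelled by off-diagonal entries), $\mA_{-}$ holds the negative off-diagonal entries, and $\mA_{+}$ holds the positive off-diagonal entries. Define the doubled graph $H$ on vertex set $\{1,\dots,n\}\cup\{n+1,\dots,2n\}$ as follows: for each negative entry $\mM_{ij} = -w < 0$ put an edge of weight $w$ between $i$ and $j$ and another of weight $w$ between $n+i$ and $n+j$; for each positive entry $\mM_{ij} = w > 0$ put an edge of weight $w$ between $i$ and $n+j$ and another between $n+i$ and $j$; and for the diagonal slack $s_i \defeq \mM_{ii} - \sum_{j \ne i}|\mM_{ij}| \ge 0$ put an edge of weight $s_i$ between $i$ and $n+i$. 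A direct block computation shows
\[
\mL(H) = \left[\begin{array}{cc} \mM' & -\mA_+ \\ -\mA_+ & \mM' \end{array}\right]
\quad\text{where}\quad \mM' = \mD + \mA_-,
\]
so that $\mL(H)\left[\begin{smallmatrix}\vec{b}\\-\vec{b}\end{smallmatrix}\right] = \left[\begin{smallmatrix}(\mM'+\mA_+)\vec{b}\\-(\mM'+\mA_+)\vec{b}\end{smallmatrix}\right] = \left[\begin{smallmatrix}\mM\vec{b}\\-\mM\vec{b}\end{smallmatrix}\right]$, and likewise $\mL(H)\left[\begin{smallmatrix}\vec{b}\\\vec{b}\end{smallmatrix}\right] = \left[\begin{smallmatrix}\mN\vec{b}\\\mN\vec{b}\end{smallmatrix}\right]$ for $\mN = \mD + \mA_- - \mA_+$. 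The key algebraic identity is that the map $\vec{y}\mapsto\tfrac12(\vec{y}_{1:n}-\vec{y}_{n+1:2n})$ intertwines $\mL(H)$-arithmetic with $\mM$-arithmetic on the antisymmetric subspace: conjugating by the orthogonal change of basis to symmetric/antisymmetric coordinates block-diagonalizes $\mL(H)$ into $\mathrm{diag}(\mN,\mM)$ (up to the $1/2$ normalization), and the right-hand side $\left[\begin{smallmatrix}\vec{b}\\\vec{b}\end{smallmatrix}\right]$ after noting $\mL(H)^\dagger\left[\begin{smallmatrix}\vec{b}\\\vec{b}\end{smallmatrix}\right]$ lives where the relevant block acts — one checks $\tfrac12\big((\mL(H)^\dagger\left[\begin{smallmatrix}\vec{b}\\\vec{b}\end{smallmatrix}\right])_{1:n} - (\mL(H)^\dagger\left[\begin{smallmatrix}\vec{b}\\\vec{b}\end{smallmatrix}\right])_{n+1:2n}\big) = \mM^\dagger\vec{b}$, using the pseudoinverse on the symmetric/antisymmetric decomposition. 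Then for any $\vec{x}\in\R^{2n}$, writing $\vec{e} = \vec{x} - \mL(H)^\dagger\left[\begin{smallmatrix}\vec{b}\\\vec{b}\end{smallmatrix}\right]$, the quantity $\tfrac12(\vec{e}_{1:n}-\vec{e}_{n+1:2n})$ is exactly the error vector for $\mM$, and $\norm{\tfrac12(\vec{e}_{1:n}-\vec{e}_{n+1:2n})}_{\mM} \le \norm{\vec{e}}_{\mL(H)}$ follows because that linear map is a contraction from $(\R^{2n},\mL(H))$ into $(\R^n,\mM)$ — again seen cleanly in the block-diagonalized basis, where it is the projection onto the $\mM$-block. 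Combined with $\norm{\left[\begin{smallmatrix}\vec{b}\\\vec{b}\end{smallmatrix}\right]}_{\mL(H)^\dagger} \le \norm{\vec{b}}_{\mM^\dagger}$ (same argument, dual norm), the stated error bound follows.

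For the minor-distribution claim, observe that $H$ is built \emph{locally}: every edge of $H$ corresponds to an off-diagonal entry of $\mM$ (lifted to one of the two ``copies'') or to a diagonal entry (the slack edge between $i$ and $n+i$). Since $\mM$ is $\rho$-minor distributed over $\overline{G}$, entry $\mM_{ij}$ is associated to an edge (or self-loop) of $\overline{G}$ whose endpoints lie in $S^{\mM\to\overline{G}}(i)$ and $S^{\mM\to\overline{G}}(j)$. Define $S^{H\to\overline{G}}(i) = S^{H\to\overline{G}}(n+i) = S^{\mM\to\overline{G}}(i)$ with the same root vertex, and route each edge of $H$ along the $\overline{G}$-edge carrying the corresponding entry of $\mM$ (the slack self-loop $\{i,n+i\}$ can be routed trivially within the supervertex, or as a self-loop at the common root). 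Each vertex of $\overline{G}$ now lies in at most $2\rho$ supervertices (two $H$-vertices share each $\mM$-supervertex), and each edge of $\overline{G}$ is used at most twice as often as before (once for each copy), giving congestion $2\rho$; connectivity of each $S^{H\to\overline{G}}(\cdot)$ and the tree structure are inherited verbatim from the $\mM$-distribution. The distributed storage is obtained by having each $\overline{G}$-vertex additionally record the $H$-roles of the $\mM$-roles it already stores, which is $O(\rho)$ extra bookkeeping.

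The main obstacle is the error-transfer step: one must be careful that $\mL(H)$ is only PSD (not PD), $\mM$ may likewise be singular, and the pseudoinverses interact nontrivially with the symmetric/antisymmetric block decomposition — in particular one needs that $\left[\begin{smallmatrix}\vec{b}\\\vec{b}\end{smallmatrix}\right]$ has no component in $\ker\mL(H)$ outside of what maps to $\ker\mM$, and that taking $(\cdot)_{1:n}-(\cdot)_{n+1:2n}$ commutes with the pseudoinverse on the relevant subspace. This is exactly the content worked out in \cite{G96:thesis} and \cite{ST14}; the clean way to present it is to fix the orthogonal matrix $\mathbf{U} = \tfrac{1}{\sqrt2}\left[\begin{smallmatrix}\mI & \mI\\ \mI & -\mI\end{smallmatrix}\right]$, verify $\mathbf{U}^\top\mL(H)\mathbf{U} = \left[\begin{smallmatrix}\mN & 0\\ 0 & \mM\end{smallmatrix}\right]$, and then push the (pseudo)inverse and all norms through $\mathbf{U}$, after which both inequalities reduce to the statement that restricting a vector to a coordinate block can only decrease a block-diagonal quadratic form. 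Everything else — the block Laplacian computation and the minor bookkeeping — is routine.
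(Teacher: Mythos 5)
The paper does not include its own proof of this lemma (it points to Gremban's thesis and Spielman--Teng), and your proposal follows the standard reduction from those references, so the overall approach is the right one. There are, however, a few concrete problems in the details.

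The most significant is a sign issue. You correctly derive (modulo the slack-weight fix below) that $\mL(H)\left[\begin{smallmatrix}\vec{b}\\-\vec{b}\end{smallmatrix}\right] = \left[\begin{smallmatrix}\mM\vec{b}\\-\mM\vec{b}\end{smallmatrix}\right]$, but then assert that ``one checks'' $\tfrac12\bigl((\mL(H)^{\dag}\left[\begin{smallmatrix}\vec{b}\\\vec{b}\end{smallmatrix}\right])_{1:n} - (\mL(H)^{\dag}\left[\begin{smallmatrix}\vec{b}\\\vec{b}\end{smallmatrix}\right])_{n+1:2n}\bigr) = \mM^{\dag}\vec{b}$. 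This does not check out: $\mL(H)$ commutes with the coordinate swap $i \leftrightarrow n+i$, so $\mL(H)^{\dag}$ sends the swap-symmetric vector $\left[\begin{smallmatrix}\vec{b}\\\vec{b}\end{smallmatrix}\right]$ to a swap-symmetric vector, whose difference of halves is $0$, not $\mM^{\dag}\vec b$. In your own block-diagonalized basis this is immediate, since $\mathbf{U}^{\top}\left[\begin{smallmatrix}\vec{b}\\\vec{b}\end{smallmatrix}\right] = \left[\begin{smallmatrix}\sqrt2\vec{b}\\0\end{smallmatrix}\right]$ is supported entirely on the block complementary to $\mM$. The construction works with right-hand side $\left[\begin{smallmatrix}\vec{b}\\-\vec{b}\end{smallmatrix}\right]$ (the vector your first algebraic display actually uses); the lemma as typeset in the paper appears to have a sign typo in the right-hand-side vector, and a blind verification should have surfaced that discrepancy rather than asserting both conventions simultaneously.

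Second, the slack edge between $i$ and $n+i$ must have weight $s_i/2$, not $s_i$. With weight $s_i$ one gets $\mL(H)_{[1:n,1:n]} - \mL(H)_{[1:n,n+1:2n]} = \mM + \mathrm{diag}(s_i)$, not $\mM$, so the $\mM$-block of $\mathbf{U}^{\top}\mL(H)\mathbf{U}$ is off by $\mathrm{diag}(s_i)$. Your displayed block formula for $\mL(H)$ also drops the slack contribution $-\tfrac12\mathrm{diag}(s_i)$ from the off-diagonal block. Finally, the two norm inequalities you invoke are each off by a factor $\sqrt2$, and one is stated in the wrong direction: in the block basis, $\norm{\tfrac12(\vec{e}_{1:n}-\vec{e}_{n+1:2n})}_{\mM} \le \tfrac{1}{\sqrt2}\norm{\vec{e}}_{\mL(H)}$ while $\norm{\left[\begin{smallmatrix}\vec{b}\\-\vec{b}\end{smallmatrix}\right]}_{\mL(H)^{\dag}} = \sqrt{2}\,\norm{\vec{b}}_{\mM^{\dag}}$, not $\le\norm{\vec{b}}_{\mM^{\dag}}$. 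The two $\sqrt{2}$'s cancel so the final bound is right, but the intermediate inequalities as you wrote them are not. The minor-distribution bookkeeping (supervertices for $i$ and $n+i$ both inherit $S^{\mM\to\overline{G}}(i)$, congestion doubles, slack as a self-loop at the root) is correct.
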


In this section we outline the main pieces needed to prove Theorem \ref{thm:Main}.
As described in Section \ref{sec:Overview}, we require three main graph reduction procedures: ultrasparsification (Lemma \ref{lem:UltraSparsify}), sparsified Cholesky (Lemma \ref{lem:Elimination}), and minor-based Schur complements (Theorem \ref{thm:MinorSC}).

The ultrasparsification procedure, based on \cite{ST14,KMP10}, allows us to significantly reduce the size of the graph and maintain congestion, but incurs a large approximation error.
\begin{restatable}{lemma}{UltraSparsify}
\label{lem:UltraSparsify}
	There is a routine $\Ultrasparsify (G, k)$ in the \congest\ model that
	given a graph $G$ with $n$ vertices and $m$ edges,
	that $\rho$-minor distributes into the communication network $\overline{G}$,
  which has $\overline n$ vertices, $\overline m$ edges, and diameter $D$,
	along with a parameter $k$,
	produces in $O(n^{o(1)}(\rho\sqrt{\overline{n}}+ D))$
	rounds a graph $H$ such that:
	\begin{enumerate}
		\item $H$ is a subgraph of $G$,
		\item $H$ has at most $n - 1 + m 2^{O(\sqrt{\log n \log \log n})}/ k$ edges.
		\item $\mL(G) \preceq \mL(H) \preceq k \mL(G)$.
	\end{enumerate}
	Furthermore, the algorithm also gives $\widehat{G}, \mZ_1, \mZ_2, C$ such that
	\begin{enumerate}
	\item  $\widehat{G}$ $1$-minor distributes into $H$ such that
	$\widehat{G} = \mSC(H, C)$ with $|C| = m 2^{O(\sqrt{\log n \log \log n})}/ k$.
	\item There are operators 	$\mZ_1$ and $\mZ_2$ evaluable with
	$O(\rho \sqrt{\overline n} \log \overline n+ D)$ rounds of \congest
	communication on $\overline{G}$ such that:
	\[
	\mL\left( H \right)^{\dag}
	= \mZ_1^{\tomato}
	\left[
	\begin{array}{cc}
	\mZ_{2} & 0\\
	0 & \mL\left( \widehat{G} \right)^{\dag}
	\end{array}
	\right]
	\mZ_1
	\]
	\end{enumerate}
\end{restatable}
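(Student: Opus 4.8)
The plan is to build the ultrasparsifier $H$ by the standard Spielman–Teng / Koutis–Miller–Peng recipe, adapted to the distributed minor setting using the communication primitives already established. First I would invoke $\Sparsify$ (Corollary~\ref{corollary:Sparsify}) to replace $G$ by a spectrally $\epsilon$-equivalent reweighted subgraph with $\O(n)$ edges, so that the rest of the construction works on a near-sparse graph while only changing the final approximation constants; since a subgraph $1$-minor distributes into itself, the congestion stays $\rho$ up to the $\rho\sqrt{\overline n}+D$ overhead of the sparsification call. Next, the core step: compute a low-stretch spanning tree $T$ of $G$ with total stretch $m \cdot 2^{O(\sqrt{\log n \log\log n})}$ (the Abraham–Neiman / AKPW-type bound), then form $H$ by keeping $T$ together with a sample of the off-tree edges — either each off-tree edge $e$ included with probability proportional to $\operatorname{stretch}_T(e)/k$ and reweighted, or the deterministic bucketing variant. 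A Chernoff/matrix-Chernoff argument over the off-tree edges shows $\mL(G) \preceq \mL(H) \preceq k\,\mL(G)$ with high probability, and the expected number of surviving off-tree edges is $m\,2^{O(\sqrt{\log n\log\log n})}/k$, giving property (2) with at most $n-1$ tree edges plus that many extras; property (1) (subgraph) is immediate, and property (3) is exactly the spectral bound. The distributed implementation of the low-stretch tree and the sampling is where I would lean on prior work: the tree can be computed with the ball-growing / star-decomposition machinery, each phase of which consists of BFS-type explorations and weight comparisons that fit the operations of Lemma~\ref{lemma:Communication}, so the whole thing runs in $n^{o(1)}(\rho\sqrt{\overline n}+D)$ rounds; stretch values are then computed by $O(\log n)$ rounds of LCA / tree-path aggregation, again expressible via the rooted-tree primitives of Lemma~\ref{lemma:RootTrees}.

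For the "furthermore" part, I would take $C$ to be the set of non-degree-$1$, non-degree-$2$ vertices of the tree $T$ after contracting all degree-$1$ and "degree-$2$ path" vertices — equivalently, eliminate greedily every vertex of $G$ that has degree $\le 2$ in $H$. Because $H$ is a spanning tree plus $m\,2^{O(\sqrt{\log n\log\log n})}/k$ extra edges, the number of vertices that ever have degree $\ge 3$ during this greedy elimination is $O(\#\text{extra edges}) = m\,2^{O(\sqrt{\log n\log\log n})}/k$; all other vertices get eliminated, so $|C| = m\,2^{O(\sqrt{\log n\log\log n})}/k$ as claimed. Eliminating a degree-$1$ vertex contributes nothing to the Schur complement, and eliminating a degree-$2$ vertex $v$ on a path $u\!-\!v\!-\!w$ just replaces the two edges by a single series edge between $u$ and $w$ — this operation is a contraction-and-reweight, hence keeps $\widehat G := \mSC(H,C)$ a $1$-minor of $H$, and the tree structure on the eliminated part means the elimination can be organized into $O(\log n)$ parallel rounds (repeatedly eliminate a maximal matching of removable low-degree vertices, à la Corollary~\ref{corollary:ActualMinor}). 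Finally, Lemma~\ref{lemma:cholesky} applied with $S = V(H)\setminus C$ gives exactly the claimed factorization $\mL(H)^\dagger = \mZ_1^\top \operatorname{diag}(\mZ_2, \mL(\widehat G)^\dagger)\mZ_1$, where $\mZ_1$ encodes the block $\left[\begin{smallmatrix}\mI & -\mM_{[S,S]}^{-1}\mM_{[S,C]}\\ 0 & \mI\end{smallmatrix}\right]$ and $\mZ_2 = \mM_{[S,S]}^{-1}$; both are evaluable in $O(\rho\sqrt{\overline n}\log\overline n+D)$ rounds because $\mM_{[S,S]}$ is (block-diagonal over) a disjoint union of weighted paths/stars, so $\mM_{[S,S]}^{-1}$ acts by $O(\log n)$ rounds of parallel tree solves, each of which is a matrix–vector product of the type handled by Corollary~\ref{corollary:MatVec}.

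The main obstacle I expect is the distributed construction and stretch-evaluation of the low-stretch spanning tree with the $2^{O(\sqrt{\log n\log\log n})}$ stretch guarantee while preserving the $\rho$-minor distribution and the $n^{o(1)}(\rho\sqrt{\overline n}+D)$ round bound: the recursive star/ball decompositions have $\omega(\log n)$ but $n^{o(1)}$ recursion depth, each level requires shortest-path/BFS computations on the current (contracted) graph, and one must verify that every such subroutine only uses the send-to-supervertex and aggregate-over-supervertex operations of Lemma~\ref{lemma:Communication} and that the congestion does not blow up past $n^{o(1)}$. The spectral sampling bound, the Schur-complement-is-a-minor claim, and the factorization are comparatively routine given the tools already in the paper; it is the tree that carries the $n^{o(1)}$ overhead and requires the most care to port to the overlay-network model.
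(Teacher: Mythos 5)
Your proposal follows essentially the same route as the paper: compute a low-stretch spanning tree with total stretch $m\cdot 2^{O(\sqrt{\log n\log\log n})}$ via the distributed AKPW machinery of~\cite{GKKLP15} (the paper packages this as Lemma~\ref{lemma:LowStretch}), sample off-tree edges by stretch and rescale as in~\cite{KMP10} (Lemma~\ref{lemma:SampleByStretch}), and then eliminate the degree-$1$ and degree-$2$ vertices of $H$ in $O(\log n)$ parallel rounds to get $\widehat G$ with $|C| = O(\#\text{off-tree edges})$ and the Cholesky operators $\mZ_1,\mZ_2$ (the paper outsources this to Lemma~26 of~\cite{BGKMPT14}, packaged as Lemma~\ref{lemma:Deg1Elim}). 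The one organizational difference is that you fold a call to $\Sparsify$ into \textsc{UltraSparsify} itself, whereas the paper applies sparsification as a preprocessing step in the top-level \textsc{Solve}; this is harmless and does not change the argument. You also correctly flag the distributed low-stretch tree as the load-bearing component requiring the $n^{o(1)}$ overhead; the paper handles this by reducing to the cluster-graph primitives of~\cite{GKKLP15} via Lemma~\ref{lem:DfnsTalkWell}.
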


The elimination procedure, based on \cite{KLPSS16}, incurs small approximation error, but significantly increases the congestion.
\begin{restatable}{lemma}{Elimination}
\label{lem:Elimination}
	There is a routine $\Eliminate (G, d, \epsilon)$ in the \congest~model
	that given a graph $G$ that $\rho$-minor distributes
	into a communication network $\overline{G}$,
	along with step count $d$ and error $\epsilon$,
	produces in
	\[
	O( (\eps^{-6}\log^{14} n)^d (\rho\sqrt{\overline{n}} \log \overline{n} + D))
	\]
	rounds a subset $\T$ and access to operators $\mZ_{1}$ and $\mZ_{2}$ such that
	\begin{enumerate}
		\item $|\T| \leq (\frac{49}{50})^{d} |V(G)|$.
		\item The cost of applying $\mZ_{1}$, $\mZ_{1}^{\tomato}$ and $\mZ_{2}$ to
		vectors is $O((\eps^{-6}\log^{14} \overline{n})^d (\rho\sqrt{\overline{n}} \log
		\overline{n} + D))$ rounds of communication on $\overline{G}$.
		\item $\mL(G)^{\dag}$ is $(1 \pm \epsilon)^d$-approximated
		by a composed
		operator built from $\mZ_{1}$, $\mZ_{2}$, and the inverse of the
		of the Schur complement of $\mL(G)$ onto $C$, $\mSC(\mL(G), C)$:
		\[
		\left( 1 - \epsilon \right)^d \mL\left( G \right)^{\dag}
		\preceq
		\mZ_{1}^{\tomato}
		\left[
		\begin{array}{cc}
		\mZ_{2} & 0\\
		0 & \mSC\left( \mL\left( G \right), C \right)^{\dag}
		\end{array}
		\right] \mZ_{1}
		\preceq
		\left( 1 + \epsilon \right)^d \mL\left( G \right)^{\dag}
		\]
	\end{enumerate}
\end{restatable}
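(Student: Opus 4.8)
The plan is to iterate a single-round elimination step $d$ times, where each round removes a constant fraction of the vertices by (i) identifying an "almost independent" set $S$ to eliminate, (ii) approximating $\mM_{[S,S]}^{-1}$ via a few rounds of preconditioned iterative refinement (e.g. \textsc{Jacobi}), and (iii) approximating the resulting Schur complement onto the remaining vertices by a sparse graph of bounded congestion, using the random-walk sampling scheme sketched in Section \ref{sec:Overview}. First I would set up the single-round primitive: given $G$ that $\rho$-minor distributes into $\overline{G}$, find a set $S$ with $|S| \ge \tfrac{1}{50}|V(G)|$ such that each $v \in S$ has at least a constant fraction of its weighted degree going to $\T = V(G) \setminus S$ — this is the standard $\alpha$-strongly-diagonally-dominant subset selection (cf.\ \textsc{DDSubset}), computable in $\O(\rho\sqrt{\overline n}+D)$ rounds since it only involves comparing incident edge weights, an operation supported by Lemma \ref{lemma:Communication}. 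Because $S$ is almost independent, $\mM_{[S,S]}$ is well-conditioned relative to its diagonal, so $O(\log(1/\epsilon)\log n)$ steps of Jacobi iteration (each a matrix-vector multiply, hence $\O(\rho\sqrt{\overline n}+D)$ rounds by Corollary \ref{corollary:MatVec}) give an operator approximating $\mM_{[S,S]}^{-1}$ to the required accuracy; plugging this into the Cholesky factorization of Lemma \ref{lemma:cholesky} yields the operators $\mZ_1, \mZ_1^{\tomato}$ (the triangular off-diagonal blocks, applied via matrix-vector multiplies) and $\mZ_2$ (the approximate $\mM_{[S,S]}^{-1}$).

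Next I would handle the Schur complement sparsification, which is the crux. Eliminating $S$ produces $\mSC(\mM, \T)$, which may be dense; I replace it by $H$, the output of Algorithm "Approximate Schur Complement using Random Walks" with $O(\epsilon^{-2}\log n)$ walks per edge. Since $S$ is almost independent, each walk stays inside $S$ for only $O(\log n)$ steps before hitting $\T$ w.h.p., so this is $O(\epsilon^{-2}\log^2 n)$ rounds of communication on $G$ (each walk step is a weighted-random-neighbor query, supported by the communication primitives). Matrix-martingale concentration (Lemma \ref{lemma:freedman}, applied edge-by-edge after normalizing by $\mL(\mSC)^\dagger$) gives $\mL(H) \approx_\epsilon \mSC(\mM,\T)$ w.h.p. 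The delicate point is \emph{congestion}: a heavily weighted edge of $G$ can be traversed by $\Omega(n)$ walks, so $H$ as embedded into $\overline G$ could have huge congestion. To fix this I would, before sampling, estimate for each edge $e$ of $\overline G$ the expected number of walks crossing it (this is a quantity computable from effective-resistance–type statistics, itself reducible to $O(\log n)$ matrix-vector solves against $\mM_{[S,S]}^{-1}$), and add both endpoints of any over-congested edge back into $\T$ as extra terminals. A standard counting argument (total walk-length budget is $O(m\epsilon^{-2}\log^2 n)$, spread over $\overline m$ edges) shows only an $n^{o(1)}$ fraction of vertices get promoted, so $|\T|$ still shrinks by a constant factor, and the resulting $H$ $\rho\cdot\log^{O(1)}n$-minor distributes into $\overline G$ by Lemma \ref{lemma:MinorCompose} and Corollary \ref{corollary:ActualMinor}.

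Then I would iterate: apply the single-round primitive $d$ times, each time to the current sparsified graph. After round $i$ the vertex count is $\le (49/50)^i |V(G)|$, giving property 1. The congestion multiplies by a $\log^{O(1)}n$ factor per round, hence stays $(\log^{O(1)}n)^d$, and the round complexity multiplies by $\epsilon^{-6}\log^{14}n$ per round (the $\epsilon^{-6}$ and extra log powers absorbing the walk count, the martingale union bound, the Jacobi iterations, and the recompression of the minor), giving the claimed $O((\epsilon^{-6}\log^{14}n)^d(\rho\sqrt{\overline n}\log\overline n+D))$ bound and property 2. For property 3 I would compose: the single-round identity is $\mM^\dagger = \mZ_1^{\tomato}\bigl[\begin{smallmatrix}\mZ_2 & 0\\ 0 & \mSC(\mM,\T)^\dagger\end{smallmatrix}\bigr]\mZ_1$ exactly (Lemma \ref{lemma:cholesky}, Lemma \ref{lemma:Inverse}), and replacing $\mSC(\mM,\T)$ by $H$ costs a factor $(1\pm\epsilon)$ in the PSD order, while replacing $\mM_{[S,S]}^{-1}$ by its Jacobi approximation costs another $(1\pm\epsilon)$; unrolling $d$ levels and telescoping the nested operators (each outer $\mZ_1$ conjugates the next level) accumulates to $(1\pm\epsilon)^d$, with the innermost remaining object being exactly $\mSC(\mL(G),C)$ for $C$ the final terminal set — here one uses transitivity of Schur complements, that $\mSC(\mSC(G,\T_1),\T_2)=\mSC(G,\T_2)$ when $\T_2\subseteq\T_1$. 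I expect the main obstacle to be the congestion control in the Schur-complement sparsification step — both estimating per-edge walk congestion efficiently (it needs its own Laplacian solves, which is why the whole construction is embedded in the outer recursion of Section \ref{sec:Overview}) and proving that promoting over-congested vertices to terminals still leaves a constant-fraction reduction and doesn't blow up the recursion.
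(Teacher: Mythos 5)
Your high-level plan --- iterate $d$ rounds of (almost-independent set via $\alpha$-DD selection, Jacobi iteration to approximate $\mM_{[S,S]}^{-1}$, random-walk Schur complement with congestion-triggered terminal promotion) and compose via the Cholesky identity --- matches the paper's construction (Algorithm~\ref{algo:eliminate} together with \DDSubset, \Jacobi, and \RandWalkSchur), and your composition argument for property~3, including transitivity of nested Schur complements, is essentially the same as the paper's.

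There are two points of divergence. First, you omit the spectral-sparsification step between rounds (Corollary~\ref{corollary:Sparsify}). This is not mere bookkeeping: each random-walk Schur complement multiplies the edge count by $\Theta(\eps^{-2}\log n)$, so without resparsifying, by round $i$ the current graph has $m(\eps^{-2}\log n)^{\Theta(i)}$ edges. Your own counting argument --- ``the walk budget $O(m\eps^{-2}\log^2 n)$ spread over $\overline m$ edges promotes only a small fraction of vertices'' --- requires $m$ to stay near-linear in the current vertex count at every round; the paper enforces this by calling \Sparsify\ after each elimination step, and the per-round factor $\eps^{-6}\log^{14}n$ in the statement is precisely the product of the sparsifier cost with the random-walk cost. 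Without that step, the $(49/50)^d$ shrinkage and the congestion bound fail after a constant number of rounds.

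Second, your congestion estimation is via effective-resistance-type quantities derived from solves against $\mM_{[S,S]}^{-1}$, and you remark that this ``needs its own Laplacian solves, which is why the whole construction is embedded in the outer recursion.'' That conflates \Eliminate\ with \ApproxSC. The paper's \Eliminate\ is deliberately built so as \emph{not} to require any Laplacian solver: the expected walk congestion is computed by directly iterating the truncated random-walk transition for $O(\alpha\log n)$ steps (Line~\ref{line:implicit} of Algorithm~\ref{algo:randomWalkSchur}), which is just matrix--vector products and is exact for the truncated walk distribution. It is only \ApproxSC\ (Theorem~\ref{thm:MinorSC}), invoked separately in \textsc{BuildChain}, that uses the recursive solver. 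Your variant would still work, since $\mM_{[S,S]}$ is $\alpha$-DD and a few Jacobi steps suffice without recursion, but the remark obscures why \Eliminate\ can sit at the base of the overall recursion rather than depending on it.
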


Note that we cannot directly set $d = \Omega(\log n)$ to finish
with $|\T|$ a constant: $(\log{n})^{\log{n}}$ may be even larger than $n$.
So we need to bring the structure back to a minor of the original
communication network.
For this we use the construction of spectral vertex sparsifiers
that are minors~\cite{LS18}, modified to not use random spanning trees.

\begin{restatable}{theorem}{MinorSC}
  \label{thm:MinorSC}
  There is a routine $\ApproxSC(G, \T, \epsilon)$
  in the \congest~model
  that given a graph $G$ with $n$ vertices and $m$ edges that $\rho$-minor distributes into the communication network $\overline{G}$,
  a subset of vertices $\T \subseteq V(G)$, an error parameter $\epsilon < 0.1$,
  and access to a (distributed) Laplacian solver $\solve$, it 
  returns a graph $H$, represented as a distributed $\rho$-minor of $\overline{G}$ such
  that:
  \begin{enumerate}
    \item $\T \subseteq V(H)$,
    \item $H$ has $O(|\T|\eps^{-2}\log^2 n)$ edges
    (and hence at most that many vertices as well).
    \item The Schur complements of $G$ and $H$ well approximate each other, i.e., 
    \[ \mSC\left(G, \T \right) \approx_\eps \mSC\left(H, \T \right). \]
  \end{enumerate}
  The cost of this computation consists of:
  \begin{enumerate}
    \item $O(\eps^{-3}\log^{10} n)$ calls to $\solve$ with accuracy $1/\poly(n)$
    on graphs that $2\rho$-distribute into $\overline{G}$.
    \item An overhead of $O(\rho (\overline{n}^{1/2} + D)\eps^{-3}\log^{11}
    \overline{n})$ rounds.
  \end{enumerate}
\end{restatable}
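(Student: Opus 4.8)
The plan is to give a distributed, parallelized version of the Li--Schild Schur-complement-sparsifier construction~\cite{LS18}, replacing their random-spanning-tree edge selection with batched leverage-score sampling over a \emph{steady set} of mutually low-influence edges, and analyzing the whole process with a matrix martingale via Freedman's inequality (Lemma~\ref{lemma:freedman}). First I would normalize: use \Split\ to subdivide each edge into enough pieces that every edge has leverage score in a bounded range, small enough that one contract/delete operation perturbs $\mSC(G,\T)$ by only $O(\epsilon/\log n)$ in the $\mSC(G,\T)^{\dag}$-norm; since splitting only places new non-terminal vertices inside the image of an existing edge, the $\rho$-minor structure (Definition~\ref{def:Minor}) is preserved, and I would also run $\Sparsify$ (Corollary~\ref{corollary:Sparsify}) as preprocessing. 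All Schur-complement quantities will be accessed through $\mL(G_i)^\dag$ restricted to $\T$ via Lemma~\ref{lemma:Inverse}, never by forming $\mSC(G_i,\T)$ itself.

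Then I would iterate $O(\log n)$ rounds. (a) Using Johnson--Lindenstrauss sketches for effective resistances --- each sketch coordinate being one call to $\solve$ followed by a matrix--vector product implemented by Corollary~\ref{corollary:MatVec} --- estimate all leverage scores $\lev_{G_i}(e)$ (\LevApx), and via \DiffApx\ (Lemma~\ref{lemma:DiffSketch}) and \ColumnApx\ the rank-one changes $\mSC(G_i/e,\T)-\mSC(G_i\bs e,\T)$; the latter require solving SDD \emph{submatrices} of $\mL(G_i)$, which by Gremban's reduction (Lemma~\ref{lem:SDDtoL}) become Laplacian solves on a $2\rho$-minor of $\overline G$ --- this is the source of the ``$2\rho$-distribute'' in the stated cost. (b) Run \FindSteady\ to extract a steady set $F_i$ (Definition~\ref{def:Steady}): a constant fraction of the non-terminal--non-terminal edges such that for each $e\in F_i$ the total influence $\sum_{f\in F_i\setminus\{e\}} (b_e^\top \mL(G_i)^\dag b_f)^2/(r_e r_f)$ is at most $\epsilon/\log n$; such a set of constant density exists because of the localization bound $\sum_f (b_e^\top \mL^\dag b_f)^2/(r_e r_f)=\lev(e)\le 1$ from~\cite{SRS18}. (c) Independently over $e\in F_i$, contract $e$ with probability $\approx\lev_{G_i}(e)$ and otherwise delete it (reweighting survivors), obtaining $G_{i+1}$, then run $\Sparsify$ again. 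Contraction keeps the graph a $\rho$-minor of $\overline G$ by Corollary~\ref{corollary:ActualMinor}, deletion trivially, $\Sparsify$ by Corollary~\ref{corollary:Sparsify}; at the end apply \Unsplit. Terminals are never contracted or deleted, so $\T\subseteq V(H)$.

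For correctness I would track $\mY^{(k)}=\mSC(G,\T)^{\dag/2}\,\mSC(G^{(k)},\T)\,\mSC(G,\T)^{\dag/2}$ over the sampled edges $e_1,e_2,\dots$, with $\mY^{(0)}$ the identity on $\mathrm{im}\,\mSC(G,\T)$. With exact leverage scores this is a martingale: contracting/deleting $e$ with probabilities $\lev(e),1-\lev(e)$ is unbiased for $\mSC(\cdot,\T)$, which one checks by expanding $\mSC(G/e,\T)$ and $\mSC(G\bs e,\T)$ with the Woodbury formula (Lemma~\ref{lemma:woodbury}) as equal and opposite rank-one updates of $\mSC(G,\T)$; running all solves and sketches to accuracy $1/\poly(n)$ makes the per-step bias $1/\poly(n)$, hence negligible. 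Edge splitting forces $\|\mX^{(k)}\|_2\le R=O(\epsilon/\log n)$, and the steady-set property makes the batch within a round behave like independent sampling, so the accumulated predictable quadratic variation satisfies $\|\mW^{(k)}\|_2\le\sigma^2=O(\epsilon^2/\log n)$ --- using that a typical edge's influence on the Schur complement onto $\T$ is tiny and summing/telescoping over the $O(\log n)$ rounds. Freedman's inequality then gives $\|\mY^{(\mathrm{end})}-\mY^{(0)}\|_2\le\epsilon$ with high probability, i.e.\ $\mSC(H,\T)\approx_\epsilon\mSC(G,\T)$. For the size bound, after first making the non-terminal vertices ``almost independent'' (as in Lemma~\ref{lem:Elimination}) so that enough steady edges are contracted, each round eliminates a constant fraction of the remaining non-terminal vertices, leaving $O(|\T|)$ of them after $O(\log n)$ rounds, and the final $\Sparsify$ leaves $O(|\T|\epsilon^{-2}\log^2 n)$ edges. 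For the cost, $O(\log n)$ rounds times $O(\epsilon^{-2}\log^{O(1)} n)$ sketch directions times (one $\solve$ call plus one $O(\rho(\sqrt{\overline n}+D)\log\overline n)$-round communication phase), plus the steady-set thresholding, multiplies --- after absorbing the polylog overheads internal to \LevApx, \DiffApx, \ColumnApx, \FindSteady\ and the union bound over edges --- to $O(\epsilon^{-3}\log^{10}n)$ calls to $\solve$ at accuracy $1/\poly(n)$ on $2\rho$-minors, plus an additive $O(\rho(\overline n^{1/2}+D)\epsilon^{-3}\log^{11}\overline n)$ rounds.

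The main obstacle I expect is bounding the martingale's predictable quadratic variation: batching correlated edge samples and using approximate leverage scores can each blow the variance up, and the whole apparatus --- edge splitting to shrink single-step changes to $O(\epsilon/\log n)$, and the steady set / localization to decorrelate a constant-density batch --- exists precisely to keep $\|\mW^{(k)}\|_2=O(\epsilon^2/\log n)$. The delicate parts are proving the steady set stays a constant fraction of the edges under the influence constraint, threading the $\epsilon/\log n$ per-round error through $O(\log n)$ rounds while keeping the sketch dimension (and hence round count) at $n^{o(1)}$, and handling the few edges that are ``important'' for the Schur complement (large sensitivity) by excluding them from $F_i$ --- equivalently, promoting their endpoints into the working terminal set --- without inflating $|\T|$ by more than a constant factor, mirroring the congestion bookkeeping of Lemma~\ref{lem:Elimination}.
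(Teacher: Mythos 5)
Your overall plan matches the paper's: replace the random-spanning-tree sampling of Li--Schild with batched independent sampling over a \emph{steady set}, track the Schur-complement quadratic form as a matrix martingale, and close with Freedman's inequality; the sketching and $2\rho$-minor/Gremban details are also as in the paper. There is, however, a concrete gap in the decorrelation step that would break the argument.

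You invoke the ``localization bound $\sum_f (b_e^\top \mL^\dag b_f)^2/(r_e r_f) = \lev(e) \le 1$ from~\cite{SRS18}.'' That identity is true, but it is the trivial $\ell_2$ energy decomposition of the leverage score, not the SRS18 result, and it is useless here. The SRS18 flow-localization theorem (Theorem~\ref{thm:localization}) bounds an $\ell_1$ quantity: with $s_e = \sum_f |b_e^\top \mL(H)^\dagger b_f|/\sqrt{r_e r_f}$, it gives $\sum_e s_e \le \clocal m\log^2 m$, so the \emph{average} $\ell_1$ influence is $O(\log^2 m)$ rather than the worst-case $\Omega(\sqrt m)$ one might fear. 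This is the nontrivial ingredient. The $\ell_1$ form is what the steady-set property (Localization) in Definition~\ref{def:Steady} records, and it is exactly what feeds Lemma~\ref{lemma:closetop}: there one forms the matrix $\mQ^{(i)}$ with off-diagonal entries $-b_f^\top\mL^\dagger b_g/\sqrt{r_f r_g}$ and bounds $\|\mQ^{(i)}\|_2$ by its maximum absolute row sum via Lemma~\ref{lemma:sdd}. The restricted $\ell_2$ sum you propose does not control that row sum, so the crucial bound $\|\mQ^{(i)}\|_2 \le \delta$ has no justification, and the later bounds on $\hmX^{(i)}$ (Lemma~\ref{lemma:hxbound}) collapse. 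Also, the $\ell_2$ quantity is already $\le 1$ for every edge, so subsampling a constant-density subset does not improve it; the $\ell_1$ version is what shrinks to $O(\alpha\log^2 m)\ll\delta$ under $\alpha$-density subsampling, which is how \FindSteady\ manufactures the steady set.

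Two smaller inaccuracies worth flagging: the algorithm takes $O(\alpha^{-1}\log m)=O(\eps^{-1}\log^5 m)$ rounds (Lemma~\ref{lemma:runtime}), not $O(\log n)$, because each round only removes an $\Omega(\alpha)$ fraction of edges and $\alpha=\Theta(\eps/\log^4 m)$; and the per-step operator-norm bound $\|\mX^{(i,t)}\|_2 = O(|\T|/|E(H^{(i)})|)$ (Lemma~\ref{lemma:rbound}) comes from the (Variance) condition of the steady set plus the while-loop stopping rule, not from edge splitting alone. Finally, promoting high-congestion vertices into the terminal set is the device used in the elimination routine \RandWalkSchur\ (Lemma~\ref{lem:randomWalkSchur}); in \ApproxSC\ high-variance edges are simply excluded from $Z_1$ rather than terminalized. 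None of these would be hard to fix, but the $\ell_1$ versus $\ell_2$ confusion is load-bearing and needs to be corrected before the martingale analysis goes through.
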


\subsection{Schur complement chains and a proof of Theorem \ref{thm:Main}}
In this section, we formally show how to combine Lemma \ref{lem:Elimination}, Lemma \ref{lem:UltraSparsify}, and Theorem \ref{thm:MinorSC}
to efficiently construct a Schur complement chain and prove Theorem \ref{thm:Main}.
\begin{definition}
For a graph $G$ of $n$ vertices, $\{(G_i, \mZ_{i, 1}, \mZ_{i, 2}, \T_i)\}_{i = 1}^t$ is a
$(\gamma, \epsilon)$-Schur-complement solver chain of $G$ if the following conditions hold.
\begin{enumerate}
	\item $G_1 = G$.
	\item \label{cond:OverallApprox}
			\[
		\left( 1 - \epsilon \right) \mL\left( G_{i} \right)^{\dag}
		\preceq
		\mZ_{i, 1}^{\tomato}
		\left[
		\begin{array}{cc}
		\mZ_{i, 2} & 0\\
		0 & \mSC\left( \mL\left( G_{i} \right), \T_i \right)^{\dag}
		\end{array}
		\right] \mZ_{i, 1}
		\preceq
		\left( 1 + \epsilon \right) \mL\left( G_{i} \right)^{\dag}
		\]
	\item \label{cond:SCApprox}
	$\T_i \subset V(G_{i+1}) \subset V(G_i)$ and $\mSC(G_{i}, \T_i) \approx_{\epsilon}
	\mSC(G_{i+1}, \T_i)$
	\item $|V(G_{i})| \geq \gamma \cdot |V(G_{i+1})|$ if $i < t$, and
	$|V(G_t)| \leq \gamma$.
\end{enumerate}
\end{definition}

\begin{algorithm}[ht]
\caption{\PseudoinverseMulti ($\{(G_i, \mZ_{i,1}, \mZ_{i,2}, \mathcal{T}_i)\mid j\le i\le t\}$, $\vec{b}$)}
\label{alg:PInvMult}
$\vec{u}\leftarrow\mZ_{j,1}\vec{b}$\;
\eIf{$j=t$}{$\vec{v}\leftarrow\mSC(G_t,\mathcal{T}_{t})^{\dag}\vec{u}_{[\mathcal{T}_t]}$\;
\Return $\mZ_{t,1}^{T}\left[\begin{array}{cc}
     \mZ_{t,2}\vec{u}_{[V(G_t)\setminus\mathcal{T}_t]}  \\
     \vec{v} 
\end{array}\right]$\;
}{
$\vec{u}_1\leftarrow\vec{u}_{[\mathcal{T}_j]}-\frac{\vec{u}_{[\mathcal{T}_j]}^T\vec{1}}{\left\|\vec{1}\right\|_{2}^2}\cdot\vec{1}$\;
$\vec{v}\leftarrow \PseudoinverseMulti \left(\{(G_i, \mZ_{i,1}, \mZ_{i,2}, \mathcal{T}_i)\mid j+1\le i\le t\}, \left[\begin{array}{cc}
     \vec{0}  \\
     \vec{u}_1 
\end{array}\right]\right)$\;
$\vec{v}_1\leftarrow\vec{v}_{[\mathcal{T}_j]}-\frac{\vec{v}_{[\mathcal{T}_j]}^T\vec{1}}{\left\|\vec{1}\right\|_{2}^2}\cdot\vec{1}$\;
\Return $\mZ_{j,1}^{T}\left[\begin{array}{cc}
     \mZ_{j,2}\vec{u}_{[V(G_j)\setminus\mathcal{T}_j]}  \\
     \vec{v}_1 
\end{array}\right]$\;
}
\end{algorithm}

\begin{lemma}\label{lem:applying_pseudoinverse}
	Let $\overline{G}$ be a communication network with $\overline n$ vertices and $\overline m$ edges.
	Let \\ $\{(G_i, \mZ_{i, 1}, \mZ_{i, 2}, \T_i)\}_{i = 1}^t$ be a $(\gamma, \epsilon)$-Schur-complement solver chain of graph $G$ for some $\gamma \geq 2$ and $\eps \le \frac{1}{C\log n}$ for large constant $C$, satisfying the following conditions:
\begin{enumerate}
	\item $G_i$ $\rho$-minor distributes into $\overline G$.
	\item Linear operators $\mZ_{i, 1}$ and $\mZ_{i, 2}$ can be evaluated in $O({\overline n}^{o(1)} (\overline n^{1/2} + D))$ rounds.
\end{enumerate}

Then for a given vector $\vec{b}$, Algorithm \PseudoinverseMulti computes a vector $\vec{x}$ in \\ $O(\rho {\overline n}^{o(1)} (\overline n^{1/2} + D))$  rounds such that
\[
\norm{\vec{x} - \mL\left( G \right)^{\dag} \vec{b}}_{\mL\left( G \right)}
\leq
2\epsilon\log n \cdot \norm{\vec{b}}_{\mL\left( G \right)^{\dag}}.
\]
\end{lemma}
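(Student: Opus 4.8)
The plan is to argue by induction on the chain length $t$, where the key point is to track how the multiplicative errors from the Schur-complement approximation (condition \ref{cond:SCApprox}) and the operator approximation (condition \ref{cond:OverallApprox}) compound as we recurse. First I would set up the base case $t = 1$: here the algorithm solves $\mL(\mSC(G_1, \T_1)) \vec v = \vec u_{[\T_1]}$ exactly, and combining this with the factorization of condition \ref{cond:OverallApprox} — which says $\mZ_{1,1}^\tomato \left[\begin{smallmatrix} \mZ_{1,2} & 0 \\ 0 & \mSC(\mL(G_1),\T_1)^\dag\end{smallmatrix}\right]\mZ_{1,1}$ is a $(1\pm\epsilon)$ spectral approximation of $\mL(G_1)^\dag$ — gives the claimed bound with a single $\epsilon$ loss (times a small constant from converting between the $(1\pm\epsilon)$-form and the $\exp(\pm\epsilon)$-form, which is why the final bound has a $100$). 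I would use here that for vectors, $\mA \approx_\delta \mB$ implies $\|\vec x - \mA\vec b\|_{\mA^{-1}} \le O(\delta)\|\vec b\|_{\mA^{-1}}$ whenever $\vec x = \mB \vec b$, a standard fact about preconditioned solves.

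For the inductive step, I would observe that Algorithm \textsc{PseudoinverseMulti} on the chain of length $t$ first transforms $\vec b$ via $\mZ_{1,1}$ into $\vec u$, then recursively applies \textsc{PseudoinverseMulti} to the sub-chain $\{(G_i,\dots)\}_{i=2}^t$ on input $[\vec 0, \vec u_{[\T_1]}]$ to approximately multiply by $\mSC(G_2, \T_1)^\dag$, and then reassembles via $\mZ_{1,1}^\tomato$ (with the projection $\vec v_{[\T_i]} - \langle \vec v_{[\T_i]}, \vec 1\rangle \vec 1$ handling the kernel, as needed since $\mSC(G,\T)$ is a Laplacian). Since $\T_1 \subset V(G_2)$ and $\mSC(G_1,\T_1) \approx_\epsilon \mSC(G_2,\T_1)$, solving in $\mSC(G_2,\T_1)$ approximately solves in $\mSC(G_1,\T_1)$: the inductive hypothesis gives that the recursive call produces a vector within relative error $100\epsilon \log|V(G_2)| \le 100 \epsilon \log n$ in the $\mL(\mSC(G_2,\T_1))$-norm of the true solution, and the $\approx_\epsilon$ between the two Schur complements lets us transfer this (losing one more additive $\epsilon$) to an approximate application of $\mSC(\mL(G_1),\T_1)^\dag$. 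Plugging this approximate middle block into condition \ref{cond:OverallApprox} for $i=1$ — using that the sandwiched composed operator is a block-diagonal conjugation, so an approximate inner solve yields an approximate outer solve with the errors adding — gives a relative error of roughly $100\epsilon(\log n - 1) + O(\epsilon) \le 100\epsilon \log n$, closing the induction. The round complexity is immediate: the chain has $t = O(\log n)$ levels (since $|V(G_i)| \ge \gamma|V(G_{i+1})|$ with $\gamma \ge 2$), each level costs $O(\overline n^{o(1)}(\overline n^{1/2}+D))$ rounds to apply $\mZ_{1,1}, \mZ_{1,1}^\tomato, \mZ_{1,2}$ plus $O(\rho(\overline n^{1/2}+D))$ rounds of vector bookkeeping on the $\rho$-minor via Corollary \ref{corollary:MatVec}, and the final exact solve at the bottom is on a graph of size $O(\gamma) = O(1)$ and is negligible; multiplying through gives $O(\rho \overline n^{o(1)}(\overline n^{1/2}+D))$.

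The main obstacle I anticipate is making the error-composition argument in the inductive step fully rigorous: one must carefully distinguish between (a) the error in the inner recursive solve measured in the $\mSC(G_2,\T_1)$-energy norm, (b) the discrepancy $\mSC(G_1,\T_1) \approx_\epsilon \mSC(G_2,\T_1)$, and (c) the outer operator error from condition \ref{cond:OverallApprox}, and then show these combine additively rather than multiplicatively in the exponent. The cleanest way to do this is probably to phrase everything in terms of operator inequalities: show by induction that the full composed operator implemented by \textsc{PseudoinverseMulti} on the length-$t$ chain is sandwiched between $\exp(\pm c\epsilon t)\mL(G)^\dag$ for an appropriate constant $c$, so the relative error in the energy norm is $O(\epsilon t) = O(\epsilon \log n)$; tracking the constant $c$ and verifying $c \le 100$ with the stated $\epsilon \le 1/(C\log n)$ is then a routine but delicate bookkeeping step. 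The projection onto the orthogonal complement of $\vec 1$ needs a brief check that it does not introduce additional error, which follows from Lemma \ref{lemma:Inverse} since $\mSC(G,\T)^\dag$ already has $\vec 1_{[\T]}$ in its kernel.
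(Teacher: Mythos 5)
Your induction-on-$t$ structure, base case, kernel/projection argument, and round-complexity accounting all match the paper's proof in spirit. But there is a real gap in your inductive step: you write that the inductive hypothesis gives that the recursive call is accurate ``in the $\mL(\mSC(G_2,\T_1))$-norm of the true solution,'' but that is not what the inductive hypothesis says. The inductive hypothesis applies to the chain starting at $G_2$ and bounds $\norm{\vec{v} - \mL(G_2)^{\dag}[\vec{0}, \vec{u}_{[\T_1]}]}_{\mL(G_2)}$, i.e.\ the error of the \emph{full} vector $\vec{v}$ on all of $V(G_2)$ in the $\mL(G_2)$ energy norm. What you actually need is a bound on the \emph{restriction} $\vec{v}_{[\T_1]}$ in the $\mSC(G_2,\T_1)$ energy norm. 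The paper bridges this via Lemma~\ref{lemma:SCEnergy}, the energy-minimization characterization of the Schur complement: $\norm{\vec{x}_{[\T]}}_{\mSC(G,\T)} = \min_{\vec{x}_{[V\setminus\T]}}\norm{\vec{x}}_{\mL(G)}$, so the $\mSC$-norm of the restricted error is at most the $\mL(G_2)$-norm of the full error. Your plan never invokes this, and without it the step ``the recursive call approximately applies $\mSC(G_2,\T_1)^\dag$'' does not follow; it also relies on Lemma~\ref{lemma:Inverse} to identify $(\mL(G_2)^\dag[\vec{0},\vec{u}_{[\T_1]}])_{[\T_1]}$ with $\mSC(G_2,\T_1)^\dag\vec{u}_{[\T_1]}$ up to the kernel. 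You should make both identifications explicit.

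A secondary caution about the ``cleanest way'' you propose in the last paragraph: a sandwich $\exp(-c\eps t)\mL(G)^\dag \preceq \widehat{\mZ}^{(t)} \preceq \exp(c\eps t)\mL(G)^\dag$ presupposes the composed operator $\widehat{\mZ}^{(t)}$ is symmetric with the correct null space, but that is not automatic here: the recursion builds $\widehat{\mZ}^{(t)}$ by restricting the previous operator to the $[\T_1,\T_1]$ block and then post-composing with the projection onto $\vec{1}^{\perp}$, and $(\mI - |\T_1|^{-1}\mJ)\widehat{\mZ}^{(t-1)}_{[\T_1,\T_1]}$ is symmetric only if $\widehat{\mZ}^{(t-1)}_{[\T_1,\T_1]}$ already annihilates $\vec{1}_{[\T_1]}$, which is exactly the sort of fact you would have to establish (it follows for $\mL(G_2)^\dag$ via Lemma~\ref{lemma:Inverse}, but not for arbitrary approximations of it). So the operator-inequality route would require additional justification; the paper's vector-error argument via Lemma~\ref{lem:ApproxToError}, the triangle inequality, and Lemma~\ref{lemma:SCEnergy} sidesteps this by always measuring in the correct Schur-complement norm, where the kernel component is killed automatically.
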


The correctness proof rely heavily on the following conversion from
operator guarantees to error guarantees.

\begin{lemma}
	\label{lem:ApproxToError}
	(Lemma 1.6.7 of~\cite{P13:thesis})
	If $\mA$ and $\mB$ are two symmetric PSD matrices such that $\mA \approx_{\delta}\mB^{\dag}$
	for some $0 < \delta < 1$, then for any vector $\vec{b}$, we have
	\[
	\norm{\mA\vec{b} - \mB^{\dag} \vec{b}}_{\mB}
	\leq\delta\norm{\vec{b}}_{\mB^{\dag}}.
	\]
\end{lemma}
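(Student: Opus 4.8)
The plan is to reduce to the case where $\mA$ and $\mB$ are positive definite, and then to compare them through the single ``relative'' operator $\mA^{1/2}\mB^{\dag}\mA^{1/2}$, whose spectrum is pinned down directly by the hypothesis $\mA\approx_\delta\mB$.

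First I would observe that $\mA\approx_\delta\mB$, i.e.\ $\exp(-\delta)\mA\pe\mB\pe\exp(\delta)\mA$, forces $\mA$ and $\mB$ to share a kernel: if $\mA\vec{x}=0$ then $\vec{x}^{\top}\mB\vec{x}\le\exp(\delta)\,\vec{x}^{\top}\mA\vec{x}=0$, and positive semidefiniteness of $\mB$ gives $\mB\vec{x}=0$; the symmetric argument yields $\ker\mA=\ker\mB$, hence $\mathrm{range}(\mA)=\mathrm{range}(\mB)=:V$. Since $\mA^{\dag}\vec{b}$ and $\mB^{\dag}\vec{b}$ both lie in $V$, since $\mA^{\dag}\vec{b}=\mA^{\dag}(\Pi_V\vec{b})$ and $\mB^{\dag}\vec{b}=\mB^{\dag}(\Pi_V\vec{b})$ where $\Pi_V$ is the orthogonal projection onto $V$, and since $\norm{\vec{b}}_{\mA^{\dag}}$ depends only on $\Pi_V\vec{b}$, I may replace $\vec{b}$ by $\Pi_V\vec{b}$ and restrict every operator to $V$. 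On $V$ the matrices $\mA,\mB$ are genuinely positive definite, the pseudoinverse coincides with the inverse, and the PSD square root $\mA^{1/2}$ is invertible.

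Working on $V$, set $\mC\defeq\mA^{1/2}\mB^{-1}\mA^{1/2}$, which is symmetric. Conjugating $\exp(-\delta)\mA\pe\mB\pe\exp(\delta)\mA$ by $\mA^{-1/2}$ gives $\exp(-\delta)\mI\pe\mA^{-1/2}\mB\mA^{-1/2}\pe\exp(\delta)\mI$, so the eigenvalues of $\mA^{-1/2}\mB\mA^{-1/2}$ lie in $[\exp(-\delta),\exp(\delta)]$; taking the inverse, the eigenvalues of $\mC=(\mA^{-1/2}\mB\mA^{-1/2})^{-1}$ lie in the same interval, so $\|\mI-\mC\|_2\le\exp(\delta)-1$. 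Since $\mB^{-1}=\mA^{-1/2}\mC\mA^{-1/2}$, we get $\mA^{\dag}\vec{b}-\mB^{\dag}\vec{b}=\mA^{-1/2}(\mI-\mC)\mA^{-1/2}\vec{b}$, and with $\vec{z}\defeq\mA^{-1/2}\vec{b}$,
\[
\norm{\mA^{\dag}\vec{b}-\mB^{\dag}\vec{b}}_{\mA}^{2}
=\vec{z}^{\top}(\mI-\mC)\,\mA^{-1/2}\mA\mA^{-1/2}\,(\mI-\mC)\vec{z}
=\|(\mI-\mC)\vec{z}\|_{2}^{2}
\le\|\mI-\mC\|_{2}^{2}\,\|\vec{z}\|_{2}^{2},
\]
where I used $\mA^{-1/2}\mA\mA^{-1/2}=\mI$ on $V$ and symmetry of $\mI-\mC$. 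Finally $\|\vec{z}\|_{2}^{2}=\vec{b}^{\top}\mA^{-1}\vec{b}=\norm{\vec{b}}_{\mA^{\dag}}^{2}$, so $\norm{\mA^{\dag}\vec{b}-\mB^{\dag}\vec{b}}_{\mA}\le(\exp(\delta)-1)\norm{\vec{b}}_{\mA^{\dag}}\le(e-1)\,\delta\,\norm{\vec{b}}_{\mA^{\dag}}\le10\delta\,\norm{\vec{b}}_{\mA^{\dag}}$, using $\exp(\delta)-1\le(e-1)\delta$ for $0<\delta<1$.

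The only genuinely delicate point is the reduction in the second paragraph: pseudoinverses of $\mA$ and $\mB$ need not interact nicely in general, so one must first verify that the two matrices share a range and then legitimately pass to that subspace, after which everything becomes an honest inverse and the estimate collapses to a one-line spectral bound on $\mC$. Everything else is routine linear algebra.
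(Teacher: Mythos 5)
Your proof is correct: the reduction to the common range $V$ is handled carefully (equal kernels follow from the two-sided bound plus PSD-ness), and once restricted to $V$ the conjugation by $\mA^{1/2}$ pins the spectrum of $\mC=\mA^{1/2}\mB^{-1}\mA^{1/2}$ in $[\exp(-\delta),\exp(\delta)]$, giving $\norm{\mA^{\dag}\vec{b}-\mB^{\dag}\vec{b}}_{\mA}\le(\exp(\delta)-1)\norm{\vec{b}}_{\mA^{\dag}}$, which is even sharper than the stated factor $10\delta$. The paper itself only cites Lemma 1.6.7 of \cite{P13:thesis} without reproducing an argument, and your derivation is essentially the standard one used there, so there is nothing to flag beyond noting that your constant $e-1$ comfortably implies the lemma as stated.
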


\begin{proof}[Proof of Lemma~\ref{lem:applying_pseudoinverse}]
We prove this lemma by induction on $j$. For the base case $j=t$, we have that 
\begin{equation}
\label{base_case_t_alg}
\begin{aligned}
    &\mZ_{t,1}^{T}\left[\begin{array}{cc}
         \mZ_{t,2}\vec{u}_{[V(G_t)\setminus\mathcal{T}_t]}  \\
         \vec{v} 
    \end{array}\right]=\mZ_{t,1}^{T}\left[\begin{array}{cc}
         \mZ_{t,2}\vec{u}_{[V(G_t)\setminus\mathcal{T}_t]}  \\
         \mSC(G_t,\mathcal{T}_t)^{\dag}\vec{u}_{[\mathcal{T}_t]}
    \end{array}\right]\\
    =&\mZ_{t,1}^{T}\left[\begin{array}{cc}
         \mZ_{t,2}&0  \\
         0&\mSC(G_t,\mathcal{T}_t)^{\dag} 
    \end{array}\right]\vec{u}=\mZ_{t,1}^{T}\left[\begin{array}{cc}
         \mZ_{t,2}&0  \\
         0&\mSC(G_t,\mathcal{T}_t)^{\dag} 
    \end{array}\right]\mZ_{t,1}\vec{b},
\end{aligned}
\end{equation}
in which 
\begin{align}
\label{base_case_t_apprx}
    (1-\epsilon)\mL(G_t)^{\dag}\preceq\mZ_{t,1}^{T}\left[\begin{array}{cc}
         \mZ_{t,2}&0  \\
         0&\mSC(G_t,\mathcal{T}_t)^{\dag} 
    \end{array}\right]\mZ_{t,1}\preceq(1+\epsilon)\mL(G_t)^{\dag}.
\end{align}
By Lemma \ref{lem:ApproxToError} and combining (\ref{base_case_t_alg}) and (\ref{base_case_t_apprx}), we have 
\begin{align*}
    \left\|\mZ_{t,1}^{T}\left[\begin{array}{cc}
         \mZ_{t,2}\vec{u}_{[V(G_t)\setminus\mathcal{T}_t]}  \\
         \vec{v} 
    \end{array}\right]-\mL(G_t)^{\dag}\vec{b}\right\|_{\mL(G_t)}\le\epsilon\left\|\vec{b}\right\|_{\mL(G_t)^{\dag}}. 
\end{align*}

We suppose that it holds for case $j+1$, i.e., 
\begin{align*}
    &\left\|\PseudoinverseMulti (\{(G_i, \mZ_{i,1}, \mZ_{i,2}, \mathcal{T}_i)\mid j+1\le i\le t\},\vec{b})-\mL(G_{j+1})^{\dag}\vec{b}\right\|_{\mL(G_{j+1})}\\
    \le&2(t-j)\epsilon\left\|\vec{b}\right\|_{\mL(G_{j+1})^{\dag}},
\end{align*}
then we have that 
\begin{align}
\label{induction_v_LG_j+1}
\left\|\vec{v}-\mL(G_{j+1})^{\dag}\left[\begin{array}{cc}
     \vec{0}  \\
     \vec{u}_1
\end{array}\right]\right\|_{\mL(G_{j+1})}\le 2(t-j)\epsilon\left\|\left[\begin{array}{c}
     \vec{0}  \\
     \vec{u}_1 
\end{array}\right]\right\|_{\mL(G_{j+1})^{\dag}}.
\end{align}
Lemma \ref{lemma:SCEnergy} gives 
\begin{align}
\label{v_Tj_LG_j+1_dag}
    \left\|\vec{v}_{[\mathcal{T}_j]}-\left(\mL(G_{j+1}^{\dag})\left[\begin{array}{c}
         \vec{0}  \\
         \vec{u}_1 
    \end{array}\right]\right)_{[\mathcal{T}_j]}\right\|_{\mSC(G_{j+1}, \mathcal{T}_j)}\le\left\|\vec{v}-\mL(G_{j+1})^{\dag}\left[\begin{array}{cc}
     \vec{0}  \\
     \vec{u}_1
\end{array}\right]\right\|_{\mL(G_{j+1})}.
\end{align}
Combining (\ref{induction_v_LG_j+1}) and (\ref{v_Tj_LG_j+1_dag}), we have  
\begin{align}
\label{v_Tj_le_induction_solution}
   \left\|\vec{v}_{[\mathcal{T}_j]}-\left(\mL(G_{j+1}^{\dag})\left[\begin{array}{c}
         \vec{0}  \\
         \vec{u}_1 
    \end{array}\right]\right)_{[\mathcal{T}_j]}\right\|_{\mSC(G_{j+1}, \mathcal{T}_j)} \le 2(t-j)\epsilon\left\|\left[\begin{array}{c}
     \vec{0}  \\
     \vec{u}_1 
\end{array}\right]\right\|_{\mL(G_{j+1})^{\dag}}.
\end{align}

Now we prove the case $j$. By triangle inequality, we have that 
\begin{align}
&\label{induction_j_left}\left\|\mZ_{j,1}^{T}\left[\begin{array}{cc}
     \mZ_{j,2}\vec{u}_{[V(G_j)\setminus\mathcal{T}_j]}  \\
     \vec{v}_1 
\end{array}\right]-\mL(G_j)^{\dag}\vec{b}\right\|_{\mL(G_j)}\\
\le&\label{induction_j_right_part1}\left\|\mZ_{j,1}^{T}\left[\begin{array}{cc}
     \mZ_{j,2}\vec{u}_{[V(G_j)\setminus\mathcal{T}_j]}  \\
     \vec{v}_1 
\end{array}\right]-\mZ_{j,1}^{T}\left[\begin{array}{cc}
         \mZ_{j,2}&0  \\
         0&\mSC(G_j,\mathcal{T}_j)^{\dag} 
    \end{array}\right]\mZ_{j,1}\vec{b}\right\|_{\mL(G_j)}\\
    +&\label{induction_j_right_part2}\left\|\mZ_{j,1}^{T}\left[\begin{array}{cc}
         \mZ_{j,2}&0  \\
         0&\mSC(G_j,\mathcal{T}_j)^{\dag} 
    \end{array}\right]\mZ_{j,1}\vec{b}-\mL(G_j)^{\dag}\vec{b}\right\|_{\mL(G_j)}. 
\end{align}
Obviously, for (\ref{induction_j_right_part2}), Lemma \ref{lem:applying_pseudoinverse} gives 
\begin{align}
\label{bound_8}
\left\|\mZ_{j,1}^{T}\left[\begin{array}{cc}
         \mZ_{j,2}&0  \\
         0&\mSC(G_j,\mathcal{T}_j)^{\dag} 
    \end{array}\right]\mZ_{j,1}\vec{b}-\mL(G_j)^{\dag}\vec{b}\right\|_{\mL(G_j)}\le\epsilon\left\|\vec{b}\right\|_{\mL(G_j)^{\dag}}.
\end{align}
Now our task is to bound (\ref{induction_j_right_part1}),
\begin{equation}
\label{bound_induction_j_right_part2}
\begin{aligned}
    &\left\|\mZ_{j,1}^{T}\left[\begin{array}{cc}
     \mZ_{j,2}\vec{u}_{[V(G_j)\setminus\mathcal{T}_j]}  \\
     \vec{v}_1 
\end{array}\right]-\mZ_{j,1}^{T}\left[\begin{array}{cc}
         \mZ_{j,2}&0  \\
         0&\mSC(G_j,\mathcal{T}_j)^{\dag} 
    \end{array}\right]\mZ_{j,1}\vec{b}\right\|_{\mL(G_j)}\\
    =&\left\|\mZ_{j,1}^{T}\left[\begin{array}{cc}
     \mZ_{j,2}\vec{u}_{[V(G_j)\setminus\mathcal{T}_j]}  \\
     \vec{v}_1 
\end{array}\right]-\mZ_{j,1}^{T}\left[\begin{array}{cc}
         \mZ_{j,2}&0  \\
         0&\mSC(G_j,\mathcal{T}_j)^{\dag} 
    \end{array}\right]\vec{u}\right\|_{\mL(G_j)}\\
    =&\left\|\mZ_{j,1}^{T}\left[\begin{array}{cc}
     \mZ_{j,2}\vec{u}_{[V(G_j)\setminus\mathcal{T}_j]}  \\
     \vec{v}_1 
\end{array}\right]-\mZ_{j,1}^{T}\left[\begin{array}{c}
     \mZ_{j,2}\vec{u}_{[V(G_j)\setminus\mathcal{T}_j]} \\
     \mSC(G_j, \mathcal{T}_j)^{\dag}\vec{u}_{[\mathcal{T}_j]}
\end{array}\right]\right\|_{\mL(G_j)}\\
=&\left\|\mZ_{j,1}^{T}\left[\begin{array}{c}
     \vec{0}_{[V(G_j)\setminus\mathcal{T}_j]} \\
     \vec{v}_1-\mSC(G_j,\mathcal{T}_j)^{\dag}\vec{u}_{[\mathcal{T}_j]}
\end{array}\right]\right\|_{\mL(G_j)}=\left\|\left[\begin{array}{c}
     \vec{0}_{[V(G_j)\setminus\mathcal{T}_j]}  \\
     \vec{v}_1-\mSC(G_j,\mathcal{T}_j)^{\dag}\vec{u}_{[\mathcal{T}_j]} 
\end{array}\right]\right\|_{\mZ_{j,1}\mL(G_j)\mZ_{j,1}^T}.
\end{aligned}
\end{equation}
By triangle inequality, (\ref{bound_induction_j_right_part2}) gives 
\begin{equation}
\label{412_0_v1-SCGjTjuTj}
\begin{aligned}
&\left\|\left[\begin{array}{c}
     \vec{0}_{[V(G_j)\setminus\mathcal{T}_j]}  \\
     \vec{v}_1-\mSC(G_j,\mathcal{T}_j)^{\dag}\vec{u}_{[\mathcal{T}_j]} 
\end{array}\right]\right\|_{\mZ_{j,1}\mL(G_j)\mZ_{j,1}^T}\\
\le&\left\|\left[\begin{array}{c}
     \vec{0}_{[V(G_j)\setminus\mathcal{T}_j]}  \\
     \vec{v}_1-\mSC(G_{j+1},\mathcal{T}_j)^{\dag}\vec{u}_{[\mathcal{T}_j]} 
\end{array}\right]\right\|_{\mZ_{j,1}\mL(G_j)\mZ_{j,1}^T}\\
+&\left\|\left[\begin{array}{c}
     \vec{0}_{[V(G_j)\setminus\mathcal{T}_j]}  \\
     \mSC(G_{j+1}, \mathcal{T}_j)^{\dag}\vec{u}_{[\mathcal{T}_j]}-\mSC(G_j,\mathcal{T}_j)^{\dag}\vec{u}_{[\mathcal{T}_j]}
\end{array}\right]\right\|_{\mZ_{j,1}\mL(G_j)\mZ_{j,1}^T},
\end{aligned}
\end{equation}
in which 
\begin{equation}
\label{0_v1_Zj1}
\begin{aligned}
    &\left\|\left[\begin{array}{c}
     \vec{0}_{[V(G_j)\setminus\mathcal{T}_j]}  \\
     \vec{v}_1-\mSC(G_{j+1},\mathcal{T}_j)^{\dag}\vec{u}_{[\mathcal{T}_j]} 
\end{array}\right]\right\|_{\mZ_{j,1}\mL(G_j)\mZ_{j,1}^T}\\
=&\left\|\left[\begin{array}{c}
     \vec{0}_{[V(G_j)\setminus\mathcal{T}_j]}  \\
     \vec{v}_1-\mP(\mL(G_{j+1})^{\dag}))_{[\mathcal{T}_j, \mathcal{T}_j]}\mP\vec{u}_{[\mathcal{T}_j]} 
\end{array}\right]\right\|_{\mZ_{j,1}\mL(G_j)\mZ_{j,1}^T}\\
=&\left\|\left[\begin{array}{c}
     \vec{0}_{[V(G_j)\setminus\mathcal{T}_j]}  \\
     \vec{v}_1-\mP(\mL(G_{j+1})^{\dag}))_{[\mathcal{T}_j, \mathcal{T}_j]}\vec{u}_1 
\end{array}\right]\right\|_{\mZ_{j,1}\mL(G_j)\mZ_{j,1}^T}\\
=&\left\|\left[\begin{array}{c}
     \vec{0}_{[V(G_j)\setminus\mathcal{T}_j]}  \\
    \mP\vec{v}_{[\mathcal{T}_j]}-\mP\left(\mL(G_{j+1})^{\dag}\left[\begin{array}{c}
         \vec{0}  \\
         \vec{u}_1 
    \end{array}\right]\right)_{[\mathcal{T}_j]}
\end{array}\right]\right\|_{\mZ_{j,1}\mL(G_j)\mZ_{j,1}^T}\\
=&\left\|\vec{v}_{[\mathcal{T}_j]}-\left(\mL(G_{j+1})^{\dag}\left[\begin{array}{c}
         \vec{0}  \\
         \vec{u}_1 
    \end{array}\right]\right)_{[\mathcal{T}_j]}\right\|_{\mP(\mZ_{j,1}\mL(G_j)\mZ_{j,1}^T)_{[\mathcal{T}_j, \mathcal{T}_j]}\mP},
\end{aligned}
\end{equation}
where $\mP$ is the projection matrix of the space spanned by $\mSC(G_{j+1}, \T_j)$. Furthermore, the vectors $\vec{u}_1$ and~$\vec{v}_1$ are the projections of the vectors $\vec{u}_{[\T_j]}$ and $\vec{v}_{[\T_j]}$ onto $\mP$ respectively. 

By the given condition,  we have that 
\begin{align}
\label{412_LGj_inv}
\mL(G_j)^{\dag}\approx_{\epsilon}\mZ_{j,1}^{T}\left[\begin{array}{cc}
         \mZ_{j,2}&0  \\
         0&\mSC(G_j,\mathcal{T}_j)^{\dag} 
    \end{array}\right]\mZ_{j,1}.
\end{align}    
Multiplying the both sides of the LHS and RHS of (\ref{412_LGj_inv}) by $\mL(G_j)$ gives 
\begin{align}
\label{412_LGj}
\mL(G_j)\approx_{\epsilon}\mL(G_j)\mZ_{j,1}^{T}\left[\begin{array}{cc}
         \mZ_{j,2}&0  \\
         0&\mSC(G_j,\mathcal{T}_j)^{\dag} 
    \end{array}\right]\mZ_{j,1}\mL(G_j).
\end{align}
Multiplying the left (resp. right) side of the LHS and RHS of (\ref{412_LGj}) by $\mZ_{j,1}$ (resp. $\mZ_{j,1}^T$) gives 
\begin{align}
\label{412_Zj1LGjZj1T}
\mZ_{j,1}\mL(G_j)\mZ_{j,1}^T\approx_{\epsilon}\mZ_{j,1}\mL(G_j)\mZ_{j,1}^{T}\left[\begin{array}{cc}
         \mZ_{j,2}&0  \\
         0&\mSC(G_j,\mathcal{T}_j)^{\dag} 
    \end{array}\right]\mZ_{j,1}\mL(G_j)\mZ_{j,1}^T,
\end{align}
which implies that 
\begin{align}
\left[\begin{array}{cc}
         \mZ_{j,2}&0  \\
         0&\mSC(G_j,\mathcal{T}_j)^{\dag} 
    \end{array}\right]\approx_{\epsilon}(\mZ_{j,1}\mL(G_j)\mZ_{j,1}^T)^{\dag}
\end{align}
and
\begin{align}
\label{412_Zj1LGjZj1T_epsilon}
\mZ_{j,1}\mL(G_j)\mZ_{j,1}^T\approx_{\epsilon}\left[\begin{array}{cc}
         \mZ_{j,2}^{\dag}&0  \\
         0&\mSC(G_j,\mathcal{T}_j) 
    \end{array}\right].
\end{align}
Moreover, (\ref{412_Zj1LGjZj1T_epsilon}) gives 
\begin{align}
\label{412_Zj1LGjZj1T_Tj}
(\mZ_{j,1}\mL(G_j)\mZ_{j,1}^T)_{[\mathcal{T}_j,\mathcal{T}_j]}\approx_{\epsilon}\mSC(G_j, \mathcal{T}_j).
\end{align}
Multiplying the both sides of the LHS and RHS of (\ref{412_Zj1LGjZj1T_Tj}) by $\mP$ gives 
\begin{align}
\label{412_PZj1LGjZj1T}
\mP(\mZ_{j,1}\mL(G_j)\mZ_{j,1}^T)_{[\mathcal{T}_j,\mathcal{T}_j]}\mP\approx_{\epsilon}\mSC(G_j, \mathcal{T}_j).
\end{align}
In addition, $\mSC(G_{j+1}, \T_j)\approx_{\epsilon}\mSC(G_j, \T_j)$. Combining with (\ref{412_PZj1LGjZj1T}), we have  
\begin{align}
\label{412_PZj1LGjZj1T_2epsilon}
\mP(\mZ_{j,1}\mL(G_j)\mZ_{j,1}^T)_{[\mathcal{T}_j,\mathcal{T}_j]}\mP\approx_{2\epsilon}\mSC(G_{j+1}, \mathcal{T}_j).
\end{align}

Getting back to (\ref{0_v1_Zj1}), by (\ref{412_PZj1LGjZj1T_2epsilon}) we have that 
\begin{equation}
\label{bound_10_part1}
\begin{aligned}
&\left\|\vec{v}_{[\mathcal{T}_j]}-\left(\mL(G_{j+1})^{\dag}\left[\begin{array}{c}
         \vec{0}  \\
         \vec{u}_1 
    \end{array}\right]\right)_{[\mathcal{T}_j]}\right\|_{\mP(\mZ_{j,1}\mL(G_j)\mZ_{j,1}^T)_{[\mathcal{T}_j, \mathcal{T}_j]}\mP}\\
\le&e^{\epsilon}\left\|\vec{v}_{[\mathcal{T}_j]}-\left(\mL(G_{j+1})^{\dag}\left[\begin{array}{c}
         \vec{0}  \\
         \vec{u}_1 
    \end{array}\right]\right)_{[\mathcal{T}_j]}\right\|_{\mSC(G_{j+1}, \mathcal{T}_j)}.
\end{aligned}
\end{equation}
Combining (\ref{bound_10_part1}) with (\ref{v_Tj_le_induction_solution}) gets 
\begin{align}
\label{412_norm_vTj-LGj+10u1}
\left\|\vec{v}_{[\mathcal{T}_j]}-\left(\mL(G_{j+1})^{\dag}\left[\begin{array}{c}
         \vec{0}  \\
         \vec{u}_1 
    \end{array}\right]\right)_{[\mathcal{T}_j]}\right\|_{\mP(\mZ_{j,1}\mL(G_j)\mZ_{j,1}^T)_{[\mathcal{T}_j, \mathcal{T}_j]}\mP}\le 2(t-j)\epsilon e^{\epsilon}\left\|\left[\begin{array}{c}
     \vec{0}  \\
     \vec{u}_1 
\end{array}\right]\right\|_{\mL(G_{j+1})^{\dag}}.
\end{align}
Now consider bounding $\left\|\left[\begin{array}{c}
     \vec{0} \\
     \vec{u}_1
\end{array}\right]\right\|_{\mL(G_{j+1})^{\dag}}$,
\begin{align}
\label{412_0u1_LGj+1}
\begin{aligned}
    \left\|\left[\begin{array}{c}
     \vec{0} \\
     \vec{u}_1
\end{array}\right]\right\|_{\mL(G_{j+1})^{\dag}}&=\left\|\vec{u}_1\right\|_{(\mL(G_{j+1})^{\dag})_{[\mathcal{T}_j, \mathcal{T}_j]}}=\left\|\mP\vec{u}_{[\mathcal{T}_j]}\right\|_{(\mL(G_{j+1})^{\dag})_{[\mathcal{T}_j, \mathcal{T}_j]}}\\
&=\left\|\vec{u}_{[\mathcal{T}_j]}\right\|_{\mP(\mL(G_{j+1})^{\dag})_{[\mathcal{T}_j, \mathcal{T}_j]}\mP}=\left\|\vec{u}_{[\mathcal{T}_j]}\right\|_{\mSC(G_{j+1}, \mathcal{T}_j)^{\dag}}\\
&\le e^{\epsilon/2}\left\|\vec{u}_{[\mathcal{T}_j]}\right\|_{\mSC(G_{j}, \mathcal{T}_j)^{\dag}}=e^{\epsilon/2}\left\|\vec{u}\right\|_{\mA},
\end{aligned}
\end{align}
where $\mA=\left[\begin{array}{cc}
     0&0  \\
     0&\mSC(G_j, \mathcal{T}_j)^{\dag}
\end{array}\right]\preceq\left[\begin{array}{cc}
     \mZ_{j,2}&0  \\
     0&\mSC(G_j, \mathcal{T}_j)^{\dag}
\end{array}\right]$, 
then 
\begin{equation}
\label{bound_10_part2}
\begin{aligned}
    \left\|\left[\begin{array}{c}
     \vec{0} \\
     \vec{u}_1
\end{array}\right]\right\|_{\mL(G_{j+1})^{\dag}}\le e^{\epsilon/2}\left\|\mZ_{j,1}\vec{b}\right\|_{\mA}=e^{\epsilon/2}\left\|\vec{b}\right\|_{\mZ_{j,1}^T\mA\mZ_{j,1}}\le e^{\epsilon}\left\|\vec{b}\right\|_{\mL(G_j)^{\dag}}.
\end{aligned}
\end{equation}
Combining (\ref{412_norm_vTj-LGj+10u1}), (\ref{bound_10_part2}) with (\ref{0_v1_Zj1}), we have that 
\begin{equation}
\label{bound_9_part1}
\left\|\left[\begin{array}{c}
     \vec{0}_{[V(G_j)\setminus\mathcal{T}_j]}  \\
     \vec{v}_1-\mSC(G_{j+1},\mathcal{T}_j)^{\dag}\vec{u}_{[\mathcal{T}_j]} 
\end{array}\right]\right\|_{\mZ_{j,1}\mL(G_j)\mZ_{j,1}^T}\le 2(t-j)\epsilon e^{2\epsilon}\left\|\vec{b}\right\|_{\mL(G_j)^{\dag}}.
\end{equation}

Another item in (\ref{412_0_v1-SCGjTjuTj}) is 
\begin{equation}
\label{bound_9_part2}
\begin{aligned}
&\left\|\left[\begin{array}{c}
     \vec{0}_{[V(G_j)\setminus\mathcal{T}_j]}  \\
     \mSC(G_{j+1}, \mathcal{T}_j)^{\dag}\vec{u}_{[\mathcal{T}_j]}-\mSC(G_j, \mathcal{T}_j)^{\dag}\vec{u}_{[\mathcal{T}_j]} 
\end{array}\right]\right\|_{\mZ_{j,1}\mL(G_j)\mZ_{j,1}^T}\\
=&\left\|\mSC(G_{j+1}, \mathcal{T}_j)^{\dag}\vec{u}_{[\mathcal{T}_j]}-\mSC(G_j, \mathcal{T}_j)^{\dag}\vec{u}_{[\mathcal{T}_j]}\right\|_{(\mZ_{j,1}\mL(G_j)\mZ_{j,1}^T)_{[\mathcal{T}_j, \mathcal{T}_j]}}. 
\end{aligned}
\end{equation}
By (\ref{412_Zj1LGjZj1T_Tj}), we have 
\begin{equation}
\label{412_SCGj+1GjTj}
\begin{aligned}
&\left\|\mSC(G_{j+1}, \mathcal{T}_j)^{\dag}\vec{u}_{[\mathcal{T}_j]}-\mSC(G_j, \mathcal{T}_j)^{\dag}\vec{u}_{[\mathcal{T}_j]}\right\|_{(\mZ_{j,1}\mL(G_j)\mZ_{j,1}^T)_{[\mathcal{T}_j, \mathcal{T}_j]}}\\
\le&e^{\epsilon/2}\left\|\mSC(G_{j+1}, \mathcal{T}_j)^{\dag}\vec{u}_{[\mathcal{T}_j]}-\mSC(G_j, \mathcal{T}_j)^{\dag}\vec{u}_{[\mathcal{T}_j]}\right\|_{\mSC(G_j, \mathcal{T}_j)}. 
\end{aligned}
\end{equation}
By the fact $\mSC(G_{j+1}, \T_j)\approx_{\epsilon}\mSC(G_j, \T_j)$ and applying Lemma \ref{lem:ApproxToError}, we have 
\begin{align}
\label{412_SCGj+1GjTjuTj}
    \left\|\mSC(G_{j+1}, \mathcal{T}_j)^{\dag}\vec{u}_{[\mathcal{T}_j]}-\mSC(G_j, \mathcal{T}_j)^{\dag}\vec{u}_{[\mathcal{T}_j]}\right\|_{\mSC(G_j, \mathcal{T}_j)}\le\epsilon\left\|\vec{u}_{[\mathcal{T}_j]}\right\|_{\mSC(G_{j}, \mathcal{T}_j)^{\dag}}.
\end{align}
Recall that in (\ref{412_0u1_LGj+1}) and (\ref{bound_10_part2}) we have 
\begin{align}
\label{412_uTj_SCGjTj}
\left\|\vec{u}_{[\mathcal{T}_j]}\right\|_{\mSC(G_{j}, \mathcal{T}_j)^{\dag}}=\left\|\vec{u}\right\|_{\mA}=\left\|\mZ_{j,1}\vec{b}\right\|_{\mA}=\left\|\vec{b}\right\|_{\mZ_{j,1}^T\mA\mZ_{j,1}}\le e^{\epsilon/2}\left\|\vec{b}\right\|_{\mL(G_j)^{\dag}}.
\end{align}
Combining (\ref{bound_9_part2}), (\ref{412_SCGj+1GjTj}), (\ref{412_SCGj+1GjTjuTj}) and (\ref{412_uTj_SCGjTj}) gives
\begin{align}
\label{412_0_SCGj+1TjuTj_SCGjTjuTj}
    \left\|\left[\begin{array}{c}
     \vec{0}_{[V(G_j)\setminus\mathcal{T}_j]}  \\
     \mSC(G_{j+1}, \mathcal{T}_j)^{\dag}\vec{u}_{[\mathcal{T}_j]}-\mSC(G_j, \mathcal{T}_j)^{\dag}\vec{u}_{[\mathcal{T}_j]} 
\end{array}\right]\right\|_{\mZ_{j,1}\mL(G_j)\mZ_{j,1}^T}\le\epsilon e^{\epsilon}\left\|\vec{b}\right\|_{\mL(G_j)^{\dag}}.
\end{align}

Finally, combining (\ref{bound_9_part1}), (\ref{412_0_SCGj+1TjuTj_SCGjTjuTj}) with (\ref{bound_8}), we have that for the case $j$, 
\begin{align*}    \left\|\mZ_{t,1}^{T}\left[\begin{array}{cc}
         \mZ_{t,2}\vec{u}_{[V(G_t)\setminus\mathcal{T}_t]}  \\
         \vec{v} 
    \end{array}\right]-\mL(G_t)^{\dag}\vec{b}\right\|_{\mL(G_t)}&\le\left[2(t-j)\epsilon e^{2\epsilon}+\epsilon e^{\epsilon}+\epsilon)\right]\left\|\vec{b}\right\|_{\mL(G_j)^{\dag}}\\
    &\lesssim 2(t-j+1)\epsilon\left\|\vec{b}\right\|_{\mL(G_j)^{\dag}}. 
\end{align*}
Since $t\le\log{n}$, we can prove that 
\[
\left\|\mZ_{1,1}^{T}\left[\begin{array}{cc}
         \mZ_{1,2}\vec{u}_{[V(G_1)\setminus\mathcal{T}_1]}  \\
         \vec{v} 
    \end{array}\right]-\mL(G_1)^{\dag}\vec{b}\right\|_{\mL(G_1)}\le 2\epsilon\log{n}\left\|\vec{b}\right\|_{\mL(G_1)^{\dag}},
\]
that is,
\[
\left\|\vec{x}-\mL(G)^{\dag}\vec{b}\right\|_{\mL(G)}\le 2\epsilon\log{n}\left\|\vec{b}\right\|_{\mL(G)^{\dag}}.
\]
\end{proof}

\begin{algorithm}[h]
\caption{Distributed Laplacian Solver}
\SetKwProg{Proc}{procedure}{}{}
\Proc{$\Solve(G)$}{
	$G' \assign \Sparsify(G)$\\
	$(G_1, \mZ_{1,1}, \mZ_{1,2}, \T_1, G_2) \assign \Ultrasparsify (G', k)$ \\
	$\{(G_i, \mZ_{i, 1}, \mZ_{i, 2}, \T_i)\}_{i=2}^t \assign \BuildChain (G_2, d, \epsilon, k)$\\
	solve $\mL(G)\vec x = \vec b$ by preconditioned Chebyshev with $G_1$ as preconditioner s.t.
	$\mL(G_1)\vec {y} = \vec{c}$  is approximated by $\PseudoinverseMulti (\{(G_i, \mZ_{i, 1}, \mZ_{i, 2}, \T_i)\}_{i = 1}^t, \vec {c})$.\\
}
\Proc{$\BuildChain(G, d, \epsilon, k)$}{
	\If{$|V(G)| \leq k$} { return \;}
	$(\mZ_1, \mZ_2, C) \assign \Eliminate (G, d, \epsilon)$. \\
	$H \assign \ApproxSC (G, C, \epsilon)$\\
	return $(G, \mZ_1, \mZ_1, C) \cup \BuildChain (H, d, \epsilon, k)$\\
}
\end{algorithm}

\begin{proof}(of Theorem~\ref{thm:Main})
The parameters are set as follow:
\begin{itemize}
\item $\epsilon = (\frac{1}{\log \overline n})^{10}$.
\item $d = (\log \log \overline n)^2$
\item $k = 2^{(\log \overline n)^{2/3}}$
\end{itemize}

The correctness of the algorithm is obtained by Lemma~\ref{lemma:Communication}, Lemma~\ref{lem:UltraSparsify}, Lemma~\ref{lem:Elimination}, Theorem~\ref{thm:MinorSC} and Lemma~\ref{lem:applying_pseudoinverse}.

Now we bound the number of rounds required.
By Lemma~\ref{lem:Elimination}, Theorem~\ref{thm:MinorSC} we have that the Schur-complement chain obtained for graph $G$ satisfying the following conditions
\[|V(G)| = |V(G_1)| \geq |V(G_2)| 2^{O(\sqrt{\log n \log \log n})} / k = |V(G_2)| / k^{1 - o(1)}\]
and
\[|V(G_i)| / |V(G_{i+1})| \geq \epsilon^{-2}\log^2 n / 0.99^d = 2^{\Theta((\log \log \overline n)^2)}.\]
Hence, the Schur-complement chain obtained by the \BuildChain algorithm is a $(2^{\Theta((\log \log \overline n)^2)}, \eps)$-Schur-complement chain of length $O(\log \overline n / (\log \log \overline n)^2)$.
By Lemma~\ref{lem:applying_pseudoinverse}, \PseudoinverseMulti for the
Schur-complement chain takes $O(\rho {\overline n}^{o(1)} (\overline n^{1/2} + D))$
rounds.

Let $f(n, \rho)$ denote the number of rounds required by Algorithms \Solve and~\BuildChain
on a graph with $n$ vertices that is $\rho$-minor distributes to $\overline G$, and let
$g(n, \rho)$ denote the number of rounds of \BuildChain with $n$ vertices that
is $\rho$-minor distributes to $\overline G$.

Since preconditioned Chebyshev needs to call the Laplacian solver of the preconditioner $O(\sqrt{k})$ times,
by Lemma~\ref{lemma:Communication}, Lemma~\ref{lem:UltraSparsify}, Lemma~\ref{lem:Elimination}, Theorem~\ref{thm:MinorSC}, and Lemma~\ref{lem:applying_pseudoinverse} we have
\begin{align*}
f\left(n, \rho\right)
&=
O\left(\left(\log^{14} n \log^{60} \overline n\right)^{\left(\log \log \overline n\right)^2}\left(\rho \overline n^{1/2} \log \overline n + D\right)\right)
+
\\ & \qquad \quad
g\left(n / k^{1-o\left(1\right)}, \rho\right)
+
O\left(\sqrt{k}\rho {\overline n}^{o\left(1\right)} \left(\overline n^{1/2}
+ D\right)\right) \\
&=
O\left(\overline n^{o\left(1\right)}
\left(\rho \overline n^{1/2} \log \overline n + D\right)\right)
+
g\left(n / k^{1-o\left(1\right)}, \rho\right).
\end{align*}
and
\begin{align*}
g(n, \rho)
&=
O\left(\left(\log^{14} n \log^{60} \overline n\right)
  ^{\left(\log \log \overline n\right)^2}
     \left(\rho \overline n^{1/2} \log \overline n + D\right)\right)
+
f\left(n, 2\rho\right) \log^{10} n \cdot \epsilon^{-3}
+
g\left(n / k, 2\rho\right)\\
&=
O\left(\overline n^{o\left(1\right)}\left(\rho \overline n^{1/2}  + D\right)\right)
+
\text{polylog}\left(\overline n\right) f\left(n, 2\rho\right)
+
g\left(n / k, 2\rho\right).
\end{align*}
Since the depth of the recursion is $O(\log \overline n / (\log \log \overline n)^2)$,
the overall increase in congestion is at most
\[
2^{O\left(\log \overline n / \left(\log \log \overline{n}\right)^2\right)}\rho
\leq
n^{o\left( 1\right)} \rho
\]
so all the graphs constructed $\overline n^{o(1)} \rho$-minor distribute into $\overline G$.

Hence, the algorithm $\Solve$
takes $\rho \overline n^{o(1)} (\overline n^{1/2} + D)$ rounds.
\end{proof}

\section{Minor Schur Complement}
\label{sec:SC}
In this section we give the algorithm for constructing minor based approximate Schur complements. Due to the recursive invocation of this routine and solver constructions in Section~\ref{sec:ClownFiesta}, we can treat the calls to solvers for SDD or Laplacian matrices as black-boxes. The formal guarantees of our constructions are stated in Theorem~\ref{thm:MinorSC}, which is restated below.
\MinorSC*

Before delving into technical details, we first discuss the high-level connections and differences between our algorithm and that of~\cite{LS18}. 

\paragraph{Comparison to~\cite{LS18}} Our starting point is the same as~\cite{LS18}, that is, randomly contracting an edge with probability being equal to its leverage score (and deleting otherwise) is exactly a matrix martingale on the spectral form of the graph. It gives a natural algorithm -- iteratively computing leverage scores of
edges and sampling them until the variance having been accumulated. The correctness of the algorithm is proved via the matrix martingale concentration inequality.

The main difference lies in the way of obtaining a nearly-linear running time. The leverage scores of all the edges keep changing as some edges get sampled, so a fast algorithm is needed to do better than recomputing the sampling probabilities of all the edges after each edge gets sampled. Li and Schild~\cite{LS18} address this issue by showing that a random spanning tree has the correct marginals, and use the fast random spanning tree sampling algorithm~\cite{LS18,ALGV20:arxiv} to obtain such trees. While there are distributed algorithms for sampling spanning trees from unweighted graphs~\cite{DNPT13}, partial states of elimination algorithms, namely Schur complements, are naturally weighted. Furthermore, we are unable to directly extend fast random walk simulations to  weighted graphs due to the higher congestion of weighted random walks. 

Instead, we devise a parallel version of this algorithm based on sampling large subsets of edges independently. We compute a large subset of \textit{steady edges} $Z$ that are mostly uncorrelated, which is obtained by the localization of electrical flows~\cite{SRS18}. We then identify such subsets, as well as compute all their effective resistances, using standard sketching methods that are also highly parallel. By ensuring that the size of these sets is at least $1/\mathrm{poly}(\log{n})$ of the total number of edges, we are able to ensure the rapid convergence of this process. 

In general, we track the cost of our algorithms via three quantities. The first is the number of Laplacian solvers to principal minors of $\mL(G)$ that we must call, and the second is how many additional rounds of communication between neighbors of $G$ that are necessary, each of which can be simulated in $O(\rho\sqrt{\overline{n}}\log{\overline{n}}+D)$ rounds in $\overline{G}$ by Lemma~\ref{lemma:Communication}. Finally, we must also ensure that the local computations on vertices $v\in V(G)$ are actually simple minimum/sum aggregations, as each vertex  $v\in V(G)$ actually corresponds to a connected component in $\overline{G}$. These can also be simulated in $O(\rho\sqrt{\overline{n}}\log{\overline{n}}+D)$ rounds in~$\overline{G}$ by Lemma \ref{lemma:Communication}. We note that the computations for solving Laplacian systems and computing leverage scores, etc. only involve matrix-vector multiplications and sampling Bernoulli/Cauchy random variables, which can all be aggregated in a distributed manner. 

\paragraph{Distributed storage conventions.} In this section, we work with graph $G$ that $\rho$-minor distributes into the original graph/communication network $\overline{G}$ and is stored distributedly (see Definition~\ref{def:Minor}). We work with vertex vectors $\vec{x}\in\mathbb{R}^{V(G)}$. In this case, for a vertex $v\in V(G)$ (corresponding to a connected component in $\overline{G}$), we assume that the root $V_{map}^{G\rightarrow\overline{G}}(v)\in V(\overline{G})$ stores the value of $x_v$. We also work with edge vectors $\vec{w}\in\mathbb{R}^{E(G)}$ of edge resistances or leverage scores. For an edge $e^G=(u^G, v^G)\in E(G)$, it corresponds to an edge in $\overline{G}$ with endpoints $u^H$ and $v^H$ that store the weight $w_{e^G}$. When an algorithm is said to compute vertex vectors or edge vectors, it means that these conditions are satisfied. 

The remaining part of this section is organized as follows. 
\begin{enumerate}
    \item In Section \ref{sec:Schur_sparsification}, we give the formal definition of steady edges, and present the algorithm for minor based approximate Schur complement. 
    \item In Section \ref{sec:steady edges}, we give the algorithm for finding the set of steady edges.
    \item In Section \ref{sec:matrix martingale analysis}, we prove the correctness of the algorithm in Section \ref{sec:steady edges} via matrix martingales, and Theorem \ref{thm:MinorSC}. 
\end{enumerate}
\subsection{Sparsification Algorithm}
\label{sec:Schur_sparsification}
We start by defining the key notion \textit{steady edges}, which are edges that intuitively do not interact with each other much. Here, we emphasize that these steady edges are stochastic, not deterministic. 
\begin{definition}
\label{def:alpha_delta_steady}
A stochastic subset of edges $Z\subseteq E(H)
$ is $(\alpha, \delta)$-steady if 
\begin{enumerate}
    \item \label{condition:quadratic_form}(Quadratic form) $\mathbb{E}\left[\sum_{e\in Z}r_e^{-1}\vec{b}_e\vec{b}_e^T\right]\preceq\alpha\mL(H)$;
    \item \label{condition:localization}(Localization) For each edge $e\in Z$, $\sum_{f\neq e\in Z}\frac{|\vec{b}_e^T\mL(H)^{\dag}\vec{b}_f|}{\sqrt{r_{e}r_f}}\le\delta$;
    \item \label{condition:variance}(Variance) For each edge $e\in Z$, 
    \begin{align*}
        r_e^{-1}\vec{b}_e^T\mL(H)^{\dag}\left[\begin{array}{cc}
             0&0  \\
             0&\mSC(H, \T) 
        \end{array}\right]\mL(H)^{\dag}\vec{b}_e\le\frac{18|\T|}{|E(H)|}.
    \end{align*}
\end{enumerate}
\end{definition}

Intuitively, the \textit{Quadratic form} constraint guarantees that no edge is picked in the set of steady edges with a high probability. The \textit{Localization} constraint bounds the ``correlation'' between edges, by restricting the electrical flow of each edge $e$ putting on the remaining edges in $Z$. Finally, the \textit{Variance} constraint says that the induced leverage score of edge $e$ on the Schur complement is bounded, and allows us to control the variance in the matrix martingale analysis. 

Now we describe the algorithm for computing a minor Schur complement. First, identify a set of steady edges and approximately compute their leverage scores by the Johnson-Lindenstrauss lemma. Then for each steady edge, contract it with probability being its approximate leverage score, and delete it otherwise. Repeat this process until the size of the resulting graph is small enough. The algorithm pseudocode is shown in algorithm \ApproxSC. For this algorithm we have the following theorem. 

\begin{algorithm}[ht]
\caption{Finding sparsifier of Schur complement onto terminals, but with extra Steiner vertices\label{algo:sparsify}}
\SetKwProg{Proc}{procedure}{}{}
\Proc{$\ApproxSC(G, \T, \eps)$}{
Initialize $G^{(0)} \assign G$ and $i \assign 0$. \\
Set $\d \assign \frac{\eps}{C \log^2 m}$ with $m=|E(G)|$. \Comment{$C$ is a large constant}\\
\While{$|E(G^{(i)})| \ge \frac{C|\T|\log^2 m}{\eps^2}$}{
\label{ln:ApproxSCWhile}
$H^{(i)} \assign \Split(G^{(i)}, \LevApx(e, G^{(i)}, 0.01))$. \Comment{Lemmas
\ref{lem:split} and \ref{lem:LevApx}} \\
$Z^{(i)} \assign \FindSteady(H^{(i)}, \T, \d).$ \Comment{Lemma \ref{lemma:FindSteady}}\label{line:local} \\
For each edge $e \in Z^{(i)}$, set $p_e \assign \LevApx(e, H^{(i)}, \d)$.
\Comment{Lemma \ref{lem:LevApx}}\label{line:LeverageEstimate} \\
For each edge $e \in Z^{(i)}$, contract $e$ with probability $p_e$ and delete $e$ with probability $1-p_e$. Perform the contractions and deletions via Corollary~\ref{corollary:ActualMinor} and let the resulting graph be $I^{(i)}$. \\
$G^{(i+1)} \assign \Unsplit(I^{(i)})$. \Comment{Lemma \ref{lem:split}} \\
$i \assign i+1$. \\
}
$H\assign G^{(i)}$.\\
\Return $H$.
}
\end{algorithm}

\begin{theorem}
\label{thm:approxsc_algebra}
Given a graph $G$ with $m$ edges and a set of terminals $\T\subseteq V(G)$, and parameter $\epsilon\in (0,1)$, the algorithm $\ApproxSC(G, \T, \epsilon)$ returns a graph $H$ such that $|E(H)|\le O(|\T|\epsilon^{-2}\log^2{m})$ and $\mSC(H, \T)\approx_{\epsilon}\mSC(G, \T)$ with probability at least $1-1/\mathrm{poly}(m)$. 
\end{theorem}
The proof of Theorem \ref{thm:approxsc_algebra} is deferred to Section \ref{sec:matrix martingale analysis}. Now we describe the subroutines in algorithm \ApproxSC. 

The \Split and sampling process depend on the leverage score of each edge. Instead of computing the leverage scores precisely, we use \LevApx to compute approximate leverage scores following the standard random projection scheme devised by Spielman and Srivastava \cite{SS11}. Specifically, the subroutine \LevApx satisfies the following guarantees.  
\begin{lemma}[Approximate leverage scores]
\label{lem:LevApx}
Given a graph $G'$ that $\rho$-minor distributes into $\overline{G}$, an error parameter $\delta > 0$ and the distributed Laplacian solver \Solve, for each edge $e\in E(G')$, the algorithm $\LevApx(e, G', \delta)$ returns the approximation of $\mathrm{lev}_{G'}(e)=r_e^{-1}\vec{b}_e^T\mL(G')^{\dag}\vec{b}_e$ to within a factor of $1+\delta$ with high probability. Furthermore, it takes 
\begin{enumerate}
    \item $O(\delta^{-2}\log|V(G')|)$ calls to \Solve with accuracy $1/\mathrm{poly}(|V(G)|)$ on graphs that $\rho$-minor distribute into $\overline{G}$;
    \item An additional $O(\rho\delta^{-2}\sqrt{\overline{n}}\log{\overline{n}}\log{|V(G')|}+D)$ rounds of communication in $\overline{G}$. 
\end{enumerate}
\end{lemma}
Before proving Lemma \ref{lem:LevApx}, we present the Johnson-Lindenstrauss lemma which is essential for proving Lemma \ref{lem:LevApx}. 
\begin{lemma}[Johnson-Lindenstrauss Lemma]
\label{lem:JL}
Given $n$ vectors $\vec{v}_1, \cdots, \vec{v}_n\in\mathbb{R}^d$ and a parameter $\delta>0$, let $\mQ\in\mathbb{R}^{k\times d}$ with $k\ge 24\delta^{-2}\log{n}$ be a random $\pm 1/\sqrt{k}$ matrix with each entry being an independent Bernoulli random variable. Then with probability at least $1-1/n$,
\begin{align*}
\|\mQ(\vec{v}_i-\vec{v}_j)\|_2^2\approx_{\delta}\|\vec{v}_i-\vec{v}_j\|_2^2,
\end{align*}
for all $i,j\in[n]$.
\end{lemma}
\begin{proof}[Proof of Lemma \ref{lem:LevApx}]
Recall that the effective resistance of $e=(u,v)\in E(G')$ is defined by $\res_{G'}(e)=\vec{b}_e^T\mL(G')^{\dag}\vec{b}_e$. More specifically, we have 
\begin{align*}
    \res_{G'}(e)&=\vec{b}_e^T\mL(G')^{\dag}\vec{b}_e=\vec{b}_e^T\mL(G')^{\dag}\mL(G')\mL(G')^{\dag}\vec{b}_e\\  \tag{Setting $\mL(G')=\mB^T\mR^{-1}\mB$}
    &=\vec{b}_e^T\mL(G')^{\dag}\mB^T\mR^{-1}\mB\mL(G')^{\dag}\vec{b}_e\\
    &=\left\|\mR^{-1/2}\mB\mL(G')^{\dag}\vec{b}_e\right\|_2^2,
\end{align*}
which is equal to the squared Euclidean distance between the $u$-th and $v$-th column vectors of the matrix $\mR^{-1/2}\mB\mL(G')^{\dag}$. To compute the effective resistance for each edge, it suffices to compute the pairwise distances among the column vectors of matrix $\mR^{-1/2}\mB\mL(G')^{\dag}$. Applying Johnson-Lindenstruass lemma, we generate a random matrix $\mQ\in\mathbb{R}^{t\times|E(G')|}$ with $t=O(\delta^{-2}\log|V(G')|)$ such that with high probability 
\begin{align*}
\left\|\mQ\mR^{-1/2}\mB\mL(G')^{\dag}\vec{b}_e\right\|_2^2\approx_{\delta}\res_{G'}(e), 
\end{align*}
where computing the matrix $\mQ\mR^{-1/2}\mB\mL(G')^{\dag}$ requires matrix multiplication of $\mQ\mR^{-1/2}\mB$ and solving $O(\delta^{-2}\log|V(G')|)$ Laplacian linear systems. 

Now we implement the above operations in the distributed settings. First generate $\mQ$ on the endpoints of the images of the edges in $G'$. Note that $\mR^{-1/2}$ is a rescaling of the resistances of the edges, which are also stored together with their endpoints. Let each edge $e\in E(G')$ store the corresponding column of $\mQ\mR^{-1/2}$ on both of its endpoints, i.e., both endpoints of $E_{map}^{G'\rightarrow\overline{G}}(e)$ store $(\mQ\mR^{-1/2})_{:,e}$. Computing the matrix $\mQ\mR^{-1/2}\mB$ is reduced to computing the matrix-vector multiplication $\mQ\mR^{-1/2}\mB_{:,v}$ for each vertex $v\in V(G')$. Note that each edge can choose its direction arbitrarily, as the direction factor backs in when we apply the multiplication by $\vec{b}_e$ at the end, which is also a local step. Lemma \ref{lemma:Communication} allows us to perform this process in $O(\rho\delta^{-2}\sqrt{\overline{n}}\log{\overline{n}}\log{|V(G')|}+D)$ rounds. 
\end{proof}

The algorithm {\ApproxSC} requires that the leverage scores of all the edges are bounded away from $0$ and $1$, which can be done by the two subroutines, \Split and \Unsplit that have the following guarantees. 
\begin{lemma}[\Split and \Unsplit, see Proposition 3.4 and 3.5 in \cite{LS18}]
\label{lem:split}
Given a graph $G'$ that $\rho$-minor distributes into $\overline{G}$ and the approximate leverage score $\mathrm{lev}'_{G'}(e)\approx_{0.01}\mathrm{lev}_{G'}(e)$ for each $e\in E(G')$, the algorithm $\Split(G', \mathrm{lev}'_{G'}(e))$ returns a graph $H'$ in $\widetilde{O}(\rho(\sqrt{\overline{n}}+D))$ rounds such that 
\begin{enumerate}
    \item $H'$ is electrically equivalent to $G'$;
    \item $H'$ $2\rho$-minor distributes into $\overline{G}$;
    \item For each edge $e'\in E(H')$, $\mathrm{lev}_{H'}(e')\in[3/16, 13/16]$. 
\end{enumerate}
The algorithm \Unsplit returns a graph resulting from collapsing paths, parallel edges, and removing non-terminal leaves, along with a $\rho$-minor distribution into $\overline{G}$. 
\end{lemma}
\begin{proof}
In algorithm \Split, for each edge $e\in E(G')$, if $\mathrm{lev}'_{G'}(e)\le 1/2$, replace $e$ by a path of two edges $e_1$ and $e_2$ with resistance $r_e/2$; if $\mathrm{lev}'_{G'}(e)\ge 1/2$, replace $e$ by two parallel edges $e_1$ and $e_2$ with resistance $2r_e$. In the first case, both edges have leverage score $\mathrm{lev}_{H'}(e_1)=\mathrm{lev}_{H'}(e_2)=\frac{1}{2}+\frac{\mathrm{lev}_{G'}(e)}{2}$, in which $\mathrm{lev}_{G'}(e)\in\left[0,\frac{1}{2(1-0.01)}\right]$; in the second case, both edges have leverage score $\mathrm{lev}_{H'}(e_1)=\mathrm{lev}_{H'}(e_2)=\frac{\mathrm{lev}_{G'}(e)}{2}$, in which $\mathrm{lev}_{G'}(e)\in\left[\frac{1}{2(1+0.01)}, 1\right]$. It is easy to verify that each edge $e'\in E(H')$ has $\mathrm{lev}_{H'}(e')\in[3/16,13/16]$. 

The bound on the cost and embeddability follows since each edge is turned into a path of at most two edges. The new vertex can be placed at either endpoints of $E_{map}^{G\rightarrow\overline{G}}(e)$, and the congestion on both edge $e$ and the endpoints of $e$ goes up by a factor of $2$. This $2$-minor embedding of the new graph into $G'$ then meets the definition of Lemma \ref{lemma:MinorCompose}, which means that $H'$ $2\rho$-minor distributes into $\overline{G}$ with an overhead of $\widetilde{O}(\rho(\sqrt{\overline{n}}+D))$ rounds.

The execution of \Unsplit is straightforward because the resulting graph is a minor of $G'$, and all changes happen on $O(1)$ neighbors, and only involve local endpoints of edges of $G'$. Therefore, they can be implemented using $O(\rho)$ rounds of communications among neighbors of $G'$. 
\end{proof}
\subsection{Algorithm for Finding Steady Edges}
\label{sec:steady edges}
In this section, we present the subroutine \FindSteady, shown in algorithm \ref{algo:FindSteady}, that returns the set of steady edges in algorithm \ApproxSC. 
\begin{algorithm}[ht]
\caption{Given a graph $H$ with a set of terminals $\T$ and parameter $\d$, return the set of steady edges
\label{algo:FindSteady}}
\SetKwProg{Proc}{procedure}{}{}
\Proc{$\FindSteady(H, \T, \d)$}{
Set $\alpha \assign \frac{\d}{46\clocal \log^2{|E(H)|}}$. \\
For each $e \in E(H)$, let $v_e\assign\DiffApx(e, H, \T).$ \Comment{Lemma
	\ref{lem:DiffApx}}\label{line:difffinal} \\
For each $e \in E(H)$, let $s_e \assign \ColumnApx(e, H, E(H))$. \Comment{Lemma
	\ref{lem:ColumnApx}} \\
	$Z_1 \assign \{e \in E(H)\mid v_e \le 16|\T|/|E(H)|, \ s_e\le
	16\clocal
	\log^2{|E(H)|}\}$. \label{line:Z1_ve_se}\\
	Let $Z_2$ be the set of sampled edges from $Z_1$ such that each $e\in Z_1$ is sampled with probability $\alpha$.\\
	For each $e \in Z_2$, let $s_e'\assign \ColumnApx(e, H, Z_2)$. \Comment{Lemma
	\ref{lem:DiffApx}} \\
	$Z\assign \{e \in Z_2\mid s_e'\le \d/1.1 \}$. \label{line:localfinal} \\
	\Return $Z$.
}
\end{algorithm}

The algorithm \FindSteady has the following lemma. 
\begin{lemma}
\label{lemma:FindSteady}
Given a graph $H$ that $\rho$-minor distributes into $\overline{G}$, a set of terminals $\T\subseteq V(H)$ and constant $\delta\in(0,1)$, the algorithm $\FindSteady(H, \T, \delta)$ has access to the distributed Laplacian solver \Solve and returns an edge set $Z$ with at least $\alpha|E(H)|/2$ edges in expectation that is $(\alpha, \delta)$-steady. Furthermore, it takes 
\begin{enumerate}
    \item $O(\log^2{|V(H)|})$ calls to \Solve with $1/\mathrm{poly}(|V(H)|)$ error on graphs that $\rho$-minor distribute into $\overline{G}$;
    \item An additional $O((\rho\sqrt{\overline{n}}\log{\overline{n}}\log{|V(H)|}+D)\log{|V(H)|})$ rounds of communication in $\overline{G}$.
\end{enumerate}
\end{lemma}

Before proving Lemma \ref{lemma:FindSteady}, we first introduce the subroutines \DiffApx and \ColumnApx. 
\begin{lemma}[Difference sketch, Lemma 1.4 in \cite{LS18}]
\label{lem:DiffApx}
Given a graph $H$ that $\rho$-minor distributes into~$\overline{G}$ and a set of terminals $\T\subseteq V(H)$, for each edge $e\in E(H)$, the algorithm $\DiffApx(e, H, \T)$ returns an approximation to 
\begin{align*}
    r_e^{-1}\vec{b}_e^T\mL(H)^{\dag}\left[\begin{array}{cc}
         0&0  \\
         0&\mSC(H, \T) 
    \end{array}\right]\mL(H)^{\dag}\vec{b}_e
\end{align*}
within a factor of $1.1$ with high probability. Furthermore, it requires
\begin{enumerate}
    \item $O(\log{|V(H)|})$ calls to \Solve with accuracy $1/\mathrm{poly}(|V(H)|)$ on graphs that $\rho$-minor distribute into $\overline{G}$;
    \item An additional $O(\rho\sqrt{\overline{n}}\log{\overline{n}}\log{|V(H)|}+D)$ rounds of communication in $\overline{G}$. 
\end{enumerate}
\end{lemma}
\begin{proof}
By Lemma \ref{lemma:Inverse}, we have that 
\begin{align*}
    \left[\begin{array}{cc}
         0&0  \\
         0&\mSC(H, \T) 
    \end{array}\right]=\left[\begin{array}{cc}
         0&0  \\
         0&\mSC(H, \T) 
    \end{array}\right]\mL(H)^{\dag}\left[\begin{array}{cc}
         0&0  \\
         0&\mSC(H, \T) 
    \end{array}\right],
\end{align*}
and then
\begin{equation}
\label{eq:re_-1_be_LH_SCHT_LH_be}
\begin{aligned}
&r_e^{-1}\vec{b}_e^T\mL(H)^{\dag}\left[\begin{array}{cc}
         0&0  \\
         0&\mSC(H, \T) 
    \end{array}\right]\mL(H)^{\dag}\vec{b}_e\\
=&r_e^{-1}\vec{b}_e^T\mL(H)^{\dag}\left[\begin{array}{cc}
         0&0  \\
         0&\mSC(H, \T) 
    \end{array}\right]\mL(H)^{\dag}\left[\begin{array}{cc}
         0&0  \\
         0&\mSC(H, \T) 
    \end{array}\right]\mL(H)^{\dag}\vec{b}_e.
\end{aligned}
\end{equation}
Using the fact $\mL(H)^{\dag}=\mL(H)^{\dag}\mL(H)\mL(H)^{\dag}$ and setting $\mL(H)=\mB^T\mR^{-1}\mB$, (\ref{eq:re_-1_be_LH_SCHT_LH_be}) becomes 
\begin{align}
\label{quantity:re_-1_be_LH}
r_e^{-1}\vec{b}_e^T\mL(H)^{\dag}\left[\begin{array}{cc}
         0&0  \\
         0&\mSC(H, \T) 
    \end{array}\right]\mL(H)^{\dag}\mB^T\mR^{-1}\mB\mL(H)^{\dag}\left[\begin{array}{cc}
         0&0  \\
         0&\mSC(H, \T) 
    \end{array}\right]\mL(H)^{\dag}\vec{b}_e.
\end{align}
Formulating (\ref{quantity:re_-1_be_LH}) in another way, it becomes 
\begin{align}
\label{quantity:re_-1_l2norm_square}
r_e^{-1}\left\|\mR^{-1/2}\mB\mL(H)^{\dag}\left[\begin{array}{cc}
         0&0  \\
         0&\mSC(H, \T) 
    \end{array}\right]\mL(H)^{\dag}\vec{b}_e\right\|_2^2.
\end{align}
Combining (\ref{eq:re_-1_be_LH_SCHT_LH_be}), (\ref{quantity:re_-1_be_LH}) and (\ref{quantity:re_-1_l2norm_square}), we have 
\[
r_e^{-1}\vec{b}_e^T\mL(H)^{\dag}\left[\begin{array}{cc}
         0&0  \\
         0&\mSC(H, \T) 
    \end{array}\right]\mL(H)^{\dag}\vec{b}_e=r_e^{-1}\left\|\mR^{-1/2}\mB\mL(H)^{\dag}\left[\begin{array}{cc}
         0&0  \\
         0&\mSC(H, \T) 
    \end{array}\right]\mL(H)^{\dag}\vec{b}_e\right\|_2^2.
\]

By Lemma \ref{lem:JL}, we generate a random matrix $\mQ\in\mathbb{R}^{t\times|E(H)|}$ with $t=O(\log{|V(H)|})$ such that
\begin{align*}
&r_e^{-1}\left\|\mQ\mR^{-1/2}\mB\mL(H)^{\dag}\left[\begin{array}{cc}
         0&0  \\
         0&\mSC(H, \T) 
    \end{array}\right]\mL(H)^{\dag}\vec{b}_e\right\|_2^2\\
\approx_{0.1}&r_e^{-1}\vec{b}_e^T\mL(H)^{\dag}\left[\begin{array}{cc}
         0&0  \\
         0&\mSC(H, \T) 
    \end{array}
    \right]\mL(H)^{\dag}\vec{b}_e.
\end{align*}
Therefore, the round complexity is determined by computing the matrix 
\begin{align*}
    \mQ\mR^{-1/2}\mB\mL(H)^{\dag}\left[\begin{array}{cc}
         0&0  \\
         0&\mSC(H,\T) 
    \end{array}\right]\mL(H)^{\dag}.
\end{align*}

Firstly, we can compute the matrix $\mA_1=\mQ\mR^{-1/2}\mB$ as Lemma \ref{lem:LevApx}, which takes
\[
O(\rho\sqrt{\overline{n}}\log{\overline{n}}\log{|V(H)|}+D)
\]
rounds. Then multiplying by $\mL(H)^{\dag}$ requires calling $O(\log{|V(H)|})$ times \Solve on $\mL(H)$, which corresponds to the $t$ rows of the matrix $\mA_1$. 

Recall that
\begin{align*}
    \mSC(H,\T)=\mL(H)_{[\T,\T]}-\mL(H)_{[\T,V(H)\setminus\T]}\left(\mL(H)_{[V(H)\setminus\T, V(H)\setminus\T]}\right)^{-1}\mL(H)_{[V(H)\setminus\T, \T]},
\end{align*}
then we have 
\begin{align*}
    &\mA_1\left[\begin{array}{cc}
         0&0  \\
         0&\mSC(H,\T) 
    \end{array}\right]={\mA_{1}}_{[:,\T]}\mSC(H,\T)\\
    =&{\mA_{1}}_{[:,\T]}\mL(H)_{[\T,\T]}-{\mA_{1}}_{[:,\T]}\mL(H)_{[\T,V(H)\setminus\T]}\left(\mL(H)_{[V(H)\setminus\T, V(H)\setminus\T]}\right)^{-1}\mL(H)_{[V(H)\setminus\T, \T]},
\end{align*}
which can be computed by calling $t$ times \Solve on $\mL(H)_{[V(H)\setminus\T, V(H)\setminus\T]}$ and three matrix-matrix multiplications; each matrix-matrix consists of $t$ matrix-vector multiplications.

Let the matrix $\mA_2=\mA_1\left[\begin{array}{cc}
     0&0  \\
     0&\mSC(H,\T) 
\end{array}\right]$, then computing $\mA_2\mL(H)^{\dag}$ requires calling $t$ times \Solve on $\mL(H)$. 

Therefore, the algorithm $\DiffApx(e, H, \T)$ requires calling $O(\log{|V(H)|})$ times \Solve on graphs that $\rho$-minor distributes into $\overline{G}$  and additional $O(\rho\sqrt{\overline{n}}\log{\overline{n}}\log{|V(H)|}+D)$ rounds of communication. 
\end{proof}

\begin{lemma}[Analog to Proposition 4.3 in \cite{LS18}]
\label{lem:ColumnApx}
Given a graph $H$ that $\rho$-minor distributes into $\overline{G}$ and a subset $W\subseteq E(H)$, for each edge $e\in W$, the algorithm $\ColumnApx(e, H, W)$
returns an approximation to 
\begin{align*}
    \sum_{f\neq e\in W}\frac{|\vec{b}_e^T\mL(H)^{\dag}\vec{b}_f|}{\sqrt{r_{e}r_f}}
\end{align*}
within a factor of $1.1$ with high probability. Furthermore, it takes
\begin{enumerate}
    \item $O(\log^2{|V(H)|})$ calls to \Solve with accuracy $1/\mathrm{poly}(|V(H)|)$ on graphs that $\rho$-minor distribute into $\overline{G}$;
    \item An additional $O((\rho\sqrt{\overline{n}}\log{\overline{n}}\log{|V(H)|}+D)\log{|V(H)|})$ rounds of communication in $\overline{G}$.
\end{enumerate}
\end{lemma}
The proof of Lemma \ref{lem:ColumnApx} depends on the following $\ell_1$ sketch. 
\begin{lemma}[Theorem $3$ in \cite{Indyk06}]
\label{lem:l1_sketch}
Given an integer $d\ge 1$ and two constants $0<\delta, \epsilon<1$, there exists a matrix $\mC\in\mathbb{R}^{t\times d}$ with $t=O(\epsilon^{-2}\log(1/\delta))$ and an algorithm $\Recover(\vec{u}, d, \delta, \epsilon)$ such that 
\begin{enumerate}
    \item The entries of $\mC$ are independently sampled from a Cauchy distribution;
    \item For any vector $\vec{v}\in\mathbb{R}^d$, the algorithm $\Recover(\mC\vec{v}, d, \delta, \epsilon)$ outputs an estimator $r$ such that 
    \begin{align*}
        r\approx_{\epsilon}\|\vec{v}\|_1,
    \end{align*}
    with probability $1-\delta$. 
\end{enumerate}
\end{lemma}

\begin{proof}[Proof of Lemma \ref{lem:ColumnApx}]
We use the $\ell_1$ sketch in Lemma \ref{lem:l1_sketch} in a way analogous to the $\ell_2$ resistance estimation procedure in Lemma \ref{lem:LevApx}. Randomly partitioning the set $W$ such that $W=U\cup(W\setminus U)$ and $\Pr[e\in U]=1/2$ for each $e\in W$, then for each $e\in U$, we have 
\begin{align*}
    \mathbb{E}\left[\sum_{f\in W\setminus U}\frac{|\vec{b}_e^T\mL(H)^{\dag}\vec{b}_f|}{\sqrt{r_{e}r_f}}\right]=\frac{1}{2}\sum_{f\neq e\in W}\frac{|\vec{b}_e^T\mL(H)^{\dag}\vec{b}_f|}{\sqrt{r_{e}r_f}}.
\end{align*}
Denote the random variable $X_i=\sum_{f\in W\setminus U}\frac{|\vec{b}_e^T\mL(H)^{\dag}\vec{b}_f|}{\sqrt{r_{e}r_f}}$, and repeat $t_1=O(\log|V(H)|)$ times to obtain $X_1,\cdots,X_{t_1}$. Let $X$ be $X=\sum_{i=1}^{t_1}X_i$, then by Chernoff bound we have that with high probability, 
\begin{align*}
\frac{2X}{t_1}\approx_{0.1}\sum_{f\neq e\in W}\frac{|\vec{b}_e^T\mL(H)^{\dag}\vec{b}_f|}{\sqrt{r_{e}r_f}}. 
\end{align*}
Let $\mR_{W\setminus U}$ and $\mB_{W\setminus U}$ be the diagonal resistance matrix and incidence matrix restricted to $W\setminus U$, and $\vec{v}=r_e^{-1/2}\mR_{W\setminus U}^{-1/2}\mB_{W\setminus U}\mL(H)^{\dag}\vec{b}_e$. Then we have 
\begin{align*}
    \|\vec{v}\|_1=\sum_{f\in W\setminus U}\frac{|\vec{b}_e^T\mL(H)^{\dag}\vec{b}_f|}{\sqrt{r_{e}r_f}}. 
\end{align*}
By Lemma \ref{lem:l1_sketch}, setting the matrix $\mC\in\mathbb{R}^{t_2\times|W\setminus U|}$ with $t_2=O(\log{|V(H)|})$, then with probability $1-1/\mathrm{poly}(|V(H)|)$ the algorithm $\Recover(\mC\vec{v}, |W\setminus U|, 1/\mathrm{poly}(|V(H)|), 0.01)$ outputs a $0.01$-approximation of the quantity $\|\vec{v}\|_1$. 

Now we analyze the round complexity which is analogous to the $\ell_2$ sketch presented in the proof of Lemma \ref{lem:LevApx}. Computing the matrix $\mC\mR^{-1/2}_{W\setminus U}\mB_{W\setminus U}\mL(H)^{\dag}$ consists of computing $\mC\mR^{-1/2}_{W\setminus U}\mB_{W\setminus U}$, which takes $O(\rho\sqrt{\overline{n}}\log{\overline{n}}\log{|V(H)|}+D)$ rounds, and solving $t_2$ Laplacian linear systems in $\mL(H)$. Note that we repeat that for $t_1$ times, therefore, the algorithm $\ColumnApx(e, H, W)$ requires calling $O(\log^2{|V(H)|})$ times \Solve with accuracy $1/\mathrm{poly}(|V(H)|)$ on graphs that $\rho$-minor distribute into $\overline{G}$, and an additional $O((\rho\sqrt{\overline{n}}\log{\overline{n}}\log{|V(H)|}+D)\log{|V(H)|})$ rounds of communication in $\overline{G}$.
\end{proof}

\begin{lemma}
\label{lem:atmost_|T|}
The graph $H$ with a set of terminals $\T\subseteq V(H)$ satisfies that 
\begin{align}
\label{ineq:summation_re-1_be_LH-1_SCHT_LH_be}
\sum_{e\in E(H)}r_e^{-1}\vec{b}_e^T\mL(H)^{\dag}\left[\begin{array}{cc}
         0&0  \\
         0&\mSC(H,\T) 
    \end{array}\right]\mL(H)^{\dag}\vec{b}_e\le|\T|.
\end{align}
\end{lemma}
\begin{proof}
Since the LHS of (\ref{ineq:summation_re-1_be_LH-1_SCHT_LH_be}) is a scalar, it holds that 
\begin{equation}
\label{eq:summation_re-1_be_LH-1}
\begin{aligned}
&\sum_{e\in E(H)}r_e^{-1}\vec{b}_e^T\mL(H)^{\dag}\left[\begin{array}{cc}
         0&0  \\
         0&\mSC(H,\T) 
\end{array}\right]\mL(H)^{\dag}\vec{b}_e\\
=&\sum_{e\in E(H)}\Tr\left(r_e^{-1}\vec{b}_e^T\mL(H)^{\dag}\left[\begin{array}{cc}
         0&0  \\
         0&\mSC(H,\T) 
    \end{array}\right]\mL(H)^{\dag}\vec{b}_e\right).
\end{aligned} 
\end{equation}
By the properties of the trace operation, (\ref{eq:summation_re-1_be_LH-1}) becomes 
\begin{equation}
\label{eq:summation_einE(H)_trace}
\begin{aligned}
&\sum_{e\in E(H)}\Tr\left(\mL(H)^{\dag}\left[\begin{array}{cc}
         0&0  \\
         0&\mSC(H,\T) 
    \end{array}\right]\mL(H)^{\dag}r_e^{-1}\vec{b}_e\vec{b}_e^T\right)\\
=&\Tr\left(\mL(H)^{\dag}\left[\begin{array}{cc}
         0&0  \\
         0&\mSC(H,\T) 
    \end{array}\right]\mL(H)^{\dag}\sum_{e\in E(H)}r_e^{-1}\vec{b}_e\vec{b}_e^T\right).
\end{aligned}
\end{equation}
By the fact $\sum_{e\in E(H)}r_e^{-1}\vec{b}_e\vec{b}_e^T=\mL(H)$ and the properties of trace operation, (\ref{eq:summation_einE(H)_trace}) becomes 
\begin{equation}
\label{eq:trace_LH-1_SCHT_LH-1_LH}
\begin{aligned}
&\Tr\left(\mL(H)^{\dag}\left[\begin{array}{cc}
         0&0  \\
         0&\mSC(H,\T) 
    \end{array}\right]\mL(H)^{\dag}\mL(H)\right)=\Tr\left(\mL(H)^{\dag}\mL(H)\mL(H)^{\dag}\left[\begin{array}{cc}
         0&0  \\
         0&\mSC(H,\T) 
    \end{array}\right]\right)\\
    =&\Tr\left(\mL(H)^{\dag}\left[\begin{array}{cc}
         0&0  \\
         0&\mSC(H,\T) 
    \end{array}\right]\right)=\Tr\left(\left(\mL(H)^{\dag}\right)_{[\T, \T]}\mSC(H,\T)\right)\\
    =&\Tr\left(\left(\mL(H)^{\dag}\right)_{[\T, \T]}\mSC(H, \T)\mSC(H, \T)^{\dag}\mSC(H, \T)\right)\\
    =&\Tr\left(\mSC(H, \T)\left(\mL(H)^{\dag}\right)_{[\T, \T]}\mSC(H, \T)\mSC(H, \T)^{\dag}\right). 
\end{aligned}
\end{equation}
Lemma \ref{lemma:Inverse} gives $\mSC(H, \T)\left(\mL(H)^{\dag}\right)_{[\T, \T]}\mSC(H, \T)=\mSC(H, \T)$, so (\ref{eq:trace_LH-1_SCHT_LH-1_LH}) gets 
\[
\Tr\left(\mSC(H,\T)\mSC(H,\T)^{\dag}\right)=\Tr(\mP)=|\T|-1\le|\T|,
\]
where $\mP$ is the projection matrix of the space spanned by $\mSC(H, \T)$.

This completes the proof. 
\end{proof}
Now we review the \emph{flow localization} theorem. 
\begin{theorem}[Flow localization \cite{SRS18}]
\label{thm:flow_localization}
For a graph $H$, let 
\begin{align*}
    s_e=\sum_{f\in E(H)}\frac{|\vec{b}_e^T\mL(H)^{\dag}\vec{b}_f|}{\sqrt{r_{e}r_f}},
\end{align*}
then there exists an universal constant $C_{\textnormal{local}}$ such that $\sum_{e\in E(H)}s_e\le C_{\textnormal{local}}|E(H)|\log^2{|E(H)|}$. 
\end{theorem}
Now we are ready to prove Lemma \ref{lemma:FindSteady}.
\begin{proof}[Proof of Lemma \ref{lemma:FindSteady}]
We prove that $Z\subseteq E(H)$ is $(\alpha, \delta)$-steady according to the Definition \ref{def:alpha_delta_steady}.  
\begin{enumerate}
    \item (Quadratic form) 
    \begin{align*}
        \mathbb{E}\left[\sum_{e\in Z}r_e^{-1}\vec{b}_e\vec{b}_e^T\right]\preceq\mathbb{E}\left[\sum_{e\in Z_2}r_e^{-1}\vec{b}_e\vec{b}_e^T\right]=\alpha\sum_{e\in Z_1}r_e^{-1}\vec{b}_e\vec{b}_e^T\preceq\alpha\mL(H). 
    \end{align*}
    \item (Localization) By Lemma \ref{lem:ColumnApx}, we have that for each $e\in Z_2$, $s_e'\approx_{0.1}\sum_{f\neq e\in Z_2}\frac{|\vec{b}_e^T\mL(H)^{\dag}\vec{b}_f|}{\sqrt{r_{e}r_f}}$, which gives $\sum_{f\neq e\in Z_2}\frac{|\vec{b}_e^T\mL(H)^{\dag}\vec{b}_f|}{\sqrt{r_{e}r_f}}\le 1.1s_e'$. Combining with line \ref{line:localfinal} in algorithm \FindSteady, we know that for each edge $e\in Z$, 
\begin{align*}
\sum_{f\neq e\in Z}\frac{|\vec{b}_e^T\mL(H)^{\dag}\vec{b}_f|}{\sqrt{r_{e}r_f}}\le\delta.
\end{align*}
    \item (Variance) Lemma \ref{lem:DiffApx} gives that for each $e\in E(H)$,
\[
    v_e\approx_{0.1}r_e^{-1}\vec{b}_e^T\mL(H)^{\dag}\left[\begin{array}{cc}
         0&0  \\
         0&\mSC(H, \T) 
    \end{array}\right]\mL(H)^{\dag}\vec{b}_e,
    \]
which implies that $r_e^{-1}\vec{b}_e^T\mL(H)^{\dag}\left[\begin{array}{cc}
         0&0  \\
         0&\mSC(H, \T) 
    \end{array}\right]\mL(H)^{\dag}\vec{b}_e\le1.1v_e$. Combining with line \ref{line:Z1_ve_se} in algorithm \FindSteady, we have that for each edge $e\in Z$, \begin{align*}
r_e^{-1}\vec{b}_e^T\mL(H)^{\dag}\left[\begin{array}{cc}
             0&0  \\
             0&\mSC(H,\T) 
        \end{array}\right]\mL(H)^{\dag}\vec{b}_e\le\frac{18|\T|}{|E(H)|}.
    \end{align*}
\end{enumerate}

Now we bound $|Z|$. By Lemma \ref{lem:DiffApx} and Lemma \ref{lem:atmost_|T|}, we know that at most $1.1|E(H)|/16$ edges for $e\in E(H)$ satisfy that 
\begin{align*}
    v_e\ge 16|\T|/|E(H)|.
\end{align*}
Similarly, Lemma \ref{lem:ColumnApx} and Theorem \ref{thm:flow_localization} tell us that at most $1.1|E(H)|/16$ edges satisfy that 
\begin{align*}
    s_e\ge 16C_{\textnormal{local}}\log^2{|E(H)|}. 
\end{align*}
We conclude that 
\[
|Z_1|\ge(1-2.2/16)|E(H)|\ge 13|E(H)|/16.
\]
In addition, 
\[
\mathbb{E}[|Z_2|]=\alpha|Z_1|\ge13\alpha|E(H)|/16.
\]
By the definition of $Z_2$, we know that for each $e\in Z_2$, 
\begin{align*}
\mathbb{E}[s_e']\le 16\alpha C_{\textnormal{local}}\log^2{|E(H)|}.
\end{align*}
By Markov's inequality, we have that
\begin{align*}
\Pr[s_e'\ge\delta/1.1]\le\frac{\mathbb{E}[s_e']}{\delta/1.1}\le\frac{16\alpha C_{\textnormal{local}}\log^2{|E(H)|}}{\delta/1.1}=\frac{44}{115}. 
\end{align*}
Therefore, 
\begin{align*}
    \mathbb{E}[|Z|]\ge\left(1-\frac{44}{115}\right)\mathbb{E}[|Z_2|]\ge\alpha|E(H)|/2.
\end{align*}

The round complexity mainly comes from Lemma \ref{lem:DiffApx} and Lemma \ref{lem:ColumnApx}, and the remaining steps in algorithm \FindSteady can be trivially implemented. 
\end{proof}
Finally, we bound the number of while loops, which is a key ingredient to prove Theorem \ref{thm:approxsc_algebra} and Theorem \ref{thm:MinorSC}. 
\begin{lemma}
\label{lemma:runtime}
The while loop in algorithm {\ApproxSC} executes $O(\alpha^{-1}\log{m})$ times with 
\begin{align*}
    \alpha=\frac{\delta}{46\clocal\log^2{m}}=\frac{\epsilon}{46\clocal C\log^4{m}}
\end{align*}
with probability at least $1-1/\mathrm{poly}(m)$. 
\end{lemma}

\begin{proof}
To bound the number of iterations, it suffices to argue that
\[
\E[|E(G^{(i+1)})|] \le (1-\Omega(\alpha))|E(G^{(i)})|.
\]
Recall that in Lemma \ref{lemma:FindSteady}, we have $\E[|Z^{(i)}|]\ge\alpha|E(G^{(i)})|/2$, so it suffices to argue that each
original edge of $G^{(i)}$ is removed with at least a constant probability, even
considering the \Split~operation. We break the analysis into two cases.
\paragraph{Case 1: $e$ is split into two parallel edges $e_1$ and $e_2$.} Recall that by the definition of \Split, both $e_1$ and $e_2$ have leverage score in $[3/16,13/16].$ Therefore, if $e_1 \in Z^{(i)}$, then it will be contracted with probability at least $3/16.$ In that case both $e_1$ and $e_2$ disappear, as desired.

\paragraph{Case 2: $e$ is split into a path consisting of $e_1$ and $e_2$.} Recall that by the definition of \Split, both $e_1$ and $e_2$ have leverage score in $[3/16,13/16].$ Therefore, if $e_1 \in Z^{(i)}$, the it will be deleted with probability at least $3/16.$ Then $e_2$ becomes a leaf, so it will be removed during the \Unsplit~operation, as desired.
\end{proof}
\subsection{Matrix Martingale Analysis of Approximation}
\label{sec:matrix martingale analysis}
In this section, we prove Theorem \ref{thm:approxsc_algebra} by defining several stochastic sequences of matrices that capture the change of the quadratic form of the Schur complement. We also prove Theorem \ref{thm:MinorSC}. 

Let $\tau$ denote the final value of $i$ in algorithm \ApproxSC. Recall that Lemma \ref{lemma:runtime} gives $\tau=O(\frac{\log{m}}{\alpha})$. Let the hidden constant be $C'$, i.e., $\tau=\frac{C'\log{m}}{\alpha}$. For $0\le i\le\tau$ and $0\le t\le|Z^{(i)}|$, let $e_{i,t}$ be the $t$-th edge in $Z^{(i)}$ under an arbitrary ordering, and 
\begin{align}
{\hmY}^{(i,0)}=\mS_{0}\mL(H^{(i)})^{\dag}\mS_{0}^T,
\end{align}
with $\mS_0=[0, \mSC(G, \T)^{1/2}]$, then we have the following iteration equation 
\begin{equation}
\label{iter_Y_it+1_Y_it}
\begin{aligned}
{\hmY}^{(i, t+1)}=\left\{\begin{array}{ll}
{\hmY}^{(i, t)}+r_{e_{i,t}}^{-1}(1-p_{e_{i,t}})^{-1}\mS_{0}\mL(H^{(i)})^{\dag}\vec{b}_{e_{i,t}}\vec{b}_{e_{i,t}}^T\mL(H^{(i)})^{\dag}\mS_0^T &\textnormal{if $e_{i,t}$ is deleted,}  \\
{\hmY}^{(i, t)}-r_{e_{i,t}}^{-1}p_{e_{i,t}}^{-1}\mS_{0}\mL(H^{(i)})^{\dag}\vec{b}_{e_{i,t}}\vec{b}_{e_{i,t}}^T\mL(H^{(i)})^{\dag}\mS_0^T &\textnormal{if $e_{i,t}$ is contracted.} 
    \end{array}\right.
\end{aligned}
\end{equation}
Especially for ${\hmY}^{(0,0)}$, by Lemma \ref{lemma:Inverse} we have 
\begin{align*}
    {\hmY}^{(0,0)}&=\mS_0\mL(G)^{\dag}\mS_0^T=\mSC(G,\T)^{1/2}\left(\mL(G)^{\dag}\right)_{[\T,\T]}\mSC(G,\T)^{1/2}\\
    &=\mSC(G,\T)^{\dag/2}\mSC(G,\T)\mSC(G,\T)^{\dag/2}=\mP,
\end{align*}
where $\mP$ is the projection matrix of the space spanned by $\mSC(G,\T)$. 

In the proof of Theorem \ref{thm:approxsc_algebra}, if we assume that $\mSC(H^{(i)}, \T)\approx_{0.1}\mSC(G, \T)$ for all $(i,t)$, then we have the following claim to bound $\left\|\mS_0\mL(H^{(i)})^{\dag}\mS_0^T\right\|_2$. 
\begin{claim}
\label{claim:S0LH-1S0T_l2norm}
$\left\|\mS_0\mL(H^{(i)})^{\dag}\mS_0^T\right\|_2\le 1.1$. 
\end{claim}
\begin{proof}
The fact $\mSC(H^{(i)}, \T)\approx_{0.1}\mSC(G,\T)$ gives us $\mSC(H^{(i)}, \T)^{\dag}\approx_{0.1}\mSC(G,\T)^{\dag}$, i.e., 
\begin{align*}
\mSC(H^{(i)}, \T)^{\dag}\preceq 1.1\cdot\mSC(G,\T)^{\dag}. 
\end{align*}
By Lemma \ref{lemma:Inverse}, it holds that
\begin{align*}
\mP\left(\mL(H^{(i)})^{\dag}\right)_{[\T,\T]}\mP\preceq 1.1\cdot\mP\left(\mL(G)^{\dag}\right)_{[\T,\T]}\mP. 
\end{align*}
Moreover, we have 
\begin{align}
\label{ineq_LHiTT_1.1_LG_TT}
    \mSC(G, \T)^{1/2}\left(\mL(H^{(i)})^{\dag}\right)_{[\T,\T]}\mSC(G, \T)^{1/2}\preceq 1.1\cdot\mSC(G, \T)^{1/2}\left(\mL(G)^{\dag}\right)_{[\T,\T]}\mSC(G, \T)^{1/2}.
\end{align}
Lemma \ref{lemma:Inverse} has
\begin{align}
\label{eq:lem_inverse}
\mSC(G, \T)\left(\mL(G)^{\dag}\right)_{[\T, \T]}\mSC(G, \T)=\mSC(G, \T), 
\end{align}
which implies that 
\begin{align}
\label{eq:lem_inverse_SCGT_dag/2}
\mSC(G, \T)^{1/2}\left(\mL(G)^{\dag}\right)_{[\T,\T]}\mSC(G, \T)^{1/2}=\mSC(G, \T)^{\dag/2}\mSC(G, \T)\mSC(G, \T)^{\dag/2}
\end{align}
by multiplying the both sides of the LHS and RHS of (\ref{eq:lem_inverse}) by $\mSC(G, \T)^{\dag/2}$. Combing (\ref{ineq_LHiTT_1.1_LG_TT}) and (\ref{eq:lem_inverse_SCGT_dag/2}), we have 
\begin{align}
\label{ineq:SCGT1/2_preceq1.1}
    \mSC(G, \T)^{1/2}\left(\mL(H^{(i)})^{\dag}\right)_{[\T,\T]}\mSC(G, \T)^{1/2}\preceq 1.1\cdot\mSC(G, \T)^{\dag/2}\mSC(G, \T)\mSC(G, \T)^{\dag/2}.
\end{align}
Furthermore, 
\begin{align*}
\textnormal{LHS of (\ref{ineq:SCGT1/2_preceq1.1})}&=\mS_0\mL(H^{(i)})^{\dag}\mS_0^T,\\
\textnormal{RHS of (\ref{ineq:SCGT1/2_preceq1.1})}&=1.1\cdot\mP
\end{align*}
Therefore, $\mS_0\mL(H^{(i)})^{\dag}\mS_0^T\preceq 1.1\cdot\mP$ and $\left\|\mS_0\mL(H^{(i)})^{\dag}\mS_0^T\right\|_2\le 1.1$.
\end{proof}

Now we define the difference sequence for ${\hmY}^{(i,t)}$ by 
\begin{align*}
    \mX^{(i,t)}=\left\{\begin{array}{ll}
         0&\textnormal{if $t=0$,}  \\
         {\hmY}^{(i,t)}-{\hmY}^{(i, t-1)}&\textnormal{if $t>0$.} 
    \end{array}\right.
\end{align*}
The operator norm of $\mX^{(i,t)}$ has the following bound. 
\begin{lemma}
\label{lem_X_i_t+1_2norm}
For all $(i,t)$, it holds that 
\begin{align*}
    \left\|\mX^{(i,t)}\right\|_2\le\frac{162|\T|}{|E(H^{(i)})|}.
\end{align*}
\end{lemma}
\begin{proof}
By the definition of $\mX^{(i,t)}$ and equation (\ref{iter_Y_it+1_Y_it}), we have 
\begin{equation}
\label{ineq:Xit_l2norm_upper_bound}
\begin{aligned}
&\left\|\mX^{(i,t)}\right\|_2=\left\|{\hmY}^{(i,t)}-{\hmY}^{(i,t-1)}\right\|_2\\
\le&(r_{e_{i,t}}\cdot\min\{1-p_{e_{i,t}}, p_{e_{i,t}}\})^{-1}\left\|\mS_0\mL(H^{(i)})^{\dag}\vec{b}_{e_{i,t}}\vec{b}_{e_{i,t}}^T\mL(H^{(i)})^{\dag}\mS_0^T\right\|_2.
\end{aligned}
\end{equation}
By Lemma \ref{lem:split} and Lemma \ref{lem:LevApx}, we have that for each $e_{i,t}\in Z^{(i)}$, $\lev_{H^{(i)}}(e_{i,t})\in[3/16, 13/16]$ and $ p_{e_{i,t}}\approx_{\d}\lev_{H^{(i)}}(e_{i,t})$. Since $\d\le 0.01$, we have $p_{e_{i,t}}\in[1/8,7/8]$. Then (\ref{ineq:Xit_l2norm_upper_bound}) becomes  
\begin{align*}
\left\|\mX^{(i, t)}\right\|_{2}&\le8r_{e_{i,t}}^{-1}\vec{b}_{e_{i,t}}^T\mL(H^{(i)})^{\dag}\mS_0^T\mS_0\mL(H^{(i)})^{\dag}\vec{b}_{e_{i,t}}\\
    &\le 8r_{e_{i,t}}^{-1}\vec{b}_{e_{i,t}}^T\mL(H^{(i)})^{\dag}\left[\begin{array}{cc}
     0&0  \\
     0&\mSC(G, \T) 
\end{array}\right]\mL(H^{(i)})^{\dag}\vec{b}_{e_{i,t}}.
\end{align*}
The assumption $\mSC(H^{(i)}, \T)\approx_{0.1}\mSC(G, \T)$ implies that 
\[
\left[\begin{array}{cc}
     0&0  \\
     0&\mSC(G, \T) 
\end{array}\right]\preceq 1.1\left[\begin{array}{cc}
     0&0  \\
     0&\mSC(H^{(i)}, \T) 
\end{array}\right],
\]
which gives 
\[
\left\|\mX^{(i,t)}\right\|_2\le9r_{e_{i,t}}^{-1}\vec{b}_{e_{i,t}}^T\mL(H^{(i)})^{\dag}\left[\begin{array}{cc}
     0&0  \\
     0&\mSC(H^{(i)}, \T) 
\end{array}\right]\mL(H^{(i)})^{\dag}\vec{b}_{e_{i,t}}.
\]
Combining with condition \ref{condition:variance} in Definition \ref{def:alpha_delta_steady}, we have 
\[
\left\|\mX^{(i, t)}\right\|_2\le\frac{162|\T|}{|E(H^{(i)})|}.
\]
\end{proof}

\begin{claim}
\label{claim_Yhat_martingale}
For each fixed $0\le i\le\tau$, the sequence ${\hmY}^{(i, 0)}, \cdots, {\hmY}^{(i,|Z^{(i)}|)}$ is a martingale. 
\end{claim}
\begin{proof}
By the definition of $\mX^{(i,t)}$ and Lemma \ref{lem_X_i_t+1_2norm}, we have 
\begin{align*}
\left\|{\hmY}^{(i,t)}\right\|_2\le\left\|{\hmY}^{(i,t-1)}\right\|_2+\left\|\mX^{(i,t)}\right\|_2\le\left\|{\hmY}^{(i,t-1)}\right\|_2+\frac{162|\T|}{|E(H^{(i)})|},
\end{align*}
which implies that for a fixed $i$ and $0\le t\le |Z^{(i)}|$, 
\begin{align}
\label{ineq_Y_it_expectation_bounded}
    \mathbb{E}\left[\left\|{\hmY}^{(i,t)}\right\|_2\right]<\infty.
\end{align} 

Now considering the quantity $\mathbb{E}\left[{\hmY}^{(i, t+1)}\bigg|{\hmY}^{(i,0)},\cdots,{\hmY}^{(i,t)}\right]$, we have 
\begin{equation}
\label{eq_martingale_proof_Y_it}
\begin{aligned}
&\mathbb{E}\left[{\hmY}^{(i, t+1)}\bigg|{\hmY}^{(i,0)},\cdots,{\hmY}^{(i,t)}\right]=\mathbb{E}\left[{\hmY}^{(i, t+1)}\bigg|{\hmY}^{(i, t)}\right]\\
=&(1-p_{e_{i,t}})\cdot\left({\hmY}^{(i, t)}+r_{e_{i,t}}^{-1}(1-p_{e_{i,t}})^{-1}\mS_{0}\mL(H^{(i)})^{\dag}\vec{b}_{e_{i,t}}\vec{b}_{e_{i,t}}^T\mL(H^{(i)})^{\dag}\mS_0^T\right)\\
+&p_{e_{i,t}}\cdot\left({\hmY}^{(i, t)}-r_{e_{i,t}}^{-1}p_{e_{i,t}}^{-1}\mS_{0}\mL(H^{(i)})^{\dag}\vec{b}_{e_{i,t}}\vec{b}_{e_{i,t}}^T\mL(H^{(i)})^{\dag}\mS_0^T\right)\\
=&{\hmY}^{(i,t)}. 
\end{aligned}    
\end{equation}

Putting (\ref{ineq_Y_it_expectation_bounded}) and (\ref{eq_martingale_proof_Y_it}) together proves this claim. 
\end{proof}

However, the \Unsplit and \Split operations lead to ${\hmY}^{(i+1, 0)}\neq{\hmY}^{(i, |Z^{(i)}|)}$. In order to treat the $\tau$ sequences as a whole, we define the new sequence $\mY^{(i,t)}$ such that 
\begin{align*}
    \left\{\begin{array}{l}
        \mY^{(0, 0)}={\hmY}^{(0,0)}=\mP, \\
        \mY^{(i,t)}-\mY^{(i, t-1)}=\mX^{(i,t)},\\
        \mY^{(i,0)}=\mY^{(i-1, |Z^{(i)}|)}.
    \end{array}\right.
\end{align*}
We prove Theorem \ref{thm:approxsc_algebra} by considering two parts: the martingale $\mY^{(i,t)}$ and the errors resulting from ${\hmY}^{(i+1, 0)}-{\hmY}^{(i,|Z^{(i)}|)}$. Before that, we prove that the sequence $\mY^{(i,t)}$ is also a martingale following the proof of Claim \ref{claim_Yhat_martingale}.
\begin{itemize}
\item Note that $\mathbb{E}\left[\left\|\mY^{(i,t)}\right\|_2\right]<\infty$ since \begin{align*}
\left\|\mY^{(i,t)}\right\|_2\le\left\|\mY^{(i, t-1)}\right\|_2+\left\|\mX^{(i,t)}\right\|_2\le\left\|\mY^{(i,t-1)}\right\|_{2}+\frac{162|\T|}{|E(H^{(i)})|}.
\end{align*}
\item For the special case, $\mathbb{E}\left[\mY^{(i+1,0)}\bigg|\mY^{(i,0)},\cdots\mY^{(i,|Z^{(i)}|)}\right]=\mathbb{E}\left[\mY^{(i,Z^{(i)})}\right]$. More generally, 
\begin{align*}
&\mathbb{E}\left[\mY^{(i,t+1)}\bigg|\mY^{(i,0)},\cdots,\mY^{(i,t)}\right]=\mathbb{E}\left[\mY^{(i,t+1)}\bigg|\mY^{(i,t)}\right]\\
=&\mY^{(i,t)}+\mathbb{E}\left[\mX^{(i, t+1)}\right]=\mY^{(i,t)}+\mathbb{E}\left[{\hmY}^{(i,t+1)}-{\hmY}^{(i,t)}\right]=\mY^{(i,t)}. 
\end{align*}
\end{itemize}
This completes the proof. 

\begin{lemma}
\label{lem_W_it_2norm}
Let $C$ be a sufficiently large constant in algorithm \ApproxSC. Then for all $(i,t)$, it holds that
\begin{align*}
\left\|\mW^{(i,t)}\right\|_2\le\frac{\epsilon^2}{100\log{m}}.
\end{align*}
\end{lemma}
\begin{proof}
Recall that in Lemma \ref{lemma:freedman}
\begin{align*}
    \mW^{(k)}=\sum_{j=1}^{k}\mathbb{E}\left[\left(\mX^{(j)}\right)^2\bigg|\mX^{(j-1)}\right].
\end{align*}
Here we have 
\begin{align}
\label{eq:W_i+10_W_i0}
\mW^{(i+1, 0)}-\mW^{(i, 0)}=\sum_{1\le t\le|Z^{(i)}|}\mathbb{E}\left[\left(\mX^{(i,t)}\right)^2\bigg|\mX^{(i,t-1)}\right]=\sum_{1\le t\le|Z^{(i)}|}\mathbb{E}\left[\left(\mX^{(i,t)}\right)^2\right],
\end{align}
and set $\mW^{(0,0)}=0$. By the fact $\left(\mX^{(i,t)}\right)^2\preceq\left\|\mX^{(i,t)}\right\|_2\cdot\mX^{(i,t)}$, we have 
\begin{align}
\label{ineq:summation_E_Xit_square}
\sum_{1\le t\le|Z^{(i)}|}\mathbb{E}\left[\left(\mX^{(i,t)}\right)^2\right]\preceq\sum_{1\le t\le|Z^{(i)}|}\left\|\mX^{(i,t)}\right\|_2\mathbb{E}\left[\mX^{(i,t)}\right].
\end{align}
By Lemma \ref{lem_X_i_t+1_2norm}, it holds that
\begin{align}
\label{ineq:summation_Xit_l2_E_Xit}
\sum_{1\le t\le|Z^{(i)}|}\left\|\mX^{(i,t)}\right\|_2\mathbb{E}\left[\mX^{(i,t)}\right]\preceq\frac{162|\T|}{|E(H^{(i)})|}\sum_{1\le t\le|Z^{(i)}|}\mathbb{E}\left[\mX^{(i,t)}\right].
\end{align}
By the definition of $\mX^{(i,t)}$ and the fact $p_{e_{i,t}}\in[1/8, 7/8]$, we have 
\begin{equation}
\begin{aligned}
\sum_{1\le t\le|Z^{(i)}|}\mathbb{E}\left[\mX^{(i,t)}\right]&\preceq 8\sum_{1\le t\le|Z^{(i)}|}\mathbb{E}\left[r_{e_{i,t}}^{-1}\mS_0\mL(H^{(i)})^{\dag}\vec{b}_{e_{i,t}}\vec{b}_{e_{i,t}}^T\mL(H^{(i)})^{\dag}\mS_0^T\right]\\
&=8\cdot\mS_0\mL(H^{(i)})^{\dag}\mathbb{E}\left[\sum_{1\le t\le|Z^{(i)}|}r_{e_{i,t}}^{-1}\vec{b}_{e_{i,t}}\vec{b}_{e_{i,t}}^T\right]\mL(H^{(i)})^{\dag}\mS_0^T.
\end{aligned}
\end{equation}
By the condition \ref{condition:quadratic_form} in Definition \ref{def:alpha_delta_steady}, (\ref{ineq:summation_E_Xit}) becomes 
\begin{align}
\label{ineq:summation_E_Xit}
    \sum_{1\le t\le |Z^{(i)}|}\mathbb{E}\left[\mX^{(i,t)}\right]\preceq 8\alpha\cdot\mS_0\mL(H^{(i)})^{\dag}\mL(H^{(i)})\mL(H^{(i)})^{\dag}\mS_0^T=8\alpha\cdot\mS_0\mL(H^{(i)})^{\dag}\mS_0^T. 
\end{align}
Combining (\ref{ineq:summation_E_Xit}),  (\ref{ineq:summation_Xit_l2_E_Xit}), (\ref{ineq:summation_E_Xit_square}) with (\ref{eq:W_i+10_W_i0}), we have 
\[
\mW^{(i+1, 0)}-\mW^{(i, 0)}\preceq\frac{1296|\T|\alpha}{|E(H^{(i)})|}\mS_0\mL(H^{(i)})^{\dag}\mS_0^T.
\]
and 
\begin{align*}
    \left\|\mW^{(i+1, 0)}-\mW^{(i,0)}\right\|_2\le\frac{1296|\T|\alpha}{|E(H^{(i)})|}\left\|\mS_0\mL(H^{(i)})^{\dag}\mS_0^T\right\|_2\le\frac{1426|\T|\alpha}{|E(H^{(i)})|},
\end{align*}
where the last inequality follows from Claim \ref{claim:S0LH-1S0T_l2norm}.

For any $i,t$, we have
\begin{align*}
    \left\|\mW^{(i,t)}\right\|_2&=\left\|\mW^{(i,t)}-\mW^{(0,0)}\right\|_2\le\sum_{0\le k\le i}\left\|\mW^{(k+1, 0)}-\mW^{(k, 0)}\right\|_2\le\tau\cdot\frac{1426|\T|\alpha}{|E(H^{(i)})|}\\
    &=\frac{C'\log{m}}{\alpha}\cdot\frac{1426|\T|\alpha}{|E(H^{(i)})|}\le\frac{C'\log{m}}{\alpha}\cdot\frac{1426|\T|\alpha\epsilon^2}{C|\T|\log^2{m}}\le\frac{\epsilon^2}{100\log{m}}.
\end{align*}
\end{proof}
Now combining Lemma \ref{lem_X_i_t+1_2norm} and Lemma \ref{lem_W_it_2norm}, we have the following lemma. 
\begin{lemma}
\label{lem_bound_Y_it}
With probability at least $1-1/\mathrm{poly}(m)$, for all  $(i,t)$, it holds that $\|\mY^{(i,t)}-\mP\|_2\le\epsilon/2$. 
\end{lemma}
\begin{proof}
Lemma \ref{lem_X_i_t+1_2norm} gives  $\left\|\mX^{(i,t)}\right\|_2\le\frac{162|\T|}{|E(H^{(i)})|}$. The algorithm {\ApproxSC} implies that $|E(H^{(i)})|=\Omega\left(\frac{|\T|\log^2{m}}{\epsilon^2}\right)$. Then we have $\left\|\mX^{(i,t)}\right\|_2=O\left(\frac{\epsilon^2}{\log^2{m}}\right)$. Setting $R=\frac{\epsilon^2}{100\log^2{m}}$ and $\sigma^2=\frac{\epsilon^2}{100\log{m}}$, and applying Lemma \ref{lemma:freedman}, we obtain 
\begin{align*}
    &\Pr\left[\exists(i,t)\bigg|\left\|\mY^{(i,t)}-\mY^{(0,0)}\right\|_2\ge\epsilon/2,\ \left\|\mW^{(i,t)}\right\|_2\le\sigma^2\right]\\
    \le&2|\T|\cdot\exp\left(\frac{-\epsilon^2/12}{\sigma^2+R\epsilon/6}\right)\le 2|\T|\cdot\exp\left(\frac{-\epsilon^2/12}{\frac{\epsilon^2}{100\log{m}}+\frac{\epsilon^3}{600\log^2{m}}}\right)=1/\mathrm{poly}(m).
\end{align*}
Moreover, since $\left\|\mW^{(i,t)}\right\|_2\le\sigma^2$, we have 
\begin{align*}
\Pr\left[\left\|\mY^{(i,t)}-\mY^{(0,0)}\right\|_2\ge\epsilon/2\right]\le 1/\mathrm{poly}(m),
\end{align*}
which gives that 
\begin{align*}
    \Pr\left[\left\|\mY^{(i,t)}-\mP\right\|_2\le\epsilon/2\right]\ge 1-1/\mathrm{poly}(m).
\end{align*}
\end{proof}

Since \Split and \Unsplit preserve the Schur complement,we conclude that ${\hmY}^{(i+1, 0)}$ satisfies
\begin{align*}
    {\hmY}^{(i+1, 0)}&=\mS_0\mL(H^{(i+1)})^{\dag}\mS_0^T\\
    &=\mSC(G,\T)^{1/2}\left(\mL(H^{(i+1)})^{\dag}\right)_{[\T,\T]}\mSC(G,\T)^{1/2}\\
    &=\mSC(G,\T)^{1/2}\left(\mL(I^{(i)})^{\dag}\right)_{[\T,\T]}\mSC(G,\T)^{1/2}\\
    &=\mS_0\mL(I^{(i)})^{\dag}\mS_0^T,
\end{align*}
where $I^{(i)}$ corresponds to $H^{(i)}$ in algorithm \ApproxSC. The Laplacian matrices $\mL(I^{(i)})$ and $\mL(H^{(i)})$ satisfy the following relation
\begin{align*}
    \mL(I^{(i)})=\mL(H^{(i)})+\mU\mC^{(i)}\mU^T,
\end{align*}
where $\mC^{(i)}$ is a diagonal matrix such that 
\begin{align*}
    \mC_{ff}^{(i)}=\left\{\begin{array}{ll}
         -1&\textnormal{if $f\in Z^{(i)}$ is deleted}  \\
         \infty&\textnormal{if $f\in Z^{(i)}$ is contracted} 
    \end{array}\right.
\end{align*}
and $\mU=\mB_{Z^{(i)}}^{T}\mR_{Z^{(i)}}^{-1/2}$, where $\mB_{Z^{(i)}}$ and $\mR_{Z^{(i)}}$ are the matrices with restriction of $\mB$ and $\mR$ to the indices corresponding to the set $Z^{(i)}$. Then by Woodbury matrix formula (see Lemma \ref{lemma:woodbury}) we have 
\begin{align}
\label{eq_woodbury_LI_LH}
    \mL(I^{(i)})^{\dag}=\mL(H^{(i)})^{\dag}-\mL(H^{(i)})^{\dag}\mU\left((\mC^{(i)})^{-1}+\mU^T\mL(H^{(i)})^{\dag}\mU\right)^{-1}\mU^T\mL(H^{(i)})^{\dag}.
\end{align}
Furthermore, multiplying the left side and right side of (\ref{eq_woodbury_LI_LH}) by $\mS_0$ and $\mS_0^T$ respectively, one can obtain 
\begin{align}
\label{eq_Yhat_i+1_Yhat_i}
    {\hmY}^{(i+1, 0)}-{\hmY}^{(i, 0)}=-\mS_0\mL(H^{(i)})^{\dag}\mU\left((\mC^{(i)})^{-1}+\mU^T\mL(H^{(i)})^{\dag}\mU\right)^{-1}\mU^T\mL(H^{(i)})^{\dag}\mS_0^T. 
\end{align}

For the quantity ${\hmY}^{(i, |Z^{(i)}|)}-{\hmY}^{(i, 0)}$, by virtue of the equation (\ref{iter_Y_it+1_Y_it}), we have 
\begin{equation}
\label{eq_Yhat_iZi_Yhat_i0}
\begin{aligned}
{\hmY}^{(i, |Z^{(i)}|)}-{\hmY}^{(i,0)}&=\sum_{0\le t<|Z^{(i)}|}{\hmY}^{(i, t+1)}-{\hmY}^{(i, t)}\\
&=\mS_0\mL(H^{(i)})^{\dag}\mU\mP^{(i)}\mU^{T}\mL(H^{(i)})^{\dag}\mS_0^T,
\end{aligned}
\end{equation}
where $\mP^{(i)}$ is a diagonal matrix such that

\begin{align*}
\mP^{(i)}_{ff}=\left\{\begin{array}{ll}
         (1-p_f)^{-1}&\textnormal{if $f$ is deleted,}  \\
         -p_f^{-1}&\textnormal{if $f$ is contracted.} 
    \end{array}\right.
\end{align*}

Subtracting (\ref{eq_Yhat_iZi_Yhat_i0}) from (\ref{eq_Yhat_i+1_Yhat_i}) gives 
\begin{align*}
    &{\hmY}^{(i+1, 0)}-{\hmY}^{(i, |Z^{(i)}|)}\\
    =&-\mS_0\mL(H^{(i)})^{\dag}\mU\left[\left((\mC^{(i)})^{-1}+\mU^T\mL(H^{(i)})^{\dag}\mU\right)^{-1}+\mP^{(i)}\right]\mU^T\mL(H^{(i)})^{\dag}\mS_0^T.
\end{align*}
Define ${\hmX}^{(i)}={\hmY}^{(i+1,0)}-{\hmY}^{(i, |Z^{(i)}|)}$. Before bounding $\left\|{\hmX}^{(i)}\right\|_2$, we have the following lemma. 
\begin{lemma}
\label{lem_DQP}
$\left\|\left((\mC^{(i)})^{-1}+\mU^T\mL(H^{(i)})^{\dag}\mU\right)^{-1}+\mP^{(i)}\right\|_2\le 36\delta$. 
\end{lemma}
\begin{proof}
Let $\mD^{(i)}$ be the diagonal matrix with entries being the diagonal entries of the matrix $(\mC^{(i)})^{-1}+\mU^T\mL(H^{(i)})^{\dag}\mU$, specifically, 
\begin{align*}
    \mD^{(i)}_{ff}=\left\{\begin{array}{ll}
         -1+\mathrm{lev}_{H^{(i)}}(f)&\textnormal{if $f$ is deleted,}  \\
         \mathrm{lev}_{H^{(i)}}(f)&\textnormal{if $f$ is contracted.} 
    \end{array}\right.
\end{align*}
Define another matrix $\mQ^{(i)}$ by 
\begin{align*}
    \mQ^{(i)}=\mD^{(i)}-\left((\mC^{(i)})^{-1}+\mU^T\mL(H^{(i)})^{\dag}\mU\right).
\end{align*}
Note that all the diagonal entries of $\mQ^{(i)}$ are $0$. Considering the summation of non-diagonal entries of $\mQ^{(i)}$, which is equal to the summation of non-diagonal entries of $\mU^T\mL(H^{(i)})^{\dag}\mU$, we have 
\begin{align*}
    \sum_{f\neq g}\bigg|\mQ^{(i)}_{fg}\bigg|=\sum_{f\neq g\in Z^{(i)}}\frac{|\vec{b}_e^T\mL(H^{(i)})^{\dag}\vec{b}_f|}{\sqrt{r_{f}r_{g}}}\le\delta,
\end{align*}
where the inequality follows from the localization condition of Definition \ref{def:alpha_delta_steady}. Lemma \ref{lemma:sdd} gives
\[
\left\|\mQ^{(i)}\right\|_2\le\sum_{f\neq g}\bigg|\mQ^{(i)}_{fg}\bigg|, 
\]
and thus $\left\|\mQ^{(i)}\right\|_2\le\delta$.  

Consider the left side of the target inequality,
\begin{equation}
\label{ineq_DQP}
\begin{aligned}
    &\left\|\left((\mC^{(i)})^{-1}+\mU^T\mL(H^{(i)})^{\dag}\mU\right)^{-1}+\mP^{(i)}\right\|_2\\
    =&\left\|\left(\mD^{(i)}-\mQ^{(i)}\right)^{-1}+\mP^{(i)}\right\|_2\le\left\|\left(\mD^{(i)}-\mQ^{(i)}\right)^{-1}-(\mD^{(i)})^{-1}\right\|_2+\left\|(\mD^{(i)})^{-1}+\mP^{(i)}\right\|_2,
\end{aligned}
\end{equation}
in which 
\begin{equation}
\label{ineq_DQP_part1}
\begin{aligned}
    &\left\|\left(\mD^{(i)}-\mQ^{(i)}\right)^{-1}-(\mD^{(i)})^{-1}\right\|_2=\left\|\left(\mD^{(i)}-\mQ^{(i)}\right)^{-1}\mQ^{(i)}(\mD^{(i)})^{-1}\right\|_2\\
    \le&\left\|\left(\mD^{(i)}-\mQ^{(i)}\right)^{-1}\right\|_2\left\|\mQ^{(i)}\right\|_2\left\|(\mD^{(i)})^{-1}\right\|_2\le\left(\frac{3}{16}-\delta\right)^{-1}\cdot\d\cdot\frac{16}{3}\le 30\delta,
\end{aligned}    
\end{equation}
and 
\begin{align}
\label{ineq_DQP_part2}
    \left\|(\mD^{(i)})^{-1}+\mP^{(i)}\right\|_2=\max_{f\in Z^{(i)}}\left\{\bigg|\frac{1}{-1+\mathrm{lev}_{H^{(i)}}(f)}+\frac{1}{1-p_f}\bigg|,\  \bigg|\frac{1}{\mathrm{lev}_{H^{(i)}}(f)}-\frac{1}{p_f}\bigg|\right\}\le 6\delta,
\end{align}
where $p_f\approx_{\delta}\mathrm{lev}_{H^{(i)}}(f)$. 

Substituting (\ref{ineq_DQP_part1}) and (\ref{ineq_DQP_part2}) to (\ref{ineq_DQP}) completes the proof.
\end{proof}
Based on Lemma \ref{lem_DQP}, we can give the following bounds on ${\hmX}^{(i)}$. 
\begin{lemma}
\label{lem_operator_bound_Xhat_EXhat}
For all $(i,t)$, it holds that
\begin{align*}
    \left\|{\hmX}^{(i)}\right\|_2&\le 40\delta,\\
    \left\|\mathbb{E}_{Z^{(i)}}\left[{\hmX}^{(i)}\right]\right\|_2&\le40\alpha\delta,\\
    \left\|\mathbb{E}_{Z^{(i)}}\left[\left({\hmX}^{(i)}\right)^2\right]\right\|_2&\le1600\alpha\delta^2.
\end{align*}
\end{lemma}
\begin{proof}
Recall that 
\begin{align*}
    {\hmX}^{(i)}&={\hmY}^{(i+1, 0)}-{\hmY}^{(i, |Z^{(i)}|)}\\
    &=-\mS_0\mL(H^{(i)})^{\dag}\mU\left[\left((\mC^{(i)})^{-1}+\mU^T\mL(H^{(i)})^{\dag}\mU\right)^{-1}+\mP^{(i)}\right]\mU^T\mL(H^{(i)})^{\dag}\mS_0^T, 
\end{align*}
then
\begin{align*}
    \left\|{\hmX}^{(i)}\right\|_2&=\left\|\mS_0\mL(H^{(i)})^{\dag}\mU\left[\left((\mC^{(i)})^{-1}+\mU^T\mL(H^{(i)})^{\dag}\mU\right)^{-1}+\mP^{(i)}\right]\mU^T\mL(H^{(i)})^{\dag}\mS_0^T\right\|_2\\
    &\le\left\|\left((\mC^{(i)})^{-1}+\mU^T\mL(H^{(i)})^{\dag}\mU\right)^{-1}+\mP^{(i)}\right\|_2\left\|\mS_0\mL(H^{(i)})^{\dag}\mU\mU^T\mL(H^{(i)})^{\dag}\mS_0^T\right\|_2\\
    &\le 36\delta\cdot\left\|\mS_0\mL(H^{(i)})^{\dag}\mL(H^{(i)})\mL(H^{(i)})^{\dag}\mS_0^T\right\|_2\\
    &=36\delta\cdot\left\|\mS_0\mL(H^{(i)})\mS_0^T\right\|_2\\
    &\le 40\delta, \tag{By Claim \ref{claim:S0LH-1S0T_l2norm}}
\end{align*}
where the second inequality follows from Lemma \ref{lem_DQP} and the fact $\mU\mU^T\preceq\mL(H^{(i)})$. 

For $\mathbb{E}_{Z^{(i)}}\left[{\hmX}^{(i)}\right]$, we have 
\begin{align*}
    &\left\|\mathbb{E}_{Z^{(i)}}\left[{\hmX}^{(i)}\right]\right\|_2\\
    =&\left\|\mathbb{E}_{Z^{(i)}}\left[\mS_0\mL(H^{(i)})^{\dag}\mU\left[\left((\mC^{(i)})^{-1}+\mU^T\mL(H^{(i)})^{\dag}\mU\right)^{-1}+\mP^{(i)}\right]\mU^T\mL(H^{(i)})^{\dag}\mS_0^T\right]\right\|_2. 
\end{align*}
Since $\left((\mC^{(i)})^{-1}+\mU^T\mL(H^{(i)})^{\dag}\mU\right)^{-1}+\mP^{(i)}\preceq\left\|\left((\mC^{(i)})^{-1}+\mU^T\mL(H^{(i)})^{\dag}\mU\right)^{-1}+\mP^{(i)}\right\|_2\cdot\mI$, we have 
\begin{align*}
&\mathbb{E}_{Z^{(i)}}\left[\mS_0\mL(H^{(i)})^{\dag}\mU\left[\left((\mC^{(i)})^{-1}+\mU^T\mL(H^{(i)})^{\dag}\mU\right)^{-1}+\mP^{(i)}\right]\mU^T\mL(H^{(i)})^{\dag}\mS_0^T\right]\\
\preceq&\left\|\left((\mC^{(i)})^{-1}+\mU^T\mL(H^{(i)})^{\dag}\mU\right)^{-1}+\mP^{(i)}\right\|_2\mathbb{E}_{Z^{(i)}}\left[\mS_0\mL(H^{(i)})^{\dag}\mU\mU^T\mL(H^{(i)})^{\dag}\mS_0^T\right] 
\end{align*}
and 
\begin{equation}
\label{ineq:E_Zi_Xhati_l2norm}
\begin{aligned}
&\left\|\mathbb{E}_{Z^{(i)}}\left[{\hmX}^{(i)}\right]\right\|_2\\
\le&\left\|\left((\mC^{(i)})^{-1}+\mU^T\mL(H^{(i)})^{\dag}\mU\right)^{-1}+\mP^{(i)}\right\|_2\left\|\mathbb{E}_{Z^{(i)}}\left[\mS_0\mL(H^{(i)})^{\dag}\mU\mU^T\mL(H^{(i)})^{\dag}\mS_0^T\right]\right\|_2\\
=&\left\|\left((\mC^{(i)})^{-1}+\mU^T\mL(H^{(i)})^{\dag}\mU\right)^{-1}+\mP^{(i)}\right\|_2\left\|\mS_0\mL(H^{(i)})^{\dag}\mathbb{E}_{Z^{(i)}}\left[\mU\mU^T\right]\mL(H^{(i)})^{\dag}\mS_0^T\right\|_2.
\end{aligned}
\end{equation}
By Lemma \ref{lem_DQP} and $\mU=\mB_{Z^{(i)}}^{T}\mR_{Z^{(i)}}^{-1/2}$, (\ref{ineq:E_Zi_Xhati_l2norm}) becomes 
\begin{align*}
\left\|\mathbb{E}_{Z^{(i)}}\left[{\hmX}^{(i)}\right]\right\|_2\le&36\delta\left\|\mS_0\mL(H^{(i)})^{\dag}\mathbb{E}_{Z^{(i)}}\left[\sum_{f\in Z^{(i)}}r_f^{-1}\vec{b}_f\vec{b}_f^T\right]\mL(H^{(i)})^{\dag}\mS_0^T\right\|_2\\
    \le&36\alpha\delta\left\|\mS_0\mL(H^{(i)})\mS_0^T\right\|_2\\
    \le&40\alpha\delta,
\end{align*}
where the second inequality follows from the condition \ref{condition:quadratic_form} of Definition \ref{def:alpha_delta_steady} and the last inequality follows from Claim \ref{claim:S0LH-1S0T_l2norm}. 

For $\mathbb{E}_{Z^{(i)}}\left[\left({\hmX}^{(i)}\right)^2\right]$, using the fact $\left(\hmX^{(i)}\right)^2\preceq\left\|\hmX^{(i)}\right\|_2\cdot\hmX^{(i)}$, we have 
\begin{align*}
    &\left\|\mathbb{E}_{Z^{(i)}}\left[\left({\hmX}^{(i)}\right)^2\right]\right\|_2\le\left\|{\hmX}^{(i)}\right\|_2\left\|\mathbb{E}_{Z^{(i)}}\left[{\hmX}^{(i)}\right]\right\|_2\le 40\delta\cdot 40\alpha\delta=1600\alpha\delta^2.
\end{align*}
\end{proof}

Now we bound $\left\|{\hmY}^{(i,0)}-\mY^{(i,0)}\right\|_2$. 
\begin{lemma}
\label{lem_widehat_Y_i0}
With probability at least $1-1/\mathrm{poly}(m)$, $\left\|{\hmY}^{(i,0)}-\mY^{(i,0)}\right\|_2\le\epsilon/2$.
\end{lemma}
\begin{proof}
We first consider the difference matrix ${\hmY}^{(i,0)}-{\hmY}^{(0,0)}$, which can be decomposed into two parts: the summation of $\hmY^{(j+1, 0)}-\hmY^{(j, |Z^{(j)}|)}$ and the summation of $\sum_{k=1}^{|Z^{(j)}|}\left(\hmY^{(j,k)}-\hmY^{(j, k-1)}\right)$ for $j=0,\cdots,i-1$,
\[
{\hmY}^{(i,0)}-{\hmY}^{(0,0)}=\sum_{j=0}^{i-1}\left(\hmY^{(j+1, 0)}-\hmY^{(j, |Z^{(j)}|)}\right)+\sum_{j=0}^{i-1}\sum_{k=1}^{|Z^{(j)}|}\left({\hmY}^{(j,k)}-{\hmY}^{(j, k-1)}\right).
\]
Recall that $\hmX^{(j)}=\hmY^{(j+1, 0)}-\hmY^{(j, |Z^{(j)}|)}$ and $\hmY^{(j,k)}-\hmY^{(j, k-1)}=\mY^{(j, k)}-\mY^{(j, k-1)}$, then
\begin{align}
\label{eq:Yhat_i0_minus_Yhat_00}
\hmY^{(i,0)}-\hmY^{(0, 0)}=\sum_{j=0}^{i-1}{\hmX}^{(j)}+\sum_{j=0}^{i-1}\sum_{k=1}^{|Z^{(j)}|}\left(\mY^{(j,k)}-\mY^{(j, k-1)}\right).
\end{align}
Recall that $\mY^{(j, 0)}=\mY^{(j-1, |Z^{(j-1)}|)}$, so we have 
\begin{equation}
\label{eq:summation_Yjk_minus_Yjk-1}
\begin{aligned}
\sum_{j=0}^{i-1}\sum_{k=1}^{|Z^{(j)}|}\left(\mY^{(j,k)}-\mY^{(j, k-1)}\right)&=\sum_{j=0}^{i-1}\sum_{k=1}^{|Z^{(j)}|}\left(\mY^{(j,k)}-\mY^{(j, k-1)}\right)+\sum_{j=1}^{i}\left(\mY^{(j,0)}-\mY^{(j-1,|Z^{(j-1)}|)}\right)\\
&=\mY^{(i,0)}-\mY^{(0,0)}.
\end{aligned}
\end{equation}
Substituting (\ref{eq:summation_Yjk_minus_Yjk-1}) to (\ref{eq:Yhat_i0_minus_Yhat_00}) gives 
\[
\hmY^{(i,0)}-\hmY^{(0,0)}=\sum_{j=0}^{i-1}{\hmX}^{(j)}+\mY^{(i,0)}-\mY^{(0,0)}.
\]
Recall that $\hmY^{(0,0)}=\mY^{(0,0)}$, then we can obtain 
\begin{align}
\label{eq:Yhat_minus_Yi0_simplify}
{\hmY}^{(i,0)}-\mY^{(i,0)}=\sum_{j=0}^{i-1}{\hmX}^{(j)}.
\end{align}

Define the new sequence $\mU^{(i)}$ such that
\begin{align}
\label{eq:def_Ui}
\mU^{(i)}={\hmX}^{(i)}-\mathbb{E}_{Z^{(i)}}\left[{\hmX}^{(i)}\right],    
\end{align}
and $\{\mV^{(i)}\}$ to be the martingale with difference sequence $\mU^{(i)}$ and $\mV^{(0)}=0$. In order to apply Lemma \ref{lemma:freedman} to martingale $\{\mV^{(i)}\}$, we first give the bounds of $\left\|\mU^{(i)}\right\|_2$ and $\left\|\sum_{i}\mathbb{E}_{Z^{(i)}}\left[(\mU^{(i)})^2\bigg|\mU^{(i-1)}\right]\right\|_2$. By the definition of $\mU^{(i)}$ and Lemma \ref{lem_operator_bound_Xhat_EXhat}, we have 
\begin{align*}
    \left\|\mU^{(i)}\right\|_2&=\left\|{\hmX}^{(i)}-\mathbb{E}_{Z^{(i)}}\left[{\hmX}^{(i)}\right]\right\|_2\le\left\|{\hmX}^{(i)}\right\|_2+\left\|\mathbb{E}_{Z^{(i)}}\left[{\hmX}^{(i)}\right]\right\|_2\\
    &\le 40\d+40\alpha\d\le 80\delta.
\end{align*}
In addition, 
\begin{align*}
    \left\|\mathbb{E}_{Z^{(i)}}\left[(\mU^{(i)})^2\bigg|\mU^{(i-1)}\right]\right\|_2
    &=\left\|\mathbb{E}_{Z^{(i)}}\left[(\mU^{(i)})^2\right]\right\|_2=\left\|\mathbb{V}\left[\hmX^{(i)}\right]\right\|_2\\
    &\le\left\|\mathbb{E}_{Z^{(i)}}\left[\left({\hmX}^{(i)}\right)^2\right]\right\|_2\le 1600\alpha\delta^2,  
\end{align*}
where the first inequality follows from the fact $\mathbb{V}\left[\hmX^{(i)}\right]\preceq\mathbb{E}_{Z^{(i)}}\left[\left(\hmX^{(i)}\right)^2\right]$. Moreover, by triangle inequality, we have 
\begin{align*}
\left\|\sum_ {i}\mathbb{E}_{Z^{(i)}}\left[(\mU^{(i)})^2\bigg|\mU^{(i-1)}\right]\right\|_2&\le\sum_{i}\left\|\mathbb{E}_{Z^{(i)}}\left[(\mU^{(i)})^2\bigg|\mU^{(i-1)}\right]\right\|_2\le\tau\cdot 1600\alpha\delta^2\\
&\le\frac{C'\log{m}}{\alpha}\cdot 1600\alpha\delta^2=1600C'\delta^2\log{m}. 
\end{align*}
Setting $R=80\delta$ and $\sigma^2=1600C'\delta^2\log{m}$, Lemma \ref{lemma:freedman} gives that 
\begin{align*}
&\Pr\left[\exists i\bigg|\left\|\mV^{(i)}\right\|_2\ge\epsilon/4,\ \left\|\sum_{i}\mathbb{E}_{Z^{(i)}}\left[(\mU^{(i)})^2\bigg|\mU^{(i-1)}\right]\right\|_2\le\sigma^2\right]\\
\le&2|\T|\cdot\exp\left(\frac{-\epsilon^2/48}{1600C'\delta^2\log{m}+20\delta\epsilon/3}\right)=1/\mathrm{poly}(m),
\end{align*}
that is, with probability at least $1-1/\mathrm{poly}(m)$,
\begin{align}
\label{ineq:Vi_le_epsilon/4}
\left\|\mV^{(i)}\right\|_2\le\epsilon/4. 
\end{align}

Now we finish the proof. By equality (\ref{eq:Yhat_minus_Yi0_simplify}) and the definition of $\mU^{(i)}$ (see (\ref{eq:def_Ui})), we have 
\begin{align*}
&\left\|{\hmY}^{(i,0)}-\mY^{(i,0)}\right\|_2=\left\|\sum_{j=0}^{i-1}{\hmX}^{(j)}\right\|_2=\left\|\sum_{j=0}^{i-1}\left(\mU^{(i)}+\mathbb{E}_{Z^{(i)}}\left[\hmX^{(j)}\right]\right)\right\|_2\\
=&\left\|\sum_{j=0}^{i-1}\mU^{(i)}+\sum_{j=0}^{i-1}\mathbb{E}_{Z^{(i)}}\left[{\hmX}^{(j)}\right]\right\|_2=\left\|\mV^{(i)}+\sum_{j=0}^{i-1}\mathbb{E}_{Z^{(i)}}\left[{\hmX}^{(j)}\right]\right\|_2\\
\le&\left\|\mV^{(i)}\right\|_2+\sum_{j=1}^{i-1}\left\|\mathbb{E}_{Z^{(i)}}\left[{\hmX}^{(j)}\right]\right\|_2.\tag{By triangle inequality}
\end{align*}
By inequality (\ref{ineq:Vi_le_epsilon/4}) and Lemma~\ref{lem_operator_bound_Xhat_EXhat}, we have that with probability at least $1-1/\mathrm{poly}(m)$, 
\[
\left\|{\hmY}^{(i,0)}-\mY^{(i,0)}\right\|_2\le\frac{\epsilon}{4}+\frac{C'\log{m}}{\alpha}\cdot 40\alpha\delta=\frac{\epsilon}{4}+\frac{40C'}{C\log{m}}\cdot\epsilon\le\frac{\epsilon}{2}.
\]
\end{proof}

Now we prove Theorem \ref{thm:approxsc_algebra}. By algorithm \ApproxSC, the returned graph $H$ satisfies that $|E(H)|=O(|\T|\epsilon^{-2}\log^2{m})$. Therefore, it remains to prove that $\mSC(H, \T)\approx_{\epsilon}\mSC(G, \T)$ with probability at least $1-1/\mathrm{poly}(m)$. Specifically, we bound the $\ell_2$ norm of the difference matrix $\hmY^{(i,0)}-\mP$ by considering two parts: the martingale $\mY^{(i,t)}$ and the errors accumulated by $\hmY^{(i,0)}-\hmY^{(i-1, |Z^{(i-1)}|)}$, which correspond to Lemma \ref{lem_bound_Y_it} and Lemma \ref{lem_widehat_Y_i0} respectively. 

\begin{proof}[Proof of Theorem \ref{thm:approxsc_algebra}]
By Lemma \ref{lem_bound_Y_it} and Lemma \ref{lem_widehat_Y_i0}, we have that 
\begin{align}
\label{hat_Y_tau0_minus_I}
\left\|{\hmY}^{(\tau,0)}-\mP\right\|_{2}\le\left\|{\hmY}^{(\tau, 0)}-\mY^{(\tau, 0)}\right\|_{2}+\left\|\mY^{(\tau, 0)}-\mP\right\|_2\le\epsilon/2+\epsilon/2=\epsilon.
\end{align}
Note that 
\begin{align*}
    {\hmY}^{(\tau, 0)}=\mS_{0}\mL(H^{(\tau)})^{\dag}\mS_{0}^T=\mS_{0}\mL(H)^{\dag}\mS_{0}^T=\mSC(G, \T)^{1/2}\left(\mL(H)^{\dag}\right)_{[\T, \T]}\mSC(G, \T)^{1/2}, 
\end{align*}
and $\mP=\mSC(G,\T)^{\dag/2}\mSC(G,\T)\mSC(G,\T)^{\dag/2}$, then inequality (\ref{hat_Y_tau0_minus_I}) tells us that 
\begin{equation}
\label{eq:Yhat_tau0_approx_P}
\mSC\left(G, \T\right)^{1/2}
\left(\mL\left(H\right)^{\dag}\right)_{\left[\T, \T\right]}
\mSC\left(G,\T\right)^{1/2}
\approx_{\epsilon}
\mSC\left(G,\T\right)^{\dag/2}
\mSC\left(G,\T\right)
\mSC\left(G,\T\right)^{\dag/2}.
\end{equation}
Multiplying the both sides of the LHS and RHS of (\ref{eq:Yhat_tau0_approx_P}) by $\mSC(G, \T)^{\dag/2}$ gives
\[
\mP\left(\mL(H)^{\dag}\right)_{[\T, \T]}\mP\approx_{\epsilon}\mSC(G, \T)^{\dag}.
\]
By Lemma \ref{lemma:Inverse}, it holds that $\mP\left(\mL(H)^{\dag}\right)_{[\T, \T]}\mP=\mSC(H, \T)^{\dag}$, therefore, 
\[
\mSC(H, \T)^{\dag}\approx_{\epsilon}\mSC(G, \T)^{\dag},
\]
that is, 
\[
\mSC(H, \T)\approx_{\epsilon}\mSC(G, \T).
\]
\end{proof}

Finally, we prove Theorem~\ref{thm:MinorSC}. Since the algorithm {\ApproxSC} only applies deletions and contractions on the input graph, it follows that the resulting sparsifier is a minor. The correctness and the bound on the number of edges follow from Theorem \ref{thm:approxsc_algebra}. Therefore, it remains to bound the computation cost.

\begin{proof}[Proof of Theorem \ref{thm:MinorSC}]
By Lemma~\ref{lemma:runtime}, the number of iterations in the main while loop of the algorithm \ApproxSC~
(Line~\ref{ln:ApproxSCWhile} of algorithm~\ref{algo:sparsify}) is
$O(\alpha^{-1} \log m) = O(\eps^{-1}\log^5 n)$.
By  Lemma \ref{lem:LevApx}, \ref{lem:split}, \ref{lemma:FindSteady}, the communication cost of each iteration is dominated by line \ref{line:local} and line \ref{line:LeverageEstimate}, which require solving $O(\d^{-2}\log n)=O(\eps^{-2}\log^5 n)$ Laplacian linear systems, and $O(\rho\epsilon^{-2}\sqrt{\overline{n}}\log{\overline{n}}\log^5{n}+D\log{n})$ rounds of communication in $\overline{G}$.

Therefore, the total number of required Laplacian solvers is 
\[
O\left(\eps^{-1}\log^5 n \cdot \eps^{-2}\log^5 n\right)
=O\left(\eps^{-3} \log^{10} n\right).
\]
The total overhead cost of communication in $\overline{G}$ can be bounded in the same way, that is, 
\[
O\left(\epsilon^{-1}\log^5{n}\left(\rho\epsilon^{-2}\sqrt{\overline{n}}\log{\overline{n}}\log^5{n}+D\log{n}\right)\right)=O\left(\rho\epsilon^{-3}\sqrt{\overline{n}}\log{\overline{n}}\log^{10}{n}+D\epsilon^{-1}\log^6{n}\right).
\]
\end{proof}

\section{Vertex and Edge Reductions}
\label{sec:Solvers}
Here we show our reductions via tree and elimination based preconditioners in Section~\ref{subsec:ultrasparsify} and Section~\ref{subsec:eliminiation} respectively.
This will prove Lemmas~\ref{lem:UltraSparsify}~and~\ref{lem:Elimination}.
\subsection{Ultra-Sparsifier}
\label{subsec:ultrasparsify}

We prove the high error reduction routine as stated in Lemma~\ref{lem:UltraSparsify}

\UltraSparsify*

We follow the construction from~\cite{KMP10}, which samples off-tree
edges with any upper bound on their stretches.
To find the tree, we utilize the distributed version of the Alon-Karp-Peleg-West
(AKPW) low stretch spanning tree, due to Ghaffari, Karrenbauer, Kuhn, Lenzen,
and Patt-Shamir~\cite{GKKLP15}.
They work with a definition of distributed $N$-node	cluster graphs
that was the basis of our definition of distributed $\rho$-minor.
We start by restating this definition, and describe how we simulate it
when $G$ is itself embedded.

\begin{definition}
	A distributed $N$-node cluster graph $\mathcal{G} = (\mathcal{V},
	\mathcal{E},\mathcal{L}, \mathcal{T},
	\psi)$ is defined by a set of $N$ clusters $\mathcal{V} = \{S_1, \ldots, S_N \}$
	partitioning
	the vertex set $V$ , a set of weighted multiedges, a set of cluster leaders
	$\mathcal{L}$, a set of cluster trees
	$\mathcal{T}$, as well as a function $\psi$ that maps the edges $\mathcal{E}$ of
	the cluster graph to edges in $E$. Formally, the tuple
	$(\mathcal{V}, \mathcal{E},\mathcal{L}, \mathcal{T}, \psi)$ has to satisfy the
	following conditions.
	\begin{enumerate}
		\item  The clusters $\mathcal{V} = (S_1, \dots , S_N)$ form a partition of the
		set of vertices $V$.
		\item For each cluster $S_i$, $|S_i\cap \mathcal{L}| = 1$. Hence, each cluster
		has exactly one cluster leader $\ell_i \in \mathcal{L} \cap S_i$. The
		ID of the node $\ell_i$ also serves as the ID of the cluster $S_i$ and for the
		purpose of distributed computations,
		we assume that all nodes $v \in S_i$ know the cluster $ID$ and the size $n_i
		:= |S_i|$ of their cluster $S_i$.
		\item  Each cluster tree $T_i = (S_i
		, E_i)$ is a rooted spanning tree of the subgraph $G[S_i]$ of $G$ induced by
		$S_i$.
		The root of $T_i$ is the cluster leader $\ell_i \in S_i \cap \mathcal{L}$.
		\item The function $\psi : \mathcal{E}\rightarrow E$ maps each edge of
		$\mathcal{E}$ to an (actual) edge of $E$ connecting
		the clusters.
	\end{enumerate}
\end{definition}

As a consequence of Lemma~\ref{lemma:Communication}, we get that shortest paths
can be ran on distributed $N$-node cluster graphs of $G$

\begin{lemma}
\label{lem:DfnsTalkWell}
	Let $G = (V, E)$ be a graph with $n$ vertices and $m$ edges
	that $\rho$-embed into the communication network $\overline{G} = (\overline{V},
	\overline{E})$,
	and $\mathcal{G} = (\mathcal{V}, \mathcal{E},\mathcal{L}, \mathcal{T}, \psi)$ be a
	distributed cluster graph for $G$.

	Then we have the following algorithms:
	\begin{enumerate}
		\item For each cluster $S_i$, the cluster leader $\ell_i$ broadcasts $O(\log
		\overline n)$ bit message $s_i$ to each vertex of $S_i$ in $O(\rho \overline n^{1/2} \log \overline n +
		D)$ rounds.
		\item
		Assume every vertex $v \in S_i$ for each $S_i \in \mathcal{V}$,
		the corresponding vertex $v' \in \overline{V}$ holds a value $f(v)$.
		Then computing $\min_{v\in S_i}\{f(v)\}$ at node $\ell_i$ for each $S_i \in
		\mathcal{V}$ needs $O(\rho \overline{n}^{1/2} \log \overline n+D)$
		rounds if the tree $T_i$
		with root $\ell_i$ is known.
	\end{enumerate}
\end{lemma}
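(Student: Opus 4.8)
The plan is to recognize that a distributed $N$-node cluster graph $\mathcal{G}$ for $G$ is itself an $O(1)$-minor of $G$ in the sense of Definition~\ref{def:Minor}, and then to obtain both claimed primitives by composing this minor with the given $\rho$-minor distribution of $G$ into $\overline{G}$ and invoking Lemma~\ref{lemma:Communication}.

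First I would set up the minor mapping of $\mathcal{G}$ into $G$: for each cluster $S_i \in \mathcal{V}$ the supervertex is $S_i$ itself, its root is the cluster leader $\ell_i$, and the connecting tree is the cluster tree $T_i$, which by definition is a (rooted) spanning tree of $G[S_i]$ and hence connected in $G$; each cluster edge of $\mathcal{E}$ maps to its image under $\psi$, a genuine edge of $G$ between the corresponding clusters. The congestion of this minor is $O(1)$: since $\mathcal{V}$ partitions $V$, each vertex of $G$ lies in exactly one supervertex, and since the trees $T_i$ live on vertex-disjoint sets, each edge of $G$ appears in at most one tree (and, counting images of $\psi$ as well, still $O(1)$ times). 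This minor is stored distributedly as Definition~\ref{def:Minor} requires, because each $v \in S_i$ already knows its cluster $ID$ $\ell_i$ and the incidence of its tree edges and $\psi$-images; if a root orientation of the $T_i$ is additionally needed it can be precomputed with Lemma~\ref{lemma:RootTrees}.

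Next I would apply Lemma~\ref{lemma:MinorCompose} to the $\rho$-minor distribution of $G$ into $\overline{G}$ together with the $O(1)$-minor distribution of $\mathcal{G}$ into $G$ just described. This produces, after $\O(\rho(\sqrt{\overline{n}} + D))$ rounds of preprocessing, an $O(\rho)$-minor distribution of $\mathcal{G}$ directly into $\overline{G}$; intuitively each cluster tree $T_i$ inside $G$ fuses with the supervertex trees of $G$ inside $\overline{G}$ into a single connecting tree of $S_i$ inside $\overline{G}$. Finally, applying Lemma~\ref{lemma:Communication} to this distribution with parameter $t = O(1)$ gives exactly the two stated operations: part~1 of Lemma~\ref{lemma:Communication} has each $\ell_i = V_{map}^{\mathcal{G}\to\overline{G}}(S_i)$ broadcast $O(\log\overline{n})$ bits to every vertex of $S_i$, and part~2 aggregates the minimum (or sum) of $O(\log\overline{n})$-bit values from all of $S_i$ up to $\ell_i$, in $O(\rho\sqrt{\overline{n}}\log\overline{n} + D)$ rounds of communication on $\overline{G}$, as claimed.

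I expect the only real subtlety to be bookkeeping rather than mathematics: confirming that the native storage convention for a distributed cluster graph matches the distributed-$\rho$-minor storage convention of Definition~\ref{def:Minor}, so that Lemma~\ref{lemma:Communication} applies verbatim, and tracking that the congestion grows only by the $O(1)$ factor coming from the cluster structure so that no extra $\rho$ or $\poly\log$ factors sneak into the final bound. The cluster edges $\mathcal{E}$ (including multiedges) are irrelevant to both intra-cluster operations, so they need only be accounted for in the congestion count and can otherwise be ignored.
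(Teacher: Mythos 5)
Your proposal is correct and follows exactly the paper's own argument: recognize that a distributed cluster graph is a low-congestion minor of $G$ (the paper says $1$-minor, you say $O(1)$-minor; either works), compose with the given $\rho$-minor distribution via Lemma~\ref{lemma:MinorCompose}, and then invoke Lemma~\ref{lemma:Communication} for the broadcast and aggregation. The extra detail you supply about why the partition and vertex-disjoint cluster trees bound the congestion is just making explicit what the paper states in one line.
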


\begin{proof}
  The definition of distributed $N$-node cluster graphs implies that
  $\mathcal{G}$ $1$-minor distributes over $G$.
  Lemma~\ref{lemma:MinorCompose} then gives that $\mathcal{G}$ $\rho$-minor
  distributes over $\overline{G}$, and this distributed mapping can be obtained
  using $O(\rho \overline{n}^{1/2} \log \overline n+D)$ rounds of
  computations.
  The broadcast, and the aggregation of minimums then follow from
  Lemma~\ref{lemma:Communication}.
\end{proof}

This in turn implies that the \SplitGraph algorithm in \cite{GKKLP15}
can be simulated on a graph that's $\rho$-minor distributed into $\overline{G}$
in  $O(n^{o(1)}(\rho\overline n^{1/2} \log \overline n+D))$ rounds.
Putting it together gives our variant of the AKPW low stretch spanning tree
algorithm, with the main difference being that it's ran on a $\rho$-minor
distributed over our overall communication network.

\begin{lemma}
\label{lemma:LowStretch}
Let $G = (V, E)$ be a graph with $n$ vertices and $m$ edges
that $\rho$-embeds into the communication network $\overline{G} = (\overline{V},
\overline{E})$, and $\mathcal{G} = (\mathcal{V}, \mathcal{E},\mathcal{L}, \mathcal{T},
psi)$ be a
distributed cluster graph for $G$.

It takes $O(n^{o(1)}(\rho\overline n^{1/2} \log \overline n+D))$ rounds to construct a
spanning tree $T$ of $G$, along with stretch upper bounds that sum to
\[
m \cdot 2^{O(\sqrt{\log{n}\log\log{n}})}.
\]
\end{lemma}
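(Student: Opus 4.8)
The plan is to simulate the Alon--Karp--Peleg--West (AKPW) low stretch spanning tree construction, in the form used by~\cite{KMP10,GKKLP15}, on top of the distributed cluster-graph machinery already developed. Recall that AKPW proceeds in $t = O(\sqrt{\log n/\log\log n})$ phases; it maintains a partition of $V$ into clusters, each equipped with a leader and a rooted spanning tree, which is exactly a distributed $N$-node cluster graph of $G$ in the sense defined above, and hence, by Lemma~\ref{lemma:MinorCompose}, $\rho$-minor distributes into $\overline{G}$ after an $O(\rho\overline{n}^{1/2}\log\overline{n}+D)$-round preprocessing step. In phase $j$ one groups the remaining inter-cluster edges into geometrically spaced weight classes (with class width $\lambda = \exp(\Theta(\sqrt{\log n\log\log n}))$), runs the ball-growing routine \textsc{SplitGraph} of~\cite{GKKLP15} to carve the cluster graph into new clusters of radius $2^{O(\sqrt{\log n\log\log n})}$, adds the new clusters' tree edges to the output tree $T$, and contracts each new cluster to a single node to form the cluster graph for phase $j+1$.

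First I would check that every primitive used in a phase is supported by the lemmas at hand. Weight-class bucketing and the local edge-selection decisions inside \textsc{SplitGraph} are purely local at the endpoints of edges of $G$. The ball-growing exploration in \textsc{SplitGraph} is a bounded-radius BFS on the cluster graph, which decomposes into $2^{O(\sqrt{\log n\log\log n})}$ rounds of broadcast-from-leader and min-aggregation-to-leader over clusters; each such round costs $O(\rho\overline{n}^{1/2}\log\overline{n}+D)$ rounds in $\overline{G}$ by Lemma~\ref{lem:DfnsTalkWell}, itself an instance of Lemma~\ref{lemma:Communication}. Contracting the new clusters and rebuilding the cluster graph --- new leaders, new rooted trees --- is handled by $O(\log n)$ rounds of parallel contraction as in Corollary~\ref{corollary:ActualMinor} and Lemma~\ref{lemma:MinorCompose}, with an $O(\rho(\overline{n}^{1/2}+D))$ per-round overhead, while the congestion grows by only constant factors per contraction and thus stays $n^{o(1)}$ across all phases. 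Multiplying the per-primitive cost $O(\rho\overline{n}^{1/2}\log\overline{n}+D)$ by the $2^{O(\sqrt{\log n\log\log n})}=n^{o(1)}$ primitives per phase and the $O(\sqrt{\log n/\log\log n})$ phases gives the stated $O(n^{o(1)}(\rho\overline{n}^{1/2}\log\overline{n}+D))$ bound.

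Next I would handle the stretch upper bounds. For each edge $e=uv$, the AKPW analysis gives a deterministic upper bound on $\mathrm{str}_T(e)$ in terms of the first phase at which $u$ and $v$ lie in a common cluster and the radii of the clusters containing them at earlier phases; with $\lambda = \exp(\Theta(\sqrt{\log n\log\log n}))$ controlling the weight-class width and the cluster radii, the standard accounting yields $\sum_{e\in E}\mathrm{str}_T(e) \le m\cdot 2^{O(\sqrt{\log n\log\log n})}$. This bound depends only on the weight class of $e$ and on the sequence of cluster IDs seen by $u$ and $v$ across phases; since each vertex can record its cluster ID at every phase --- this is broadcast by leaders during the construction, within the round budget above --- both endpoints of $e$ can locally evaluate the bound once the construction terminates, so the stretch bounds are produced in the required distributed form at no extra asymptotic cost.

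The main obstacle I anticipate is the faithful simulation of \textsc{SplitGraph}: one must verify that its ball-growing/clustering steps really do reduce to the broadcast and min-aggregation primitives of Lemma~\ref{lem:DfnsTalkWell} on the cluster graph (rather than requiring weighted shortest paths that would congest heavily), that the exploration radius stays $n^{o(1)}$ in every phase, and that rebuilding the cluster graph after contraction --- choosing new leaders and new rooted spanning trees that meet the definition of a distributed cluster graph --- can be carried out within the claimed bound while keeping the $\rho$-minor congestion controlled. The numerical AKPW guarantees (phase count, radius bound, stretch sum) are then inherited verbatim from~\cite{KMP10,GKKLP15}.
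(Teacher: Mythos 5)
Your proposal matches the paper's approach: the paper's proof of Lemma~\ref{lemma:LowStretch} amounts to invoking Lemma~\ref{lem:DfnsTalkWell} to argue that the \textsc{SplitGraph} ball-growing routine of~\cite{GKKLP15}, and hence the phase-by-phase AKPW construction built on it, can be simulated on the $\rho$-minor-distributed cluster graph with an $O(\rho\overline{n}^{1/2}\log\overline{n}+D)$ overhead per primitive. Your write-up fills in the same plan in more detail (phase count, cluster-contraction via Corollary~\ref{corollary:ActualMinor} and Lemma~\ref{lemma:MinorCompose}, local computation of the stretch bounds at edge endpoints), all of which is consistent with what the paper tacitly relies on.
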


These upper bounds are sufficient for sampling the edges by stretch.
The following was shown in~\cite{KMP10}, or Theorem 2.2.4 in \cite{P13:thesis}.

\begin{lemma}
\label{lemma:SampleByStretch}
Given a graph $G$, a tree $T$,
upper bounds on stretches of edges of $T$ w.r.t. $G$ that sum to $\alpha$,
along with a parameter $k$,
there is an independent sampling / rescaling distribution computable
locally from the stretch upper bounds that gives a graph $H$ such that
with high probability
\begin{enumerate}
\item $\mL(G) \pe \mL(H) \pe k\mL(H)$
\item $H$ contains (rescaled) $T$, plus $O(\alpha \log{n} / k)$ edges.
\end{enumerate}
\end{lemma}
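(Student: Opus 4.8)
The plan is to run the standard Koutis--Miller--Peng ultra-sparsification recipe~\cite{KMP10}: blow the tree up by a factor $\kappa = \Theta(k)$, add it back to $G$ with that multiplicity, and then keep each off-tree edge independently with probability proportional to $\log n$ times (the supplied upper bound on) its stretch divided by $\kappa$, reweighting the survivors. First I would dispose of degenerate regimes: we may assume $C_0 \le k \le \alpha$ for a large constant $C_0$, since for $k < C_0$ we have $O(\alpha \log n / k) = \Omega(\alpha) \ge \Omega(m)$ (as $\alpha \ge m$) and $H = G$ works, while for $k > \alpha$ the rescaled tree $\alpha T$ already works with no off-tree edges (using $\mL(T)\pe\mL(G)\pe\alpha\mL(T)$, the latter from the total-stretch bound). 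Set $\kappa := \lfloor k/3\rfloor \ge 1$ and let $G'$ be $G$ with every tree-edge conductance multiplied by $\kappa$, so that $\mL(G') = \kappa\mL(T) + \sum_{e\notin T}\vec{r}_e^{-1}\vec{b}_e\vec{b}_e^\tomato$ and, since $0\pe(\kappa-1)\mL(T)\pe(\kappa-1)\mL(G)$,
\[ \mL(G)\pe\mL(G')\pe\kappa\mL(G). \]
Writing $\mathrm{st}_e$ for the supplied upper bound on $\mathrm{st}_T(e) := \vec{r}_e^{-1}\vec{b}_e^\tomato\mL(T)^\dagger\vec{b}_e$ (with $\sum_e\mathrm{st}_e = \alpha$), the sparsifier $H$ keeps every tree edge with conductance $2\kappa/\vec{r}_e$ and includes each off-tree edge $e$ independently with probability $p_e := \min\{1,\,C_1\log n\cdot\mathrm{st}_e/\kappa\}$, giving it conductance $2/(p_e\vec{r}_e)$. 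Since $p_e$ and all rescalings depend only on $\mathrm{st}_e,\vec{r}_e,\kappa,n$, which are locally available, the distribution is computable locally as claimed.

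Next I would bound the true leverage scores $\sigma_e := \vec{r}_e^{-1}\vec{b}_e^\tomato\mL(G')^\dagger\vec{b}_e$ of the off-tree edges in $G'$. Since $\mL(G') = \bigl(\kappa\mL(T) + \vec{r}_e^{-1}\vec{b}_e\vec{b}_e^\tomato\bigr) + (\text{PSD})$, Rayleigh monotonicity gives $\sigma_e \le \vec{r}_e^{-1}\vec{b}_e^\tomato\bigl(\kappa\mL(T) + \vec{r}_e^{-1}\vec{b}_e\vec{b}_e^\tomato\bigr)^\dagger\vec{b}_e$, and the Woodbury identity (Lemma~\ref{lemma:woodbury}) evaluates the right-hand side to $\tfrac{\mathrm{st}_T(e)/\kappa}{1+\mathrm{st}_T(e)/\kappa}\le\mathrm{st}_T(e)/\kappa\le\mathrm{st}_e/\kappa$, whence $p_e\ge\min\{1,\,C_1\log n\cdot\sigma_e\}$. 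Now the concentration: with $\chi_e\sim\mathrm{Bernoulli}(p_e)$ independent, $\mL(H) = 2\bigl(\kappa\mL(T) + \sum_{e\notin T}\tfrac{\chi_e}{p_e}\vec{r}_e^{-1}\vec{b}_e\vec{b}_e^\tomato\bigr)$, and conjugating by $\mL(G')^{\dagger/2}$ the increments $\mX_e := \bigl(\tfrac{\chi_e}{p_e}-1\bigr)\mL(G')^{\dagger/2}\vec{r}_e^{-1}\vec{b}_e\vec{b}_e^\tomato\mL(G')^{\dagger/2}$ form a mean-zero matrix martingale (of dimension $n$) whose final value is $\tfrac12\mL(G')^{\dagger/2}\mL(H)\mL(G')^{\dagger/2}$ minus the identity on $\mathrm{range}(\mL(G'))$. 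Each rank-one factor here has spectral norm $\sigma_e$ and is idempotent up to that scalar, so $\|\mX_e\|_2\le\sigma_e/p_e\le 1/(C_1\log n) =: R$ (with $\mX_e = 0$ when $p_e = 1$), and
\[ \sum_{e\notin T}\E[\mX_e^2]\pe\sum_{e\notin T}\frac{\sigma_e}{p_e}\,\mL(G')^{\dagger/2}\vec{r}_e^{-1}\vec{b}_e\vec{b}_e^\tomato\mL(G')^{\dagger/2}\pe\frac{1}{C_1\log n}\,\mL(G')^{\dagger/2}\mL(G')\mL(G')^{\dagger/2}\pe\frac{1}{C_1\log n}\mI, \]
so the predictable quadratic variation is at most $\sigma^2 := 1/(C_1\log n)$ deterministically. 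Freedman's inequality (Lemma~\ref{lemma:freedman}) with these $R,\sigma^2$ and deviation $1/2$ then gives a failure probability $2n\exp(-\Omega(C_1\log n))\le n^{-10}$ for $C_1$ large, i.e. $\mL(G')\pe\mL(H)\pe 3\mL(G')$ with high probability.

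Finally, combining with $\mL(G)\pe\mL(G')\pe\kappa\mL(G)$ gives $\mL(G)\pe\mL(H)\pe 3\kappa\mL(G)\pe k\mL(G)$, which is item~(1), and $H$ visibly contains the rescaling $2\kappa T$ of $T$. For item~(2), the number of off-tree edges of $H$ is $\sum_{e\notin T}\chi_e$, a sum of independent Bernoullis of total mean $\sum_{e\notin T}p_e\le C_1\log n\cdot\alpha/\kappa = O(\alpha\log n/k)$, and since this mean is $\Omega(\log n)$ (because $k\le\alpha$) a Chernoff bound shows the count is $O(\alpha\log n/k)$ with high probability.

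\textbf{Main obstacle.} The only genuinely load-bearing step is the concentration argument, and within it the point needing care is that the \emph{upper bounds} $\mathrm{st}_e$ still dominate the true leverage scores $\sigma_e$ in the auxiliary graph $G'$ — this is exactly where Rayleigh monotonicity together with the Woodbury formula for a tree-plus-one-edge enter — after which the predictable quadratic variation must be pinned down at $O(1/\log n)$ so that Freedman's inequality (Lemma~\ref{lemma:freedman}) beats the union bound over all $n$ eigenvalues. The $\kappa$-multiplicity sandwich, the final constant rescaling, and the edge count via a scalar Chernoff bound are routine bookkeeping.
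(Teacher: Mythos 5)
Your proof is correct and follows the standard Koutis--Miller--Peng ultrasparsification recipe, which is precisely what the paper invokes here; the paper does not supply its own proof of this lemma, instead citing \cite{KMP10} and Theorem~2.2.4 of \cite{P13:thesis}. The $\kappa$-scaled tree sandwich, the leverage-score bound via Rayleigh monotonicity and Woodbury applied to $\kappa T$ plus a single off-tree edge, and the concentration step (using Freedman's inequality, the one matrix-concentration tool available in the paper, on an independent sum) all match that source's argument, up to minor bookkeeping in the degenerate ranges of $k$.
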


We then need to contract the tree so that its size becomes similar
to the number of off-tree edges.

\begin{lemma}
\label{lemma:Deg1Elim}
	Let $H = (V, E)$ be a graph with $n$ vertices and $m$ edges
	that $\rho$-embed into the communication network $\overline{G} = (\overline{V},
	\overline{E})$.

	Let $T$ be a spanning tree of $H$ and $W = E - T$ be the set of off-tree edges of
	$H$ with respect to $T$.
	There is an algorithm to compute a graph $\widehat{G}$  that's $1$-embeddable into $H$
	satisfying the following conditions in $O(\rho \overline n^{1/2} \log \overline n + D)$ rounds:
	\begin{enumerate}
		\item $\widehat{G}$ contains $O(|W|)$ vertices and edges.
		\item There are operators $\mZ_1$
		and $\mZ_2$ that can be evaluated in $O(\rho \overline n^{1/2} \log \overline n + D)$ rounds with
		\[
		\mL\left( H \right)^{\dag}
		=
		\mZ_1^{\tomato}
		\left[
		\begin{array}{cc}
		\mZ_{2} & 0\\
		0 & \mL\left( \widehat{G} \right)^{\dag}
		\end{array}
		\right]
		\mZ_1
		\]
	\end{enumerate}
\end{lemma}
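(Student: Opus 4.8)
The plan is to realize $\widehat{G}$ as a Schur complement of $\mL(H)$ onto a carefully chosen small terminal set, to read off $\mZ_1,\mZ_2$ from the corresponding partial Cholesky factorization, and then to make the implied \emph{exact} forest elimination low‑depth so that it fits the round budget.

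\textbf{Step 1 (choosing the terminal set).} Let $T'\subseteq T$ be the minimal subtree of $T$ spanning all endpoints of the off‑tree edges $W$, and let $C\subseteq V(H)$ consist of every endpoint of an edge of $W$ together with every vertex of degree $\ge 3$ in $T'$. Elementary tree combinatorics give $|C|=O(|W|)$, and $S:=V(H)\setminus C$ induces a forest $H[S]\subseteq T$ whose components are either pendant subtrees of $T$ or "suppressed'' degree‑$2$ chains of $T'$ lying between consecutive vertices of $C$. Identifying $T'$, $C$, the chains and the pendant pieces is a constant number of rooting and broadcast computations on the subgraph $T$, which $1$‑minor distributes into $H$; by Lemma~\ref{lemma:RootTrees} and Lemma~\ref{lemma:Communication} this costs $O(\rho\,\overline n^{1/2}\log\overline n+D)$ rounds (assuming $W\neq\emptyset$, so $C\neq\emptyset$; otherwise $H=T$ and one simply keeps a single root vertex in $C$).

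\textbf{Step 2 (identifying $\widehat{G}$).} Eliminate the vertices of $S$ from $\mL(H)$ in a leaf‑to‑root order inside each component of $H[S]$. Eliminating a tree leaf with a single incident edge simply deletes that leaf and its edge — the fill‑in on its neighbor exactly cancels that edge's degree contribution — and eliminating a degree‑$2$ vertex replaces its two incident edges by a single series edge. Hence $\mSC(\mL(H),C)=\mL(\widehat{G})$, where $\widehat{G}$ has vertex set $C$, retains every edge of $W$, replaces each $C$‑to‑$C$ chain of $T'$ by one edge whose conductance is the reciprocal of the sum of that chain's resistances, and discards the pendant subtrees entirely. Thus $|V(\widehat{G})|=|C|=O(|W|)$ and $|E(\widehat{G})|\le |W|+|E(T')|=O(|W|)$. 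To exhibit $\widehat{G}$ as a $1$‑minor of $H$, map each $c\in C$ to the supervertex consisting of $c$, its attached pendant subtree, and its half of each incident chain, and map a chain‑edge of $\widehat{G}$ to the middle $H$‑edge of that chain; the congestion is $1$ and the mapping is produced within the same round budget as Step~1.

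\textbf{Step 3 (the operators, and the main obstacle).} Applying the pseudoinverse analog of the Cholesky identity of Lemma~\ref{lemma:cholesky} with the partition $(S,C)$ — here $\mL(H)_{[S,S]}$ is a principal submatrix of a connected Laplacian, hence positive definite and genuinely invertible, while the all‑ones nullspace of $\mL(H)$ is carried entirely by the all‑ones vector on the $C$‑block — yields exactly
\[
\mL(H)^{\dag}=\mZ_1^{\tomato}\begin{bmatrix}\mZ_2 & 0\\ 0 & \mL(\widehat{G})^{\dag}\end{bmatrix}\mZ_1,\qquad
\mZ_1=\begin{bmatrix}\mI & 0\\ -\mL(H)_{[C,S]}\mL(H)_{[S,S]}^{-1} & \mI\end{bmatrix},\quad \mZ_2=\mL(H)_{[S,S]}^{-1}.
\]
Evaluating $\mZ_1$, $\mZ_1^{\tomato}$ and $\mZ_2$ on a vector therefore reduces to (i) matrix–vector products against $\mL(H)_{[C,S]}$ and $\mL(H)_{[S,C]}$, which are supported on $O(1)$‑neighborhoods of $H$ and handled by Corollary~\ref{corollary:MatVec}, and (ii) solving the SDD system $\mL(H)_{[S,S]}\vec z=\vec y$, whose underlying graph is the forest $H[S]$. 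The obstacle is that a chain of $T'$ may have $\Omega(n)$ internal vertices, so naive leaf‑by‑leaf elimination would take linearly many rounds. Instead I solve this forest system \emph{exactly} by parallel tree contraction (rake‑and‑compress): in $O(\log n)$ phases one rakes all current leaves and suppresses all degree‑$2$ chains (pairing every other internal vertex so each phase is $O(1)$ local rounds), every elimination being precisely the zero‑fill leaf/path step of Step~2 together with the series combination along the matched chain pieces. Since $H[S]$ $\rho$‑minor distributes into $\overline G$, each phase is a set of min/sum aggregations over $O(1)$‑neighborhoods of $H[S]$, simulated via Lemma~\ref{lemma:Communication} in $O(\rho\,\overline n^{1/2}\log\overline n+D)$ rounds, giving a full forward/backward solve in $\widetilde O(\rho\,\overline n^{1/2}+D)$ rounds, within the stated bound. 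This step — turning the exact forest elimination into a logarithmic‑depth schedule of Lemma~\ref{lemma:Communication}'s aggregation primitives — is where essentially all the work lies; the rest is routine Cholesky bookkeeping and the tree combinatorics of $T'$.
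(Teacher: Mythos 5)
Your proof is correct and follows essentially the same route as the paper, which simply cites the parallel degree-$1$/degree-$2$ elimination procedure of Lemma 26 in~\cite{BGKMPT14}: both realize $\widehat{G}$ by eliminating the forest $H[S]$ via an $O(\log n)$-depth rake-and-compress schedule, simulated round-by-round through Lemma~\ref{lemma:Communication}, and both read off $\mZ_1,\mZ_2$ from the partial Cholesky factorization. Your version is more self-contained—pinning down the terminal set $C$ upfront as $W$-endpoints plus Steiner branch vertices, and writing $\mZ_2 = \mL(H)_{[S,S]}^{-1}$ as a single exact forest solve rather than a product of per-round elimination operators—but the underlying algorithm is the same, and the extra $\log n$ factor incurred by the tree-contraction phases matches the bound actually derived in the paper's proof (which also exceeds the lemma's stated round count by that factor).
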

\begin{proof}
We use the parallel elimination procedure from Section 6.3 of \cite{BGKMPT14}, specifically Lemma 26. At a high level, it eliminates degree $1$ and $2$ vertices by random sampling a subset of vertices which have degree $1$ or $2$, computing an independent set, and eliminating them. The algorithm requires $O(\log n)$ rounds in PRAM, and therefore can be implemented in $O(\log n(\rho\sqrt{\overline{n}}\log\overline{n}+D))$ rounds in the \congest~model by Lemma \ref{lemma:Communication}. The operators $\mZ_1, \mZ_2$ are computed as in Lemma 26 of \cite{BGKMPT14}.
\end{proof}

We can combine these pieces to prove the main ultrasparsification claim.
\begin{proof}[Proof of Lemma \ref{lem:UltraSparsify}]
The algorithm to prove Lemma \ref{lem:UltraSparsify} is as follows.
\begin{enumerate}
\item Compute a low-stretch tree using Lemma \ref{lemma:LowStretch}.
\item Sample edges using Lemma \ref{lemma:SampleByStretch}
with $\alpha = m \cdot 2^{O(\sqrt{\log n\log\log n})}$.
\item Compute the operators $\mZ_1, \mZ_2$ using Lemma \ref{lemma:Deg1Elim}.
\end{enumerate}
We verify the conditions of Lemma \ref{lem:UltraSparsify}.
The approximation guarantees and number of off-tree edges
are given by Lemma~\ref{lemma:SampleByStretch}

After eliminating  degree $1$ and degree $2$ vertices, the resulting graph has
size $O(\alpha \log n)$ by Lemma \ref{lemma:Deg1Elim} Part 1,
and $\mZ_1$ and $\mZ_2$ are computed by Lemma~\ref{lemma:Deg1Elim}.

 The round complexity in the \congest~model follows by summing the round complexities
 in Lemmas \ref{lemma:LowStretch}, \ref{lemma:SampleByStretch}, \ref{lemma:Deg1Elim}.
\end{proof}

\subsection{Elimination / Sparsified Cholesky}
\label{subsec:eliminiation}

The main goal of this section is to prove Lemma~\ref{lem:Elimination},
which allows the elimination of large subsets of vertices under small error.

\Elimination*

To prove the above the above lemma, we present a distributed implementation of the \emph{Schur Complement Chain} (SCC) construction due to Kyng, Lee, Peng, Sachdeva, and Spielman~\cite{KLPSS16}. The key components to this construction are (i) an algorithm that finds a large near-independent set $F$, and approximates the inverse of the matrix restricted to entries in $F$ and (ii) a procedure for spectrally approximating the Schur complement with respect to $C = V \setminus F$. We next discuss how to implement these components in the \congest~model.

\paragraph{Finding large $\alpha$-DD sets.} When doing Gaussian elimination, the goal is to find a large subset of vertices $F$ such that we can approximate the inverse of $\mL_{[F,F]}$ by an operator $\mZ$ that can be constructed efficiently. Ideally, $F$ forms an independent set. Unfortunately, we are not able to find a large independent set but we can instead find a large, almost-independent set, as made precise in the following definition.

\begin{definition}[$\alpha$-DD] \label{def:alphaSDD}
	A matrix $\mM $ is $\alpha$-diagonally dominant ($\alpha$-DD) if
	\[
	\forall i, \quad \mM_{i,i} \geq (1+\alpha) \sum_{j: j \neq i} \mM_{i,j}.
	\]
	An index set $ F $ is $\alpha$-DD if $ \mM_{[F,F]} $ is $\alpha$-DD.
\end{definition}

The algorithm due to \cite{KLPSS16} for finding $\alpha$-DD sets in a Laplacian proceeds as follows: (i) pick a random subsets of vertices and (ii) and discard all those that do not satisfy the condition in Definition~\ref{def:alphaSDD}. The pseudocode for computing such sets is given in Algorithm~\ref{algo:DDsubset}. In the \congest~model, the way the set is ``stored'' is that each vertex remembers whether it is in the set.

\begin{algorithm}[ht]
	\caption{Find an $\alpha$-DD subset $F$ of $\mL$ \label{algo:DDsubset}}
	\SetKwProg{Proc}{procedure}{}{}
	\Proc{$\DDSubset(\mL, \alpha)$}{
	Sample each index of $\{1,\ldots,n\}$ independently with probability $ \frac{1}{4 (1 + \alpha)} $ and let $ F' $ be the resulting set of sampled indices. \\
	Set \begin{equation*}
			F = \left\{i \in F' : |\mL_{i,i}| \geq (1+\alpha) \sum_{j \in F', j \neq i} |\mL_{i,j}| \right\}.
				\end{equation*} \\
	\If{$|F| < \frac{n}{8(1+\alpha)}$} { Goto Step 1. }
	\Return $F$.
	}
\end{algorithm}

We have the following lemma.
\begin{lemma} \label{lem:DDsubset}
Let $G=(V,E)$ be a graph that $\rho$-minor distributes into the communication network
$\overline{G} = (\overline{V}, \overline{E})$. Let $\mL$ be the Laplacian matrix
associated with $G$ and let $\alpha \geq 0$ be a parameter. Then
$\DDSubset(\mL,\alpha)$ computes an \emph{$\alpha$-DD} subset $F$ of $\mL$ of size
$n/(8(1+\alpha))$ in $O(\rho \sqrt{\overline{n}} \log \overline{n} + D)$ rounds.
\end{lemma}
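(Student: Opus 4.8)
The plan is to argue separately about correctness (that the returned $F$ is $\alpha$-DD and has the claimed size), and about the round complexity. Correctness of the $\alpha$-DD property is immediate from the filtering step defining $F$: every retained index $i$ satisfies $|\mL_{i,i}| \ge (1+\alpha)\sum_{j \in F', j \neq i}|\mL_{i,j}| \ge (1+\alpha)\sum_{j \in F, j \neq i}|\mL_{i,j}|$ since $F \subseteq F'$, which is exactly the condition that $\mL_{[F,F]}$ is $\alpha$-DD. The size guarantee is enforced by the rejection loop, so the only thing to check is that the loop terminates quickly with high probability; this is the part borrowed from \cite{KLPSS16}. The argument there is a second-moment / Markov computation: for a Laplacian, each vertex $i$ is retained conditioned on $i \in F'$ with probability at least a constant (because the expected off-diagonal mass $\sum_{j \neq i}|\mL_{i,j}|$ surviving into $F'$ is a $\tfrac{1}{4(1+\alpha)}$ fraction of $|\mL_{i,i}|$, so Markov's inequality gives retention probability $\ge \tfrac{1}{2}$ after conditioning on $i \in F'$), hence $\E|F| \ge \tfrac{n}{8(1+\alpha)}$, and a standard concentration bound shows $|F| \ge \tfrac{n}{8(1+\alpha)}$ with at least constant probability per trial. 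I would simply cite the analysis of \cite{KLPSS16} here rather than reproduce it, noting that $O(\log n)$ repetitions of Step~1 suffice with high probability.

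Next I would handle the distributed implementation. The sampling in Step~1 is purely local: each vertex $v \in V(G)$ flips an independent coin with probability $\tfrac{1}{4(1+\alpha)}$ and records whether it lands in $F'$; this requires no communication on $G$ at all, and a vertex $v^G$ of $G$ corresponds to a supervertex in $\overline G$ whose root can perform this computation and then, if needed, inform the supervertex via Lemma~\ref{lemma:Communication}. The filtering step in Step~2 is the only place communication between neighbors of $G$ is needed: vertex $i$ must compare $|\mL_{i,i}|$ against $(1+\alpha)$ times the sum of $|\mL_{i,j}|$ over neighbors $j$ that are in $F'$. Each such edge weight $|\mL_{i,j}|$ is stored with both endpoints of the corresponding edge of $G$ (by the distributed storage conventions), so $i$ needs each neighbor $j$ to send a single bit indicating membership in $F'$; then $i$ aggregates a weighted sum over its incident edges. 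This is exactly a sum-aggregation over a supervertex's incident edges, which by Corollary~\ref{corollary:MatVec} (matrix-vector multiplication with the $0/1$ indicator of $F'$ against $|\mL|$) or directly by Lemma~\ref{lemma:Communication} costs $O(\rho\sqrt{\overline n}\log\overline n + D)$ rounds. Computing $|F|$ and checking $|F| \ge \tfrac{n}{8(1+\alpha)}$ is a global sum followed by a broadcast, again $O(\rho\sqrt{\overline n}\log\overline n + D)$ rounds (indeed $O(D)$ suffices for a global count, but this is dominated).

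Finally, since each pass through Steps~1--3 costs $O(\rho\sqrt{\overline n}\log\overline n + D)$ rounds and $O(\log n)$ passes suffice with high probability, the total is $O(\rho\sqrt{\overline n}\log\overline n + D)$ rounds after absorbing the $\log n$ factor into the $\log\overline n$ term (noting $n \le \overline n$, or more precisely that we may assume $n = n^{O(1)} = \overline n^{O(1)}$ so $\log n = O(\log \overline n)$). I do not expect a genuine obstacle here: the only subtlety is bookkeeping the distributed representation, namely that a ``vertex'' of $G$ is a connected cluster in $\overline G$ and the filtering comparison has to be realized as an aggregation over that cluster's boundary edges. That is precisely what Lemma~\ref{lemma:Communication} was set up to handle, so the proof reduces to citing \cite{KLPSS16} for the combinatorial guarantee and invoking the communication primitives for the simulation cost.
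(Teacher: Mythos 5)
Your proof follows essentially the same structure as the paper's: correctness is delegated to the \cite{KLPSS16}/\cite{LeePS15} analysis, Step~1 is local, the Step~2 filtering is realized as a neighborhood aggregation via Lemma~\ref{lemma:Communication}, and the size check is a global sum over a BFS tree. That part is fine.

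The gap is in the final round count. You assert that $O(\log n)$ repetitions of Steps~1--3 suffice with high probability and then claim the total cost is $O(\rho\sqrt{\overline n}\log\overline n + D)$ ``after absorbing the $\log n$ factor into the $\log\overline n$ term.'' That absorption is not valid: if each pass costs $O(\rho\sqrt{\overline n}\log\overline n + D)$ rounds and you make $O(\log n)$ passes, the total is $O\bigl(\rho\sqrt{\overline n}\log^2\overline n + D\log\overline n\bigr)$, a genuinely larger bound; noting $\log n = O(\log\overline n)$ only confirms the extra factor, it does not remove it. The paper avoids this by invoking the fact (Lemma~5.2 of \cite{LeePS15}) that the rejection loop terminates in at most $2$ iterations \emph{in expectation}, so the quoted round bound holds in expectation (the paper's proof says this explicitly even though the lemma statement omits the qualifier). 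If you want a high-probability round bound instead, you must either accept the extra $O(\log n)$ factor or argue differently; as written your conclusion overclaims. The rest of your reasoning --- the $\alpha$-DD property from $F \subseteq F'$, the use of Corollary~\ref{corollary:MatVec} or Lemma~\ref{lemma:Communication} for the weighted neighbor aggregation, and the $O(D)$ global count --- is correct and matches the paper.
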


\begin{proof}
In~\cite[Lemma 5.2]{LeePS15} (and more generally in \cite{KLPSS16}), it is shown that
Algorithm~\ref{algo:DDsubset} computes an $\alpha$-DD subset $F$ of size
$n/(8(1+\alpha))$. To bound the round complexity of the algorithm, consider the
following distributed implementation:
\begin{enumerate}
  \item \label{step:sampleDD} Include each index of $\{1,\ldots,n\}$ in $F'$ with
  probability $\frac{1}{4(1+\alpha)}$.
  \item \label{step:localDD}
   Each node corresponding to $i \in F'$ sums up the values $ |\mL_{i,j}| $ of the
   indices $ j $ corresponding to its neighbors in $ G $, and then decides whether
   $|\mL_{i,i}| \geq (1+\alpha) \sum_{j \in F', j \neq i}
  |\mL_{i,j}| $ and if so declares itself as belonging to $ F $.
 \item \label{step:sum} The size of $F$ is computed by an (arbitrarily decided) leader vertex,
 which aggregates the sum of the following values over all vertices $ v $ in $G$: $1$ if $ v $ is in $F$ and $0$ otherwise.
  \item \label{step:notification} The leader checks whether $|F| < n/(8(1+\alpha))$.
  If the latter holds, then the leaders informs all the vertices in $G$ to repeat the
  previous steps. Otherwise, the algorithm terminates.
\end{enumerate}

	In the \congest~model, Step~\ref{step:sampleDD} requires no communication between
	the nodes: each root vertex of supervertices does the sampling independently.
	In Step~\ref{step:localDD}, each node computes an aggregate of values stored by
	its neighbors in $ G $, which by Lemma~\ref{lemma:Communication} takes $O(\rho
	\sqrt{\overline{n}} \log \overline{n} + D)$ rounds.
	It is well-known that Steps~\ref{step:sum} and~\ref{step:notification} can be carried out in $ O (D) $ rounds by routing the messages via a BFS tree rooted at the leader.
	Together with the fact that Algorithm~\ref{algo:DDsubset} terminates in at most $2$ iterations in expectation (see~\cite[Lemma 5.2]{LeePS15}), it follows that the distributed implementation takes $O(\rho \sqrt{\overline{n}} \log \overline{n} + D)$ rounds in expectation.
\end{proof}

\paragraph{Jacobi Iteration on $\alpha$-DD matrices. } Using an $\alpha$-DD set $F$, we will construct an operator $\mZ$ that approximates $\mL^{-1}_{[F,F]}$ and can be applied efficiently to any vector. An important observation is that we can write $\mL_{[F,F]} = \mX_{[F,F]} + \mY_{[F,F]}$, where $\mX_{[F,F]}$ is a diagonal matrix and $\mY_{[F,F]}$ is a Laplacian matrix. We have the following lemma.

\begin{lemma} \label{lem:Jacobi} Let $G=(V,E)$ be a graph that $\rho$-minor
distributes into the communication network $\overline{G} = (\overline{V},
\overline{E})$.
  Let $\mL$ be the Laplacian matrix associated with $G$ and let $F$ be a subset of
  of $V$ such that $\mL_{[F,F]}$ is $\alpha$-DD for some $\alpha \geq 4$. Then
  $\Jacobi(\mL_{[F,F]}, \cdot, \epsilon)$  gives a linear operator $\mZ$
  that over vectors given on the root vertices of the supervertices
  such that for any vector $\vec{b}$ given by storing $\vec{b}_{v^{G}}$ on
  $V_{map}^{G \rightarrow \overline{G}}(\cdot)$,
  returns in $O((\rho\sqrt{\overline{n}} + D) \log(1 / \epsilon))$ rounds
  $\mZ\vec{b}$ stored on the same vertices, for some matrix $\mZ$ such that
  \[
    \mL_{[F,F]} \preceq \mZ^{(-1)} \preceq \mL_{[F,F]} + \epsilon \cdot \mSC(\mL,F).
  \]
\end{lemma}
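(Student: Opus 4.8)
The plan is to realize the operator $\mZ$ as a truncated Jacobi (Neumann) series for $\mL_{[F,F]}^{-1}$ and to verify the spectral sandwich using the $\alpha$-DD property. Write $\mL_{[F,F]} = \mX + \mY$, where $\mX = \mX_{[F,F]}$ is the (positive) diagonal part and $\mY = \mY_{[F,F]}$ is a Laplacian (diagonal minus off-diagonal adjacency on $F$), noting that the true diagonal of $\mL_{[F,F]}$ exceeds the Laplacian $\mY$'s diagonal precisely by the weight of edges leaving $F$, which is a nonnegative diagonal contribution absorbed into $\mX$. The $\alpha$-DD hypothesis with $\alpha \ge 4$ gives $\mX \succeq \frac{1+\alpha}{\alpha}\mY \succeq \frac{5}{4}\cdot\frac12\mY$ in the sense that, after symmetric preconditioning by $\mX^{-1/2}$, the matrix $\mX^{-1/2}\mY\mX^{-1/2}$ has spectral norm at most $\frac{2}{1+\alpha} \le \frac{2}{5} < 1$. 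Hence $\mX^{-1/2}\mL_{[F,F]}\mX^{-1/2} = \mI + \mX^{-1/2}\mY\mX^{-1/2}$ is well-conditioned, with eigenvalues in $[1-\tfrac{2}{1+\alpha},\,1+\tfrac{2}{1+\alpha}]$, and its inverse is given by the convergent Neumann series $\sum_{k\ge 0}(-\mX^{-1/2}\mY\mX^{-1/2})^k$.

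Next I would \emph{define} $\mZ$ to be the operator obtained by truncating this series at $N = O(\log(1/\epsilon))$ terms and conjugating back: $\mZ = \mX^{-1/2}\big(\sum_{k=0}^{N}(-\mX^{-1/2}\mY\mX^{-1/2})^k\big)\mX^{-1/2}$. Each application of $\mZ$ to a vector $\vec b$ amounts to $N$ rounds of: multiply by the diagonal $\mX^{-1}$ (a purely local operation at each root vertex) and multiply by $\mY$ (an incidence-matrix product on $F$, i.e. communication between neighbors of $G$). By Corollary~\ref{corollary:MatVec} each such matrix–vector product costs $O(\rho\sqrt{\overline n}\log\overline n + D)$ rounds, so the total is $O((\rho\sqrt{\overline n}+D)\log(1/\epsilon))$ rounds as claimed; the vector stays stored on the root vertices throughout. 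For the spectral guarantee, since $\mZ$ is a polynomial in $\mL_{[F,F]}$ it commutes with it, and it suffices to check the scalar inequality on each eigenvalue $\mu$ of $\mX^{-1/2}\mL_{[F,F]}\mX^{-1/2}$, i.e. that $1 \le \mu / p_N(\mu) \le 1 + (\text{small})$ where $p_N$ is the truncated geometric sum; the truncation tail is bounded by $\big(\tfrac{2}{1+\alpha}\big)^{N+1} \le \big(\tfrac{2}{5}\big)^{N+1}$, which is $\le \epsilon \cdot \lambda_{\min}$-type small for $N = O(\log(1/\epsilon))$. To convert the multiplicative error on $\mL_{[F,F]}$ into the \emph{additive} $\epsilon\cdot\mSC(\mL,F)$ slack in the statement, I would use that $\mSC(\mL,F) \succeq \mY$ (the Schur complement of a Laplacian onto $F$ dominates the induced subgraph Laplacian on $F$ — a standard fact, since $\mSC(\mL,F)$ adds the elimination "fill" which is PSD, and more simply $\mSC(\mL,F)$ equals $\mL_{[F,F]}$ minus a PSD matrix only when eliminating $F$, so one must be slightly careful about orientation here), together with $\mX \preceq O(1)\,\mSC(\mL,F)$, so that the small relative error $\delta\,\mL_{[F,F]} \preceq \delta\,(\mX + \mY) \preceq O(\delta)\,\mSC(\mL,F) \preceq \epsilon\,\mSC(\mL,F)$ after rescaling $\delta$.

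The main obstacle I anticipate is getting the additive $\mSC(\mL,F)$ term exactly right: one has to pin down the correct inequality relating $\mX$, $\mY$, $\mL_{[F,F]}$, and $\mSC(\mL,F)$ — in particular which of these dominates which — and track constants so that the $\alpha \ge 4$ threshold actually suffices (the contraction factor $\tfrac{2}{1+\alpha}\le\tfrac25$ and the resulting condition number are what drive both the round count and the slack). I would verify this via the identity $\mSC(\mL,F) = \mL_{[F,F]} - \mL_{[F,C]}\mL_{[C,C]}^{-1}\mL_{[C,F]}$ for $C = V\setminus F$ together with the observation that $\mX$, the part of the diagonal of $\mL_{[F,F]}$ coming from edges to $C$, is itself bounded by the off-diagonal-corrected quantity, so that $\mX \preceq c\,\mSC(\mL,F)$ for an absolute constant $c$; everything else is the routine Neumann-series estimate. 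This matches the construction of \cite{KLPSS16,LeePS15}, so I would cite their Jacobi-iteration analysis for the algebraic identity and supply only the distributed round-complexity accounting via Lemma~\ref{lemma:Communication} and Corollary~\ref{corollary:MatVec}.
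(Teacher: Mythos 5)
Your overall strategy---truncated Jacobi/Neumann series, $\alpha$-DD gives a contraction, and pay $O(\log(1/\epsilon))$ matrix-vector products each costing $O(\rho\sqrt{\overline n}\log\overline n + D)$ rounds via Corollary~\ref{corollary:MatVec}---is the same as the paper's, and the round-complexity accounting is fine. The problem is in the error-conversion step, and it is a genuine gap, not a bookkeeping issue. You characterize the truncation error as a \emph{multiplicative} $\delta\,\mL_{[F,F]}$ slack and then try to absorb it via $\mX \preceq O(1)\,\mSC(\mL,F)$. That last inequality is false in general: take a single center vertex $c \notin F$ joined by unit-weight edges to two leaves $u,v \in F$ and no other edges. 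Then $\mL_{[F,F]} = \mathrm{diag}(1,1)$ is trivially $\alpha$-DD, $\mX = \mathrm{diag}(1,1)$, $\mY = 0$, but $\mSC(\mL,F) = \tfrac12\bigl(\begin{smallmatrix}1&-1\\-1&1\end{smallmatrix}\bigr)$ is rank one; the vector $(1,1)$ shows no finite constant $c$ gives $\mX \preceq c\,\mSC(\mL,F)$. So the chain $\delta\,\mL_{[F,F]} \preceq O(\delta)\,\mSC(\mL,F)$ breaks.

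The fix---and this is exactly what the paper does via \cite{KLPSS16} Lemma E.1---is to observe that the Jacobi truncation error is supported entirely on the $\mY$ direction, not on all of $\mL_{[F,F]}$. If you carry out the scalar/eigenvalue computation you sketched carefully (working in the eigenbasis of $\hmX := \mX^{-1/2}\mY\mX^{-1/2}$, eigenvalue $\lambda \in [0,\beta]$, $k$ odd), you get that the truncated operator $\hat{\mZ}$ satisfies $\hat{\mZ}^{-1} = (\mI+\hmX)(\mI-\hmX^{k+1})^{-1}$, whose eigenvalue $\tfrac{1+\lambda}{1-\lambda^{k+1}}$ differs from $1+\lambda$ by $(1+\lambda)\tfrac{\lambda^{k+1}}{1-\lambda^{k+1}} \le \delta\lambda$ with $\delta = \beta^k\tfrac{1+\beta}{1-\beta^{k+1}}$. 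Crucially this vanishes linearly as $\lambda \to 0$, so undoing the $\mX^{1/2}$-conjugation yields $\mL_{[F,F]} \preceq \mZ^{-1} \preceq \mL_{[F,F]} + \delta\,\mY$, an additive error of $\delta\mY$, not $\delta\mL_{[F,F]}$. Then the single inequality you need is $\mY_{[F,F]} \preceq \mSC(\mL,F)$, which holds cleanly: extend $\mY$ by zeros to $V\times V$ so $\mY \preceq \mL$, and apply Loewner monotonicity of the Schur complement (Lemma~\ref{lemma:SCEnergy}) together with $\mSC(\mY,F) = \mY_{[F,F]}$. With the paper's $\beta = 2/\alpha \le 1/2$ (your $\mX \succeq \tfrac{1+\alpha}{\alpha}\mY$ and the stated $\|\hmX\|\le\tfrac{2}{1+\alpha}$ are mutually inconsistent and not what the $\alpha$-DD condition directly yields; the clean statement is $\tfrac{\alpha}{2}\mY \preceq \mX$, i.e.\ Lemma~\ref{lem:Jacobipreconditioner}), taking $k = \lceil\log(3/\epsilon)\rceil$ gives $\delta \le \epsilon$, and the claim follows.
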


Note that the matrix $\mZ$ is only used in the analysis,
and is never explicitly constructed by the algorithm.
We first give the pseudocode of this algorithm in the centralized setting,
and then show its distributed implementation.

\begin{algorithm}[ht]
	\caption{Solve $\mL_{[F,F]} \cdot \vec{x}_{F} = \vec{b}_{F}$ up to $\epsilon$ accuracy \label{algo:Jacobi}}
	\SetKwProg{Proc}{procedure}{}{}
	\Proc{$\Jacobi(\mL_{[F,F]}, \vec{b}_F, \epsilon)$}{
		Set $\mL_{[F,F]} = \mX_{[F,F]} + \mY_{[F,F]}$ such that $\mX_{[F,F]}$ is diagonal and $\mY_{[F,F]}$ is a Laplacian. \\
		Set $k$ to be an odd integer that is greater than $\log (3/\epsilon)$. \\
		Set $\vec{x}_{F}^{(0)} = \mX^{-1}_{[F,F]} \vec{b}_F$. \\
		\For{$i = 1,\ldots,k$}
		{
			Set $\vec{x}_{F}^{(i)} = -\mX_{[F,F]}^{-1} \mY_{[F,F]} \vec{x}_F^{(i-1)} +
			\mX^{-1}_{[F,F]} \vec{b}_F$.
            \label{line:step}
		}
		\Return $\vec{x}_{F}^{(k)}$.

	}
\end{algorithm}

To measure the quality of the operator produced by \Jacobi~procedure, we observe that $k$ iterations produce the operator
\begin{equation} \label{eq:operator}
\mZ^{\left(k\right)}
:=
\sum_{i=0}^{k} \mX^{-1}_{\left[F,F\right]}
\left(-\mY_{\left[F,F\right]} \mX^{-1}_{\left[F,F\right]} \right)^{i}
\end{equation}
by induction. Concretely, suppose we have
\[
\vec{x}_{F}^{\left(k - 1\right)}
=
\sum_{i=0}^{k - 1} \mX^{-1}_{\left[F,F\right]}
\left(-\mY_{\left[F,F\right]} \mX^{-1}_{\left[F,F\right]} \right)^{i} \vec{b}_{F},
\]
then substituting this into the step in Line~\ref{line:step} gives
\begin{align*}
\vec{x}_{F}^{\left(k\right)}
& =
-\mX_{\left[F,F\right]}^{-1} \mY_{\left[F,F\right]}
\vec{x}_{F}^{\left(k - 1\right)}
+
\mX^{-1}_{\left[F,F\right]} \vec{b}_F\\
& = \mX^{-1}_{\left[F,F\right]} \vec{b}_F
+ \left( -\mX_{\left[F,F\right]}^{-1} \mY_{\left[F,F\right]} \right)
\sum_{i=0}^{k - 1} \mX^{-1}_{\left[F,F\right]}
\left(-\mY_{\left[F,F\right]} \mX^{-1}_{\left[F,F\right]} \right)^{i} \vec{b}_{F}\\
& = \mX^{-1}_{\left[F,F\right]} \vec{b}_F
+ \sum_{i=1}^{k} \mX^{-1}_{\left[F,F\right]}
\left(-\mY_{\left[F,F\right]} \mX^{-1}_{\left[F,F\right]} \right)^{i} \vec{b}_{F}
= \sum_{i=0}^{k} \mX^{-1}_{\left[F,F\right]}
\left(-\mY_{\left[F,F\right]} \mX^{-1}_{\left[F,F\right]} \right)^{i} \vec{b}_{F}.
\end{align*}

We next review two lemmas from~\cite{KLPSS16} that help us prove the approximation accuracy of the Jacobi iteration on $\mL_{[F,F]}$. The first shows that $\alpha$-DD matrices admit good diagonal preconditioners. The second gives a way to bound the error produced by \Jacobi.

\begin{lemma}[\cite{KLPSS16}, Lemma 3.6.] \label{lem:Jacobipreconditioner}
	Let $\mL_{[F,F]}$ be an $\alpha$-DD matrix which can be written in the form $\mX_{[F,F]} + \mY_{[F,F]}$ where $\mX_{[F,F]}$ is diagonal and $\mY_{[F,F]}$ is a Laplacian. Then $\frac{\alpha}{2} \mY \preceq \mX$.
\end{lemma}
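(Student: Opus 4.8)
The plan is to reduce the claimed PSD inequality $\frac{\alpha}{2}\mY_{[F,F]} \preceq \mX_{[F,F]}$ to a family of scalar inequalities extracted directly from the $\alpha$-DD hypothesis, and then to upgrade the resulting diagonal bound to a spectral bound using the standard fact that a Laplacian is dominated by twice its degree matrix.

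First I would pin down the decomposition. Since $\mX_{[F,F]}$ is diagonal, the off-diagonal entries of $\mY_{[F,F]}$ agree with those of $\mL_{[F,F]}$, and because a Laplacian is determined by its off-diagonal entries (its diagonal being the negated off-diagonal row sums), the split $\mL_{[F,F]} = \mX_{[F,F]} + \mY_{[F,F]}$ is unique. In particular $\mY_{[F,F]}$ has diagonal $(\mY_{[F,F]})_{ii} = \sum_{j \in F, j \neq i}\lvert (\mL_{[F,F]})_{ij}\rvert$; write $\mD_\mY \defeq \mathrm{diag}(\mY_{[F,F]}) = \mD(G_\mY)$ and $\mA_\mY \defeq \mA(G_\mY)$ for the degree and (nonnegatively weighted) adjacency matrices of the graph $G_\mY$ whose Laplacian is $\mY_{[F,F]}$, so that $\mY_{[F,F]} = \mD_\mY - \mA_\mY$. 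Then the $\alpha$-DD condition of Definition~\ref{def:alphaSDD} (in the absolute-value form actually used in Algorithm~\ref{algo:DDsubset}) gives, for every $i \in F$,
\[
(\mX_{[F,F]})_{ii} = (\mL_{[F,F]})_{ii} - (\mD_\mY)_{ii} \ge (1+\alpha)(\mD_\mY)_{ii} - (\mD_\mY)_{ii} = \alpha\,(\mD_\mY)_{ii}.
\]
Since $\mX_{[F,F]}$ and $\mD_\mY$ are both diagonal with nonnegative entries, this is precisely the operator inequality $\mX_{[F,F]} \succeq \alpha\,\mD_\mY$.

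Finally I would bridge $\mD_\mY$ and $\mY_{[F,F]}$ via the signless-Laplacian identity $\mD_\mY + \mA_\mY = \sum_{(i,j)} w_{ij}(e_i + e_j)(e_i + e_j)^\top \succeq 0$, which yields $\mY_{[F,F]} = \mD_\mY - \mA_\mY \preceq 2\,\mD_\mY$; alternatively this follows from Lemma~\ref{lemma:sdd} applied to $\mD_\mY^{-1/2}\mY_{[F,F]}\mD_\mY^{-1/2}$. Chaining the two bounds,
\[
\mX_{[F,F]} \succeq \alpha\,\mD_\mY = \tfrac{\alpha}{2}\cdot 2\,\mD_\mY \succeq \tfrac{\alpha}{2}\,\mY_{[F,F]},
\]
which is the assertion. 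There is no genuinely hard step here; the only point needing a moment's care is the uniqueness of the $\mX/\mY$ split — i.e.\ confirming that $\mD_\mY$ is exactly $\mathrm{diag}(\mY_{[F,F]})$ and not something smaller — since that is what makes the ``$+\alpha$'' in the $\alpha$-DD hypothesis translate cleanly into the factor $\alpha$ multiplying $\mD_\mY$, and hence the factor $\alpha/2$ in the final bound.
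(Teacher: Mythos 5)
The paper does not prove this lemma itself; it is cited directly from~\cite{KLPSS16}. Your argument is correct and is in fact the same two-step argument used in the cited source: first the $\alpha$-DD condition (correctly read in the absolute-value form that Algorithm~\ref{algo:DDsubset} actually enforces, since the literal statement of Definition~\ref{def:alphaSDD} would be vacuous for a Laplacian minor) yields the diagonal inequality $\mX_{[F,F]} \succeq \alpha\,\mD_\mY$, and then the signless-Laplacian identity gives $\mY_{[F,F]} = \mD_\mY - \mA_\mY \preceq 2\mD_\mY$, which chains to the claim. Your remark about uniqueness of the split is also the right point to flag: since $\mX_{[F,F]}$ is diagonal and a Laplacian's diagonal is determined by its off-diagonal entries, $\mD_\mY$ is forced to equal the full in-$F$ off-diagonal absolute row sum, so the factor $\alpha$ lands exactly as needed.
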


\begin{lemma}[\cite{KLPSS16}, Lemma E.1.] \label{lem:Jacobisolver}
	Let $\mL_{[F,F]}$ be an $\alpha$-DD matrix with $\mL_{[F,F]} = \mX_{[F,F]} + \mY_{[F,F]}$ where $0 \preceq \mY_{[F,F]} \preceq \beta \mX$ for some $0 < \beta < 1$. Then, for any odd $k$ and $\mZ^{(k)}$ as defined in Eq.~(\ref{eq:operator}), we have
	\[
		\mX_{[F,F]} + \mY_{[F,F]} \preceq (\mZ^{(k)})^{-1} \preceq \mX_{[F,F]} + (1+\delta) \mY_{[F,F]},
	\]
	where
	\[
		\delta = \beta^{k} \frac{1+\beta}{1 - \beta^{k+1}}.
	\]
\end{lemma}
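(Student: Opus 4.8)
The plan is to symmetrize the truncated Neumann series $\mZ^{(k)}$ by the positive definite square root of $\mX_{[F,F]}$, which converts both matrix inequalities into a single pair of scalar inequalities on the interval $[0,\beta]$. Write $\mX := \mX_{[F,F]}$ and $\mY := \mY_{[F,F]}$. First I would record that $\mX \succ 0$: from $0 \preceq \mY \preceq \beta \mX$ with $\beta > 0$ we get $\mX \succeq 0$, and $\mX$ is invertible because $\mZ^{(k)}$ in Eq.~(\ref{eq:operator}) is built from $\mX^{-1}$. Set $\mathbf{N} := \mX^{-1/2}\mY\mX^{-1/2}$, a symmetric matrix with $0 \preceq \mathbf{N} \preceq \beta \mI$. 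Inserting $\mX^{-1/2}\mX^{-1/2} = \mX^{-1}$ between consecutive factors gives $\mX^{-1}(-\mY\mX^{-1})^i = \mX^{-1/2}(-\mathbf{N})^i\mX^{-1/2}$, so Eq.~(\ref{eq:operator}) becomes
\[ \mZ^{(k)} = \mX^{-1/2}\left(\sum_{i=0}^k (-\mathbf{N})^i\right)\mX^{-1/2} = \mX^{-1/2} f_k(\mathbf{N})\mX^{-1/2}, \qquad f_k(t) := \sum_{i=0}^k (-t)^i. \]
Since $k$ is odd, $k+1$ is even and $f_k(t) = \frac{1-t^{k+1}}{1+t}$, which is strictly positive for $t \in [0,\beta] \subseteq [0,1)$; hence $f_k(\mathbf{N}) \succ 0$, and $(\mZ^{(k)})^{-1} = \mX^{1/2} f_k(\mathbf{N})^{-1} \mX^{1/2}$ is well defined.

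Next I would conjugate by $\mX^{-1/2}$ on both sides. Using $\mX + \mY = \mX^{1/2}(\mI + \mathbf{N})\mX^{1/2}$ and $\mX + (1+\delta)\mY = \mX^{1/2}(\mI + (1+\delta)\mathbf{N})\mX^{1/2}$, and the fact that congruence by the invertible matrix $\mX^{-1/2}$ preserves the Loewner order, the claimed bound $\mX + \mY \preceq (\mZ^{(k)})^{-1} \preceq \mX + (1+\delta)\mY$ is equivalent to
\[ \mI + \mathbf{N} \preceq f_k(\mathbf{N})^{-1} \preceq \mI + (1+\delta)\mathbf{N}. \]
Because $\mathbf{N}$ is symmetric with spectrum contained in $[0,\beta]$, by the spectral theorem it suffices to prove the scalar inequalities
\[ 1 + t \;\le\; \frac{1}{f_k(t)} = \frac{1+t}{1-t^{k+1}} \;\le\; 1 + (1+\delta)t \qquad \text{for all } t \in [0,\beta]. \]

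Finally I would dispatch these two scalar bounds. The lower bound is immediate: $1 - t^{k+1} \le 1$ gives $\frac{1+t}{1-t^{k+1}} \ge 1+t$. For the upper bound, subtracting $1+t$ shows it is equivalent to $(1+t)\frac{t^{k+1}}{1-t^{k+1}} \le \delta t$; the case $t = 0$ is trivial, and for $t \in (0,\beta]$ it reads $\delta \ge \frac{(1+t)t^k}{1-t^{k+1}}$. Here the numerator $(1+t)t^k$ is nonnegative and nondecreasing in $t \ge 0$, while the denominator $1-t^{k+1}$ is positive and nonincreasing on $[0,1)$, so the ratio is nondecreasing on $[0,\beta]$ and is maximized at $t = \beta$, where it equals $\frac{(1+\beta)\beta^k}{1-\beta^{k+1}} = \delta$. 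This proves the scalar inequality for all $t \in [0,\beta]$ and hence the lemma. I do not anticipate a real obstacle; the only point requiring care is that $\mX$ and $\mY$ do not commute, which is precisely why the symmetrizing congruence by $\mX^{-1/2}$ is used rather than a naive scalar power-series manipulation, together with the (routine, given $\beta < 1$ and $k$ odd) verification that $f_k(\mathbf{N})$ is positive definite before forming $(\mZ^{(k)})^{-1}$.
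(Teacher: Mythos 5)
Your proof is correct: conjugating by $\mX_{[F,F]}^{1/2}$ turns $\mZ^{(k)}$ into $f_k(\mathbf{N})$ for $\mathbf{N}=\mX_{[F,F]}^{-1/2}\mY_{[F,F]}\mX_{[F,F]}^{-1/2}$ with spectrum in $[0,\beta]$, and since $k$ is odd, $f_k(t)=(1-t^{k+1})/(1+t)>0$ there, so the claim reduces by the spectral theorem to the scalar bounds $1+t \le (1+t)/(1-t^{k+1}) \le 1+(1+\delta)t$, whose upper bound your monotonicity argument correctly shows is attained exactly at $t=\beta$, recovering $\delta=\beta^{k}(1+\beta)/(1-\beta^{k+1})$. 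The paper does not prove this lemma itself (it imports it as Lemma E.1 of \cite{KLPSS16}), and your argument is essentially the standard one behind that citation; the only implicit step is $\mX_{[F,F]}\succ 0$, which is legitimate since $\mZ^{(k)}$ is defined via $\mX_{[F,F]}^{-1}$ and the diagonal excess of an $\alpha$-DD matrix is strictly positive.
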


\begin{proof}[Proof of Lemma~\ref{lem:Jacobi}]
Let $\mY_{[F,F]}$ be the matrix generated when calling \Jacobi~with $\mL_{[F,F]}$. Since $\mL_{[F,F]}$ is an $\alpha$-DD matrix, by extending $\mY_{[F,F]}$ with zero entries, we have $\mY \preceq \mL$. This in turn implies that $\mY_{[F,F]}  = \mSC(\mY,F) \preceq \mSC(\mL,F)$.

Lemma~\ref{lem:Jacobipreconditioner} gives that $\frac{\alpha}{2} \mY \preceq \mX$. As $\alpha \geq 4$, we can invoke Lemma~\ref{lem:Jacobisolver} with $\beta = 1/2$, which in gives that $(1+\beta)/(1-\beta^{k+1}) \leq 3$. Therefore, our choice of $k = \log (3/\epsilon)$ gives the desired error guarantee. To bound the round complexity of the algorithm, consider the following distributed implementation of \Jacobi:
\begin{enumerate}
  \item Store the values $\vec{b}_{F}(u)$, $\mX_{[F,F]}(u,u)$, $\mY_{[F,F]}(u,u)$
  at $V_{map}^{G \rightarrow \overline{G}}(u)$.
  Store the off-diagonal entries of $\mY_{[F,F]}$, together with their weights, in the
  endpoints of mapped edges $E_{map}^{G \rightarrow \overline{G}}(e)$:
  this is possible because $\mY_{[F,F]}$ is a Laplacian.
  \item Set $k = \log (1/\epsilon)$ and  $\vec{x}_{F}^{(0)}(u) =
  \mX^{-1}_{[F,F]}(u,u) \cdot \vec{b}_{F}(u)$ for each $u \in F$.
  \item \label{step:Jacobiupdate} For $i = 1,\ldots,k$ do
\begin{enumerate}
	\item For each $u \in F$ set
\[
\vec{x}_F^{(i)}(u) \leftarrow
\mX^{-1}_{[F,F]}(u,u) \cdot \vec{b}_{F}(u)
-\mX^{-1}_{[F,F]}(u,u)\sum_{v \in F : \mY_{[F,F]}(u,v) \neq 0} \mY_{[F,F]}(v,u)
\vec{x}^{(i-1)}(v)
\]
using the matrix-vector multiplication primitive from Corollary~\ref{corollary:MatVec},
with results stored on all root vertices, $V_{map}^{G \rightarrow \overline{G}}(u)$.
\end{enumerate}
    \item Every vertex $u \in F$ returns $\vec{x}^{(k)}_F(u)$.
\end{enumerate}

The complexity of the algorithm is dominated by the number of rounds to implement
Step~\ref{step:Jacobiupdate}, which is given by Corollary~\ref{corollary:MatVec}.
As there are $k = \log (1/\epsilon)$ iterations, we have that
Step~\ref{step:Jacobiupdate} requires $O(\log(1/\epsilon))$ rounds in $G$.

By Lemma~\ref{lemma:Communication} and using the fact that weights of the network, and
hence the solution vector's magnitudes, are polynomially bounded, we can simulate the
algorithm in the original communication network $\overline{G}$ in $O((\rho
\sqrt{\overline{n}} \log \overline{n} + D) \log (3/\epsilon) )$ rounds.
\end{proof}

\paragraph{Approximating Schur complements using low congestion random
walks.}
We next show that $\alpha$-DD sets are useful when approximating Schur complements. A key ingredient to our construction is the following combinatorial view of Schur complements.

It is well known that $\mSC(\mL, \T)$ is a Laplacian matrix of a graph on vertices
in $\T = V \setminus F$.
For our purposes, it will be useful interpret $\mSC(\mL, \T)$ in terms
of random walks. To this end, given a walk $W=u_0,\ldots,u_l$ of length $\ell$ in $G$
with a subset of vertices $\T$, we say that $W$ is a \emph{terminal-free} walk if
$u_0, u_{\ell} \in \T$ and $u_1,\ldots,u_{\ell-1} \notin \T$.
\begin{lemma}
For any undirected, weighted graph $G$ and any subset of vertices $\T$, the Schur
Complement $\mSC(G,\T)$ is given as a union over all multi-edges corresponding to
terminal-free walks $u_0,\ldots,u_\ell$ with weight
  \[
    \frac{\prod_{0 \leq  i < k} \vec{w}_{u_i u_{i+1}}}
      {\prod_{0 \leq  i \leq k} \sum_{u_iv \in E(G)} \vec{w}_{u_i v}}
  \]
\end{lemma}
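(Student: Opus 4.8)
The plan is to prove this by a direct computation of the entries of $\mSC(G,\T)$, expanding the inverse $(\mL(G)_{[F,F]})^{-1}$ as a Neumann series. First I would write $\mL(G)_{[F,F]} = \mD_F - \mA_F$, where $\mD_F$ is the diagonal of weighted degrees (of the full graph $G$, restricted to rows/columns in $F$) and $\mA_F$ is the weighted adjacency restricted to $F$; note $\mD_F$ is invertible since every non-terminal has positive weighted degree, and $\mD_F^{-1} \mA_F$ is (a submatrix of) a random-walk transition matrix, hence has spectral radius strictly less than $1$ (because some escape probability to $\T$ is positive in every component of $F$). Therefore
\[
(\mL(G)_{[F,F]})^{-1} = (\mI - \mD_F^{-1}\mA_F)^{-1}\mD_F^{-1} = \sum_{j \ge 0} (\mD_F^{-1}\mA_F)^j \mD_F^{-1}.
\]
The $(u,v)$ entry of $(\mD_F^{-1}\mA_F)^j \mD_F^{-1}$ is a sum over length-$j$ walks $u = x_0, x_1, \dots, x_j = v$ with all $x_i \in F$, of $\prod_{i<j}\vec{w}_{x_i x_{i+1}} / \prod_{i \le j}\mD_F(x_i,x_i)$, which is exactly the claimed weight of the corresponding terminal-free walk (with the two extra endpoint factors $\sum_{u_0 v}\vec{w}_{u_0 v}$ and $\sum_{u_\ell v}\vec{w}_{u_\ell v}$ contributed when we prepend/append the terminal steps).

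Next I would plug this into the Schur complement formula $\mSC(G,\T) = \mL(G)_{[\T,\T]} - \mL(G)_{[\T,F]}(\mL(G)_{[F,F]})^{-1}\mL(G)_{[F,T]}$. The off-diagonal entries of $\mL(G)_{[\T,F]}$ are $-\vec{w}_{u_0 u_1}$ for $u_0 \in \T$, $u_1 \in F$ adjacent, and similarly for $\mL(G)_{[F,\T]}$; the two minus signs cancel the minus in front of the product, so the $(\T,F)\cdot(F,F)^{-1}\cdot(F,\T)$ term contributes, for each pair $u_0,u_\ell \in \T$, a sum over all walks $u_0,u_1,\dots,u_\ell$ ($\ell \ge 2$) with $u_1,\dots,u_{\ell-1}\in F$ of exactly the stated weight. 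The direct edges of $G$ between two terminals correspond to the length-$1$ terminal-free walks, and these are precisely captured by the off-diagonal part of $\mL(G)_{[\T,\T]}$ (again up to the sign bookkeeping: an edge of weight $w$ between terminals $a,b$ contributes $-w$ off-diagonal, matching a negative-adjacency entry, i.e. a positive edge weight $w$ in the Laplacian-of-a-graph sense). The remaining diagonal bookkeeping — the $\mL(G)_{[\T,\T]}$ diagonal entries versus the diagonal contributions of the quadratic-form term — works out automatically because $\mSC(G,\T)$ is known to be a graph Laplacian (stated after Definition~\ref{def:sc}), so it is determined by its off-diagonal entries; hence it suffices to match those, which is what the walk sum does. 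One should also argue that the edge $u_0 \to u_1$ and the edge $u_{\ell-1}\to u_\ell$ do not repeat terminals in the interior, which is built into the definition of terminal-free walk, and that the series converges absolutely so that the reordering into "walks" is legitimate.

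The main obstacle I expect is the careful sign and normalization bookkeeping at the two endpoints of each walk: the formula's denominator $\prod_{0 \le i \le \ell}\sum_{u_i v \in E(G)}\vec{w}_{u_i v}$ includes the degree factors at the two \emph{terminal} endpoints $u_0$ and $u_\ell$, which do \emph{not} come from $\mD_F^{-1}$ (since $u_0,u_\ell \notin F$) but rather must be reconciled with the structure of $\mL(G)_{[\T,\T]}$ and the off-diagonal entries $\mL(G)_{[\T,F]}$. In fact the cleanest route is probably to verify the identity by checking $\vec{x}^\top \mSC(G,\T)\vec{x}$ against the energy-minimization characterization in Lemma~\ref{lemma:SCEnergy}, or — even simpler — to observe that the stated weighted multigraph $H$ on $\T$ satisfies $\vec{b}_{u_0 u_\ell}^\top(\text{weight})$ summed appropriately reproduces the Neumann-series entries above, and then note that two graph Laplacians with the same off-diagonal entries are equal. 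I would present the Neumann series argument as the backbone and relegate the endpoint normalization to a short verification that the per-walk weight is exactly the product of: the two "escape" contributions $\vec{w}_{u_0 u_1}/(\sum_{u_0 v}\vec{w}_{u_0 v})$-style terms and the interior transition probabilities, matching term-by-term.
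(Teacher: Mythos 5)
Your Neumann-series approach is the natural one and is essentially how this characterization is derived in the source the paper draws on (DGGP19, Lemma~5.8); the paper itself states the lemma as folklore and gives no proof. Your expansion
\[
\left(\mL(G)_{[F,F]}\right)^{-1} = \sum_{j \ge 0} \left(\mD_F^{-1}\mA_F\right)^{j}\mD_F^{-1}
\]
and the resulting walk-sum formula for its entries are correct. The gap is in the very next step, where you assert that the "two extra endpoint factors $\sum_{u_0 v}\vec{w}_{u_0 v}$ and $\sum_{u_\ell v}\vec{w}_{u_\ell v}$" are "contributed when we prepend/append the terminal steps." They are not. The blocks $\mL(G)_{[\T,F]}$ and $\mL(G)_{[F,\T]}$ contribute only the bare edge weights $-\vec{w}_{u_0 u_1}$ and $-\vec{w}_{u_{\ell-1}u_\ell}$, with no degree normalization at the terminal endpoints. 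Carrying the computation through, the per-walk weight one actually obtains from $\mL_{[\T,F]}\mL_{[F,F]}^{-1}\mL_{[F,\T]}$ is
\[
\frac{\prod_{0 \le i < \ell} \vec{w}_{u_i u_{i+1}}}{\prod_{0 < i < \ell} \sum_{u_i v \in E(G)} \vec{w}_{u_i v}},
\]
with the degree factors only over the interior vertices $u_1,\dots,u_{\ell-1}$. A concrete check on $K_3$ with unit weights, $\T = \{a,b\}$, and interior vertex $c$ (all degrees equal to $2$) makes this vivid: the Schur complement is a single edge of weight $3/2$, obtained as $\vec{w}_{ab} + \vec{w}_{ac}\vec{w}_{cb}/d(c) = 1 + 1/2$; but the formula as printed in the lemma would yield $1/(d(a)d(b)) + 1/(d(a)d(c)d(b)) = 1/4 + 1/8 = 3/8$, which is wrong.

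In other words, the lemma statement as printed contains a typo (the denominator product should run over $0 < i < \ell$, i.e.\ interior vertices only, not $0 \le i \le k$; and the index $k$ should be $\ell$). You correctly sensed the tension in your "main obstacle" paragraph — the terminal degrees have no matrix to come from — but instead of letting the computation tell you the formula needed correcting, you tried to force-fit the endpoint factors into the argument. Once the weight formula is corrected to use interior degrees only, the remainder of your plan is sound: the direct $\T$--$\T$ edges are the length-one walks, the off-diagonal entries of $\mSC(G,\T)$ match the walk sum term-by-term, and because $\mSC(G,\T)$ is a graph Laplacian it is determined by its off-diagonal entries, so matching those finishes the proof.
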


The theorem below allows us to efficiently sample from this distribution of walks while paying a small cost in the approximation quality.

\begin{lemma}[Theorem~3.1 in~\cite{DGGP19}] \label{lem:sampleSC} Let $G=(V,E)$ be an undirected, weighted graph with
a subset of vertices $\T$.
Let $\epsilon \in (0,1)$ be an error parameter and $\mu = \Theta(\epsilon^{-2}\log n)$ be
some parameter related to the concentration of sampling. Let $H$ be an initially empty
graph, and for every edge $e=(u,v) \in G$, repeat $\mu$ times the following procedure, where
a random step from a vertex is taken proportional to the edge weights of its adjacent edges.
  \begin{enumerate}
    \item Simulate a random walk starting from $u$ until it hits $\T$ at vertex $t_1$.
    \item Simulate a random walk starting from $v$ until it hits $\T$ at vertex $t_2$.
    \item Let $\ell$ be the total length of this combined walk~(including edge $e$).
    Add the edge $(t_1,t_2)$ to $H$ with weight
    \[
    \frac{1}{\mu \sum_{i=0}^{\ell-1} (1/\vec{w}_{u_i,u_{i+1}})}.
    \]
  \end{enumerate}
  The resulting graph $H$ satisfies $\mL(H) \approx_{\epsilon} \mSC(\mL(G), \T)$
  with high probability.
\end{lemma}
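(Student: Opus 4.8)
The plan is to prove that $\mL(H)$ is an unbiased estimator of $\mSC(\mL(G),\T)$, and then to apply the matrix martingale concentration bound of Lemma~\ref{lemma:freedman} to the samples rescaled by $\mSC(\mL(G),\T)^{\dagger/2}$ in order to upgrade unbiasedness to a spectral approximation. Throughout, index the $m\mu$ independent samples by $i$; the $i$-th sample, performed for some edge $e$, traces a random terminal-free walk $W_i = z_0,\dots,z_\ell$ with endpoints $t_1 = z_0, t_2 = z_\ell \in \T$, and contributes the rank-one matrix $\mX_i\defeq\frac{1}{\mu R_{W_i}}\vec{b}_{t_1t_2}\vec{b}_{t_1t_2}^\tomato$, where $R_{W_i}\defeq\sum_j \vec{r}_{z_jz_{j+1}}=\sum_j 1/\vec{w}_{z_jz_{j+1}}$ is the resistance of the path traced by $W_i$, so that $\mL(H)=\sum_i\mX_i$.

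First I would compute $\E[\mL(H)]$ using the combinatorial random-walk description of $\mSC(\mL(G),\T)$ stated just above. Fix a terminal-free walk $W=z_0,\dots,z_\ell$. A direct calculation shows that $W$ arises in the sampling procedure exactly once for each ``seam position'' $j\in\{0,\dots,\ell-1\}$ --- namely when the edge $\{z_j,z_{j+1}\}$ is processed and the walk launched from $z_j$ realizes $z_j,z_{j-1},\dots,z_0$ while the walk from $z_{j+1}$ realizes $z_{j+1},\dots,z_\ell$ --- with probability $\frac{\prod_{i=0}^{\ell-1}\vec{w}_{z_iz_{i+1}}}{\vec{w}_{z_jz_{j+1}}\prod_{i=1}^{\ell-1}\deg(z_i)}$; the key structural point is that the weighted degrees of the \emph{internal} vertices telescope out in a way independent of $j$. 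Summing over $j$ produces a factor $\sum_j 1/\vec{w}_{z_jz_{j+1}}=R_W$ which cancels the $1/R_W$ in the assigned weight, so the expected coefficient of $\vec{b}_{z_0z_\ell}\vec{b}_{z_0z_\ell}^\tomato$ contributed by $W$ equals the multi-edge weight of $W$ in the preceding characterization of $\mSC(\mL(G),\T)$ (the degenerate case $\ell=1$, with $e$ having both endpoints in $\T$, is checked directly). Hence $\E[\mL(H)]=\mSC(\mL(G),\T)$.

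Next I would control the samples in the coordinates rescaled by $\mSC(\mL(G),\T)^{\dagger/2}$. For a sample tracing $W_i$ with endpoints $t_1,t_2$, the rescaled rank-one term has operator norm $\frac{1}{\mu R_{W_i}}\vec{b}_{t_1t_2}^\tomato\mSC(\mL(G),\T)^{\dagger}\vec{b}_{t_1t_2}$, which equals $\frac{1}{\mu R_{W_i}}$ times the effective resistance $\res_G(t_1,t_2)$ between $t_1$ and $t_2$: effective resistances between terminals are preserved under taking Schur complements (a consequence of Lemma~\ref{lemma:Inverse}), and by Rayleigh monotonicity (the series law along the path $W_i$) that resistance is at most $R_{W_i}$, so each rescaled sample has operator norm at most $1/\mu$. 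Now form the matrix martingale whose increments are the centered rescaled samples $\mSC(\mL(G),\T)^{\dagger/2}(\mX_i-\E\mX_i)\mSC(\mL(G),\T)^{\dagger/2}$: these have conditional mean zero by independence, are bounded by $R\defeq 2/\mu$ in norm, and --- since each rescaled $\mX_i$ is rank one with norm at most $1/\mu$ --- have predictable quadratic variation at most $\frac1\mu\|\sum_i\mSC(\mL(G),\T)^{\dagger/2}\E[\mX_i]\mSC(\mL(G),\T)^{\dagger/2}\|_2=\frac1\mu\|\mSC(\mL(G),\T)^{\dagger/2}\mSC(\mL(G),\T)\mSC(\mL(G),\T)^{\dagger/2}\|_2\le\frac1\mu$, so $\sigma^2\defeq 1/\mu$ works. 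Lemma~\ref{lemma:freedman} with dimension $d=|\T|\le n$ then shows that for $\mu=\Theta(\eps^{-2}\log n)$ with a large enough constant, $\|\mSC(\mL(G),\T)^{\dagger/2}(\mL(H)-\mSC(\mL(G),\T))\mSC(\mL(G),\T)^{\dagger/2}\|_2\le\eps$ with probability $1-n^{-\Omega(1)}$; since the range of $\mL(H)$ lies in that of $\mSC(\mL(G),\T)$, this is equivalent to $(1-\eps)\mSC(\mL(G),\T)\pe\mL(H)\pe(1+\eps)\mSC(\mL(G),\T)$, i.e. $\mL(H)\approx_\eps\mSC(\mL(G),\T)$ after absorbing the hidden constant into $\mu$.

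I expect the main obstacle to be the unbiasedness computation: one has to be careful about walk orientations, about which endpoint is launched from when the seam edge is processed, about boundary terms (seam at position $0$ or $\ell-1$, and walks of length one), and about walks that traverse the same edge several times, and then to verify the exact telescoping that turns $\sum_j\vec{r}_{z_jz_{j+1}}=R_{W}$ into the reciprocal of the weight assigned by the algorithm. Once unbiasedness and the effective-resistance monotonicity bound are in hand, the concentration step via Lemma~\ref{lemma:freedman} is routine.
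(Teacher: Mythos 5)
The paper itself does not prove this lemma — it is imported verbatim as Theorem~3.1 of \cite{DGGP19} — so there is no internal argument to compare against. Your proposal is a correct, self-contained reconstruction of the cited result, and it follows what is essentially the standard (and, to my knowledge, the original) route: show the estimator is unbiased against the random-walk decomposition of $\mSC(\mL(G),\T)$, bound each rescaled rank-one sample by $1/\mu$ using preservation of effective resistance under Schur complements plus the series-law estimate $\res_G(t_1,t_2)\le R_{W_i}$, and then apply a matrix Bernstein/Freedman bound. The one computation worth emphasizing as \emph{the} crux is exactly what you isolate: for a fixed terminal-free walk $W=z_0,\dots,z_\ell$, the probability of producing $W$ via seam position $j$ equals $\frac{\prod_i \vec{w}_{z_iz_{i+1}}}{\vec{w}_{z_jz_{j+1}}\prod_{1\le i\le \ell-1}\deg(z_i)}$ (internal degrees only, independent of $j$), so summing over $j$ yields the factor $R_W=\sum_j 1/\vec{w}_{z_jz_{j+1}}$ that cancels the $1/R_W$ in the assigned weight. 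You are also right that orientation is the only place one might fear a spurious factor of $2$: it does not occur because the sampling event for $W$ at seam $j$ is literally the same event as for the reversed walk $W^r$ at the mirror seam $\ell-1-j$ (same processed edge, same pair of launched walks), so the event-to-walk bijection is with \emph{unordered} walks, matching the Schur complement decomposition. The remainder (centering the increments, $R=2/\mu$, $\sigma^2\le 1/\mu$, Freedman with dimension $|\T|$, then translating the operator-norm bound in $\mSC^{\dagger/2}$-coordinates back to a two-sided Loewner bound using that $\mathrm{range}(\mL(H))\subseteq\mathrm{range}(\mSC)$) is routine and correct, with the $(1\pm\eps)$ vs.\ $\exp(\pm\eps)$ discrepancy absorbed into the constant in $\mu$, as you say.
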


The main idea to make use of the above theorem is to compute an $\alpha$-DD set $F$
using Lemma~\ref{lem:DDsubset} as this ensures that the random walks in the graph are
short in expectation. However, since we are dealing with weighted graphs, there might
be scenarios where the expected congestion of an edge is prohibitively large, which makes it
difficult to recursively repeat the algorithm. To alleviate this, we add new vertices
to the terminals, whenever they have too much congestion.
Note that because $G$ is distributed over $\overline{G}$ as a minor, we can only
accommodate small vertex congestion due to the need for each root node to inform the entire
supervertex.
As a result, our resulting congestion depends on the average degree, and we resolve
this via calling sparsification (Corollary~\ref{corollary:Sparsify}) at each step.

Let $W$ be the family of walks generated in Lemma~\ref{lem:sampleSC}. For $e \in E$,
let $\load_W(e)$ denote the number of walks from $W$ that use the edge $e$.
Algorith~\ref{algo:randomWalk} below computes $W$.
Note that it can also be run implicitly to generate the congestion on every edge
without exceeding the communication limit on any edges:
we simply pass around the congestion on every edge.

\begin{algorithm}[ht]
	\caption{Generate random walks from each edge until they hit terminals \label{algo:randomWalk}}
	\SetKwProg{Proc}{procedure}{}{}
	\Proc{$\RandomWalk(G, \T,\mu)$}{
	Set $W \gets \emptyset$ \\
	\For{$e=(u,v) \in E(G)$ in parallel}{
		\For{$i = 1,\ldots,\mu$ in parallel}
		{
			Generate a random walk $W(u,i)$ from $u$ until it hits $\T$ at vertex
			$t_1$. \label{line:rw1} \\
			Generate a random walk $W(v,i)$ from $v$ until it hits $\T$ at vertex
			$t_2$. \label{line:rw2} \\
			Set $W(e,i) = W(u,i) \cup (u,v) \cup W(v,i)$ and $W = W \cup \{W(e,i)\}$.
		}
	}
	\Return $W$.
	}
\end{algorithm}
\paragraph{Implementation of random walks.} We implement the random walks in lines \ref{line:rw1} and \ref{line:rw2} of \RandomWalk~as in Algorithm \ref{algo:randomWalk} as follows. When a non-terminal vertex gets a random walk edge into it, first we do a sum aggregation so that the leader of the corresponding cluster / super-vertex knows how many out-edges to compute. Below, we will ensure that all non-terminal vertices have low congestion as intermediate vertices of random walks, so we will focus on sampling a single out-edge. To do this, the vertex aggregates the sum of weights of out-edges, which we will call $W_{total}$. Now, the leader samples a random real number $r \in [0, W_{total}]$. Finally, the leader vertex does a binary search on the label of the out-edge and aggregates sums to figure out which out-edge corresponds to the sample $r$. All these steps can be done by Lemma \ref{lemma:Communication}.

We now move on to giving the distributed random walk based algorithm
that approximates Schur complements.
Psuedocode of this routine is in Algorithm~\ref{algo:randomWalkSchur}.
\begin{algorithm}[ht]
  \caption{Distributed Approximate Schur complements using random walks \label{algo:randomWalkSchur}}
  \SetKwProg{Proc}{procedure}{}{}
  \Proc{$\RandWalkSchur(G, \T,\epsilon,\gamma, \alpha)$}{
    Initialize $\widehat{\T} \leftarrow \T$ \\
    Set $H \gets \emptyset$  and $\mu \gets O(\eps^{-2}\log n)$ \\
    Implicitly compute the expected congestion of $W = \RandomWalk(G,\T,\mu)$ by
     propagating the expected congestion on vertices and edges evenly to neighbors
       for $O(\alpha \log{n})$ steps.
      \label{line:implicit} \\
    \For{all vertices $u$ with $\E[\load_W(u)] > \gamma$, in parallel} {
      Add $u$ to $\widehat{\T}$, $\widehat{\T} \gets \widehat{\T} \cup \{u\}$.
    }
    Set $W$ $\gets \RandomWalk(G,\widehat{\T},\mu)$ (with walks explicitly
    generated). \label{line:realwalks} \\
    Initialize the minor distribution of $H$ into $\overline{G}$ by associating
    each terminal $t$ with all vertices involved in all random walks that ended at $t$,
    and building $T^{H \rightarrow \overline{G}}(t)$ to be a spanning tree of
    all edges used, plus $T^{G \rightarrow \overline{G}}(u)$
    of all vertices on these walks.\\
    \For{$W(e,i) \in W$} {
      Let $(t_1,t_2)$ be the endpoints of the walk $W(e,i)$. \\
      Let $\ell$ be the length of $W(e,i)$.\\
      Set $H  = H \cup (t_1,t_2, w(t_1,t_2))$ with $
      \vec{w}(t_1,t_2) := 1/(\mu \sum_{i=0}^{\ell-1}
      (1/\vec{w}_{u_i,u_{i+1}})).$
    }
    \Return $H$, $\widehat{\T}$.
  }
\end{algorithm}

Here, we discuss subtleties in the distributed implementation of Algorithm \RandomWalk~and \RandomWalkSchur~(Algorithms \ref{algo:randomWalk} and \ref{algo:randomWalkSchur}) in the \congest~model.

\begin{lemma} \label{lem:randomWalkSchur}
Let $G=(V,E)$ be a graph that $\rho$-minor distributes into the communication network
$\overline{G} = (\overline{V}, \overline{E})$.
Let $F$ be an $\alpha$-DD set,
$\T = V \setminus F$, $\epsilon \in (0,1)$ be an error parameter
and $\gamma \geq 1$ be a congestion parameter.
Then the procedure $\RandWalkSchur(G, \T, \epsilon, \gamma, \alpha)$
outputs in $O(\alpha \gamma \epsilon^{-2} \log n (\rho\sqrt{\overline{n}} \log
\overline{n} + D))$ rounds a graph $H$ along with its
$\alpha \gamma \log{n}$-minor distribution into $\overline{G}$
such that with high probability,
\[
\mL(H) \approx_{\epsilon} \mSC(\mL(G), \widehat{\T})
\]
for some (slightly larger) superset $\widehat{\T} \supseteq \T$ of size
at most $n - |F| + O(\alpha m \epsilon^{-2} \log n /\gamma)$.
\end{lemma}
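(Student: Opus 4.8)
The plan is to read \RandWalkSchur\ as the distributed realization of the random-walk Schur-complement sampler of Lemma~\ref{lem:sampleSC}, so that the approximation guarantee is essentially inherited, and to spend the real work on three quantitative claims: the bound on $|\widehat{\T}|$, the congestion of the output minor distribution, and the round complexity. All three rest on one structural fact about $\alpha$-DD sets, which I would isolate first. If $w\in F$, the $\alpha$-DD condition $\mL_{w,w}\ge(1+\alpha)\sum_{w'\in F,\,w'\ne w}|\mL_{w,w'}|$ (Definition~\ref{def:alphaSDD}) says that a single random step from $w$, taken with probability proportional to incident edge weights, leaves $F$ --- i.e.\ lands in $\T\subseteq\widehat{\T}$ --- with probability at least $\tfrac{\alpha}{1+\alpha}$. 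Hence for any starting vertex the number of steps a walk spends before hitting $\widehat{\T}$ is stochastically dominated by a geometric variable with success probability $\tfrac{\alpha}{1+\alpha}$; its expectation is $O(1+\tfrac{1}{\alpha})=O(1)$ for $\alpha\ge 1$, and by the geometric tail each of the $O(m\mu)$ random-walk pieces produced by \RandomWalk\ has length $O(\alpha\log n)$ with high probability (union bounding over the polynomially many pieces).

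\emph{Size of $\widehat{\T}$.} Let $W=\RandomWalk(G,\T,\mu)$ be the family whose expected congestion is computed on Line~\ref{line:implicit}, with $\mu=O(\epsilon^{-2}\log n)$. Each walk $W(e,i)$ visits $O(1)$ vertices in expectation by the preliminary fact, so $\sum_{v}\E[\load_W(v)]=\sum_{(e,i)}\E[\,\text{number of vertices on }W(e,i)\,]=O(m\mu)=O(m\epsilon^{-2}\log n)$. By Markov's inequality the number of $u$ with $\E[\load_W(u)]>\gamma$ --- which is exactly the set added to $\T$ to form $\widehat{\T}$ --- is at most $O(m\epsilon^{-2}\log n/\gamma)$, giving $|\widehat{\T}|\le n-|F|+O(\alpha m\epsilon^{-2}\log n/\gamma)$ using $\alpha\ge 1$. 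One technical point: Line~\ref{line:implicit} evaluates the expected congestion only of the walk process truncated after $O(\alpha\log n)$ propagation steps, but since a walk survives that long with probability $n^{-\Omega(1)}$, the truncated expectation differs from the true one by $n^{-\Omega(1)}$, which can only shift membership in the threshold comparison for a lower-order number of vertices, so the bound stands.

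\emph{Correctness and output congestion.} After $\widehat{\T}$ is fixed, Line~\ref{line:realwalks} runs $\RandomWalk(G,\widehat{\T},\mu)$ with exactly the per-walk reweighting of Lemma~\ref{lem:sampleSC}, so $\mL(H)\approx_{\epsilon}\mSC(\mL(G),\widehat{\T})$ with high probability. Coupling the $\widehat{\T}$-process to the $\T$-process (a walk of the former is a prefix of the corresponding walk of the latter) shows the load of every vertex can only decrease when terminals are added; since $u\notin\widehat{\T}$ precisely when its $\T$-load has expectation at most $\gamma$, every non-terminal vertex, and likewise every edge with a non-terminal endpoint, has expected $\widehat{\T}$-load at most $\gamma$. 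Because the load of a fixed vertex or edge is a sum of independent $\{0,1\}$ indicators over the $m\mu$ walks, a Chernoff bound gives that simultaneously all non-terminal vertices and all edges carry $O(\gamma\log n)$ walks with high probability. The minor distribution assembled by \RandWalkSchur\ places into $S^{H\to\overline{G}}(t)$ the $G$-vertices on all walks ending at $t$, together with their $G$-supervertices and the trees of the walk edges, lifted to $\overline{G}$; since each vertex and edge of $G$ lies on $O(\gamma\log n)$ walks it contributes to $O(\gamma\log n)$ terminals of $H$, and composing with the $\rho$-minor structure of $G$ over $\overline{G}$ (Lemma~\ref{lemma:MinorCompose}), while using the walk-length bound $O(\alpha\log n)$ to keep every supervertex of size $n^{o(1)}$, yields the claimed $\alpha\gamma\log n$-minor distribution of $H$ into $\overline{G}$.

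\emph{Round complexity and the main difficulty.} Line~\ref{line:implicit} runs $O(\alpha\log n)$ rounds of weighted mass propagation along the edges of $G$, one scalar per vertex per round, each simulated on $\overline{G}$ in $O(\rho\sqrt{\overline{n}}\log\overline{n}+D)$ rounds by Lemma~\ref{lemma:Communication}. Generating the explicit walks on Line~\ref{line:realwalks} dominates: the walks have depth $O(\alpha\log n)$ whp, there are $\mu=O(\epsilon^{-2}\log n)$ per edge, routing the tokens through any vertex or edge --- of which there are $O(\gamma\log n)$ --- costs a further $O(\gamma)$ factor, and the single-out-edge sampling (binary search by sum-aggregations, as described just before the lemma) together with the spanning-tree setup of $H$'s minor distribution are all of the aggregation form supported by Lemma~\ref{lemma:Communication}; summing gives $O(\alpha\gamma\epsilon^{-2}\log n)$ rounds of communication on $G$, i.e.\ $O(\alpha\gamma\epsilon^{-2}\log n(\rho\sqrt{\overline{n}}\log\overline{n}+D))$ rounds on $\overline{G}$. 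The main obstacle is precisely the congestion-control step: showing that moving only the \emph{few} vertices of large expected load into $\widehat{\T}$ --- few because the $\alpha$-DD property forces the \emph{total} expected load to be $O(m\mu)$ --- already forces every surviving vertex and edge to carry only $O(\gamma\log n)$ walk tokens whp, since this one fact simultaneously controls the size of $\widehat{\T}$, the round complexity, and the output congestion; once it is in place, correctness is a direct invocation of Lemma~\ref{lem:sampleSC} and the remaining estimates are routine (with a final union bound over the polynomially many high-probability events).
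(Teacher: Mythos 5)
Your proof is correct and follows essentially the same route as the paper's: use the $\alpha$-DD property to bound expected walk length, apply Markov to the total expected congestion to bound $|\widehat{\T}|$, invoke Lemma~\ref{lem:sampleSC} for the spectral guarantee, argue by coupling that freezing high-congestion vertices can only decrease the remaining loads, and Chernoff to control the actual congestion and hence the output minor congestion and round count. Two small points worth flagging. First, you model the load at a fixed vertex as a sum of independent \emph{indicators} over the $m\mu$ walks; in fact a single length-$O(\log n)$ walk can revisit a vertex, so the per-walk contribution is a $[0,O(\log n)]$-bounded variable rather than Bernoulli (the paper handles this explicitly). Your $O(\gamma\log n)$ conclusion is still valid under the correct Chernoff variant for bounded variables, but the stated distributional assumption is wrong as written. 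Second, you repeatedly appeal to $\alpha \ge 1$ (both to write $1+\tfrac1\alpha = O(1)$ and to pass from $O(1)$ back up to $O(\alpha)$ in the final size bound), whereas the lemma as stated does not assume this; it holds for the algorithm's fixed $\alpha=4$ but should be noted as an additional hypothesis, since for small $\alpha$ the expected escape time is $\Theta(1/\alpha)$, not $O(\alpha)$. (For what it is worth, the paper's claim that the expected walk length is ``$O(\alpha)$'' is itself a slip in the other direction --- the correct bound is $O(1+1/\alpha)$ --- so your intermediate estimate is actually tighter than the paper's.) Your observation that the truncated implicit congestion estimate on Line~\ref{line:implicit} differs from the true expectation by only $n^{-\Omega(1)}$ is a clean detail the paper glosses over.
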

\begin{proof}
	The spectral guarantee $\mL(H) \approx_{\epsilon} \mSC(\mL(G),\widehat{\T})$ follows directly
	from Lemma~\ref{lem:sampleSC}. To bound the size of $\T$, first note that by
	definition $\T = V \setminus F$ and thus $|\T| = n - |F|$. Next, as $F$ is an
	$\alpha$-DD set, the expected length of a random walk that starts at an endpoint
	of any edge in $G$ and hits a vertex in $\T$ is $O(\alpha)$. Our algorithm
	simulates $O(m \epsilon^{-2}\log n)$ random walks for each edge, thus the total
	congestion generated by these walks is $O(\alpha m \epsilon^{-2}\log n)$.

  The latter gives that there can be at most $O(\alpha m \epsilon^{-2}\log n /
  \gamma)$ vertices whose expected congestion is larger than $\gamma$,
  and \RandWalkSchur~adds these vertices to the set $\T$.
  It follows that $|\widehat{\T}| \leq n - |F| +
  O(\alpha m \epsilon^{-2} \log n /\gamma)$.
  For each vertex $u$, the congestion incurred by other edges are independent random
  variables	bounded by the length of the walks, which is $O(\log n)$.
  So by a Chernoff bound, the congestion of all edges with expected congestion less
  than $\gamma > O(\log^2 n)$ is at most $O(\gamma)$ with high probability.
  So after line \ref{line:implicit} of Algorithm
  \ref{algo:randomWalkSchur} adds all vertices with high expected congestion
  into the terminals (to form $\widehat{\T}$),
  all subsequent vertices in $V \setminus \widehat{\T}$ have vertex
  congestion at most $O(\gamma)$ in the second random walk in line \ref{line:realwalks} with high probability.

We next study the round complexity.
To this end, recall that the expected length of each walk in $W$ is
$O(\alpha \log{n})$ with high probability.

When we are only passing the congestion of a vertex to neighbors,
that is, running the walks implicitly,
each round can be executed in one round of message passing
as described in Lemma~\ref{lemma:Communication}.
As we execute $\mu = O(\epsilon^{-2}\log n)$ rounds for each edge,
we have that the round complexity of the congestion estimation part of \RandomWalk\
is $O(\alpha \rho (\sqrt{\overline{n}} + D)\epsilon^{-2} \log^2n)$.

For the explicit generation, we aggregate, for all non-terminal vertices of $G$,
the walks that reach them, to the root node of the corresponding super vertex
in $\overline{G}$.
Then we broadcast these walks outward, we simulate the choice of a random edge
by total weights of edges in subtrees
(which we compute via Lemma~\ref{lemma:RootTrees}).
As the node congestions are at most $2 \gamma$, Lemma~\ref{lemma:Communication}
lets us perform these propagations in
$\overline{G}$ in $O(\alpha \gamma \epsilon^{-2}\log n
(\rho\sqrt{\overline{n}} \log \overline{n} + D))$ rounds.

Finally, to create the minor distribution of $H$, the graph with the new random
walk edges, into $\overline{G}$, we extend the terminals to include the supervertices
of all intermediate (non-terminal) vertices.
As the non-terminals have at most $O(\gamma)$ walks through them, the resulting
mapping is still a $\rho \gamma$-minor distribution into $\overline{G}$.
\end{proof}

We remark that in this scheme, the end points of the new edges in $H$
cannot actually know these edges in a centralized manner
(e.g. aggregate them at root vertices).
Instead, such walks are only passed to the vertices in $\overline{G}$ that
correspond to the first and last edges of the corresponding walk.
This is because we can only guarantee low node congestion of intermediate vertices.
Note that that in turn necessitates us sparsifying the graph at every
intermediate step as well.

\paragraph{Vertex Sparsifier Chain.}

Bringing together the above algorithmic components leads to
an algorithm for computing a vertex sparsifier chain,
whose pseudocode is given in Algorithm~\ref{algo:eliminate} below.

\begin{algorithm}[h]
	\caption{Eliminate a large subset of vertices for $d$ rounds}
    \label{algo:eliminate}
	\SetKwProg{Proc}{procedure}{}{}
	\Proc{$\Eliminate(G,d,\epsilon)$}{
		Set $\mL^{(0)} \gets \mL$ and $\widehat{\T}_0 = V$. \\
		Compute a spectral sparsifier $\mM^{(0)} \approx_{\epsilon} \mL^{(0)}$~(Corollary \ref{corollary:Sparsify}). \\
		\For{$0 < i \le d$ iteratively}
		{
			Let $F_{i}$ be an $\alpha$-DD set of $\mM^{(i-1)}$~(Lemma~\ref{lem:DDsubset}). \\
			Construct an operator $(\mZ^{(i)})^{-1}$ that approximates $\mM_{[F,F]}^{(i-1)}$~(Lemma~\ref{lem:Jacobi}). \\
			Compute $\tmM^{(i+1)} \approx_{\epsilon} \mSC(\mM^{i-1},\widehat{\T}_i)$~(Lemma~\ref{lem:randomWalkSchur}) with $\widehat{\T}_i = \widehat{\T}_{i-1} - F_{i} + U_{i}$, where $U_i$ is the set of extra vertices added to ensure low congestion. \\
			Compute an $\epsilon$-spectral sparsifier $\mM^{(i+1)}$ of
			$\tmM^{(i+1)}$.~(Corollary \ref{corollary:Sparsify})
		}
		Let $\mZ_1$, $\mZ_2$ be stored implicitly as the product of matrices using the Cholesky factorization (Lemma \ref{lemma:cholesky}). \\
		\Return $\mM^{(d)}, \mZ_1, \mZ_2$.
	}
\end{algorithm}

\begin{proof}[Proof of Lemma~\ref{lem:Elimination}]
We start by analyzing the round complexity of the algorithm. By Corollary
\ref{corollary:Sparsify}, there is a distributed algorithm for computing a sparsifier
with $O(n \log^{5} n \eps^{-2})$ edges in $O(\eps^{-2}\log^{7} n)$ rounds. We call
\RandWalkSchur~ with $\gamma = 1000 \cdot c \cdot \alpha \eps^{-2}\log^{6}n $, where
$c$ is a large enough constant. By Lemma~\ref{lem:DDsubset}, we find a $4$-DD set of
size $n/(8(1+4)) = n/40$. These together imply that the number of vertices in
$G^{(i)}$ after $i$ steps in our algorithm is
\[n_i \leq \left(1-\frac{1}{40} + \frac{1}{1000}\right)n_{i-1} \leq \left(1 - \frac{1}{50}\right) n_{i-1}.\]
By induction, $n_d \leq (\frac{49}{50})^{d}n$.

In an iteration of the algorithm, the dominating cost is (1) approximating the Schur complement and (2) computing the spectral sparsifier. By our choice of $\alpha$ and $\gamma$ and Lemma~\ref{lem:randomWalkSchur}, the number of rounds required to implement the first one is $O(\epsilon^{-4}\log^{7} n  (\rho\sqrt{\overline{n}} \log \overline{n} + D))$. The second one introduces a $O(\epsilon^{-2}\log^{7} n ) $ overhead, which then gives a round complexity of $O(\epsilon^{-6}\log^{14} n  (\rho\sqrt{\overline{n}} \log \overline{n} + D))$ per one step in~\Eliminate. Thus after $d$ steps, the round complexity is:
\[ O( (\epsilon^{-6}\log^{14} n )^d (\rho\sqrt{\overline{n}} \log \overline{n} + D)). \]
The error is $(1\pm\eps)^d$ because we do $d$ rounds of elimination, and each round accumulates $(1\pm\eps)$-multiplicative error by using Lemma~\ref{lem:Jacobi} to bound the quality of the inverse of $\mZ^{(i)}$, Lemma~\ref{lem:randomWalkSchur} to bound the quality of the Schur complement, and Corollary \ref{corollary:Sparsify} to spectrally sparsify the Schur complement, each of which accumulates error $\eps/4.$
\end{proof}

\section{Implications in Graph Algorithms}
\label{sec:Implications}
In this section we use Theorem \ref{thm:Main} to give improved algorithms for maximum flow, min-cost flow, and shortest paths with negative weights in the \congest~model (Theorems \ref{thm:max-flow}, \ref{thm:min-cost-flow}, and \ref{thm:shortest-paths}). Our goal is to show that our distributed Laplacian solver can be used to achieve improved complexities for these three problems, so we provide pseudocode in Algorithms \ref{alg:madry_renew}, \ref{algo:min-cost-flow_renew}, \ref{shortestpaths} (full details of these algorithms are given in Appendix~\ref{appendix:flow}), and analyze the runtimes in the distributed setting.
Our runtimes come from using the Laplacian system solver in Theorem \ref{thm:Main} to
implement an interior point method until the graph has a low amount of residual flow
remaining, which we then route with augmenting path~\cite{GU15} or shortest path with positive weights~\cite{CM20}, both taking $\widetilde O(n^{1/2} D^{1/4} + D)$ rounds per iteration.
In the remainder of this section, we formalize this reasoning. There are several
additional technical pieces, as the algorithms of \cite{M16,CMSV17} require changing
the graph by adding edges, etc. Throughout, we assume that our Laplacian system
solvers are exact -- it is justified in the papers \cite{M16,CMSV17} that solving to
accuracy $1/\poly(m,U)$, where $U$ is the maximum weight / capacity suffices to
implement the interior point methods.

In Section \ref{sec:flowrounding}, we simulate Cohen's flow rounding algorithm \cite{cohen1995approximate} in the \congest model. Given an $s-t$ flow $\Vec{f}$, it returns an integral $s-t$ flow $\vec{f'}$ in $O(\sqrt{\overline m}\log \overline m \cdot  D \cdot \log(1 /\Delta))$ rounds such that the flow value of $\vec{f'}$ is at least $\vec{f}$'s flow value. In Section \ref{sec:maxflow}, we implement maximum flow \cite{M16} in the \congest model, which takes $\widetilde{O}\left(\overline{m}^{3/7}U^{1/7}(\overline{n}^{o(1)}(\overline{n}^{1/2}+D)+\overline{n}^{1/2}D^{1/4})+\overline{m}^{1/2}D\right)$ rounds. In Section \ref{sec:mincostflow}, it takes  $\widetilde{O}\left(\overline{m}^{3/7}\overline{n}^{1/2}(n^{o(1)}+D^{1/4})+\overline{m}^{1/2}D\right)$ rounds to execute the min-cost flow \cite{CMSV17} that consists of Laplacian solver, flow rounding and single-source shortest path \cite{CM20} in the \congest model. In addition,  shortest paths with negative weights can be implemented in the same rounds since it utilizes min-cost flow to make edges non-negative and then compute the shortest paths \cite{CM20}.
\subsection{Flow Preconditioned Minor}
In this subsection, we define flow preconditioned minor.
Compared with the definition of $\rho$ congestion, this definition does not bound  the size of the pre-image of vertex mapping function $V^{G\rightarrow H}_{map}$.
Instead, we only allow a bounded number of vertices of $G$ that maps to more than one vertex of $H$.
\begin{definition}
\label{def:weak_minor}
Let $G = (V, E)$ be a minor of $H = (V_H, E_H)$ (as Definition~\ref{def:Minor}).
We say this minor is $(\rho, \alpha)$-flow-preconditioned if:
  \begin{enumerate}
    \item Each edge of $H$ appears as the image of the edge map
    $E_{map}^{G \rightarrow H}(\cdot)$, or in one of the trees
    connecting supervertices, $T^{G \rightarrow H}(v^{G})$ for some $v^{G}$,
    at most $\rho$ times.
\item There is a set $U \subset V$ with $|U| = \alpha$ such that the following two conditions hold:
\begin{enumerate}
\item 
$S^{G\rightarrow H}(v^G) = V_H$ 
and $T^{G\rightarrow H}(v^G)$ is a spanning tree of $H$ with depth at most the diameter of $H$  for each $v^G \in U$.
\item $|S^{G\rightarrow H}(v^G)| = 1$ for each $v^G \in V\setminus U$.
\end{enumerate}
\end{enumerate}

We say a $(\rho, \alpha)$-flow-preconditioned minor mapping is stored distributedly,
  or that $G$ is a $(\rho, \alpha)$-flow-preconditioned minor distributed over $H$ if it's stored by
  having all the images of the maps recording their sources (the same as Definition~\ref{def:Minor}).
\end{definition}

\begin{lemma}
  \label{lemma:Communication_edge_congested}
  Let $G = (V, E)$ be a graph with $n$ vertices and $m$ edges
  that $(\rho, \alpha)$-flow-preconditioned minor distributes into a communication network
  $\overline{G} = (\overline{V}, \overline{E})$ with $\overline{n}$ vertices,
  $\overline{m}$ edges, and diameter $D$.
  In the \congest model, the following operations can be performed
  using $O(t \alpha D)$
  rounds of communication on $\overline{G}$:
  \begin{enumerate}
    \item Each $V_{map}^{G \rightarrow \overline{G}}(v^{G})$ sends
    $O(t \log{n})$ bits of information to
    all vertices in $S^{G \rightarrow \overline{G}}(v^{G})$.
    \item Simultaneously aggregate the sum/minimum
    of $O(t \log{n})$ bits, from all vertices in
    $S^{G \rightarrow \overline{G}}(v^{G})$
    to
    $V_{map}^{G \rightarrow \overline{G}}(v^{G})$
    for all $v^{G} \in V(G)$.
  \end{enumerate}
\end{lemma}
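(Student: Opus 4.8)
The plan is to exploit the rigid structure of a $(\rho,\alpha)$-flow-preconditioned minor and avoid the general $\sqrt{\overline{n}}$-type routing of Lemma~\ref{lemma:Communication} entirely. By Definition~\ref{def:weak_minor}, all but $\alpha$ of the supervertices $S^{G\rightarrow\overline{G}}(v^{G})$ are singletons whose unique vertex is the root $V_{map}^{G\rightarrow\overline{G}}(v^{G})$, so for those $v^{G}$ both operations are vacuous and require no communication. The remaining $\alpha$ supervertices --- those for $v^{G}$ in the distinguished set $U$ --- are each given by a spanning tree $T^{G\rightarrow\overline{G}}(v^{G})$ of $\overline{G}$, rooted at $V_{map}^{G\rightarrow\overline{G}}(v^{G})$, of depth at most $D$. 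The idea is to process these $\alpha$ trees one at a time, each in $O(tD)$ rounds, for a total of $O(t\alpha D)$.

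First, within each such tree I would run a single $O(D)$-round wave outward from the root: a vertex that first receives the wave records the incident tree edge it arrived on as its parent pointer (using the stored lists $\{v^{G} : e^{\overline{G}} \in T^{G\rightarrow\overline{G}}(v^{G})\}$ from the distributed-storage convention of Definition~\ref{def:Minor}), and relays the wave along all its other tree edges; since the depth is at most $D$ this establishes consistent parent/child pointers in $O(D)$ rounds. For operation~(1) the root then pipelines its $O(t\log n)$-bit message down the tree along these pointers in $O(t+D)$ rounds (the crude bound $O(tD)$ also suffices). For operation~(2) I would run the reverse convergecast: each leaf sends its value up, each internal vertex combines the partial sums/minima from its children with its own value and forwards the result to its parent, again in $O(t+D)$ rounds. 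Summing over the $\alpha$ large supervertices handled in sequence yields $O(t\alpha D)$ rounds overall, and the singleton supervertices contribute nothing.

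The one point requiring care is why the $\alpha$ trees must be handled sequentially rather than in parallel: each is a \emph{spanning} tree of $\overline{G}$, so the trees overlap heavily (every vertex of $\overline{G}$ lies in all $\alpha$ of them, and an edge of $\overline{G}$ may lie in up to $\rho$ of them), and running them simultaneously would reintroduce a $\rho$ factor. Sequential processing is precisely what gives a bound depending only on $t$, $\alpha$, and $D$ and not on $\rho$ or $\overline{n}$; in particular condition~1 of Definition~\ref{def:weak_minor} is not needed for this lemma. Apart from this scheduling observation, everything reduces to standard tree broadcast and convergecast, so I do not anticipate a genuine obstacle.
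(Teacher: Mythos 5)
Your proof is correct and takes essentially the same approach as the paper's, which simply runs a BFS (for broadcast) or reverse BFS (for convergecast) on each of the $\alpha$ spanning trees $T^{G\rightarrow\overline{G}}(v^G)$ for $v^G \in U$, with the singleton supervertices requiring no communication. The paper states this in one line; you have unpacked the same idea with the correct supporting details (parent-pointer establishment via a depth-$\le D$ wave, pipelined broadcast/convergecast, and the observation that processing the $\alpha$ overlapping spanning trees sequentially is what removes any dependence on $\rho$ or $\overline{n}$).
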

\begin{proof}
Both operations can be achieved by running a BFS or a reverse BFS on $T^{G\rightarrow \overline G}(v^G)$ for each $v^G \in V$.
\end{proof}

We say a vector $\vec x \in \mathbb{R}^{V}$ on $G$ is distributed on $\overline G$ if for each $v^G$, all the vertices of 
$S^{G\rightarrow \overline G}_{map}(v^G)$ records $\vec x_{v^G}$.

We say a vector $\vec f \in \mathbb{R}^{E}$ defined on edges of $G$ is distributed to $\overline G$ if for each $e \in E$,
the two endpoints of  $E^{G
\rightarrow \overline G}_{map}(e)$ records $\vec f_e$.
Sometimes, we treat a vector $\vec f \in \mathbb{R}^{E}$  defined on edges of $G$ as a matrix, denoted as $\mM_{\vec f}$, such that $\mM_{\vec f, uv} = \vec f_{uv}$ if $(u, v)$ is an edge in $E$, otherwise $\mM_{\vec f, uv} = 0$.

\begin{lemma}
  \label{lemma:Communication_path_cycle}
  Let $\overline{G} = (\overline{V}, \overline{E})$ be a communication network $\overline{n}$ vertices,
  $\overline{m}$ edges, and diameter $D$.
  Let $\overline{\mathcal{P}}$ be a collection of paths/cycles of $\overline G$ such that every edge of $\overline G$ is used for at most $\rho$ times for some $\rho = O(\poly(\overline m))$, and for every two consecutive edges $(u, v), (v, w)$ of some path in $\overline{\mathcal{P}}$, $v$ knows that $(u, v), (v, w)$ are two consecutive edges of some path/cycle of  $\overline{\mathcal{P}}$.
  Then the following operations can be performed
  using $O(\rho \overline m^{1/2} \log \overline m + D)$ 
  rounds of communication on $\overline{G}$:
  \begin{enumerate}
	\item Every path/cycle of $\mathcal{P}$ is associated with a unique ID such that for each edge $e\in E$ in the path/cycle, 
	the two endpoints of $E^{G\rightarrow \overline G}_{map}(e)$ know the ID.
	    \item Let $\vec f$ be an edge vector of $G$. 
	    Compute the sum of $\vec f$ for each path/cycle of $\mathcal{P}$ and let the two endpoints of $E^{G\rightarrow \overline G}_{map}(e)$ know the result for each edge $e$ in the path/cycle.
  \end{enumerate}
\end{lemma}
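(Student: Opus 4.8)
The plan is to reduce both operations to running many ``token walks'' along the members of $\overline{\mathcal P}$ simultaneously, followed by a single convergecast and broadcast over a BFS tree of $\overline G$. A token that sits on an edge of a member of $\overline{\mathcal P}$ and advances exactly one edge per round can carry a running partial sum of $\vec f$ (updated by a local addition at each endpoint, using the value of $\vec f$ stored at the two endpoints of the relevant edge), and it is forwarded onto the correct successor edge at each vertex using the stored ``consecutive pair'' information. Since every edge of $\overline G$ lies on at most $\rho$ members of $\overline{\mathcal P}$, the total length of all paths and cycles is at most $\rho \overline m$.

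First, mark a random set of \emph{checkpoints} by selecting each vertex independently with probability $\overline m^{-1/2}$, and in addition mark both endpoints of every path (a path being a member of $\overline{\mathcal P}$ that is not a cycle). With high probability every maximal checkpoint-free stretch of a member of $\overline{\mathcal P}$ has length $O(\overline m^{1/2}\log\overline m)$, the only exception being a cycle that contains no checkpoint at all, which is then of length $O(\overline m^{1/2}\log\overline m)$ with high probability. Now launch a token from every vertex, all advancing in lock-step for $O(\overline m^{1/2}\log\overline m)$ rounds; each token carries the ID of its origin, the running sum of $\vec f$ along the edges it has crossed, and the smallest vertex ID it has seen, and it halts as soon as it reaches a checkpoint, reaches a path endpoint, or returns to its origin. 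The crucial point is a congestion bound: for any step $t$ and any edge $e$, each occurrence of $e$ inside a member of $\overline{\mathcal P}$ is traversed during step $t$ by at most one token --- the one that originated $t$ edges earlier along that member --- so by the $\rho$-edge-congestion hypothesis at most $\rho$ tokens traverse $e$ during step $t$, and hence step $t$ costs $O(\rho)$ rounds of $O(\log\overline m)$-bit messages; summing over the steps gives $O(\rho \overline m^{1/2}\log\overline m)$ rounds. At the end, each checkpoint learns from the token originating at the preceding checkpoint the sum of $\vec f$ over the segment between them and the identity of that preceding checkpoint, while a checkpoint-free cycle is already fully resolved: its smallest-ID vertex has received its own token back, hence knows the cyclic sum and the minimum ID on the cycle, so it serves as the canonical representative, and one further lock-step sweep of the same length distributes this cycle identifier and the total sum to both endpoints of every edge of the cycle.

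It remains to handle the members of $\overline{\mathcal P}$ that contain at least one checkpoint --- this includes every path, since its two endpoints are checkpoints. Such a member with $\ell$ edges has been cut into $O(\ell/\overline m^{1/2})$ segments with high probability, so the total number of segments over all of them is $O(\rho \overline m^{1/2})$; the checkpoints (as super-vertices) together with these segments (as super-edges, each labelled by its segment sum, and with the super-edge adjacencies known locally at the checkpoints) form a ``contracted'' instance of total size $O(\rho \overline m^{1/2})$. Fix a BFS tree of $\overline G$ rooted at an arbitrary leader; by pipelined convergecast the leader collects all $O(\rho \overline m^{1/2})$ super-edge records in $O(\rho \overline m^{1/2} + D)$ rounds, computes identifiers and total sums for all of these paths and cycles internally, and broadcasts the $O(\rho \overline m^{1/2})$ answers back in $O(\rho \overline m^{1/2} + D)$ rounds. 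A final lock-step sweep of length $O(\overline m^{1/2}\log\overline m)$ pushes forward from each checkpoint, through the edges of the segment(s) it begins, the identifier of the enclosing path/cycle together with its total sum, so that both endpoints of every edge on every member of $\overline{\mathcal P}$ end up knowing the required identifier and the required sum. Adding the per-phase costs gives the claimed $O(\rho \overline m^{1/2}\log\overline m + D)$ bound, the $\log\overline m$ factor also absorbing lower-order overheads such as the bit-lengths of identifiers and counts (which is where the hypothesis $\rho = O(\poly(\overline m))$ enters).

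The step I expect to be the main obstacle is the congestion analysis of the lock-step sweeps: one must verify that at each of the $O(\overline m^{1/2}\log\overline m)$ steps every edge of $\overline G$ is traversed by only $O(\rho)$ tokens --- which rests both on the $\rho$-edge-congestion hypothesis and on the sweeps being perfectly synchronized, so that the distinct tokens meeting at an edge correspond to distinct occurrences of that edge in members of $\overline{\mathcal P}$ --- and that a vertex can always forward an incoming token onto the successor edge of the \emph{correct} member of $\overline{\mathcal P}$, which may require tagging a token with a small amount of extra identifying data to disambiguate among the up to $\rho$ members sharing its incoming edge. A secondary delicacy is that a cycle has no a priori canonical start vertex, which is why short and checkpoint-free cycles are handled by the ``launch a token from every vertex and let the minimum ID win'' device, and longer cycles are routed through the contracted instance, where the leader breaks ties. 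The remaining ingredients --- pipelined convergecast and broadcast over a BFS tree, and the with-high-probability bound on segment lengths --- are routine.
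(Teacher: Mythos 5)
Your proposal follows essentially the same approach as the paper's proof: randomly mark $\Theta(\overline m^{-1/2})$-density landmark vertices (plus path endpoints), explore locally for $O(\overline m^{1/2}\log\overline m)$ steps to discover the landmark-to-landmark segments, ship the resulting $O(\rho\overline m^{1/2}\log\overline m)$ segment summaries to a leader over a global BFS tree to assign consistent IDs (and for part~2, total sums), broadcast back, and handle the short landmark-free cycles separately with a local ``smallest-ID wins'' contest. The paper states the local exploration as a BFS from each sampled vertex on the unrolled disjoint-paths graph and leaves the edge-congestion argument implicit; your version phrases the same exploration as lock-step token walks launched from every vertex and spells out the key congestion invariant (at each step each occurrence of an edge in $\overline{\mathcal P}$ is crossed by at most one token, hence $O(\rho)$ crossings per edge per step), which is precisely the point that makes the $O(\rho\overline m^{1/2}\log\overline m)$ bound go through. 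The treatment of checkpoint-free cycles and the $O(\rho\overline m^{1/2}+D)$ pipelined convergecast/broadcast match the paper. So your argument is correct and is a more detailed rendering of the same proof, not a genuinely different route.
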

\begin{proof}
We view $\overline {\mathcal{P}}$ as a graph such that every vertex and edge appears once by treating each appearance of the same vertex/edge as a new vertex/edge.
The resulted graph, denoted as $G$, is a graph of  $O(\rho \overline m)$ vertices and edges that corresponds to a set of edge disjoint paths.

We sample each vertex of $G$ with probability $\log \overline m / \overline m^{1/2}$. 
At most $O(\rho \overline m^{1/2} \log \overline m)$ vertices are sampled with high 
probability, and for each simple path of length $\overline m^{1/2}$ that corresponds to an induced subgraph of $G$, at least one vertex of the simple path is sampled. 
Each sampled vertices performs a BFS on $G$ until it reaches another sampled vertex or an endpoint of some path in $G$.
We aggregate the result of the BFS (the visit of sampled vertices or endpoints of some path in $G$) to an arbitrary vertex of $\overline G$ in $O(\rho \overline m^{1/2} \log \overline m + D)$ rounds, and the IDs for paths/cycles of length at least $\overline m^{1/2}$ can be assigned and broadcast to each vertex in these paths/cycles in $O(\rho \overline m^{1/2} \log \overline m + D)$ rounds.

Then, all the paths/cycles that do not contain any sampled vertex initiate a BFS from each vertex in the paths/cycles such that if two BFS collide, only the one initiated by small vertex ID is kept. 
Since every path/cycle that does not contain any sampled vertex is of length at most $O(\overline m^{1/2})$,
this step can be done in $O(\overline m^{1/2})$ rounds, and afterwards the IDs for these paths/cycles can be computed and broadcast to each vertex on these paths/cycles in $O(\overline m^{1/2})$ rounds.

Hence, the first operation can be done in $O(\rho \overline m^{1/2} \log \overline m + D)$ rounds.
The second operation can also be done in $O(\rho \overline m^{1/2} \log \overline m + D)$ rounds using sampled vertices in a way similar as the first operation.
\end{proof}

\subsection{Basic Operations on Flow Preconditioned Minor}

In the rest of this section, we will use the following basic operations on flow preconditioned minor in our algorithms: (Let $\overline{G} = (\overline V, \overline E)$ be a communication network,
and	$G = (V, E)$ be a graph that is a flow-preconditioned minor distributed to $\overline {G}$.)
\begin{enumerate}
\item Local Edge Vector Operation: 
Let $\vec f^{(1)}, \vec f^{(2)}, \dots, \vec f^{(t)}$ be $t$ edge vectors of $G$ that are distributed on $\overline G$.
Compute $\vec f$ that is an edge vector of $G$ distributed on $\overline G$ such that $\vec f_e$ is a function of $\vec f^{(1)}_e, \vec f^{(2)}_e, \dots, \vec f^{(t)}_e$ for each edge $e \in E$.
\item Local Vertex Vector Operation: Let $\vec x^{(1)}, \vec x^{(2)}, \dots, \vec x^{(t)}$ be $t$ vertex vectors of $G$ that are distributed on $\overline G$.
Compute $\vec x$ that is a vertex vector of $G$ distributed on $\overline G$ such that $\vec x_{v^G}$ is a function of $\vec x^{(1)}_{v^G}, \vec x^{(2)}_{v^G}, \dots, \vec x^{(t)}_{v^G}$ for each vertex $v^G \in V$.
\item Norm Operation: For a vertex vector $\vec x \in  \mathbb{R}^{V}$ or an edge vector $\vec f \in \mathbb{R}^E$ on $G$ distributed to $\overline G$ and a $p > 1$, 
compute $p$-norm of $\vec x$ or $\vec f$ and broadcast the result to each vertex of $\overline G$. 
\item Coordinate Selection Operation: For an edge vector $\vec f \in  \mathbb{R}^{E}$ of $G$ distributed
on $\overline G$ and an integer $k$, identify top $k$ coordinates of $\vec f$ with largest absolute value. 
\item Matrix Vector Multiplication Operation: For an edge vector $\vec f \in  \mathbb{R}^{E}$  of $G$ and a vertex vector $\vec x \in  \mathbb{R}^{V}$ on $G$ both distributed to $\overline G$, compute 
$\mM_{\vec f} \vec x$ that is distributed to $\overline G$. 

\end{enumerate}

We prove the following lemma to bound the number of rounds to perform each basic operation.

\begin{lemma}\label{lem:basic_operation}
Let $\overline{G} = (\overline V, \overline E)$ be a communication network with $\overline{n}$ vertices and $\overline{m}$ edges,
and	$G = (V, E)$ be a graph that is $(\rho, \alpha)$-flow-preconditioned minor distributed to $\overline {G}$. 
Then, we have

\begin{enumerate}
\item an algorithm to perform a local vertex vector operation or a local edge vector operation in $O(1)$ rounds;
\item an algorithm to perform a norm operation in $O(D)$ rounds;
\item an algorithm to perform a coordinate selection operation for $\vec f$ in $O(D \cdot \poly(\log \beta))$ rounds,
where $\beta = \frac{\max_{e\in E} |\vec f_e|}{\min_{e, e'\in E: \vec f_e \neq \vec f_{e'}} |\vec f_e - \vec f_{e'}|}$;
\item an algorithm to perform a matrix vector multiplication operation in 
$O(\rho + \alpha D)$ rounds.
\end{enumerate}
\end{lemma}
\begin{proof}
The local vertex vector operation or local edge vector operation can be computed locally by each vertex of $\overline G$.

The norm operation can be computed by a BFS on $\overline G$ such that every $G$ by aggregating the sum of $\vec x_{v^G}$ for each $v^G \in V$.

The coordinate selection operation can be computed by binary searching $k$-th largest absolute value of $\vec f_e$ among all the $e \in E$ and counting the number of coordinates with absolute value greater than or equal to the value binary searched.

The matrix vector multiplication operation can be implemented by computing $\vec f_{uv} \vec x_v$ at one endpoint of $E^{G \rightarrow \overline G}_{map}(uv)$ in $S^{G \rightarrow \overline G}_{map}(u)$ and taking the sum of $\sum_{v: uv \in E} \vec f_{uv} \vec x_v$ for each $u$ by $T^{G\rightarrow \overline G}_{map}(u)$. The number of rounds required is by Lemma~\ref{lemma:Communication_edge_congested}.
\end{proof}

\subsection{Flow Rounding}
\label{sec:flowrounding}

We simulate the flow rounding algorithm by Cohen~\cite{cohen1995approximate} in the \congest~model as a subroutine for maximum flow and min-cost flow.
The algorithm by Cohen~\cite{cohen1995approximate} is summarized as Algorith~\ref{algo:rounding}.

\begin{lemma}[Proposition 5.3 of~\cite{cohen1995approximate}]
	Let $G = (V, E)$ be a graph with $n$ vertices and $m$ edges,
	 $f:  E \rightarrow \mathbb{R}^{\geq 0}$ be a $s$-$t$ flow function,
	and $\Delta$ be a real value such that $1/\Delta$ is a power of $2$ and $f(e)$ is an integral multiplication of $\Delta$ for every $e \in E$.
	Then
	\begin{enumerate}
		\item Algorithm \FlowRounding~rounds $f$ on edge $e\in E$ to $\lfloor f(e) \rfloor$ or  $\lceil f(e) \rceil$ such that the resulted flow has the total flow value not less than $f$.
		 \item If the total flow value of $f$ is integral and there is an integral cost function  $c:  E \rightarrow \mathbb{Z}^{\geq 0}$, then
		 Algorithm \FlowRounding~rounds $f$ on edge $e\in E$ to $\lfloor f(e) \rfloor$ or  $\lceil f(e) \rceil$ such that the resulted flow has the total flow value not less than $f$, and the total cost not more than $f$.
	\end{enumerate}
\end{lemma}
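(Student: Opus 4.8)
The plan is to follow Cohen's bit-by-bit rounding analysis \cite{cohen1995approximate}, observing that \FlowRounding~(Algorithm~\ref{algo:rounding}) is a faithful restatement of that procedure and only reorganizes the data over $\overline G$, so its output (and hence correctness) is inherited. Write $\Delta = 2^{-K}$ and set $\Delta_i = 2^{i-1}\Delta$ for $i = 1,\dots,K$; the algorithm runs in $K = \log_2(1/\Delta)$ phases, phase $i$ doubling the granularity from $\Delta_i$ to $\Delta_{i+1}$. I would first isolate the structural fact that drives everything: at the start of phase $i$ all flow values are multiples of $\Delta_i$, and the subgraph $G_i$ consisting of edges whose current value is an \emph{odd} multiple of $\Delta_i$ has every vertex other than $s,t$ of even degree --- immediate from reading flow conservation modulo $2\Delta_i$. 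Hence $G_i$ decomposes into edge-disjoint cycles together with some number of $s$--$t$ paths. The algorithm simultaneously routes an extra $\pm\Delta_i$ of flow around each cycle and along each $s$--$t$ path; edge-disjointness makes this preserve conservation at every vertex, and each affected edge value moves from an odd multiple of $\Delta_i$ to an even one, i.e.\ to a multiple of $\Delta_{i+1}$. So after phase $i$ all values are multiples of $\Delta_{i+1}$, and after phase $K$ they are integral.

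The crux --- and the step I expect to take the most care --- is the drift invariant, proved by induction on $i$: for every edge $e$, writing $f^{(i)}$ for the flow after phase $i$ and $f^{(0)}=f$,
\[
\Big\lfloor \tfrac{f(e)}{\Delta_{i+1}} \Big\rfloor \Delta_{i+1}
\ \le\ f^{(i)}(e)\ \le\
\Big\lceil \tfrac{f(e)}{\Delta_{i+1}} \Big\rceil \Delta_{i+1}.
\]
The base case holds since $f$ is a multiple of $\Delta$. For the step, an edge untouched in phase $i$ already has a value that is a multiple of $\Delta_{i+1}$ lying in the coarser bracket, hence in the new one; an edge touched in phase $i$ has value $f^{(i-1)}(e)$ strictly between its two neighbouring multiples of $\Delta_{i+1}$, and $f^{(i)}(e)$ is exactly one of these two, so a short computation using $\Delta_i = \Delta_{i+1}/2$ and the inductive bracket for $f^{(i-1)}(e)$ shows both candidates remain inside the new bracket. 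Taking $i=K$ gives $\Delta_{K+1}=1$, so $\lfloor f(e)\rfloor \le f^{(K)}(e) \le \lceil f(e)\rceil$ with $f^{(K)}(e)\in\mathbb Z$, i.e.\ $f^{(K)}(e)\in\{\lfloor f(e)\rfloor,\lceil f(e)\rceil\}$; in particular $f^{(K)}(e)\ge 0$. The invariant is insensitive to \emph{which} of the two candidates an edge lands on, so the direction in which each cycle or path is routed is a free choice to be used below.

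For part~1 I would orient every $s$--$t$ path of $G_i$ to carry $\Delta_i$ from $s$ to $t$, so each phase changes the total flow value by a non-negative amount while cycles leave it unchanged; hence the rounded flow has value $\ge |f|$. For part~2, with $|f|\in\mathbb Z$: since $\Delta_i \le \tfrac12$ in every phase, $|f|/\Delta_i$ is even, and reading flow conservation at $s$ modulo $2$ shows that $s$ (hence $t$) has even degree in $G_i$; therefore $G_i$ contains an even number of $s$--$t$ paths, which can be paired so that routing $+\Delta_i$ along one and $-\Delta_i$ along its partner (equivalently, $\Delta_i$ around their concatenation) keeps the flow value exactly $|f|$, while every genuine cycle already does. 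Finally, on each cycle --- genuine or arising from such a pair --- the two routing directions change $\sum_e c(e) f(e)$ by opposite amounts, so choosing the non-positive one in each case (still consistent with the invariant, by the previous paragraph) makes the total cost non-increasing across all phases. Thus the rounded flow has value $\ge |f|$ and cost $\le \sum_e c(e) f(e)$, as claimed.
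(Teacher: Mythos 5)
The paper does not contain its own proof of this lemma; it cites it as Proposition~5.3 of Cohen's paper, so the comparison is to Cohen's argument. Your reconstruction is correct and captures all three essential ingredients: (i)~flow conservation modulo $2\Delta_i$ forces the ``odd-multiple'' subgraph to have even degree at every internal vertex, so the Eulerian partition in phase~$i$ consists of edge-disjoint cycles (and at most one $s$--$t$ path); (ii)~the drift invariant, established by induction on phases, confines each edge's final value to $\{\lfloor f(e)\rfloor,\lceil f(e)\rceil\}$ no matter how each cycle or path is oriented (your subcase analysis on whether $f(e)$ is or is not a multiple of $\Delta_{i}$ and $\Delta_{i+1}$ indeed shows that the two candidate endpoints $a$ and $a+\Delta_{i+1}$ coincide with $\lfloor f(e)/\Delta_{i+1}\rfloor\Delta_{i+1}$ and $\lceil f(e)/\Delta_{i+1}\rceil\Delta_{i+1}$); and (iii)~the residual freedom in traversal direction is then spent on the flow-value constraint in part~1 and the cost constraint in part~2.

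Two small cosmetic remarks. First, the algorithm in the paper handles the non-integral case by adding a $(t,s)$ edge carrying the total flow, after which the partition is entirely cycles and the one cycle through $(t,s)$ is forced to increment it; you instead reason directly about a single $s$--$t$ path oriented from $s$ to $t$. These are the same argument in two presentations. Second, in part~2, once you show $|f|/\Delta_i$ is even and hence $s,t$ both have even degree in $G_i$, the Eulerian partition consists \emph{only} of cycles, so there are in fact zero $s$--$t$ paths; your ``even number of $s$--$t$ paths, which can be paired'' is therefore vacuously true but phrased more generally than needed. Neither of these is a gap, and the conclusions --- value nondecreasing in part~1, value preserved exactly and cost nonincreasing in part~2 --- follow as you state.
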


\begin{algorithm}[ht]
\caption{Flow Rounding \label{algo:rounding}}
\SetKwProg{Proc}{procedure}{}{}
\Proc{$\FlowRounding(G, s, t, f, c, \Delta)$}{
	\If {the total flow of $f$ is not integral} {
		Add an edge from $t$ to $s$ with flow value the same as total flow.\\
	}
	\While{$\Delta < 1$} {\label{ln:FlowRoundingWhile}
		$ E' \assign \{(u, v) \in  E: f(u, v)/ \Delta \text{ is } odd\}$\\
		Find an Eulerian partition of $ E'$ (ignoring the directions of the edges)\\
		\For{every cycle of the Eulerian partition of $ E'$} {

			\If {cycle contains the edge $(t, s)$} {
			Traverse the cycle such that edge $(t, s)$ is a forward edge.
			}
			\ElseIf {cost function $c$ exists} {
				Traverse the cycle such that the sum of costs on forward edges is no more than the sum of costs on backward edges.
			}
			\Else{
				Traverse the cycle arbitrarily.
			}
		}
		\For{every edge $(u, v)$ $ E'$} {
			\If {$(u, v)$ is a forward edge w.r.t the traversal of the path containing $(u, v)$} {
				$f(u, v) \assign f(u, v) + \Delta$\\
			}
			\Else {$f(u, v) \assign f(u, v) - \Delta$\\}

		}
		$\Delta \assign 2\Delta$\\
	}
	\Return $f$.
}
\end{algorithm}

In this section, we present two distributed simulations of Cohen's algorithm, one for maximum flow rounding, and another one for min-cost flow rounding.
The underlying reason of  two algorithms is that 
the strategy of choosing path directions (Line 8-11 of Algorithm~\ref{algo:rounding}) are different: the former one always choose direction that does not decrease the flow value, and later one always chooses the direction that does not increase total cost.

\begin{lemma}\label{lem:flow_rounding}
	Let $\overline{G} = (\overline V, \overline E)$ be a communication network with $\overline{n}$ vertices, $\overline{m}$ edges and diameter $D$,
	$G = (V, E)$ be a flow network containing two vertices $s$ and $t$ such that $G$ is a $(\rho, \alpha)$-flow-preconditioned minor distributed to $\overline {G}$,
	$\Delta$ be a real value such that $1/\Delta$ is an integer that is a power of $2$, and
	(for two vertices $s, t \in \overline V$) $\vec f:$ be a $s$-$t$ flow function on $G$ such that $f(e)$ is an integral multiplication of $\Delta$ for every $e \in \overline E$, and $\vec f$ is distributed to $\overline G$.
	Then
	\begin{enumerate}
	\item
	 	There is a distributed algorithm which runs in $O((\rho \sqrt{\overline m}(\log \overline m)^2 + D) \cdot \log (1 / \Delta))$ rounds
		to compute an integer $s$-$t$ flow function $f' : E \rightarrow \mathbb{Z}^+$ such that the flow value of $f'$at least that of $f$ and $f'(e) \in \{\lfloor f(e) \rfloor, \lceil f(e) \rceil\}$ for every $e \in \overline E$.
	\item If the total flow value of $f$ is integral and there is an integral cost function  $c$, then
		 there is a distributed algorithm
		$O((\rho \sqrt{\overline m}(\log \overline m)^2 + D) \cdot \log (1 / \Delta))$ rounds
		to compute an integer $s$-$t$ flow function $f' : E \rightarrow \mathbb{Z}^+$
		 such that the resulted flow has the total flow value not less than $f$, and the total cost not more than $f$.

	\end{enumerate}
	\end{lemma}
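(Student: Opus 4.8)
The plan is to implement Cohen's rounding procedure \FlowRounding\ (Algorithm~\ref{algo:rounding}) essentially verbatim in the \congest\ model, one outer iteration of its \textbf{while} loop (Line~\ref{ln:FlowRoundingWhile}) at a time, and to show that each such iteration can be carried out in $O(\rho\sqrt{\overline m}(\log\overline m)^2 + D)$ rounds; since the loop runs $\log(1/\Delta)$ times, the stated bound follows. The two parts of the lemma differ only in how the cycles of the Eulerian partition are oriented (Lines~8--11 of Algorithm~\ref{algo:rounding}), so the bulk of the simulation is common to both. Correctness of the output --- that $f'$ is integral, that $f'(e)\in\{\lfloor f(e)\rfloor,\lceil f(e)\rceil\}$, and that the flow value (resp.\ cost) does not decrease (resp.\ increase) --- is exactly Cohen's Proposition~5.3, which we are allowed to cite; hence it suffices to make our distributed implementation perform the \emph{same} per-edge updates as the centralized algorithm, and no new combinatorial argument is needed.

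Fix one iteration with the current $\Delta$ (a power of two with $\Delta<1$). The set $E'=\{e: f(e)/\Delta\text{ odd}\}$ is known locally, since $f$ is distributed over $\overline G$ and so both endpoints of $E_{map}^{G\to\overline G}(e)$ know $f(e)$. Because $f$ is an $s$-$t$ flow --- and, after the optional insertion of the $(t,s)$ edge, a circulation --- flow conservation read modulo $2\Delta$ shows that every vertex of $G$ has an even number of incident $E'$-edges, so $E'$ decomposes into edge-disjoint cycles. To realize such a decomposition we have each vertex $v^G$ pair up its incident $E'$-edges arbitrarily; following these pairings traces out the cycles, and the pairing at $v^G$ is precisely the ``these two edges are consecutive on a cycle'' data required as input by Lemma~\ref{lemma:Communication_path_cycle}. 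For the vertices $v^G\notin U$ (which satisfy $|V_{map}^{G\to\overline G}(v^G)|=1$) this pairing is purely local; for the $\alpha$ distinguished vertices $v^G\in U$, whose supervertex is a depth-$\le D$ spanning tree of $\overline G$, we aggregate the incident $E'$-edges to the root, pair them there, and broadcast the pairing back via Lemma~\ref{lemma:Communication_edge_congested}, at cost $\widetilde O(\alpha D)$ rounds (which will be subsumed by the other terms in the regimes where this lemma is applied). Since each edge of $\overline G$ is the image of at most $\rho$ edges of $G$, the cycles, pushed into $\overline G$, form a collection of closed walks using each edge of $\overline G$ at most $\rho$ times, so Lemma~\ref{lemma:Communication_path_cycle} applies with parameter $\rho$.

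We then invoke Lemma~\ref{lemma:Communication_path_cycle}: in $O(\rho\sqrt{\overline m}\log\overline m + D)$ rounds each cycle receives a unique ID known to (the $\overline G$-images of the endpoints of) its edges, and we obtain the primitive for summing an edge vector along each cycle. To orient the cycles we split into the two cases. In part~1 (max flow), if the value of $f$ was non-integral we first added a $(t,s)$ edge carrying that value (which is a multiple of $\Delta$, being a sum of multiples of $\Delta$); the single cycle containing it is oriented by one broadcast along that cycle so that $(t,s)$ is a forward edge, and all remaining cycles are oriented arbitrarily --- this never decreases the $s$-$t$ flow value, matching Cohen's rule; if the value was already integral, no $(t,s)$ edge is added and every cycle is oriented arbitrarily. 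In part~2 (min-cost), the total flow value is integral, so no $(t,s)$ edge is present; we use the cycle-aggregation primitive to compute for each cycle the signed cost sum $\sum c(e)$ relative to a fixed reference orientation, and each edge then adopts the orientation whose forward-cost is no larger, again matching Algorithm~\ref{algo:rounding}. Finally each edge locally sets $f(e)\leftarrow f(e)\pm\Delta$ according to its forward/backward status and $\Delta\leftarrow 2\Delta$. Summing over the $\le\log(1/\Delta)$ iterations, and accounting for a constant number of calls to Lemma~\ref{lemma:Communication_path_cycle} per iteration (plus the $O(\log\overline m)$-depth binary searches of Lemma~\ref{lem:basic_operation} used to locate the special cycle and to build the partition), yields the claimed $O((\rho\sqrt{\overline m}(\log\overline m)^2 + D)\log(1/\Delta))$ rounds.

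The main obstacle is the distributed Eulerian partition: converting the purely local fact ``$E'$ has even degree at every vertex'' into an explicit, low-congestion cycle decomposition whose adjacency data exactly meets the preconditions of Lemma~\ref{lemma:Communication_path_cycle}. All the delicate work is concentrated at the $\alpha$ vertices of $U$, whose supervertices span the whole communication network: pairing their incident $E'$-edges, and later propagating the chosen orientation (and, in part~2, the cycle cost sum) across such a supervertex, must be performed along its spanning tree within $O(D)$ rounds per special vertex and without exceeding the edge-congestion budget --- this is precisely where one must verify that the $\alpha$-dependent terms stay below the stated bound. By contrast, everything else --- reading off $E'$, the modular flow-conservation argument, and the final per-edge update of $f$ --- is local, and the global routing of cycle IDs and cost sums is delegated wholesale to Lemma~\ref{lemma:Communication_path_cycle}.
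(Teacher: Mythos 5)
Your overall plan---simulate \textsc{FlowRounding} one outer iteration at a time, build the Eulerian partition by pairing incident $E'$-edges at every vertex, and delegate cycle-level aggregation to Lemma~\ref{lemma:Communication_path_cycle}---matches the paper's proof. But the key technical step, pairing the incident $E'$-edges at the $\alpha$ special vertices $v^G \in U$ whose supervertex is a spanning tree of all of $\overline G$, is done incorrectly, and this is exactly where (as you correctly flag) "all the delicate work is concentrated."

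You propose to "aggregate the incident $E'$-edges to the root, pair them there, and broadcast the pairing back via Lemma~\ref{lemma:Communication_edge_congested}, at cost $\widetilde O(\alpha D)$ rounds." This fails for two reasons. First, a vertex $v^G\in U$ can have $\Theta(\overline m)$ incident $E'$-edges; Lemma~\ref{lemma:Communication_edge_congested} aggregates $O(t\log n)$ bits per supervertex in $O(t\alpha D)$ rounds, so aggregating $\Theta(\overline m\log\overline m)$ bits of edge identities to the root and broadcasting a pairing back is not an $\widetilde O(\alpha D)$-round operation. Second, and more fundamentally, pairing \emph{arbitrarily} at the root destroys the congestion guarantee needed to invoke Lemma~\ref{lemma:Communication_path_cycle}. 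The cycles you build do not just consist of images $E_{map}^{G\to\overline G}(e)$ of $E'$-edges; each time a cycle passes through a vertex $v^G\in U$, it must travel along $T^{G\to\overline G}(v^G)$ between the two paired edges' attachment points. With an arbitrary pairing these tree-paths can stack up, and a single edge of $T^{G\to\overline G}(v^G)$ near the root can be traversed $\Theta(\overline m)$ times, so your claim that "the cycles, pushed into $\overline G$, form a collection of closed walks using each edge of $\overline G$ at most $\rho$ times" is false---that bound only covers the image edges, not the connecting tree segments. The paper avoids this by pairing via a \emph{reverse BFS on $T^{G\to\overline G}(v^G)$}: each tree node pairs the edges it directly receives (its own incident $E'$-edges plus at most one edge forwarded from each child) and forwards at most one unpaired edge to its parent. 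This produces the pairing in $O(D)$ rounds without any global aggregation, and---because each tree edge carries at most one forwarded edge---guarantees that each $\overline G$-edge appears in the resulting closed walks at most $\rho + 2$ times, which is the hypothesis Lemma~\ref{lemma:Communication_path_cycle} needs. Without this tree-local pairing, the proof does not go through.
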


\begin{proof}
We first extend $G$ such that $S^{G \rightarrow \overline G}(s) = S^{G \rightarrow \overline G}(t) = \overline V$.
The resulted $G$ is a $(\rho +2, \alpha + 2)$-flow-preconditioned minor distributed on $\overline G$.
Then in $O(1)$ rounds, we can add edge $(t, s)$ to $G$ by letting $E^{G\rightarrow \overline G}_{map}(t, s) = \overline v \overline v$ for an arbitrary vertex $\overline v \in \overline V$. The resulted $G$ is still a $(\rho +2, \rho + 2)$-flow-preconditioned minor distributed on $\overline G$.

Throughout the algorithm, we view the flow function as a vector on $E$, denoted as $\vec f$, that is distribured to $\overline G$.

For a $\vec f$ and a fixed $\Delta$, $E'$ which is an edge set of $G$ can be identified in $O(1)$ rounds.
Now we show that an Eulerian partition of $E'$ can be determined in $O((\alpha +2) D)$ rounds.
By the first and second condition of the lemma, for any vertex $v^G \in V$, the number of edges in $E'$ incident to $v^G$ is always even.
Hence,
to construct an Eulerian partition of $E'$, we only need to pair all the incident edges in $E'$ for each vertex of~$V$.
This pairing process can be done in $O(1)$ rounds for all the vertices $v^G$ such that $|V^{G\rightarrow \overline G}(v^G)| = 1$.
For each vertex $v^G \in V$ such that $V^{G\rightarrow \overline G}(v^G) = \overline V$, 
we simulate the following algorithm in $O(D)$ rounds such that the pairing process are simulated:
\begin{itemize}
\item 
Run a reverse BFS on $T^{G\rightarrow \overline G}_{map}(v^G)$ such that for each $v^{\overline G} \in \overline V$:
Let $E'_{v^{\overline G}}$ be the union of the edges sent to $v^{\overline G}$ from all the children of $v^{\overline G}$ in $T^{G\rightarrow \overline G}(v^G)$ and all the edges in $E'$ incident to $v^G$ whose images by $E^{G\rightarrow \overline G}_{map}$ are edges incident to $v^{\overline G}$.
If $|E'_{v^{\overline G}}|$ is even, then pair all the edges in $E'_{v^{\overline G}}$ arbitrarily, otherwise, send one edge of $E'_{v^{\overline G}}$ to the parent of $v^{\overline G}$ in $T^{G\rightarrow \overline G}(v^G)$, and pair the remaining edges of $E'_{v^{\overline G}}$ arbitrarily.
\end{itemize}
In addition, since for every vertex $v^{\overline G}$ of $T^{G\rightarrow \overline G}(v^G)$, every child of $v^{\overline G}$ with respect to $T^{G\rightarrow \overline G}(v^G)$ sends the information of at most one edge to $v^{\overline G}$,
the Eulerian partition of $E'$ corresponds to a union of cycles of $\overline G$ such that every edge of $\overline G$ is used at most $\rho + 2$ times.

Lemma~\ref{lemma:Communication_path_cycle} gives that
for each cycle in the  Eulerian partition,
aggregating the required information takes
$\widetilde O(\overline m^{1/2} \log \overline m + D)$ rounds.
And since all the cycles are edge disjoint for $G$, 
the traversal of each cycle of the  Eulerian partition can be determined in $O(\rho \sqrt{\overline m}\log \overline m + D)$ rounds, and the direction of the traverse of each cycle can be broadcasted to each edge of the cycle in $O(\rho \sqrt{\overline m}\log \overline m + D)$ rounds.

Hence, simulating one iteration of the while loop on Line~\ref{ln:FlowRoundingWhile}
of Algorithm \FlowRounding\ takes $O(\rho \sqrt{\overline m}\log \overline m + D)$ rounds.
So the overall number of rounds needed for Algorithm \FlowRounding\ is
$O((\rho \sqrt{\overline m}\log \overline m + D) \cdot \log (1 / \Delta))$.
\end{proof}


\subsection{Maximum Flow}
\label{sec:maxflow}
In this subsection, we present a distributed exact maximum flow algorithm for flow network with integral capacity in $	\widetilde{O}\left(\overline{m}^{3/7}U^{1/7}\overline{n}^{o(1)}(\overline{n}^{1/2}D^{1/4}+D)+\overline{m}^{1/2}\right)$ rounds in the \textsf{CONGEST} model, where $U$ is upper bound of the capacities among all the edges.
Based on the distributed Laplacian solver, our algorithm simulate  the sequential exact maximum flow algorithm by Madry~\cite{M16}.
Madry's sequential algorithm is briefly summarized in Algorithm~\ref{alg:madry_renew}, and the details are given in Section~\ref{sec:maxflow_algo}. 

\begin{theorem}[\cite{GU15, CM20}]\label{thm:directed_reachbility}
Let $G$ be a (undirected or directed) graph with $n$ vertices, $m$ edges and undirected diameter $D$,  $s$ be a vertex of $G$, and $\vec w$ be an edge vector such that for any edge $(u, v)$ of $G$, vertices $u$ and $v$ know $w_{u, v}$.  
Assume in every round, two vertices can send $O(\log n)$ bit information to each other if there is an edge between them in $G$ no matter the direction of the edge.
Then there is a distributed SSSP algorithm that in $\widetilde O(n^{1/2} D^{1/4} + D)$ rounds computes the distances with respect to $\vec w$ from $s$ to all of its reachable vertices as well as an implicit shortest path tree rooted at $ s $ such that every vertex knows its parent in the shortest path tree.
\end{theorem}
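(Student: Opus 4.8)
Since Theorem~\ref{thm:directed_reachbility} is quoted from~\cite{GU15,CM20}, the plan is to assemble the statement from those two results, observing that each already operates in the directed \congest\ model under precisely the communication convention used here: every edge, irrespective of its orientation, carries $O(\log n)$ bits in \emph{each} direction per round, exactly as in the model description of Section~\ref{sec:prelimes}. First I would run the single-source reachability algorithm of~\cite{GU15} from $s$; in $\widetilde{O}(n^{1/2}D^{1/4}+D)$ rounds every vertex learns whether it lies in the reachable set $R$ of $s$, and each vertex of $\overline{V}\setminus R$ may immediately set its distance to $+\infty$ and its parent to $\bot$ and halt.

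Restricting attention to $R$, I would then invoke the exact directed non-negative-weight SSSP routine of~\cite{CM20}, which also runs in $\widetilde{O}(n^{1/2}D^{1/4}+D)$ rounds and delivers $\mathrm{dist}_{\vec w}(s,v)$ at each $v\in R$. The communication network is unchanged by the restriction, so the additive parameter $D$ is unaffected, and running the two routines one after the other keeps the total at $\widetilde{O}(n^{1/2}D^{1/4}+D)$. To upgrade this to an \emph{implicit} shortest path tree, I would piggyback a predecessor pointer on the relaxation step: in the round in which a vertex $v$ certifies its final distance $d_v$ it has received the certifying value from some in-neighbour $u$ with $d_u+\vec w_{uv}=d_v$, so it stores $u$ as its parent; since all weights are polynomially bounded this identifier costs only $O(\log n)$ extra bits and does not change the asymptotics, and the resulting parent function is a shortest-path in-tree rooted at $s$ by the usual subpath-optimality argument.

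The only point needing care is zero-weight (or, in the unit-capacity flow application, weight-$0$/$1$) edges, since the routine of~\cite{CM20} is phrased for strictly positive weights: I would feed it perturbed weights $\vec w_e+\eta$ for an infinitesimal $\eta\ll 1/\poly(n)$, which leaves all finite, polynomially-bounded distances and all shortest-path trees unchanged, or, equivalently, contract each zero-weight strongly connected component and recover parents inside it along a directed spanning tree. The main obstacle is therefore not mathematical but one of interface: confirming that~\cite{GU15,CM20} genuinely expose the predecessor edges rather than only the numeric distances, and that their guarantees hold verbatim in the directed-but-bidirectionally-communicating model used throughout this paper. Since both works are designed for exactly this setting, I expect the argument to go through with only the bookkeeping described above.
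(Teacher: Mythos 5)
The paper does not prove Theorem~\ref{thm:directed_reachbility} at all: the bracketed citation \cite{GU15,CM20} in the theorem header is an attribution, and no proof environment follows. Your proposal therefore is not competing with an internal argument; it is a reading of how the citation should be assembled, and as such it is essentially correct. Two small simplifications are worth noting. First, the separate reachability pass via \cite{GU15} is redundant: the SSSP routine of \cite{CM20} already assigns finite distances exactly to the vertices reachable from $s$ (and $+\infty$ to the rest), so reachability falls out for free and \cite{GU15} is only an alternative when one wants unweighted reachability alone. Second, the $\eta$-perturbation for zero-weight edges is an unnecessary precaution; the algorithm of \cite{CM20} is stated for nonnegative polynomially bounded weights, not strictly positive ones, and all weights fed to it in this paper (reduced costs after the dual adjustment) are nonnegative. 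The one load-bearing interface concern you raise --- that the cited SSSP routine must expose predecessor edges, not just distance labels --- is handled downstream exactly as you suggest: in the proof of Corollary~\ref{cor:shortest_path} the paper reconstructs the implicit shortest-path tree by running a single extra Bellman--Ford relaxation round once distances are settled, which is the same device as your ``piggyback a predecessor pointer.''
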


In our distributed maximum flow and min-cost flow algorithm, 
the graph which we run single source shortest path (SSSP) algorithm on is different to the communication network, 
because the algorithms we want to simulate~\cite{M16, CMSV17} add additional vertices and edges to the graph. 
Hence, we show that this SSSP algorithm can be simulated efficiently if the graph is a flow preconditioned minor distributed to the communication network. 
\begin{corollary}\label{cor:shortest_path}
Let $\overline{G} = (\overline V, \overline E)$ be a communication network with $\overline{n}$ vertices and $\overline{m}$ edges, 
and	$G = (V, E)$ be a (undirected or directed) graph with $n$ vertices, $m$ edges and diameter $D$ that is $(\rho, \alpha)$-flow-preconditioned minor distributed to $\overline {G}$. 
Then there is a distributed SSSP algorithm that, for any given source vertex $s^G \in V$, performs $\widetilde O(\rho(\overline n^{1/2} D^{1/4} + D)\cdot \alpha^2)$ rounds. 
\end{corollary}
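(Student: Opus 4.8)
The plan is to use the SSSP routine of Theorem~\ref{thm:directed_reachbility} essentially as a black box, running it on the (virtual) graph $G$, and to simulate every communication round it performs using $\overline{G}$ as the physical network by means of the primitives in Lemma~\ref{lemma:Communication_edge_congested} and Lemma~\ref{lem:basic_operation}. Concretely, the running time will be bounded by the product of the number of rounds the SSSP algorithm uses on $G$ and the cost of simulating one round of $G$-communication on $\overline{G}$, together with the cost of the algorithm's global coordination steps, which we will execute directly on a BFS tree of $\overline{G}$ rather than simulating them on $G$.

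First I would pin down the per-round simulation cost, i.e. the cost of one step in which every $v^{G}\in V$ sends an $O(\log n)$-bit message to each $G$-neighbor and receives theirs. By Definition~\ref{def:weak_minor}, every vertex outside the set $U$ of size $\alpha$ has a singleton supervertex, so for those the exchange is just passing a label across the edge image $E_{map}^{G\to\overline{G}}(\cdot)$; since each edge of $\overline{G}$ is the image of at most $\rho$ edges of $G$ and lies in at most $\rho$ of the supervertex trees, all such exchanges take $O(\rho)$ rounds of $\overline{G}$. For each $v^{G}\in U$, whose supervertex is all of $\overline{V}$ together with a spanning tree of depth at most $D$, the outgoing message is broadcast from the root to the relevant tree vertices and the incoming ones are convergecast back, which by Lemma~\ref{lemma:Communication_edge_congested} costs $O(\alpha D)$ rounds over the $\alpha$ such supervertices. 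Hence one round of $G$-communication is simulated in $O(\rho+\alpha D)$ rounds on $\overline{G}$, provided the messages the $U$-vertices send are of broadcast type (a single label to all neighbors).

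Next I would open up the algorithm of \cite{GU15,CM20} and verify that each of its operations is either (i) a round of neighbor-to-neighbor communication on $G$, simulated as above, or (ii) a global aggregation/broadcast over a BFS tree of the network, which we run directly on a BFS tree of $\overline{G}$ of depth $D$ in $\widetilde{O}(D)$ rounds. Run on $G$, the algorithm sees $n=|V(G)|=O(\rho\,\overline{n})$ vertices and — using the flow-preconditioned structure, in which the vertices of $U$ route between all of $\overline{V}$ — virtual diameter $O(D)$, so it terminates in $\widetilde{O}(n^{1/2}D^{1/4}+D)=\widetilde{O}(\overline{n}^{1/2}D^{1/4}+D)$ of its own rounds; multiplying by the $O(\rho+\alpha D)$ per-round overhead and collecting the $\rho$ and $\alpha$ factors gives the bound $\widetilde{O}(\rho(\overline{n}^{1/2}D^{1/4}+D)\alpha^{2})$. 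The output — the distances and the implicit shortest-path tree with each vertex knowing its parent — is produced at the roots of the supervertices, which is exactly the distributed storage convention used throughout, and parent pointers can be pushed into the rest of a supervertex with one further application of Lemma~\ref{lemma:Communication_edge_congested}.

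The main obstacle I expect is that the SSSP algorithm of \cite{GU15,CM20} is an intricate object (built from hopsets / skeleton graphs and randomized bounded-depth explorations) rather than a clean message-passing primitive, so the crux is checking that every one of its steps genuinely decomposes into the two categories above — in particular confirming that the per-round messages sent by the high-degree ``universal'' vertices of $U$ are broadcast-type, since otherwise the simulation cost of those vertices would scale with their degree (possibly $\Omega(\overline{n})$) instead of with $\alpha D$. The other delicate point is the exact $\rho$/$\alpha$ bookkeeping, including which sub-steps of the algorithm (the ``$+D$'' global part, and the skeleton-level computations) can be run natively on $\overline{G}$ versus must be simulated on $G$, and the bounds $\mathrm{diam}(G)=O(D)$ and $|V(G)|=O(\rho\,\overline{n})$ extracted from Definition~\ref{def:weak_minor}.
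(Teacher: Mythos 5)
Your plan treats the SSSP routine of \cite{GU15,CM20} as a black box, simulates each of its $G$-rounds in $O(\rho+\alpha D)$ rounds of $\overline{G}$, and then ``collects factors.'' Two things break. First, the arithmetic does not actually produce the claimed bound: $\widetilde{O}\bigl((\overline{n}^{1/2}D^{1/4}+D)\cdot(\rho+\alpha D)\bigr)$ contains the terms $\alpha\overline{n}^{1/2}D^{5/4}$ and $\alpha D^{2}$, and neither is dominated by $\rho(\overline{n}^{1/2}D^{1/4}+D)\alpha^{2}$ in general (you would need $D\le\rho\alpha$). So the $\alpha^{2}$ factor in the statement cannot come from per-round overhead. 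Second, and more fundamentally, the step you flag as ``the crux'' — that the universal vertices of $U$ only ever send broadcast-type messages — is not something you can just check and hope for; it is false in general for the algorithm of \cite{CM20}, which is why the paper does not (and cannot) run it black-box with $\alpha$ simulated vertices present.

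The idea you are missing is the decomposition into pairwise computations. The paper first shows that the algorithm of \cite{CM20} can be made to tolerate \emph{one} simulated vertex (the source $s$) by handling a few steps specially (initializing the weighted BFS / Bellman-Ford from $s$ by a global broadcast, etc.), and then that it can tolerate \emph{two} simulated vertices $s,t$ by additionally excluding $t$ from the random sampling and from the Bellman-Ford iterations, and recovering $t$'s distance by one extra broadcast round. For $\alpha$ simulated vertices, it does \emph{not} run a single SSSP: instead it runs $O(\alpha)$ SSSP computations, each with one simulated vertex and the rest removed, and $O(\alpha^{2})$ $s$-$t$ computations, each with two simulated vertices and the rest removed, and then every vertex reconstructs its distance from $s^{G}$ using the observation that a shortest path decomposes into subpaths between consecutive vertices of $S=\{s^{G}\}\cup U$. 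That decomposition, not the per-round cost, is the source of the $\alpha^{2}$. To repair your argument you would need to (i) abandon the black-box simulation, (ii) verify step by step (as the paper does for Steps 1--8 of \cite{CM20}) that each step works with one or two simulated vertices, and (iii) add the $O(\alpha)$-plus-$O(\alpha^{2})$ stitching argument and the final Bellman-Ford round that recovers the tree.
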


\begin{algorithm}
\KwIn{directed graph $G_0=(V, E_0,\vec{u})$ with each $e\in E_0$ having two non-negative integer capacities $u_{e}^{-}$ and $u_{e}^{+}$; $|V|={n}$ and $|E_0|={m}$; source $s$ and sink $t$; the largest integer capacity $U$; target flow value $F\ge 0$;}
\caption{\MaxFlow ($G_0$, $s$, $t$, $U$, $F$)}\label{alg:madry_renew}
Add $m$ undirected edges $(t, s)$ with forward and backward capacities $2U$ to $G_0$\;\label{line:mf_1}
\For{each $e=(u,v)\in E_0$}{
Replace $e$ by three undirected edges $(u,v)$, $(s,v)$ and $(u,t)$ whose capacities are $u_{e}$\;}\label{line:mf_3}
Let the new graph be $G=(V,E)$\;
Initialize the flow vector $\vec{f}\assign\vec{0}$ and dual vector $\vec{y}\assign\vec{0}$\;\label{line:mf_5}
Update $\vec{f}$ and $\vec{y}$ by solving two Laplacian linear systems on $G$ and a constant number of local vertex/edge vector operations\;\label{line:mf_6}
Compute the congestion vector $\vec{\rho}$ by a constant number of local edge vector operations\;\label{line:mf_7}
\Repeat{$\widetilde{O}(m^{3/7}U^{1/7})$ times}{
\eIf{$\|\vec{\rho}\|_3$ is at most one computed threshold}{
Update $\vec{f}$ and $\vec{y}$ by solving two Laplacian linear systems on $G$ and a constant number of local vertex/edge vector operations\;\label{line:mf_10}
Update $\vec{\rho}$ by a constant number of local vertex/edge vector operation\;\label{line:mf_11}}{
Determine the set $S^*$ that contains the $m^{4\eta}$ edges with the largest $|\rho_e|$ by a coordinate selection operation\;\label{line:mf_13}
Update graph $G$ via replacing each edge in $S^*$ by a path and setting some quantities\;\label{line:mf_14}
}}
\While{there is an augmenting path from $s$ to $t$ w.r.t. $\vec{f}$ for $G$}{
Augment an augmenting path for $\vec{f}$ using the shortest path from $s$ to $t$ in the residual graph\;\label{line:mf_17}
}
\end{algorithm}

\begin{proof}
We assume that excluding the $\alpha$ vertices of $G$ that are mapped to all the vertices of $\overline G$,
at most one vertex is mapped to any vertex of $\overline V$. 
This is without loss of generality, because if multiple vertices are mapped to the same vertex of $\overline G$,
then any communication between these vertices are free.
In the following, we will call vertices of $G$ that are mapped to all the vertices of $\overline G$ \emph{simulated} vertices.

We first explain how the SSSP algorithm of~\cite{CM20} can be modified to work as an $s$-$t$ shortest path algorithm in a setting where only the source (start) vertex~$ s $ and the sink (target) vertex~$ t $ (and their incident edges) are simulated vertices.
In particular,  we argue that the algorithm can be simulated in $\widetilde O(\rho(n^{1/2} D^{1/4} + D))$ rounds (where the multiplicative $\rho$ factor simply comes from the fact that every edge of $\overline G$ is used at most $\rho$ times for edges in $G$).

The algorithm of~\cite{CM20} consists of eight steps.
In Steps 1 and 2, vertices sample themselves with certain probabilities.
These steps require no communication and therefore can also be carried out in our setting in which source and sink are just simulated.
We slightly modify Step 1 to ensure that the vertex~$ t $ is never sampled.
This does not affect the correctness of the algorithm if we are only interested in computing the shortest path from $ s $ to~$ t $ as shortest paths are simple and thus $ t $ will never be an inner vertex on this shortest path.

In Steps 4 and 7, certain auxiliary graphs are created implicitly in the sense that each vertex only knows its incident edges in the auxiliary graph and their respective edge weights.
Therefore these steps also require no communication and therefore can also be carried out in our setting in which source and sink are just simulated.

To implement the rest of the algorithm we will rely on the following observation: whenever a step of the algorithm is performed solely by broadcasting or aggregating values via a global BFS tree of the network, then this step immediately can be carried out in our setting with the two simulated vertices as well.
This is the case in Steps 5 and 6 of the algorithm.

In Step~8, a certain number of iterations of the Bellman-Ford algorithm is performed on a graph that in addition to the edges of the input graph contains edges from $ s $ to certain other vertices.
Similar to~\cite{CM20}, we carry out the first iteration of the Bellman-Ford algorithm -- in which the neighbors of~$ s $ set their tentative distance to the weight of the edge from~$ s $ -- by a global broadcast in $ O(D) $ rounds.
In~\cite{CM20}, the remaining iterations of Bellman-Ford are carried out in the standard way where vertices directly communicate with their neighbors.
For our modification of~\cite{CM20} we do the same, but ignore the vertex~$ t $ for these iterations.
In the end, we explicitly need to ensure that the simulated vertex~$ t $ also gets to know its distance from~$ s $.
We achieve this by additionally performing one iteration of the Bellman-Ford iteration in which only the incoming edges of~$ t $ (and the corresponding neighbors of~$ t $) are considered.
This can be carried out in $ O(D) $ rounds by broadcasting.
This works because the incoming neighbors of~$ t $ (which are part of the communication network and are not just simulated by it) already know their distance from~$ s $ at this stage due to the previous iterations of Bellman-Ford.

This leaves only Step 3 of the algorithm.
In Step 3, Lemma~2.4 of~\cite{FN18} is applied to compute approximate distances from each vertex of a set~$ S $ (where~$ S $ includes the vertex~$ s $, but not the vertex~$ t $.).
Essentially this Lemma amounts to running a ``weighted'' version of the breadth-first-search algorithm for each vertex of~$ S $ (which is repeated $ O (\log (nW)) $ times with a certain weight rounding applied to the edges in each iteration).
The start times of these BFS algorithms are chosen with random delay to guarantee that the congestion at each vertex is low.
For the BFS starting at vertex~$ s $, the first iteration can be carried out by broadcasting the random delay of~$ s $.
The neighbors of~$ s $ (knowing the weight of the edge from~$ s $) then know when the ``weighted'' BFS of~$ s $ reaches them and can continue with it at the respective time.
This gives an additional additive term of $ \tilde O(D) $ in the running time, which does not affect the asymptotic bounds stated in Lemma 2.4 of~\cite{FN18}.
This concludes our discussion of the $s$-$t$ shortest path algorithm.

Now observe that with the same approach we can obtain an SSSP algorithm in a setting where the source vertex~$ s $ is the only vertex simulated by the network: we simply need to remove the special handling we had for vertex~$ t $ in our approach above.

Note that these two algorithmic primitives are sufficient to compute SSSP in a setting where there are $ \alpha $ simulated vertices:
Let $ S $ denote the set of vertices consisting of $ s^G $ and the simulated vertices.
First, perform an SSSP computation from each vertex $ s \in S $ ignoring the other vertices of $ S $ (i.e., perform the SSSP computation in the graph $ G \setminus S \cup \{ s \} $).
Then, perform an $s$-$t$ shortest path computation for each pair of vertices $ s, t \in S $, ignoring the other vertices of $ S $ (i.e., perform the $s$-$t$ shortest path computation in the graph $ G \setminus S \cup \{ s, t \} $).
Now each vertex~$ v $ can reconstruct its distance from~$ s^G $ by the information it stored so far as the shortest path from $ s^G $ to $ v $ can be subdivided into subpaths between vertices of $ S $ containing no other vertices of $ S $.
Overall, we perform $ O (\alpha) $ SSSP computations with at most one simulated vertex and $ O (\alpha^2) $ $s$-$t$ shortest path computations with at most two simulated vertices.
Finally, note that as soon as each vertex $ v $ knows its distance from $ s^G $ an implicit shortest path tree (in which each vertex knows its parent in the tree) can be reconstructed by performing a single iteration of the Bellman-Ford algorithm.
For the $ \alpha $ simulated vertices, we perform this final step in $ O (\alpha D) $ rounds by broadcasting via a global BFS tree.
\end{proof}

\begin{theorem}
\label{thm:max-flow}
	Let $\overline{G} = (\overline V, \overline E)$ be a communication network with $\overline{n}$ vertices, $\overline{m}$ edges, and diameter $D$,
	$G_0$ be a  graph and $c$ be an integral capacity function for each edge of $G_0$ with maximum capacity $U$ satisfying one of the following two conditions:
	\begin{enumerate}
		\item $G_0$ is the same as $\overline G$, and for each edge $(u, v) \in \overline E$, $u$ and $v$ know the capacity of edge $(u, v)$. 
		\item $G_0$ is a directed graph obtained by associating each edge of $\overline G$ a direction such that for each edge $(u, v) \in \overline E$,
				$u$ and $v$ know the direction of edge $(u, v)$ and its capacity. 
	\end{enumerate}
	Then there is a distributed algorithm to compute exact $s$-$t$ maximum flow for two vertices $s$ and $t$ of $G_0$ in 
	\[\widetilde{O}\left(\overline{m}^{3/7}U^{1/7}\overline{n}^{o(1)}(\overline{n}^{1/2}D^{1/4}+D)+\overline{m}^{1/2}\right)\] rounds in the \textsf{CONGEST} model.
\end{theorem}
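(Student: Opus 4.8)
The plan is to simulate Madry's interior point method for exact maximum flow \cite{M16} (Algorithm~\ref{alg:madry}) in the \congest~model: Theorem~\ref{thm:Main} supplies the electrical flow computations, Lemma~\ref{lem:flow_rounding} the rounding, and Corollary~\ref{cor:shortest_path} the combinatorial clean-up. The crux is maintaining the right embedding invariant. Madry's algorithm does not run on $G_0$ directly: it first adds $m$ parallel $(t,s)$ edges of capacity $2U$, and then replaces each edge $(u,v)$ by the three undirected edges $(u,v)$, $(s,v)$, $(u,t)$; afterwards $s$ and $t$ are incident to $\Theta(m)$ new edges. Thus the working graph $G$ is \emph{not} an $\overline n^{o(1)}$-minor of $\overline G$ in the sense of Definition~\ref{def:Minor}, but it \emph{is} a $(\rho,\alpha)$-flow-preconditioned minor (Definition~\ref{def:weak_minor}) with $\alpha=2$ --- the exceptional vertices being $s$ and $t$, whose supervertices are all of $\overline V$, each spanned by a BFS tree --- and $\rho=O(1)$, since every edge of $\overline G$ carries only $O(1)$ edges of $G$. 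Both cases of the theorem are handled identically: the links of $\overline G$ are bidirectional, so orienting the edges affects only local bookkeeping, and the two-sided capacities $u_e^\pm$ are stored at the endpoints. Steps~1--2 of Algorithm~\ref{alg:madry} are realized in $O(1)$ rounds, and I would carry the invariant ``$G$ is $(\overline n^{o(1)},2)$-flow-preconditioned-minor-distributed over $\overline G$'' through the whole algorithm.

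For the progress loop, the parameter choices in Algorithm~\ref{alg:madry} (following \cite{M16}) make it run $\widetilde O(\overline m^{3/7}U^{1/7})$ iterations, each dominated --- via \textsc{Augmentation} and \textsc{Fixing} --- by a polylogarithmic number of Laplacian solves on $G$ with resistances read off from the current $(\vec f,\vec y)$ in $O(1)$ rounds (Lemma~\ref{lem:basic_operation}). Each such solve is performed by driving the minor-aware solver of Section~\ref{sec:ClownFiesta} underlying Theorem~\ref{thm:Main}: the two universal vertices $s,t$ are served by $O(\alpha D)=O(D)$-round BFS aggregations (Lemma~\ref{lemma:Communication_edge_congested}) while all remaining (singleton or small) supervertices use the standard $\overline n^{o(1)}(\sqrt{\overline n}+D)$ minor aggregation (Lemma~\ref{lemma:Communication}); one must verify that every graph produced inside the solver's recursion still decomposes over $\overline G$ this way, with its $\{s,t\}$-part universal and the rest of congestion $\overline n^{o(1)}$. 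Solving to the $1/\poly(m,U)$ accuracy demanded by \cite{M16} costs only an extra $\widetilde O(1)$ factor. The \textsc{Boosting} step picks the $m^{4\eta}$ edges of largest $|\rho_e|$ in $O(D\cdot\poly(\log\overline m))$ rounds (Lemma~\ref{lem:basic_operation}, item~4) and glues in a bounded number of parallel copies; summed over all iterations this keeps $\rho=\overline n^{o(1)}$ and $\alpha=2$. Hence the loop costs $\widetilde O\big(\overline m^{3/7}U^{1/7}\,\overline n^{o(1)}(\overline n^{1/2}+D)\big)$ rounds.

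It remains to make the flow integral and maximum. I would first apply Lemma~\ref{lem:flow_rounding} with $\Delta=1/\poly(m,U)$, so $\log(1/\Delta)=\widetilde O(1)$, rounding to an integral flow of no smaller value in $\widetilde O(\overline m^{1/2}+D)$ rounds. Madry's analysis then guarantees residual flow value $\widetilde O(\overline m^{3/7}U^{1/7})$, which I would route with that many augmenting-path steps: each finds an $s$--$t$ path in the residual graph of $G$ (still a $(\overline n^{o(1)},2)$-flow-preconditioned minor of $\overline G$) via the reachability/shortest-path routine of Corollary~\ref{cor:shortest_path} in $\widetilde O(\rho\,\alpha^2(\overline n^{1/2}D^{1/4}+D))=\overline n^{o(1)}(\overline n^{1/2}D^{1/4}+D)$ rounds, then pushes flow along the implicitly stored path --- aggregating its bottleneck residual capacity and updating capacities --- with the path/cycle primitive of Lemma~\ref{lemma:Communication_path_cycle}, which is subsumed by the path-finding cost. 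Summing the three phases and absorbing lower-order terms gives $\widetilde O\big(\overline m^{3/7}U^{1/7}\,\overline n^{o(1)}(\overline n^{1/2}D^{1/4}+D)+\overline m^{1/2}\big)$.

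The main obstacle is this embedding bookkeeping. Unlike the earlier sections, where the current graph is always an honest $\overline n^{o(1)}$-minor of $\overline G$, Madry's algorithm deliberately creates two universal vertices and keeps attaching parallel edges during boosting; the argument hinges on showing that the flow-preconditioned-minor invariant (with $\alpha=O(1)$, $\rho=\overline n^{o(1)}$) is preserved by every operation of \cite{M16} \emph{and} meshes with the recursive minor constructions inside the Laplacian solver, the flow rounder, and the shortest-path routine --- i.e., reconciling Definition~\ref{def:Minor} with Definition~\ref{def:weak_minor} --- so that the $O(\alpha D)$, $O(\rho)$ and $\alpha^2$ overheads all collapse into the claimed $D$ and $\overline n^{o(1)}$ terms. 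A secondary point, already settled in \cite{M16} and contributing only $\widetilde O(1)$ here, is that polynomially small solver/shortest-path accuracy suffices for an exact answer after rounding.
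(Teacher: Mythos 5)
Your overall architecture matches the paper's: simulate Madry's IPM via the Laplacian solver, round with Lemma~\ref{lem:flow_rounding}, and finish with augmenting-path cleanup via Corollary~\ref{cor:shortest_path}, all while tracking a flow-preconditioned-minor invariant. The iteration counts and per-phase costs you quote are also the paper's. The genuine gap is in how you implement the Laplacian solves inside the progress loop.

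After \textsc{Boosting}, a boosted edge $e$ is replaced by a path of $\beta(e)$ edges, and all $\beta(e)-1$ new internal degree-two vertices are mapped to a \emph{single} vertex of $\overline G$ (via self-loops). Over the run of the algorithm the number of $G$-vertices sharing one $\overline G$-vertex therefore grows without a useful bound. Definition~\ref{def:weak_minor} intentionally does not bound this vertex multiplicity --- condition 2(b) bounds the size of each supervertex, not how many supervertices pile onto one root --- so the $\rho = \overline n^{o(1)}$ you maintain is an \emph{edge}-congestion bound only. By contrast, the ordinary $\rho$-minor of Definition~\ref{def:Minor}, which is what Theorem~\ref{thm:Main} and all of the solver machinery of Section~\ref{sec:ClownFiesta} require, bounds vertex congestion by $\rho$. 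Your plan of ``driving the minor-aware solver'' directly on the flow-preconditioned minor --- handling $s,t$ by an $O(\alpha D)$ BFS and the rest by standard minor communication --- implicitly assumes the remaining vertices are $\rho$-minor distributed with small $\rho$, which after boosting they are not. You flag that ``one must verify'' the solver's recursion still decomposes this way, but that is precisely the missing step, and without first dealing with the boosting vertices it would not go through. The paper closes this cleanly and you should do the same: before each Laplacian solve, locally eliminate the degree-two vertices created by \textsc{Boosting} (they and their incident edges all sit on one $\overline G$-vertex, so this is zero-communication Gaussian elimination), which yields a graph that is an honest $O(1)$-minor of $\overline G$; run Theorem~\ref{thm:Main} on that unchanged; then locally back-substitute the eliminated coordinates. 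With this one repair the rest of your argument is sound.
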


\begin{proof}
We simulate Algorithm~\ref{alg:madry_renew}. 
By \cite{M16}, the accuracy required throughout the algorithm is $1 / \poly(\overline m)$.
Without loss of generality, we assume all the values throughout multiplied by $2^\gamma$ are integers for some $\gamma = O(\log \overline m)$. 
Throughout the algorithm, we set $S^{G_0 \rightarrow \overline G}(s) = S^{G_0 \rightarrow \overline G}(t) = \overline V$, and for each vertex $v \in \overline V \setminus \{s, t\}$, $S^{G_0 \rightarrow \overline G}(v) = v$. 

In line \ref{line:mf_1}, we need to add $m$ parallel $(t, s)$ edges each with capacity $U$. 
This step can be simulated in $O(1)$ rounds by specifying an arbitrary vertex in $v^{\overline G} \in \overline V$ such that $m$ parallel $(t, s)$ edges are mapped to selfloops of $v^{\overline G}$. 

In line \ref{line:mf_3}, every edge $(u, v)$ with capacity $\alpha$ of $G_0$ is replace by three edges $(u, v)$, $(s, u)$ and $(v, t)$ with capacity $\alpha$.
Let $G$ denote the graph after line \ref{line:mf_3}. 
We always make sure that $E^{G\rightarrow \overline G}_{map}(s, u) = (u, u)$
and $E^{G\rightarrow \overline G}_{map}(v, t) = (v, v)$. 
Hence, $G$ is a $(3, 2)$-flow-preconditioned minor distributed to $\overline G$.  

In line \ref{line:mf_14}, 
if we replace an edge $(u, v)$ of $G$ by a path $(u, v_1, v_2, \dots, v_\ell, v)$,
then we consider the following two cases:
\begin{enumerate}
\item If $E^{G\rightarrow \overline G}_{map}(u, v)$ is a selfloop of some vertex $x$ in $\overline G$, then 
add all the new vertices $v_1, v_2, \dots, v_\ell$ such that $V^{G\rightarrow \overline G}_{map}(v_i) = x$,
and all the new edges are also selfloops on $x$. 
\item If $E^{G\rightarrow \overline G}_{map}(u, v)$ corresponds to an edge of $\overline G$, then 
add all the new vertices $v_1, v_2, \dots, v_\ell$ such that  $S^{G\rightarrow \overline G}(v_i) = \{V^{G\rightarrow \overline G}_{map}(u)\}$,
and add edges such that 
$E^{G \rightarrow \overline G}_{map}(v_\ell, v) = E^{G\rightarrow \overline G}_{map}(u, v)$ and the remaining edges to be selfloops on $V^{G\rightarrow \overline G}_{map}(u)$.
\end{enumerate}
Hence, we maintain the invariant that $G$ is $(3, 2)$-flow-preconditioned minor distributed to $\overline G$.

Note that the execution of each line takes a constant number of basic vector operations in Lemma~\ref{lem:basic_operation} 
or solves a constant number of Laplacian systems on graph $G$ (excepting replacing an edge by a path in line \ref{line:mf_14}, which can be done in $O(1)$ rounds). 
By Lemma~\ref{lem:basic_operation}, each basic vector operation can be simulated in $O(D \log \overline m)$ rounds.

To solve a Laplacian system,
we first eliminate all the vertices that are added in line \ref{line:mf_14}. 
Since all these vertices are of degree 2, 
this elimination can be done locally in each vertex of $\overline G$, and the resulted graph is $O(1)$-minor distributed to $\overline G$.
By Theorem~\ref{thm:Main}, the Laplacian system can be solved in $\overline n^{o(1)}(\overline n^{1/2} + D)$ rounds.
Then we 
obtain the solution of Laplacian system with respect to $G$ by adding the eliminated vertices back locally.

Since the repeat part takes $O(\overline m^{3/7}U^{1/7})$ iterations, 
the total number of rounds required to simulate all progress steps is 
$\widetilde O(\overline m^{3/7+o(1)}U^{1/7}(\overline n^{1/2} + D))$. 

By Lemma~\ref{lem:flow_rounding}, the flow rounding takes $O(\log m\cdot (\overline m^{1/2} \log^2 \overline m + D))$ rounds.
After the flow rounding, the difference between the flow value and maximum flow value is at most $O(\overline m^{3/7}U^{1/7})$. 
By Corollary~\ref{cor:shortest_path},  each iteration of finding an augmenting path takes $\widetilde O(D + \overline n^{1/2} D^{1/4})$ rounds. 
Hence,
the additional augmenting step takes $\widetilde O(\overline m^{3/7}U^{1/7} (\overline n^{1/2} D^{1/4} + D))$ rounds.
\end{proof}

\subsection{Unit Capacity Minimum Cost Flow}
\label{sec:mincostflow}
In this subsection, we present a distributed minimum cost unit capacity flow algorithm with integral cost in $\widetilde{O}\left(\overline{m}^{3/7+o(1)}(\overline n^{1/2}D^{1/4}+D)\right)$ rounds in the \congest~model.
Based on the distributed Laplacian solver, our algorithm simulate  the algorithm by Cohen et al.~\cite{CMSV17}.
The sequential algorithm is briefly summarized in Algorithm~\ref{algo:min-cost-flow_renew}, and the details of the algorithm are given in Section~\ref{sec:min-cost-flow-algo}.

\begin{algorithm}[t]
\KwIn{directed graph ${G_0}=({V_0},{E_0},\vec{c}_0)$ with each edge having unit capacity and cost $\vec{c}_0$; $|{V_0}|={n}$ and $|{E_0}|={m}$; integral demand vector $\vec{\sigma}$; the absolute maximum cost $W$;}
\caption{\MinCostFlow ($G$, $\vec{\sigma}$, $W$)}
\label{algo:min-cost-flow_renew}
Create a new vertex $v_{aux}$ with $\sigma(v_{aux})=0$ and add parallel edges $(v, v_{aux})$ or $(v_{aux}, v)$ according to $\rho(v)$ and degree of $v$ in $G_0$, and denote the resulted graph as $G_1=(V_1, E_1, \vec{c}_1)$\;\label{line:mcf_1}
Initialize the bipartite graph $G=(P\cup Q, E, \vec{c})$ with $P\leftarrow V_1$ and $Q\leftarrow\{e_{uv}\mid(u,v)\in E_1\}$ where $e_{uv}$ is a vertex corresponding to edge $(u,v)\in E_1$\;\label{line:mcf_2}
Add a new vertex $v_0$ and undirected edges $(v_0, v)$ for every $v\in P$ to $G$\;\label{line:mcf_3}
Initialize the resistance vector $\vec{r}$, flow vector $\vec{f}$, dual vector $\vec{s}$, measure vector $\vec{\nu}$ and congestion vector $\vec{\rho}$ by a constant number of local vector operations\;\label{line:mcf_4}
\For{$i=1$ \KwTo $1200\sqrt{3}m^{2/7}\log^{4/3}{W}$\label{line:mcf_5}
}{
Reset the resistances of the auxiliary edges $(v_0, v)$ for each $v\in P$ by a constant number of  local vector operations\;\label{line:mcf_6}
\For{$j=1$ \KwTo $m^{1/7}$}{
\While{$\left\|\vec{\rho}\right\|_{\vec{\nu},3}>400\sqrt{3}m^{3/7}\log^{1/3}{W}$}{
Increase the energy via resetting the resistance $r_e$ and measure $\nu_e$ for $e\in E$ by a constant number of local vector operations\;\label{line:mcf_9}
}
Update the flow vector $\vec{f}$ and dual vector $\vec{s}$ by solving two Laplacian linear systems in $\mL(G)$ and a constant number of local vector operation\;\label{line:mcf_10}
}}
Round the solution to be integral by calling \FlowRounding and obtain $\vec{M}$\;\label{line:mcf_11}
\Repeat{$\widetilde{O}(m^{3/7})$ times\label{line:mcf_16}}{
\label{line:mcf_12}
Construct the directed graph $\overrightarrow{G}_M$ of $G$ w.r.t. $\vec{M}$ by a constant number of local edge vector operations\;\label{line:mcf_13}
Compute the shortest path $\pi$ in $\overrightarrow{G}_M$\;\label{line:mcf_14}
Augment $\vec{M}$ using the augmenting path $\pi$\;\label{line:mcf_15}
}
\Return $\vec{M}$\;
\end{algorithm}

\begin{theorem}
\label{thm:min-cost-flow}
	Let $\overline{G} = (\overline V, \overline E)$ be a communication network with $\overline{n}$ vertices and $\overline{m}$ edges,
	$G_0$ be a directed unweighted graph also defined on $\overline V$ such that each edge $(u, v)$ of $G_0$ is an edge of $\overline E$ if the direction is ignored, and 
				$u$ and $v$ in the communication network know the direction of edge $(u, v)$ and its cost. 
	Then, given $G_0$ and a demand vector $\vec \sigma$, there is a distributed algorithm to compute the minimum cost  flow for graph $G_0$ with respect to $\vec \sigma$ in 
	\[\widetilde O(\overline m^{3/7+o(1)}(\overline n^{1/2}D^{1/4} + D)\poly(\log W) )\] rounds in the \textsf{CONGEST} model.
\end{theorem}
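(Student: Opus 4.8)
The plan is to simulate the unit-capacity minimum-cost-flow algorithm of~\cite{CMSV17}, i.e.\ Algorithm~\ref{algo:min-cost-flow}, in the \congest\ model, in exactly the spirit of the proof of Theorem~\ref{thm:max-flow}: the interior-point iterations are driven by the distributed Laplacian solver of Theorem~\ref{thm:Main}, the flow-rounding step is handled by Lemma~\ref{lem:flow_rounding}, and the final $\widetilde{O}(\overline m^{3/7})$ residual-cancellation iterations are handled by the distributed SSSP of Corollary~\ref{cor:shortest_path}. Throughout, the working graph is maintained as an $(O(1),O(1))$-flow-preconditioned minor of $\overline G$ (Definition~\ref{def:weak_minor}), and the output flow $\vec M$ is stored distributedly on $\overline G$; the directedness of $G_0$ is unproblematic since the communication links of $\overline G$ are bidirectional and their endpoints know the edge directions.

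First I would verify that \textsc{Initialization} (Algorithm~\ref{algo:initializationbipartitegraph}), together with the extra vertex $v_0$ adjacent to all of $P$, produces such a minor. The bipartite $Q$-vertices $e_{uv}$ have degree $2$ and can be placed at an endpoint of the $\overline G$-edge corresponding to $(u,v)$, so the edges $(u,e_{uv}),(v,e_{uv})$ map to (a copy of) that edge or to a self-loop; each original vertex maps to itself. The only vertices whose supervertex is all of $\overline V$ are $v_{aux}$ and $v_0$: each gets a fixed BFS spanning tree of $\overline G$ as its connecting tree, and the parallel edges $(v,v_{aux})$ and the star edges $(v_0,v)$ all map to self-loops at the $\overline G$-vertex holding $v$, so they consume no bandwidth. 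Hence $\rho=O(1)$ and $\alpha=O(1)$, and (under the standard assumption $\norm{\vec\sigma}_1=O(\overline m)$) $|V(G)|,|E(G)|=O(\overline m)$. I would then check that the loop body of Algorithm~\ref{algo:min-cost-flow} — resetting the resistances $r_{v_0 v}$, and the subroutines \textsc{Perturbation} and \textsc{Progress} detailed in Appendix~\ref{sec:min-cost-flow-algo} — consists of $O(1)$ of the basic vector operations of Lemma~\ref{lem:basic_operation} (each $O(D\log\overline m)$ rounds) together with $O(1)$ Laplacian solves on the current graph; any edge subdivisions are local and preserve the $(O(1),O(1))$-minor invariant, just as in the max-flow proof.

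The Laplacian solves are the only place where the flow-preconditioned (rather than plain) minor structure matters, and I would handle it exactly as in Theorem~\ref{thm:max-flow}: eliminate the degree-$2$ $Q$-vertices (and any subdivision vertices) locally at each node of $\overline G$, reducing to a Laplacian on an $O(1)$-minor of the current graph, and absorb the $O(1)$ global vertices $v_0,v_{aux}$ into the solver through global BFS-tree broadcasts and aggregations (cost $O(D)$ per matrix--vector product), which changes nothing since Theorem~\ref{thm:Main} already carries a $+D$ term; each solve then costs $\overline n^{o(1)}(\overline n^{1/2}+D)$ rounds. Since the method of~\cite{CMSV17} performs $\widetilde{O}(\overline m^{3/7})$ progress steps — with the $\poly(\log W)$ dependence coming from the iteration count and from having to solve each system to accuracy $1/\poly(\overline m,W)$, as justified in~\cite{CMSV17} — the interior-point phase costs $\overline m^{3/7+o(1)}(\overline n^{1/2}+D)\poly(\log W)$ rounds.

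Finally I would add the source/sink $s,t$ (two more $O(1)$-type global vertices), apply Lemma~\ref{lem:flow_rounding}(2) with the cost function to round $\vec f^{\le}$ to an integral flow $\vec M$ of no larger cost and no smaller value in $\widetilde{O}(\overline m^{1/2}+D)$ rounds, and then run the $\widetilde{O}(\overline m^{3/7})$ cancellation iterations: each builds the residual graph $\overrightarrow{G}_M$ (still an $(O(1),O(1))$-flow-preconditioned minor), computes via Corollary~\ref{cor:shortest_path} a shortest path $\pi$ from $P\cap F_M$ to $Q\cap F_M$ together with the distances $\mathcal D_{\overrightarrow{G}_M}$ needed for the dual updates $y_u$ (introducing a super-source/super-sink), and augments $\vec M$ along $\pi$ within the same $\widetilde{O}(\overline n^{1/2}D^{1/4}+D)$ round budget, as in the proof of Theorem~\ref{thm:max-flow}. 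Summing the three phases gives the claimed $\widetilde{O}(\overline m^{3/7+o(1)}(\overline n^{1/2}D^{1/4}+D)\poly(\log W))$ bound. The main obstacle I anticipate is the first step: verifying that none of the auxiliary constructions of~\cite{CMSV17} (the vertex $v_{aux}$, the star at $v_0$ with its repeatedly-updated resistances, the bipartition, and the perturbation edges) ever pushes $\rho$ or $\alpha$ beyond $O(1)$, and that each subroutine really does decompose into Lemma~\ref{lem:basic_operation} operations and Laplacian solves — routine but lengthy bookkeeping that belongs in the appendix.
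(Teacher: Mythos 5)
Your proposal matches the paper's proof essentially line by line: simulate Algorithm~\ref{algo:min-cost-flow}, maintain the working graph as an $(O(1),O(1))$-flow-preconditioned minor of $\overline{G}$ (mapping each $Q$-vertex $e_{uv}$ to an endpoint and letting $v_{aux}$, $v_0$, $s$, $t$, and the auxiliary source $v'$ be the "simulated" full-overlay vertices), eliminate the degree-$2$ $Q$-vertices before each Laplacian solve so Theorem~\ref{thm:Main} applies on an $O(1)$-minor, invoke Lemma~\ref{lem:flow_rounding}(2) for the rounding phase, and use Corollary~\ref{cor:shortest_path} for the $\widetilde{O}(\overline{m}^{3/7})$ residual augmentations. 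The only cosmetic discrepancy is your mention of "edge subdivisions" inside \textsc{Perturbation}/\textsc{Progress} — unlike the max-flow \textsc{Boosting} step, those subroutines only rescale resistances and solve Laplacians — but since you phrase this as a contingency it does not introduce an error, and the rest of the bookkeeping you defer is exactly what the paper carries out.
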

\begin{proof}
We simulate Algorithm~\ref{algo:min-cost-flow_renew} and the accuracy required throughout the algorithm is $1 / \poly(\overline m)$.
Without loss of generality, we assume all the values throughout the algorithm multiplied by $2^\gamma$ are integers for some $\gamma = O(\log \overline m)$. 

To build graph $G_1$ in line \ref{line:mcf_1}, we add vertex $v_{aux}$ to the graph such that $S^{G_1 \rightarrow \overline G}(v_{aux}) = \overline V$, and for all the vertices $v \in \overline V$,
$S^{G_1 \rightarrow \overline G}(v) = \{v\}$. 
For each edge $(u, v)$ of $G_1$ with $u, v \in \overline V$,
we let $E^{G_1 \rightarrow \overline G}_{map}(u, v) = (u, v)$.
For edge $(v, v_{aux})$ or  $(v_{aux}, v)$, we let 
$E^{G_1 \rightarrow \overline G}_{map}(v, v_{aux})$ to be a selfloop on $v$.
The construction of $G_1$ can be done locally, and the resulted graph is a $(2, 1)$-flow-preconditioned minor distributed to $\overline G$.

To construct $G$, for each vertex $e_{uv} \in Q$,
we map $e_{uv}$ to be one of $u$ and $v$ of $\overline G$ arbirarily. 
For vertices of $P$, the mapping is the same as that of $G_1$.
For each edge $(u, e_{uv})$ of $G$, if $V^{G \rightarrow \overline G}_{map}(u) = V^{G \rightarrow \overline G}_{map}(e_{uv})$, then we map edge 
$(u, e_{uv})$ to a selfloop on $V^{G \rightarrow \overline G}_{map}(u)$, 
otherwise, $(V^{G \rightarrow \overline G}_{map}(u), V^{G \rightarrow \overline G}_{map}(e_{uv}))$ is an edge of $\overline G$, and we set 
$E^{G \rightarrow \overline G}_{map}(u, e_{uv})$ to be the edge $(V^{G \rightarrow \overline G}_{map}(u), V^{G \rightarrow \overline G}_{map}(e_{uv}))$.
Hence, 
the first two lines can be simulated in $O(1)$ rounds, and the resulting graph 
$G$ is a $(2, 1)$-flow-preconditioned minor distributed to $\overline G$.

 To simulate Line \ref{line:mcf_3}, 
 we add an additional vertex $v_0$ to $G $ such that $S^{G\rightarrow \overline G}(v_0) = \overline V$,
 and add edges $(v_0, v)$ for each $v \in P$ by setting $E^{G\rightarrow \overline G}_{map}(v_0, v)$ be a selfloop on $v$ if $v \in \overline V$, and 
 $E^{G\rightarrow \overline G}_{map}(v_0, v)$ be a selfloop on an arbitrary vertex of $\overline V$ if $v = v_{aux}$.
 The resulting graph $G$ is a $(3, 2)$-flow-preconditioned minor distributed to $\overline G$. 

For lines \ref{line:mcf_5}-\ref{line:mcf_10} of  Algorithm~\ref{algo:min-cost-flow_renew}, the execution of each line takes a constant number of basic vector operations in Lemma~\ref{lem:basic_operation} 
or solves a Laplacian system on graph $G$. 
To solve a Laplacian system, since all the vertices of $Q$ have two incident edges,
we eliminate vertices of $Q$ in $O(1)$ rounds,
and the resulting graph is $O(1)$-minor distributed to $\overline G$. 
By Theorem~\ref{thm:Main}, every Laplacian system can be solved in $\overline n^{o(1)}(\overline n^{1/2} + D)$ rounds.
Based on the parameter setting, lines \ref{line:mcf_5}-\ref{line:mcf_10} can be simulated in $\widetilde O(\overline m^{3/7+o(1)}(\overline n^{1/2} + D) \poly (\log W))$ rounds.
 
Before the flow rounding, we add $s$ and $t$ such that $S^{G\rightarrow\overline G}(s) = S^{G\rightarrow\overline G}(t) = \overline V$, and add edges by adding selfloops on vertices of $\overline V$. 
 The resulting graph is a $(5, 4)$-flow-preconditioned minor distributed to $\overline G$.
By Lemma~\ref{lem:flow_rounding}, line \ref{line:mcf_11} can be simulated in $\widetilde O(\overline m^{1/2} + D)$ rounds.

To simulate lines \ref{line:mcf_12}-\ref{line:mcf_16},
all the operations except finding shortest path can be simulated in a way similar to that of lines \ref{line:mcf_5}-\ref{line:mcf_10}. 
To find the shortest path from $P\cap F_M$ to $Q\cap F_M$, we add an additional vertex $v'$, and edges $(v', v)$ for each $v \in P\cap F_m$ with  weight zero to $G$. The resulted graph is a $(4, 3)$-flow-preconditioned minor distributed to $\overline G$. 
By Corollary~\ref{cor:shortest_path}, the shortest path can be computed in $\widetilde O(D + \overline n^{1/2}D^{1/4})$ rounds. 
The total number of rounds for lines \ref{line:mcf_12}-\ref{line:mcf_16} is $\widetilde O(m^{3/7}(D + \overline n^{1/2} D^{1/4}))$ rounds.
 
Hence, the total number of rounds required is 
\[\begin{split} & \widetilde O(\overline m^{3/7+o(1)}(\overline n^{1/2} + D) \poly (\log W)) + \widetilde O(\overline m^{1/2 } + D) + \widetilde O(\overline m^{3/7}(D + \overline n^{1/2} D^{1/4}))\\
= &  \widetilde O(\overline m^{3/7+o(1)} (\overline n^{1/2}D^{1/4} + D)\poly(\log W) ).\end{split}\]
\end{proof}

\subsection{Negative shortest path}
We now give a distributed algorithm for computing
single source shortest path with negative weights.
It is a direct use of the reduction from shortest paths with negative weights
to min-cost flow by Cohen et al.~\cite{CMSV17}.
Pseudocode of this algorithm is in Algorithm~\ref{shortestpaths}.

\begin{algorithm}
\caption{\ShortestPaths ($\overline{G}$, $s$, $W$)}
\label{shortestpaths}
\KwIn{directed graph ${
G}=(V,E,w)$ with $|V|={n}$ and $|{E}|={m}$; source $s$; the absolute maximum weight $W$; }
\tcc{Reduction to a weighted perfect $\vec{1}$-matching problem}
Let the bipartite graph be $G_{12}=(V_{1}\cup V_2, E_{12}, w_{12})$ with $V_{1}=\{v_{1}\mid v\in V\}$, $V_2=\{V_2\mid v\in V\}$, $E_{12}=\{u_{1}v_2\mid uv\in E\}\cup\{v_{1}v_2\mid v\in V\}$ and $w_{12}(u_{1}v_2)=\left\{\begin{array}{ll}
     -w_{uv}& uv\in E  \\
     0&u=v
\end{array}\right.$\;
$(\vec{f},\vec{y})\leftarrow\MinCostFlow (G_{12}, \vec{1}, W)$\;
\For{each edge $(u,v)\in\overline{E}$}{
$w_{uv}'\leftarrow w_{uv}+y_{u}-y_{v}$\;
}
Compute the shortest paths with source $s$ on $\overline{G}=(\overline{V}, \overline{E}, w')$\;
\end{algorithm}

\begin{theorem}
\label{thm:shortest-paths}
	Let $\overline{G} = (\overline V, \overline E)$ be a communication network with $\overline{n}$ vertices and $\overline{m}$ edges, 
	and $w: \overline E\rightarrow \{-W, -W+1, \dots, -1, 0, 1, \dots, W\}$ be an integral weight function. 
	For a vertex $s \in \overline V$,  there is a distributed algorithm to compute 
	the shortest path from $s$ to all the other vertices that has a shortest path from $s$ in 
	\[\widetilde O(\overline m^{3/7+o(1)} (\overline n^{1/2}D^{1/4} + D)\poly(\log W) )\] rounds in the \textsf{CONGEST} model.
\end{theorem}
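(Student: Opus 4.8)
The plan is to simulate Algorithm~\ref{shortestpaths}, which is exactly the reduction of Cohen et al.~\cite{CMSV17} from single-source shortest paths with negative weights to min-cost flow; correctness of this reduction is inherited from~\cite{CMSV17}, so the entire task is to verify that each step admits an efficient distributed implementation on $\overline{G}$.

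First I would realize the bipartite graph $G_{12}$ as a distributed minor of $\overline{G}$: map both $v_1$ and $v_2$ to the vertex $v$ of $\overline{G}$ (so every supervertex has size $1$), map each edge $u_1 v_2$ with $uv \in E$ to the edge $uv$ of $\overline{G}$, and map each identity edge $v_1 v_2$ to a self-loop at $v$. Since each vertex of $\overline{G}$ hosts two supervertices and each edge of $\overline{G}$ is the image of at most two edges of $G_{12}$, this is an $O(1)$-minor distribution, obtainable in $O(1)$ rounds, with the weights $w_{12}$ stored at the endpoints of the corresponding images. Although Theorem~\ref{thm:min-cost-flow} is phrased for $G_0$ equal to the communication network, its proof only uses that the input is $O(1)$-minor distributed over $\overline{G}$ (the argument tracks a $(\rho,\alpha)$-flow-preconditioned minor and the internal constructions --- the auxiliary vertex, the matching graph, the added source/sink --- raise $\rho,\alpha$ only by constants while paying an $O(\rho)$ overhead), so it applies to $G_{12}$ with the congestion increased by a constant factor. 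Hence \textsc{MinCostFlow}$(G_{12}, \vec{1}, W)$, with demand $\vec 1$ as prescribed by the reduction, runs in $\widetilde{O}(\overline{m}^{3/7+o(1)}(\overline{n}^{1/2}D^{1/4}+D)\poly(\log W))$ rounds and outputs an optimal flow together with the dual potentials $\vec{y}$ stored on the vertices of $G_{12}$, i.e.\ at the vertices of $\overline{G}$.

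Next I would perform the reweighting $w'_{uv} \leftarrow w_{uv} + y_u - y_v$. Since $y_u$ is extracted locally from the bipartite potentials (a constant-size computation at each $v \in \overline{V}$) and the endpoints of each edge $uv \in \overline{E}$ know $w_{uv}$, this is an $O(1)$-round local step (or an application of the basic primitives of Lemma~\ref{lem:basic_operation}). By correctness of the reduction~\cite{CMSV17}, $w'_e \ge 0$ for all $e$, and the reweighting shifts the length of every $s$--$v$ walk by the same amount $y_s - y_v$, so shortest paths with respect to $w'$ coincide with those with respect to $w$. Finally, running the non-negative-weight SSSP algorithm of Theorem~\ref{thm:directed_reachbility} from $s$ on $\overline{G}$ with the weights $w'$ takes $\widetilde{O}(\overline{n}^{1/2}D^{1/4}+D)$ rounds, and each reachable vertex recovers its true distance and parent pointer by undoing the $y_s - y_v$ shift. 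The reweighted weights are bounded by $\poly(W)\cdot\poly(\overline{n})$, which is accommodated with an extra $\poly(\log W)$ factor (and with no overhead when $W = \poly(\overline{n})$). Summing the three contributions yields the claimed $\widetilde{O}(\overline{m}^{3/7+o(1)}(\overline{n}^{1/2}D^{1/4}+D)\poly(\log W))$ bound.

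The main obstacle I anticipate is bookkeeping rather than mathematics: carefully confirming that Theorem~\ref{thm:min-cost-flow}, as analyzed, indeed tolerates an $O(1)$-minor-distributed input rather than only the communication network itself, that all intermediate graphs it constructs stay $\overline{n}^{o(1)}$-minor distributed, and --- crucially --- that the potential vector $\vec{y}$ is output in precisely the distributed storage convention needed for the local reweighting step. Once this is checked, every remaining step is immediate.
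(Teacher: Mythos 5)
Your proposal is correct and follows essentially the same approach as the paper: realize the bipartite reduction graph $G_{12}$ as a constant-congestion distributed minor of $\overline{G}$, invoke the distributed min-cost flow routine of Theorem~\ref{thm:min-cost-flow} to get the dual potentials $\vec{y}$, reweight locally to make edge lengths nonnegative, and finish with a single-source shortest-path computation. The paper's proof is more terse (it observes $G_{12}$ is a $(1,0)$-flow-preconditioned minor and then cites Theorem~\ref{thm:min-cost-flow} and Corollary~\ref{cor:shortest_path}), while you explicitly flag and resolve the bookkeeping point that the min-cost flow analysis must tolerate minor-distributed input --- a concern the paper leaves implicit but which indeed goes through since that proof already tracks $(\rho,\alpha)$-flow-preconditioned minors.
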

\begin{proof}
The strategy of Algorithm \ref{shortestpaths} is that transferring the given $G$ to a bipartite graph $G_{12}$ and calling the Algorithm \MinCostFlow on $G_{12}$ to obtain the dual solution $\vec{y}$, which is utilized to transform the original edge weights in ${G}$ to be non-negative, and then using the single source shortest path algorithm with non-negative weights on the new instance ${G}'=({V},{E},w$) to compute the shortest paths.

In a constant number of rounds, we can construct $G_{12}$ based on $\overline G$
that is a $(1, 0)$-flow-preconditioned minor distributed to $\overline G$. 
By Theorem~\ref{thm:min-cost-flow} and Corollary~\ref{cor:shortest_path}, Algorithm \ref{shortestpaths} can be simulated in the desired number of rounds.
\end{proof}

\section*{Acknowledgements}
\label{sec:acknowledgements}
Sebastian Forster is supported by the Austrian Science Fund (FWF): P 32863-N.
Yang P. Liu was supported by the Department of Defense (DoD) through
the National Defense Science and Engineering Graduate Fellowship (NDSEG) Program.
Richard Peng is supported by the National Science Foundation (NSF)
under Grant No. 1846218.
Xiaorui Sun is supported by start-up funds from University of Illinois at Chicago. 

\bibliographystyle{alpha}
\bibliography{refs}

\newcommand{\etalchar}[1]{$^{#1}$}
\begin{thebibliography}{vdBLN{\etalchar{+}}20}

\bibitem[AK20]{AK20}
Mohamad Ahmadi and Fabian Kuhn.
\newblock Distributed maximum matching verification in {CONGEST}.
\newblock In Hagit Attiya, editor, {\em 34th International Symposium on
  Distributed Computing, {DISC} 2020, October 12-16, 2020, Virtual Conference},
  volume 179 of {\em LIPIcs}, pages 37:1--37:18. Schloss Dagstuhl -
  Leibniz-Zentrum f{\"{u}}r Informatik, 2020.

\bibitem[AKM{\etalchar{+}}20]{AKMPSV19:arxiv}
AmirMahdi Ahmadinejad, Jonathan~A. Kelner, Jack Murtagh, John Peebles, Aaron
  Sidford, and Salil~P. Vadhan.
\newblock High-precision estimation of random walks in small space.
\newblock In {\em 61st {IEEE} Annual Symposium on Foundations of Computer
  Science, {FOCS} 2020, Durham, NC, USA, November 16-19, 2020}, pages
  1295--1306. {IEEE}, 2020.

\bibitem[AKO18]{AKO18}
Mohamad Ahmadi, Fabian Kuhn, and Rotem Oshman.
\newblock Distributed approximate maximum matching in the {CONGEST} model.
\newblock In Ulrich Schmid and Josef Widder, editors, {\em 32nd International
  Symposium on Distributed Computing, {DISC} 2018, New Orleans, LA, USA,
  October 15-19, 2018}, volume 121 of {\em LIPIcs}, pages 6:1--6:17. Schloss
  Dagstuhl - Leibniz-Zentrum f{\"{u}}r Informatik, 2018.

\bibitem[ALGV20]{ALGV20:arxiv}
Nima Anari, Kuikui Liu, Shayan~Oveis Gharan, and Cynthia Vinzant.
\newblock Log-concave polynomials {IV:} exchange properties, tight mixing
  times, and faster sampling of spanning trees.
\newblock {\em CoRR}, abs/2004.07220, 2020.

\bibitem[AMV20]{AMV20}
Kyriakos Axiotis, Aleksander Madry, and Adrian Vladu.
\newblock Circulation control for faster minimum cost flow in unit-capacity
  graphs.
\newblock In {\em 61st {IEEE} Annual Symposium on Foundations of Computer
  Science, {FOCS} 2020, Durham, NC, USA, November 16-19, 2020}, pages 93--104.
  {IEEE}, 2020.

\bibitem[AR19]{AR19}
Udit Agarwal and Vijaya Ramachandran.
\newblock Distributed weighted all pairs shortest paths through pipelining.
\newblock In {\em 2019 {IEEE} International Parallel and Distributed Processing
  Symposium, {IPDPS} 2019, Rio de Janeiro, Brazil, May 20-24, 2019}, pages
  23--32. {IEEE}, 2019.

\bibitem[AR20]{AR20}
Udit Agarwal and Vijaya Ramachandran.
\newblock Faster deterministic all pairs shortest paths in congest model.
\newblock In Christian Scheideler and Michael Spear, editors, {\em {SPAA} '20:
  32nd {ACM} Symposium on Parallelism in Algorithms and Architectures, Virtual
  Event, USA, July 15-17, 2020}, pages 11--21. {ACM}, 2020.

\bibitem[ARKP18]{ARKP18}
Udit Agarwal, Vijaya Ramachandran, Valerie King, and Matteo Pontecorvi.
\newblock A deterministic distributed algorithm for exact weighted all-pairs
  shortest paths in $\widetilde{O}(n^{3/2})$ rounds.
\newblock In Calvin Newport and Idit Keidar, editors, {\em Proceedings of the
  2018 {ACM} Symposium on Principles of Distributed Computing, {PODC} 2018,
  Egham, United Kingdom, July 23-27, 2018}, pages 199--205. {ACM}, 2018.

\bibitem[ASZ20]{AndoniSZ20}
Alexandr Andoni, Clifford Stein, and Peilin Zhong.
\newblock Parallel approximate undirected shortest paths via low hop emulators.
\newblock In Konstantin Makarychev, Yury Makarychev, Madhur Tulsiani, Gautam
  Kamath, and Julia Chuzhoy, editors, {\em Proccedings of the 52nd Annual {ACM}
  {SIGACT} Symposium on Theory of Computing, {STOC} 2020, Chicago, IL, USA,
  June 22-26, 2020}, pages 322--335. {ACM}, 2020.

\bibitem[BBV04]{BV04:book}
Stephen Boyd, Stephen~P Boyd, and Lieven Vandenberghe.
\newblock {\em Convex optimization}.
\newblock Cambridge university press, 2004.

\bibitem[BGK{\etalchar{+}}14]{BGKMPT14}
Guy~E. Blelloch, Anupam Gupta, Ioannis Koutis, Gary~L. Miller, Richard Peng,
  and Kanat Tangwongsan.
\newblock Nearly-linear work parallel {SDD} solvers, low-diameter
  decomposition, and low-stretch subgraphs.
\newblock {\em Theory Comput. Syst.}, 55(3):521--554, 2014.

\bibitem[BKKL17]{BKKL17}
Ruben Becker, Andreas Karrenbauer, Sebastian Krinninger, and Christoph Lenzen.
\newblock Near-optimal approximate shortest paths and transshipment in
  distributed and streaming models.
\newblock In Andr{\'{e}}a~W. Richa, editor, {\em 31st International Symposium
  on Distributed Computing, {DISC} 2017, October 16-20, 2017, Vienna, Austria},
  volume~91 of {\em LIPIcs}, pages 7:1--7:16. Schloss Dagstuhl -
  Leibniz-Zentrum f{\"{u}}r Informatik, 2017.

\bibitem[BN19]{BN19}
Aaron Bernstein and Danupon Nanongkai.
\newblock Distributed exact weighted all-pairs shortest paths in near-linear
  time.
\newblock In Moses Charikar and Edith Cohen, editors, {\em Proceedings of the
  51st Annual {ACM} {SIGACT} Symposium on Theory of Computing, {STOC} 2019,
  Phoenix, AZ, USA, June 23-26, 2019}, pages 334--342. {ACM}, 2019.

\bibitem[BS07]{BS07}
Surender Baswana and Sandeep Sen.
\newblock A simple and linear time randomized algorithm for computing sparse
  spanners in weighted graphs.
\newblock {\em Random Struct. Algorithms}, 30(4):532--563, 2007.
\newblock Announced at ICALP '03.

\bibitem[CDK{\etalchar{+}}21]{CDLKLPSV20}
Parinya Chalermsook, Syamantak Das, Yunbum Kook, Bundit Laekhanukit, Yang~P.
  Liu, Richard Peng, Mark Sellke, and Daniel Vaz.
\newblock Vertex sparsification for edge connectivity.
\newblock In D{\'{a}}niel Marx, editor, {\em Proceedings of the 2021 {ACM-SIAM}
  Symposium on Discrete Algorithms, {SODA} 2021, Virtual Conference, January 10
  - 13, 2021}, pages 1206--1225. {SIAM}, 2021.

\bibitem[CGH16]{CGH16}
Yun~Kuen Cheung, Gramoz Goranci, and Monika Henzinger.
\newblock Graph minors for preserving terminal distances approximately - lower
  and upper bounds.
\newblock In Ioannis Chatzigiannakis, Michael Mitzenmacher, Yuval Rabani, and
  Davide Sangiorgi, editors, {\em 43rd International Colloquium on Automata,
  Languages, and Programming, {ICALP} 2016, July 11-15, 2016, Rome, Italy},
  volume~55 of {\em LIPIcs}, pages 131:1--131:14. Schloss Dagstuhl -
  Leibniz-Zentrum f{\"{u}}r Informatik, 2016.

\bibitem[CGP{\etalchar{+}}18]{CGPSSW18}
Timothy Chu, Yu~Gao, Richard Peng, Sushant Sachdeva, Saurabh Sawlani, and
  Junxing Wang.
\newblock Graph sparsification, spectral sketches, and faster resistance
  computation, via short cycle decompositions.
\newblock In Mikkel Thorup, editor, {\em 59th {IEEE} Annual Symposium on
  Foundations of Computer Science, {FOCS} 2018, Paris, France, October 7-9,
  2018}, pages 361--372. {IEEE} Computer Society, 2018.

\bibitem[CKM{\etalchar{+}}11]{ChristianoKMST11}
Paul Christiano, Jonathan~A. Kelner, Aleksander Madry, Daniel~A. Spielman, and
  Shang{-}Hua Teng.
\newblock Electrical flows, {L}aplacian systems, and faster approximation of
  maximum flow in undirected graphs.
\newblock In {\em Proceedings of the 43rd {ACM} Symposium on Theory of
  Computing, {STOC} 2011, San Jose, CA, USA, 6-8 June 2011}, pages 273--282,
  2011.

\bibitem[CKM{\etalchar{+}}14]{CKMPPRX14}
Michael~B. Cohen, Rasmus Kyng, Gary~L. Miller, Jakub~W. Pachocki, Richard Peng,
  Anup Rao, and Shen~Chen Xu.
\newblock Solving {SDD} linear systems in nearly $m \log^{1/2} n$ time.
\newblock In {\em STOC}, pages 343--352, 2014.

\bibitem[CLLM10]{CLLM10}
Moses Charikar, Tom Leighton, Shi Li, and Ankur Moitra.
\newblock Vertex sparsifiers and abstract rounding algorithms.
\newblock In {\em 51th Annual {IEEE} Symposium on Foundations of Computer
  Science, {FOCS} 2010, October 23-26, 2010, Las Vegas, Nevada, {USA}}, pages
  265--274, 2010.

\bibitem[CLM{\etalchar{+}}15]{CLMMPS15}
Michael~B. Cohen, Yin~Tat Lee, Cameron Musco, Christopher Musco, Richard Peng,
  and Aaron Sidford.
\newblock Uniform sampling for matrix approximation.
\newblock In Tim Roughgarden, editor, {\em Proceedings of the 2015 Conference
  on Innovations in Theoretical Computer Science, {ITCS} 2015, Rehovot, Israel,
  January 11-13, 2015}, pages 181--190. {ACM}, 2015.

\bibitem[CM20]{CM20}
Shiri Chechik and Doron Mukhtar.
\newblock Single-source shortest paths in the {CONGEST} model with improved
  bound.
\newblock In Yuval Emek and Christian Cachin, editors, {\em {PODC} '20: {ACM}
  Symposium on Principles of Distributed Computing, Virtual Event, Italy,
  August 3-7, 2020}, pages 464--473. {ACM}, 2020.

\bibitem[CMM17]{CMM17}
Michael~B. Cohen, Cameron Musco, and Christopher Musco.
\newblock Input sparsity time low-rank approximation via ridge leverage score
  sampling.
\newblock In Philip~N. Klein, editor, {\em Proceedings of the Twenty-Eighth
  Annual {ACM-SIAM} Symposium on Discrete Algorithms, {SODA} 2017, Barcelona,
  Spain, Hotel Porta Fira, January 16-19}, pages 1758--1777. {SIAM}, 2017.

\bibitem[CMSV17]{CMSV17}
Michael~B. Cohen, Aleksander Madry, Piotr Sankowski, and Adrian Vladu.
\newblock Negative-weight shortest paths and unit capacity minimum cost flow in
  $\widetilde{O}(m^{10/7} \log{W})$ time (extended abstract).
\newblock In Philip~N. Klein, editor, {\em Proceedings of the Twenty-Eighth
  Annual {ACM-SIAM} Symposium on Discrete Algorithms, {SODA} 2017, Barcelona,
  Spain, Hotel Porta Fira, January 16-19}, pages 752--771. {SIAM}, 2017.

\bibitem[Coh95]{cohen1995approximate}
Edith Cohen.
\newblock Approximate max-flow on small depth networks.
\newblock {\em SIAM Journal on Computing}, 24(3):579--597, 1995.

\bibitem[CP15]{CP15}
Michael~B. Cohen and Richard Peng.
\newblock $\ell_p$ row sampling by {L}ewis weights.
\newblock In {\em Proceedings of the Forty-Seventh Annual ACM on Symposium on
  Theory of Computing}, STOC '15, pages 183--192, New York, NY, USA, 2015. ACM.

\bibitem[CPZ19]{CPZ19}
Yi{-}Jun Chang, Seth Pettie, and Hengjie Zhang.
\newblock Distributed triangle detection via expander decomposition.
\newblock In Timothy~M. Chan, editor, {\em Proceedings of the Thirtieth Annual
  {ACM-SIAM} Symposium on Discrete Algorithms, {SODA} 2019, San Diego,
  California, USA, January 6-9, 2019}, pages 821--840. {SIAM}, 2019.

\bibitem[CS19]{CS19}
Yi-Jun Chang and Thatchaphol Saranurak.
\newblock Improved distributed expander decomposition and nearly optimal
  triangle enumeration.
\newblock In {\em Proceedings of the 2019 ACM Symposium on Principles of
  Distributed Computing}, pages 66--73, 2019.

\bibitem[CS20]{CS20}
Yi-Jun Chang and Thatchaphol Saranurak.
\newblock Deterministic distributed expander decomposition and routing with
  applications in distributed derandomization.
\newblock {\em arXiv preprint arXiv:2007.14898}, 2020.

\bibitem[DEMN21]{DEMN20}
Michal Dory, Yuval Efron, Sagnik Mukhopadhyay, and Danupon Nanongkai.
\newblock Distributed weighted min-cut in nearly-optimal time.
\newblock In Samir Khuller and Virginia~Vassilevska Williams, editors, {\em
  {STOC} '21: 53rd Annual {ACM} {SIGACT} Symposium on Theory of Computing,
  Virtual Event, Italy, June 21-25, 2021}, pages 1144--1153. {ACM}, 2021.

\bibitem[DGGP19]{DGGP19}
David Durfee, Yu~Gao, Gramoz Goranci, and Richard Peng.
\newblock Fully dynamic spectral vertex sparsifiers and applications.
\newblock In Moses Charikar and Edith Cohen, editors, {\em Proceedings of the
  51st Annual {ACM} {SIGACT} Symposium on Theory of Computing, {STOC} 2019,
  Phoenix, AZ, USA, June 23-26, 2019}, pages 914--925. {ACM}, 2019.

\bibitem[DGT17]{DGT17}
Dean Doron, Fran{\c{c}}ois~Le Gall, and Amnon Ta{-}Shma.
\newblock Probabilistic logarithmic-space algorithms for laplacian solvers.
\newblock In Klaus Jansen, Jos{\'{e}} D.~P. Rolim, David Williamson, and
  Santosh~S. Vempala, editors, {\em Approximation, Randomization, and
  Combinatorial Optimization. Algorithms and Techniques, {APPROX/RANDOM} 2017,
  August 16-18, 2017, Berkeley, CA, {USA}}, volume~81 of {\em LIPIcs}, pages
  41:1--41:20. Schloss Dagstuhl - Leibniz-Zentrum f{\"{u}}r Informatik, 2017.

\bibitem[DHNS19]{DHNS19}
Mohit Daga, Monika Henzinger, Danupon Nanongkai, and Thatchaphol Saranurak.
\newblock Distributed edge connectivity in sublinear time.
\newblock In Moses Charikar and Edith Cohen, editors, {\em Proceedings of the
  51st Annual {ACM} {SIGACT} Symposium on Theory of Computing, {STOC} 2019,
  Phoenix, AZ, USA, June 23-26, 2019}, pages 343--354. {ACM}, 2019.

\bibitem[DKP{\etalchar{+}}17]{DKPRS17}
David Durfee, Rasmus Kyng, John Peebles, Anup~B Rao, and Sushant Sachdeva.
\newblock Sampling random spanning trees faster than matrix multiplication.
\newblock In {\em Proceedings of the 49th Annual ACM SIGACT Symposium on Theory
  of Computing}, pages 730--742, 2017.

\bibitem[DPPR20]{DPPR20}
David Durfee, John Peebles, Richard Peng, and Anup~B. Rao.
\newblock Determinant-preserving sparsification of {SDDM} matrices.
\newblock {\em {SIAM} J. Comput.}, 49(4), 2020.

\bibitem[DSHK{\etalchar{+}}12]{SHKKNPPW12}
Atish Das~Sarma, Stephan Holzer, Liah Kor, Amos Korman, Danupon Nanongkai,
  Gopal Pandurangan, David Peleg, and Roger Wattenhofer.
\newblock Distributed verification and hardness of distributed approximation.
\newblock {\em {SIAM} Journal on Computing}, 41(5):1235--1265, 2012.
\newblock Announced at STOC '11.

\bibitem[DSMPU15]{SMPU15}
Atish Das~Sarma, Anisur~Rahaman Molla, Gopal Pandurangan, and Eli Upfal.
\newblock Fast distributed pagerank computation.
\newblock {\em Theor. Comput. Sci.}, 561:113--121, 2015.

\bibitem[DSNPT13]{DNPT13}
Atish Das~Sarma, Danupon Nanongkai, Gopal Pandurangan, and Prasad Tetali.
\newblock Distributed random walks.
\newblock {\em Journal of the ACM (JACM)}, 60(1):1--31, 2013.

\bibitem[DST17]{DST17}
Dean Doron, Amir Sarid, and Amnon Ta{-}Shma.
\newblock On approximating the eigenvalues of stochastic matrices in
  probabilistic logspace.
\newblock {\em Comput. Complex.}, 26(2):393--420, 2017.

\bibitem[EGK{\etalchar{+}}14]{EGKRTT10}
Matthias Englert, Anupam Gupta, Robert Krauthgamer, Harald R{\"{a}}cke, Inbal
  Talgam{-}Cohen, and Kunal Talwar.
\newblock Vertex sparsifiers: New results from old techniques.
\newblock {\em {SIAM} J. Comput.}, 43(4):1239--1262, 2014.
\newblock Announced at APPROX-RANDOM '10.

\bibitem[Elk06]{E06}
Michael Elkin.
\newblock An unconditional lower bound on the time-approximation trade-off for
  the distributed minimum spanning tree problem.
\newblock {\em {SIAM} J. Comput.}, 36(2):433--456, 2006.
\newblock Announced at STOC '04.

\bibitem[Elk20]{E20}
Michael Elkin.
\newblock Distributed exact shortest paths in sublinear time.
\newblock {\em J. {ACM}}, 67(3):15:1--15:36, 2020.
\newblock Announced at STOC '17.

\bibitem[EN18]{EN18}
Michael Elkin and Ofer Neiman.
\newblock On efficient distributed construction of near optimal routing
  schemes.
\newblock {\em Distributed Comput.}, 31(2):119--137, 2018.
\newblock Announced at PODC '16.

\bibitem[EN19a]{EN19-SICOMP}
Michael Elkin and Ofer Neiman.
\newblock Hopsets with constant hopbound, and applications to approximate
  shortest paths.
\newblock {\em {SIAM} J. Comput.}, 48(4):1436--1480, 2019.
\newblock Announced at FOCS '16.

\bibitem[EN19b]{EN19-SPAA}
Michael Elkin and Ofer Neiman.
\newblock Linear-size hopsets with small hopbound, and constant-hopbound
  hopsets in {RNC}.
\newblock In Christian Scheideler and Petra Berenbrink, editors, {\em The 31st
  {ACM} on Symposium on Parallelism in Algorithms and Architectures, {SPAA}
  2019, Phoenix, AZ, USA, June 22-24, 2019}, pages 333--341. {ACM}, 2019.

\bibitem[FF56]{FF56}
L.~R. Ford and D.~R. Fulkerson.
\newblock Maximal flow through a network.
\newblock {\em Canadian Journal of Mathematics}, 8:399–404, 1956.

\bibitem[FN18]{FN18}
Sebastian Forster and Danupon Nanongkai.
\newblock A faster distributed single-source shortest paths algorithm.
\newblock In Mikkel Thorup, editor, {\em 59th {IEEE} Annual Symposium on
  Foundations of Computer Science, {FOCS} 2018, Paris, France, October 7-9,
  2018}, pages 686--697. {IEEE} Computer Society, 2018.

\bibitem[GB20]{GB20}
Iqra~Altaf Gillani and Amitabha Bagchi.
\newblock A queueing network-based distributed laplacian solver.
\newblock In {\em Proceedings of the 32nd ACM Symposium on Parallelism in
  Algorithms and Architectures}, SPAA '20, page 535–537, New York, NY, USA,
  2020. Association for Computing Machinery.

\bibitem[GHS83]{GHS83}
Robert~G. Gallager, Pierre~A. Humblet, and Philip~M. Spira.
\newblock A distributed algorithm for minimum-weight spanning trees.
\newblock {\em {ACM} Trans. Program. Lang. Syst.}, 5(1):66--77, 1983.

\bibitem[GK13]{GK13}
Mohsen Ghaffari and Fabian Kuhn.
\newblock Distributed minimum cut approximation.
\newblock In Yehuda Afek, editor, {\em Distributed Computing - 27th
  International Symposium, {DISC} 2013, Jerusalem, Israel, October 14-18, 2013.
  Proceedings}, volume 8205 of {\em Lecture Notes in Computer Science}, pages
  1--15. Springer, 2013.

\bibitem[GKK{\etalchar{+}}15]{GKKLP15}
Mohsen Ghaffari, Andreas Karrenbauer, Fabian Kuhn, Christoph Lenzen, and Boaz
  Patt{-}Shamir.
\newblock Near-optimal distributed maximum flow: Extended abstract.
\newblock In Chryssis Georgiou and Paul~G. Spirakis, editors, {\em Proceedings
  of the 2015 {ACM} Symposium on Principles of Distributed Computing, {PODC}
  2015, Donostia-San Sebasti{\'{a}}n, Spain, July 21 - 23, 2015}, pages 81--90.
  {ACM}, 2015.

\bibitem[GKP98]{GKP98}
Juan~A. Garay, Shay Kutten, and David Peleg.
\newblock A sublinear time distributed algorithm for minimum-weight spanning
  trees.
\newblock {\em {SIAM} J. Comput.}, 27(1):302--316, 1998.
\newblock Announced at FOCS '93.

\bibitem[GL18]{GL18}
Mohsen Ghaffari and Jason Li.
\newblock Improved distributed algorithms for exact shortest paths.
\newblock In Ilias Diakonikolas, David Kempe, and Monika Henzinger, editors,
  {\em Proceedings of the 50th Annual {ACM} {SIGACT} Symposium on Theory of
  Computing, {STOC} 2018, Los Angeles, CA, USA, June 25-29, 2018}, pages
  431--444. {ACM}, 2018.

\bibitem[GNT20]{GNT20}
Mohsen Ghaffari, Krzysztof Nowicki, and Mikkel Thorup.
\newblock Faster algorithms for edge connectivity via random 2-out
  contractions.
\newblock In Shuchi Chawla, editor, {\em Proceedings of the 2020 {ACM-SIAM}
  Symposium on Discrete Algorithms, {SODA} 2020, Salt Lake City, UT, USA,
  January 5-8, 2020}, pages 1260--1279. {SIAM}, 2020.

\bibitem[Gor19]{Goranci19:thesis}
Gramoz Goranci.
\newblock Dynamic graph algorithms and graph sparsification: New techniques and
  connections.
\newblock {\em CoRR}, abs/1909.06413, 2019.

\bibitem[Gre96]{G96:thesis}
Keith~D Gremban.
\newblock {\em Combinatorial preconditioners for sparse, symmetric, diagonally
  dominant linear systems}.
\newblock PhD thesis, Carnegie Mellon University, 1996.

\bibitem[GU15]{GU15}
Mohsen Ghaffari and Rajan Udwani.
\newblock Brief announcement: Distributed single-source reachability.
\newblock In {\em Proceedings of the 2015 ACM Symposium on Principles of
  Distributed Computing}, pages 163--165, 2015.

\bibitem[HK73]{HK73}
John~E. Hopcroft and Richard~M. Karp.
\newblock An $n^{5/2}$ algorithm for maximum matchings in bipartite graphs.
\newblock {\em {SIAM} J. Comput.}, 2(4):225--231, 1973.

\bibitem[HKN16]{HKN16}
Monika Henzinger, Sebastian Krinninger, and Danupon Nanongkai.
\newblock A deterministic almost-tight distributed algorithm for approximating
  single-source shortest paths.
\newblock In Daniel Wichs and Yishay Mansour, editors, {\em Proceedings of the
  48th Annual {ACM} {SIGACT} Symposium on Theory of Computing, {STOC} 2016,
  Cambridge, MA, USA, June 18-21, 2016}, pages 489--498. {ACM}, 2016.

\bibitem[HNS17]{HNS17}
Chien{-}Chung Huang, Danupon Nanongkai, and Thatchaphol Saranurak.
\newblock Distributed exact weighted all-pairs shortest paths in $\widetilde
  {O}(n^{5/4})$) rounds.
\newblock In Chris Umans, editor, {\em 58th {IEEE} Annual Symposium on
  Foundations of Computer Science, {FOCS} 2017, Berkeley, CA, USA, October
  15-17, 2017}, pages 168--179. {IEEE} Computer Society, 2017.

\bibitem[HW12]{HW12}
Stephan Holzer and Roger Wattenhofer.
\newblock Optimal distributed all pairs shortest paths and applications.
\newblock In Darek Kowalski and Alessandro Panconesi, editors, {\em {ACM}
  Symposium on Principles of Distributed Computing, {PODC} '12, Funchal,
  Madeira, Portugal, July 16-18, 2012}, pages 355--364. {ACM}, 2012.

\bibitem[IG17]{IG17}
Taisuke Izumi and Fran{\c{c}}ois~Le Gall.
\newblock Triangle finding and listing in {CONGEST} networks.
\newblock In Elad~Michael Schiller and Alexander~A. Schwarzmann, editors, {\em
  Proceedings of the {ACM} Symposium on Principles of Distributed Computing,
  {PODC} 2017, Washington, DC, USA, July 25-27, 2017}, pages 381--389. {ACM},
  2017.

\bibitem[Ind06]{Indyk06}
Piotr Indyk.
\newblock Stable distributions, pseudorandom generators, embeddings, and data
  stream computation.
\newblock {\em Journal of the ACM (JACM)}, 53(3):307--323, 2006.

\bibitem[JLS19]{LJS19}
Arun Jambulapati, Yang~P. Liu, and Aaron Sidford.
\newblock Parallel reachability in almost linear work and square root depth.
\newblock In David Zuckerman, editor, {\em 60th {IEEE} Annual Symposium on
  Foundations of Computer Science, {FOCS} 2019, Baltimore, Maryland, USA,
  November 9-12, 2019}, pages 1664--1686. {IEEE} Computer Society, 2019.

\bibitem[KLOS14]{KLOS14}
Jonathan~A. Kelner, Yin~Tat Lee, Lorenzo Orecchia, and Aaron Sidford.
\newblock An almost-linear-time algorithm for approximate max flow in
  undirected graphs, and its multicommodity generalizations.
\newblock In {\em Proceedings of the Twenty-Fifth Annual {ACM-SIAM} Symposium
  on Discrete Algorithms, {SODA} 2014, Portland, Oregon, USA, January 5-7,
  2014}, pages 217--226, 2014.

\bibitem[KLP{\etalchar{+}}16]{KLPSS16}
Rasmus Kyng, Yin~Tat Lee, Richard Peng, Sushant Sachdeva, and Daniel~A.
  Spielman.
\newblock Sparsified cholesky and multigrid solvers for connection laplacians.
\newblock In Daniel Wichs and Yishay Mansour, editors, {\em Proceedings of the
  48th Annual {ACM} {SIGACT} Symposium on Theory of Computing, {STOC} 2016,
  Cambridge, MA, USA, June 18-21, 2016}, pages 842--850. {ACM}, 2016.

\bibitem[KMP10]{KMP10}
Ioannis Koutis, Gary~L. Miller, and Richard Peng.
\newblock Approaching optimality for solving {SDD} linear systems.
\newblock In {\em 51th Annual {IEEE} Symposium on Foundations of Computer
  Science, {FOCS} 2010, October 23-26, 2010, Las Vegas, Nevada, {USA}}, pages
  235--244, 2010.

\bibitem[KMP11]{KMP11}
Ioannis Koutis, Gary~L. Miller, and Richard Peng.
\newblock A nearly-m log n time solver for {SDD} linear systems.
\newblock In {\em {IEEE} 52nd Annual Symposium on Foundations of Computer
  Science, {FOCS} 2011, Palm Springs, CA, USA, October 22-25, 2011}, pages
  590--598, 2011.

\bibitem[KNZ14]{KrauthgamerNZ14}
Robert Krauthgamer, Huy~L. Nguyen, and Tamar Zondiner.
\newblock Preserving terminal distances using minors.
\newblock {\em {SIAM} J. Discret. Math.}, 28(1):127--141, 2014.

\bibitem[KOSA13]{KOSZ13}
Jonathan~A. Kelner, Lorenzo Orecchia, Aaron Sidford, and Zeyuan {Allen Zhu}.
\newblock A simple, combinatorial algorithm for solving {SDD} systems in
  nearly-linear time.
\newblock In {\em Symposium on Theory of Computing Conference, STOC'13, Palo
  Alto, CA, USA, June 1-4, 2013}, pages 911--920, 2013.

\bibitem[KP98]{KP98}
Shay Kutten and David Peleg.
\newblock Fast distributed construction of small $k$-dominating sets and
  applications.
\newblock {\em J. Algorithms}, 28(1):40--66, 1998.

\bibitem[KR13]{KrauthgamerR13}
Robert Krauthgamer and Inbal Rika.
\newblock Mimicking networks and succinct representations of terminal cuts.
\newblock In Sanjeev Khanna, editor, {\em Proceedings of the Twenty-Fourth
  Annual {ACM-SIAM} Symposium on Discrete Algorithms, {SODA} 2013, New Orleans,
  Louisiana, USA, January 6-8, 2013}, pages 1789--1799. {SIAM}, 2013.

\bibitem[KS16]{KS16}
Rasmus Kyng and Sushant Sachdeva.
\newblock Approximate gaussian elimination for laplacians - fast, sparse, and
  simple.
\newblock In Irit Dinur, editor, {\em {IEEE} 57th Annual Symposium on
  Foundations of Computer Science, {FOCS} 2016, 9-11 October 2016, Hyatt
  Regency, New Brunswick, New Jersey, {USA}}, pages 573--582. {IEEE} Computer
  Society, 2016.

\bibitem[KX16]{KX16}
Ioannis Koutis and Shen~Chen Xu.
\newblock Simple parallel and distributed algorithms for spectral graph
  sparsification.
\newblock {\em {ACM} Trans. Parallel Comput.}, 3(2):14:1--14:14, 2016.

\bibitem[Kyn17]{K17:thesis}
Rasmus Kyng.
\newblock {\em Approximate Gaussian Elimination}.
\newblock PhD thesis, Yale University, 2017.
\newblock Available at: http://rasmuskyng.com/rjkyng-dissertation.pdf.

\bibitem[Li20]{Li20}
Jason Li.
\newblock Faster parallel algorithm for approximate shortest path.
\newblock In Konstantin Makarychev, Yury Makarychev, Madhur Tulsiani, Gautam
  Kamath, and Julia Chuzhoy, editors, {\em Proccedings of the 52nd Annual {ACM}
  {SIGACT} Symposium on Theory of Computing, {STOC} 2020, Chicago, IL, USA,
  June 22-26, 2020}, pages 308--321. {ACM}, 2020.

\bibitem[LM10]{LM10}
F~Thomson Leighton and Ankur Moitra.
\newblock Extensions and limits to vertex sparsification.
\newblock In {\em Proceedings of the forty-second ACM symposium on Theory of
  computing}, pages 47--56. ACM, 2010.

\bibitem[LP16]{LyonsP16}
Russell Lyons and Yuval Peres.
\newblock {\em Probability on Trees and Networks}.
\newblock Cambridge University Press, 2016.

\bibitem[LPP19]{LPP19}
Christoph Lenzen, Boaz Patt{-}Shamir, and David Peleg.
\newblock Distributed distance computation and routing with small messages.
\newblock {\em Distributed Comput.}, 32(2):133--157, 2019.

\bibitem[LPS15]{LeePS15}
Yin~Tat Lee, Richard Peng, and Daniel~A. Spielman.
\newblock Sparsified {C}holesky solvers for {SDD} linear systems.
\newblock {\em CoRR}, abs/1506.08204, 2015.

\bibitem[LS18]{LS18}
Huan Li and Aaron Schild.
\newblock Spectral subspace sparsification.
\newblock In {\em 2018 IEEE 59th Annual Symposium on Foundations of Computer
  Science (FOCS)}, pages 385--396. IEEE, 2018.

\bibitem[LS20a]{LS20:arxiv}
Yang~P. Liu and Aaron Sidford.
\newblock Faster divergence maximization for faster maximum flow.
\newblock {\em CoRR}, abs/2003.08929, 2020.

\bibitem[LS20b]{LS20}
Yang~P. Liu and Aaron Sidford.
\newblock Faster energy maximization for faster maximum flow.
\newblock In Konstantin Makarychev, Yury Makarychev, Madhur Tulsiani, Gautam
  Kamath, and Julia Chuzhoy, editors, {\em Proccedings of the 52nd Annual {ACM}
  {SIGACT} Symposium on Theory of Computing, {STOC} 2020, Chicago, IL, USA,
  June 22-26, 2020}, pages 803--814. {ACM}, 2020.

\bibitem[Mad16]{M16}
Aleksander Madry.
\newblock Computing maximum flow with augmenting electrical flows.
\newblock In {\em {IEEE} 57th Annual Symposium on Foundations of Computer
  Science, {FOCS} 2016, 9-11 October 2016, Hyatt Regency, New Brunswick, New
  Jersey, {USA}}, pages 593--602, 2016.
\newblock Available at: https://arxiv.org/abs/1608.06016.

\bibitem[Moi09]{M09}
Ankur Moitra.
\newblock Approximation algorithms for multicommodity-type problems with
  guarantees independent of the graph size.
\newblock In {\em 2009 50th Annual IEEE Symposium on Foundations of Computer
  Science}, pages 3--12. IEEE, 2009.

\bibitem[MP13]{MP13}
Gary~L. Miller and Richard Peng.
\newblock Approximate maximum flow on separable undirected graphs.
\newblock In {\em Proceedings of the Twenty-Fourth Annual ACM-SIAM Symposium on
  Discrete Algorithms}, pages 1151--1170. Society for Industrial and Applied
  Mathematics, 2013.
\newblock Available at http://arxiv.org/abs/1210.5227.

\bibitem[MR89]{MR89}
Gary~L. Miller and John~H. Reif.
\newblock Parallel tree contraction part 1: Fundamentals.
\newblock {\em Adv. Comput. Res.}, 5:47--72, 1989.

\bibitem[MRSV17]{MurtaghRSV17}
Jack Murtagh, Omer Reingold, Aaron Sidford, and Salil~P. Vadhan.
\newblock Derandomization beyond connectivity: Undirected laplacian systems in
  nearly logarithmic space.
\newblock In Chris Umans, editor, {\em 58th {IEEE} Annual Symposium on
  Foundations of Computer Science, {FOCS} 2017, Berkeley, CA, USA, October
  15-17, 2017}, pages 801--812. {IEEE} Computer Society, 2017.

\bibitem[MRSV19]{MurtaghRSV19}
Jack Murtagh, Omer Reingold, Aaron Sidford, and Salil~P. Vadhan.
\newblock Deterministic approximation of random walks in small space.
\newblock In Dimitris Achlioptas and L{\'{a}}szl{\'{o}}~A. V{\'{e}}gh, editors,
  {\em Approximation, Randomization, and Combinatorial Optimization. Algorithms
  and Techniques, {APPROX/RANDOM} 2019, September 20-22, 2019, Massachusetts
  Institute of Technology, Cambridge, MA, {USA}}, volume 145 of {\em LIPIcs},
  pages 42:1--42:22. Schloss Dagstuhl - Leibniz-Zentrum f{\"{u}}r Informatik,
  2019.

\bibitem[Nan14]{N14}
Danupon Nanongkai.
\newblock Distributed approximation algorithms for weighted shortest paths.
\newblock In David~B. Shmoys, editor, {\em Symposium on Theory of Computing,
  {STOC} 2014, New York, NY, USA, May 31 - June 03, 2014}, pages 565--573.
  {ACM}, 2014.

\bibitem[NS14]{NS14}
Danupon Nanongkai and Hsin{-}Hao Su.
\newblock Almost-tight distributed minimum cut algorithms.
\newblock In Fabian Kuhn, editor, {\em Distributed Computing - 28th
  International Symposium, {DISC} 2014, Austin, TX, USA, October 12-15, 2014.
  Proceedings}, volume 8784 of {\em Lecture Notes in Computer Science}, pages
  439--453. Springer, 2014.

\bibitem[Pel00]{Peleg00}
David Peleg.
\newblock {\em Distributed Computing: A Locality-Sensitive Approach}.
\newblock SIAM, Philadelphia, PA, 2000.

\bibitem[Pen13]{P13:thesis}
Richard Peng.
\newblock {\em Algorithm Design Using Spectral Graph Theory}.
\newblock PhD thesis, Carnegie Mellon University, Pittsburgh, August 2013.
\newblock Available at:
  http://reports-archive.adm.cs.cmu.edu/anon/2013/CMU-CS-13-121.pdf.

\bibitem[Pen16]{P16}
Richard Peng.
\newblock Approximate undirected maximum flows in ${O} (m polylog (n))$ time.
\newblock In {\em Proceedings of the Twenty-Seventh Annual ACM-SIAM Symposium
  on Discrete Algorithms}, pages 1862--1867. SIAM, 2016.
\newblock Available at http://arxiv.org/abs/1411.7631.

\bibitem[PR00]{PR00}
David Peleg and Vitaly Rubinovich.
\newblock A near-tight lower bound on the time complexity of distributed
  minimum-weight spanning tree construction.
\newblock {\em {SIAM} J. Comput.}, 30(5):1427--1442, 2000.
\newblock Announced at FOCS '93.

\bibitem[PS14]{PS14}
Richard Peng and Daniel~A. Spielman.
\newblock An efficient parallel solver for {SDD} linear systems.
\newblock In David~B. Shmoys, editor, {\em Symposium on Theory of Computing,
  {STOC} 2014, New York, NY, USA, May 31 - June 03, 2014}, pages 333--342.
  {ACM}, 2014.

\bibitem[R{\"{a}}c02]{Racke02}
Harald R{\"{a}}cke.
\newblock Minimizing congestion in general networks.
\newblock In {\em 43rd Symposium on Foundations of Computer Science {(FOCS}
  2002), 16-19 November 2002, Vancouver, BC, Canada, Proceedings}, pages
  43--52. {IEEE} Computer Society, 2002.

\bibitem[RST14]{RackeST14}
Harald R{\"{a}}cke, Chintan Shah, and Hanjo T{\"{a}}ubig.
\newblock Computing cut-based hierarchical decompositions in almost linear
  time.
\newblock In Chandra Chekuri, editor, {\em Proceedings of the Twenty-Fifth
  Annual {ACM-SIAM} Symposium on Discrete Algorithms, {SODA} 2014, Portland,
  Oregon, USA, January 5-7, 2014}, pages 227--238. {SIAM}, 2014.

\bibitem[She13]{Sherman13}
Jonah Sherman.
\newblock Nearly maximum flows in nearly linear time.
\newblock In {\em 54th Annual {IEEE} Symposium on Foundations of Computer
  Science, {FOCS} 2013, 26-29 October, 2013, Berkeley, CA, {USA}}, pages
  263--269, 2013.

\bibitem[SRS18]{SRS18}
Aaron Schild, Satish Rao, and Nikhil Srivastava.
\newblock Localization of electrical flows.
\newblock In {\em Proceedings of the Twenty-Ninth Annual ACM-SIAM Symposium on
  Discrete Algorithms}, pages 1577--1584. SIAM, 2018.

\bibitem[SS11]{SS11}
Daniel~A. Spielman and Nikhil Srivastava.
\newblock Graph sparsification by effective resistances.
\newblock {\em {SIAM} J. Comput.}, 40(6):1913--1926, 2011.

\bibitem[ST14]{ST14}
D.~Spielman and S.~Teng.
\newblock Nearly linear time algorithms for preconditioning and solving
  symmetric, diagonally dominant linear systems.
\newblock {\em SIAM Journal on Matrix Analysis and Applications},
  35(3):835--885, 2014.
\newblock Available at http://arxiv.org/abs/cs/0607105.

\bibitem[Tro11]{Tropp11}
Joel Tropp.
\newblock Freedman's inequality for matrix martingales.
\newblock {\em Electronic Communications in Probability}, 16:262--270, 2011.

\bibitem[vdBLN{\etalchar{+}}20]{matching}
Jan van~den Brand, Yin~Tat Lee, Danupon Nanongkai, Richard Peng, Thatchaphol
  Saranurak, Aaron Sidford, Zhao Song, and Di~Wang.
\newblock Bipartite matching in nearly-linear time on moderately dense graphs.
\newblock In {\em 61st {IEEE} Annual Symposium on Foundations of Computer
  Science, {FOCS} 2020, Durham, NC, USA, November 16-19, 2020}, pages 919--930.
  {IEEE}, 2020.

\bibitem[Vis12]{Vishnoi12}
Nisheeth~K. Vishnoi.
\newblock {\em $Lx = b$ Laplacian Solvers and Their Algorithmic Applications}.
\newblock now publishers, 2012.

\end{thebibliography}

\appendix

\section{Lower Bound}
\label{sec:LowerBound}

\LowerBound*

\begin{proof}
	Note first that any low-accuracy solver with accuracy $ \epsilon \leq \tfrac{1}{2} $ can be boosted to a solver with precision $ \tilde{\epsilon} $ at the cost of increasing the running time by a factor of $ O (\log \tilde{\epsilon}^{-1}) $ (see Lemma~1.6.8. in~\cite{P13:thesis}).
	This is done by running $ O (\log \tilde{\epsilon}^{-1}) $ iterations of the iterative refinement method, which in each iteration performs one matrix-vector product, one vector substraction, one vector addition, and one call to the low-accuracy solver.
	Excluding the call to the low-accuracy solver, all of these operations can be performed in a constant number of rounds.
	The running time of the iterative refinement method is therefore dominated by the $ O (\log \tilde{\epsilon}^{-1}) $ calls to the low-accuracy solver.
	We thus assume in the following that we are given a high-accuracy solver with accuracy $ \epsilon = 1/\poly(n) $, which as just argued requires only an overhead of $ O (\log n) $ in the number of rounds compared to a low-accuracy solver.
	
	To prove the lower bound we use the framework of Das Sarma et.\ 
	al.~\cite{SHKKNPPW12} who established an $ \Omega (\sqrt{\overline{n}}/(\log 
	\overline{n}) + D) $ lower bound for the following verification problem: Given a 
	subgraph $ H $ of the communication network $ \overline{G} $ (which has $ 
	\overline{n} $ nodes and diameter $ D $) and two nodes $ s $ and $ t $, the 
	network needs to decide whether $ s $ and $ t $ are connected (i.e., lie in the 
	same connected component).
	In their lower bound construction, the distance between $ s $ and $ t $ in $ 
	\overline{G} $ is $ O (\log n) $.
	We show that any algorithm for solving Laplacian Systems up to small enough error 
	$ \epsilon = 1/\poly(n) $ can be used to give $ s $ and $ t $ the information 
	whether they are connected in $ H $ in additional $ 
	\operatorname{dist}_{\overline{G}} (s, t) = O (\log n) $ rounds.
	In particular, we exploit that such a solver can be used to compute an 
	approximation to the effective $s$-$t$ resistance.
	
	Define the weighted graph $ H' $ (which we view as a resistor network) as having 
	the same nodes and edges as $ \overline{G} $ and resistances $ r_e = 1 $ for every 
	edge $ e \in E (H) $ and $ r_e = n $ for every edge $ e \notin E (H) $.
	Let $ \chi_{s,t} $ be the $n$-dimensional vector is $ 1 $ at the coordinate 
	corresponding to $ s $, $ -1 $ at the coordinate corresponding to $ t $, and $ 0 $ 
	otherwise.
	It is well known~(see, e.g., \cite{ChristianoKMST11, Vishnoi12}) that $ \res_{H'} 
	(s, t) = \phi (s) - \phi (t) $ for any vector $ \phi $ satisfying $ \mL(H') \phi = 
	\chi_{s,t} $, where $ \phi (s) $ and $ \phi (t) $ are the values of the 
	coordinates corresponding to $ s $ and $ t $, respectively.
	Let $ \vec{\phi}' $ be an approximate solution with error $ \epsilon $ to the 
	linear system $\mL(H')\vec{\phi} = \vec{\chi}_{s,t} $, i.e.,
	\begin{equation}
	\norm{\vec{\phi'} - \mL\left( H' \right)^{\dag} \vec{\chi}_{s,t}}_{\mL\left( H' 
	\right)}
	\leq
	\epsilon \cdot \norm{\vec{\chi}_{s,t}}_{\mL\left( H' \right)^{\dag}} \, . 
	\label{eq:lower bound system approximation}
	\end{equation}
	Henceforth let $ \phi = \mL\left( H' \right)^{\dag} \vec{\chi}_{s,t} $.
	Thus, \eqref{eq:lower bound system approximation} is equivalent to the statement
	\begin{equation}
	\norm{\vec{\phi'} - \phi}_{\mL\left( H' \right)}
	\leq
	\epsilon \cdot \sqrt{\res_{H'} (s, t)} \, . \label{lower bound bound with 
	effective resistance}
	\end{equation}
	
	It is well-known that for any Laplacian matrix $\mL$ with integer resistances from 
	$ 1 $ to $ R = \poly (n) $ this first eigenvalue is $ \lambda_1 (\mL) = 0 $ the 
	second eigenvalue is bounded by $ \lambda_2 (\mL) \geq \tfrac{1}{\poly(n)} $ and 
	thus for any vector $ \vec{x} $ the following bounds relating the matrix norm to 
	the infinity norm hold:
	\begin{equation}
	\norm{\vec{x}}_\infty
	\leq \norm{\vec{x}}_2
	\leq \frac{1}{\sqrt{\norm{\mL^{\dag}}_2}} \norm{\vec{x}}_{\mL}
	= \frac{1}{\lambda_2 (\mL)} \norm{\vec{x}}_{\mL}
	\leq \poly (n) \norm{\vec{x}}_{\mL} \label{eq:lower bound bounds on norms}
	\end{equation}
	The combination of \eqref{lower bound bound with effective resistance} and 
	\eqref{eq:lower bound bounds on norms} together with the estimate $ \res_{H'} (s, 
	t) \leq n R $ gives
	\begin{equation}
	\norm{\vec{\phi'} - \phi}_\infty \leq \epsilon \cdot \poly (n) \, .\label{eq:lower 
	bound infinity norm}
	\end{equation}
	By setting $ \epsilon $ to a small enough value inversely polynomial in $ n $, 
	\eqref{eq:lower bound infinity norm} implies 
	\begin{equation*}
	\res_{H'} (s, t) - 0.25 \leq \phi' (s) - \phi' (t) \leq \res_{H'} (s, t) + 0.25 \, 
	.
	\end{equation*}

	In the rest of the proof, we argue that knowing the value $ \phi' (s) - \phi' (t) 
	$ (which can be made known to both $ s $ and $ t $ in $ 
	\operatorname{dist}_{\overline{G}} $ rounds) suffices to decide whether $ s $ and 
	$ t $ are connected in $ H $.
	If $ s $ and $ t $ are connected in $ H $, then -- since effective resistances 
	obey the triangle inequality -- the effective $s$-$t$ resistance in $ H' $ is 
	upper-bounded by the length of the shortest path between $ s $ and~$ t $ in $ H $, 
	i.e., $ \res_{H'} (s, t) \leq n - 1 $.
	If $ s $ and $ t $ are not connected in $ H $, then let $ S $ be the connected 
	component containing $ S $ and let $ e_1, \ldots, e_k $ (for some $ k \leq m \leq 
	n^2 $) be the edges leaving $ S $ in $ H $.
	By the Nash-Williams inequality (see, e.g., ~\cite{LyonsP16}, chapter 2.5), the 
	effective $s$-$t$ resistance is at least
	\begin{equation*}
	\res_{H'} (s, t) \geq \frac{1}{\sum_{i=1}^k \frac{1}{r (e_i)}} = 
	\frac{1}{\sum_{i=1}^k \frac{1}{(1 + \epsilon)^2 n^3}} \geq (1 + \epsilon)^2 n \, .
	\end{equation*}
	Thus, $ s $ and $ t $ are connected in $ H $ if and only if $ \phi' (s) - \phi' 
	(t) \leq n - 0.5 $.
\end{proof}

\section{Building Blocks for Distributed Minors}
\label{sec:MinorProofs}

Here we give the deferred proofs from Section~\ref{subsec:Communication}.
All of our algorithms are based on placing about $\sqrt{\overline{n}}$
special vertices, picked randomly so that any vertex is within a distance
of about $\sqrt{\overline{n}}$ from these, and then aggregating information
at these special vertices globally via a DFS tree in about
$\sqrt{\overline{n}} + D$ rounds.

For this, it is useful to define a tree decomposition scheme for the set
of overlapping trees used to connect the supervertices.

\begin{lemma}
\label{lemma:SpecialVertices}
There is an algorithm \SpecialVertices that takes an input a forest $F$
specified with mappings of vertices and edges into a graph $\overline{G}$
such that each vertex of $\overline{G}$ appears in at most $\rho$ trees
of $F$, and each edge of $\overline{G}$ is used in at most $\rho$
trees of $F$, and returns after
$O( \rho \sqrt{\overline{n}})$ rounds of communication
a collection of $O( \rho \sqrt{\overline{n}} \log{n})$
special vertices of $F$, labeled at their mapped vertices in $\overline{G}$,
such that with high probability,
for each $T$ in $F$, we have:
\begin{enumerate}
\item either the diameter of $T$ is at most $\sqrt{\overline{n}}$,
\item or for any vertex of $T$,
\begin{enumerate}
\item it can reach at most $2$ special vertices, without going through
more special vertices.
\item its distance in $T$ to closest special vertex is at most
$O(\sqrt{\overline{n}})$.
\end{enumerate}
\end{enumerate}
\end{lemma}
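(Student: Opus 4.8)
The algorithm is to sample each node of $F$ independently with probability $p := C_0 \log n / \sqrt{\overline{n}}$ for a sufficiently large constant $C_0$ (these are the \emph{random} special vertices), and to additionally designate as special the ``branch vertices'' (vertices of degree $\ge 3$) of the minimal subtree of each $T \in F$ spanning the random special vertices of $T$; call that subtree the \emph{skeleton} of $T$. The sampling needs no communication, and I will argue below that the rest — letting every vertex learn nearby special vertices and identifying the skeleton branch vertices — reduces to distance-capped explorations of radius $\Theta(\sqrt{\overline{n}})$ inside each $T$, which, since each vertex and edge of $\overline{G}$ lies in at most $\rho$ trees of $F$, costs $O(\rho\sqrt{\overline{n}})$ rounds. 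The size bound is immediate: $F$ has at most $\rho\overline{n} = \poly(n)$ nodes, so a Chernoff bound gives $O(p\rho\overline{n}) = O(\rho\sqrt{\overline{n}}\log n)$ random special vertices with high probability, and a skeleton on at most $k$ leaves has fewer than $k$ branch vertices, so the final count stays $O(\rho\sqrt{\overline{n}}\log n)$.

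The analytic core is a single \emph{path lemma}: with high probability every simple path of $F$ with at least $\sqrt{\overline{n}}$ vertices contains a random special vertex. A fixed such path misses all samples with probability $(1-p)^{\sqrt{\overline{n}}} \le e^{-p\sqrt{\overline{n}}} = n^{-C_0}$; there are at most $(\rho\overline{n})^2 = \poly(n)$ simple paths in $F$, so a union bound finishes for $C_0$ large. From this I derive the structural dichotomy. Fix $T \in F$; if $\operatorname{diam}(T) \le \sqrt{\overline{n}}$ we are in case 1. Otherwise the diameter path has $> \sqrt{\overline{n}}$ vertices, hence contains a sample, so $S_T \ne \emptyset$. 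For $v \in T$ let $C$ be the component of $T$ minus all special vertices containing $v$ (if $v$ is special the claims are trivial). Every simple path inside $C$ is sample-free, hence by the path lemma has $< \sqrt{\overline{n}}$ vertices, so $\operatorname{diam}(C) < \sqrt{\overline{n}}$; since $S_T \ne \emptyset$ and $T$ is connected, $C \ne T$, so some special $s$ is adjacent to $C$ and $\operatorname{dist}_T(v,s) \le \operatorname{diam}(C)+1 = O(\sqrt{\overline{n}})$, giving (b).

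For (a) I use the branch augmentation. If a component $C$ as above had three special vertices $s_1,s_2,s_3$ on its boundary, the minimal subtree of $T$ spanning $\{s_1,s_2,s_3\}$ lies inside the skeleton of $T$ (the skeleton is a connected subtree of $T$ containing all random special vertices and all skeleton branch vertices, hence $s_1,s_2,s_3$) and contains a vertex $w$ of degree $\ge 3$ sitting in the interior of $C$; then $w$ has degree $\ge 3$ in the skeleton as well, so $w$ is a branch vertex and therefore special, contradicting $w \in C$. Hence every component of $T$ minus the special set has at most two special vertices on its boundary, which is exactly (a) for non-special $v$ (the special vertices reachable from $v$ without passing another are precisely those adjacent to $C$); for a special $v$ the statement is only invoked downstream with $v$ itself serving as its own landmark.

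\textbf{The main obstacle} will be carrying out the construction in $O(\rho\sqrt{\overline{n}})$ rounds with no $D$-term — in particular identifying skeleton branch vertices without any global aggregation. The key is that if a $T$-subtree hanging off a vertex $w$ contains any sample, then by the path lemma it contains one within distance $\sqrt{\overline{n}}$ of $w$ (else the $w$-to-nearest-sample path would be a sample-free path with $> \sqrt{\overline{n}}$ vertices). Thus distance-capped min-flooding from the samples — each vertex repeatedly forwarding to each neighbor the smallest ``distance to a sample in this direction'' it has heard, capped at $\sqrt{\overline{n}}$ — lets every vertex learn its nearest special vertex (property (b)) and lets each $w$ count how many of its $T$-directions contain a sample, hence decide whether it is a branch vertex. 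This flooding converges in $O(\sqrt{\overline{n}})$ rounds inside each tree and carries $O(1)$ messages per tree-edge per round, so it costs $O(\rho\sqrt{\overline{n}})$ rounds overall; since all the Chernoff and union bounds above range over $\poly(n)$ events, the whole construction succeeds with high probability.
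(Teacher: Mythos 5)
Your construction and its analysis coincide with the paper's: both sample vertices with probability $\Theta(\log n/\sqrt{\overline n})$, propagate ``sample in this direction'' information for $O(\sqrt{\overline n})$ rounds per tree edge, and then promote to special every vertex that sees three or more directions leading to samples (which is exactly the set of branch vertices of the Steiner skeleton of the samples). Your writeup is somewhat more explicit about the key lemma (every tree path of $\sqrt{\overline n}$ vertices hits a sample w.h.p.) and about why bounded-depth flooding correctly identifies the branch vertices, but these are the same arguments the paper makes in compressed form.
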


\begin{algorithm}[ht]
\caption{Partition all trees of a forest into low diameter pieces
via special vertices \label{algorithm:TreePartition}}
\SetKwProg{Proc}{procedure}{}{}
\Proc{\SpecialVertices ($F$, $\overline{G}$)}{
  Sample $K$ by including each vertex in each tree of $F$
   with probability $\log \overline n / \overline n^{1/2}$.\\
  \For{$O( \sqrt{ \overline{n}})$ rounds}{
     Each vertex propagate to all its neighbors whether
     taking that edge towards it leads to a vertex in $K$.
  }
  Add all vertices that can reach special vertices in three or
  more directions to $K$.\\
}
\end{algorithm}

\begin{proof}
Consider \SpecialVertices in Algorithm~\ref{algorithm:TreePartition}.

The congestion bound gives that the total number of vertices
among the trees is at most $O(\rho \overline{n})$.
This means picking $O(\rho \sqrt{ \overline{n} }\log{\overline n})$ random vertices
from
\[
\bigcup_{v^{G} \in V\left( G \right)}
S^{G \rightarrow \overline{G}}(v^{G}),
\]
ensures that with high probability, the maximum distance to a special
vertex along any tree path is $O( \sqrt{ \overline{n}})$ with high probability.

This means that in $O( \sqrt{ \overline{n}})$ rounds of propagation,
we can find, for each edge in each tree,
whether there is a special vertex in either direction.
Formally, the local operation at each vertex is to check whether
there are 2 or more directions from it that lead to special vertices:
if there are, then all edges entering this vertex are part of paths that
reach special vertices.
Otherwise, all except that one direction that the path from a special vertex
came from can be continued, so we push `possible' along all except that direction.

By declaring all vertices with three or more edges leaving it that lead
to special vertices as special themselves.
As this only adds in the lowest common ancestors of the previous special
vertices, it only increases the number of special vertices by a constant factor.
This step is also completely local, so the total cost is dominated by the
propagation steps.
\end{proof}

This partition routing allows us to root all of the trees.

\begin{proof}(of Lemma~\ref{lemma:RootTrees})
  We first run the partition scheme \SpecialVertices on the forest that's
  the unions of the spanning trees of the supernodes of $G$.
  Lemma~\ref{lemma:SpecialVertices} gives that each resulting piece consisting of
  edges reachable to each other without going through special vertices have
  diameter at most $O( \sqrt{\overline{n}})$.
  So $O( \sqrt{\overline{n}})$ rounds of propagation lets the root vertex inform
  all nodes in its piece.
  In this number of rounds, we also propagate the ID of these special vertices,
  as well as distances to them,
  to all vertices in each piece using another $O( \sqrt{ \overline{n}})$ rounds
  of communication.

  Note this in particular allows the two special nodes on each piece to know
  their distance to each other.
  The amount of information aggregated at each special vertex may be large.
  Aggregating these information centrally along the BFS tree of $\overline{G}$ then
  allows us to find the distance from the root vertex to all the special vertices
  in their piece in a centralized manner.
  Once this information is propagated back, each edge can just be oriented towards
  the direction of the special vertex closer to the root, giving the desired
  orientations.
\end{proof}

This direction to the root is necessary for propagating information
that cannot be duplicated, such as the sum of values.
Using it, we can prove our main communication tool, Lemma~\ref{lemma:Communication}.

\begin{proof} (of Lemma~\ref{lemma:Communication})
  Once again we run the partition scheme \SpecialVertices.
  Lemma~\ref{lemma:SpecialVertices} ensures that all paths hit one such vertex after at
  most $O( \sqrt{ \overline{n}})$ steps.
  After that, we root the tree using Lemma~\ref{lemma:RootTrees}.

  For the push case, we repeatedly push information from vertices
  to their neighbors.
  Each such step costs $O(\rho)$ due to the vertex congestion bound.
  By the bound above, after $O( \sqrt{ \overline{n}})$  such steps
  (which costs a total of $O( \rho \sqrt{\overline{n}})$ rounds,
  this information either reached all nodes in the corresponding
  supernode, or some special vertex.

  Getting all info on special nodes to a center node
  (along with the ID of $v^{G}$ that they originated from)
  over a BFS tree takes
  $O(\rho \sqrt{ \overline{n} }\log{\overline n} + D)$ rounds,
  after which they can also be re-distributed to all special nodes.
  Then by the distance bound from Lemma~\ref{lemma:SpecialVertices}
 another $O( \rho \sqrt{\overline{n}})$ rounds of propagations
  to neighbors passes the information to everyone.

  The aggregation of sum or minimum follows similarly.
  We repeat $O( \rho \sqrt{\overline{n}})$ rounds of all non-special vertices
  propagating their sum, or min, up to their parents.
  In case of sum, once a value `floats' to its parent, it's set to $0$ at the
  current vertex so we do not over count.
  Finally, all the information at the special vertices are aggregated via the
  global BFS tree, and passed to the corresponding root vertices.
\end{proof}

\begin{proof} (of Lemma~\ref{lemma:MinorCompose})
Let
\begin{align*}
S^{G_1 \rightarrow G_2}
&
: V\left( G_1 \right) \rightarrow V\left( G_2 \right)^{*}\\
V_{map}^{G_1 \rightarrow G_2}
&
: V\left( G_1 \right) \rightarrow V\left( G_2 \right)\\
T^{G_1 \rightarrow G_2}
&
: V\left( G_1 \right) \rightarrow E\left( G_2 \right)^{*}\\
E_{map} ^{G_1 \rightarrow G_2}
&
: E\left( G_1 \right) \rightarrow E\left( G_2 \right)
\end{align*}
be the maps from $G_1$ to $G_2$, and similarly,
let the mapping from $G_2$ to $\overline{G}$ be:
\begin{align*}
S^{G_2 \rightarrow \overline{G}}
&
: V\left( G_2 \right) \rightarrow V\left( \overline{G} \right)^{*}\\
V_{map}^{G_2 \rightarrow \overline{G}}
&
: V\left( G_2 \right) \rightarrow V\left( \overline{G} \right)\\
T^{G_2 \rightarrow \overline{G}}
&
: V\left( G_2 \right) \rightarrow E\left( \overline{G} \right)^{*}\\
E_{map}^{G_2 \rightarrow \overline{G}}
&
: E\left( G_2 \right) \rightarrow E\left( \overline{G} \right)
\end{align*}

We will construct the mapping from $G$ to $\overline{G}$ from these.
The edge and vertex maps are directly by composition:
\begin{align*}
V_{map}^{G_1 \rightarrow \overline{G}}\left( v^{G_1} \right)
& =
V_{map}^{G_2 \rightarrow \overline{G}} \left(
V_{map}^{G_1 \rightarrow G_2} \left( v^{G_1} \right) \right)\\
E_{map}^{G_1 \rightarrow \overline{G}}\left( e^{G_1} \right)
& =
E_{map}^{G_2 \rightarrow \overline{G}} \left(
E_{map}^{G_1 \rightarrow G_2} \left( e^{G_1} \right) \right)
\end{align*}
The edge mapping is a direct transfer of the pre-images, locally per edge.
While the vertex label propagation is via one round of communication
along new supervertex, via Lemma~\ref{lemma:Communication}.

So we can focus on the construction of new supervertices and their spanning trees.
For $v^{G_1} \in V(G_1)$, we let
\[
S^{G_1 \rightarrow \overline{G}} \left( v^{G_1} \right)
=
\bigcup_{v^{G_2} \in S^{G_1 \rightarrow G_2}\left( v^{G_1} \right)}
S_{G_2 \rightarrow \overline{G}}\left( v^{G_2}\right)
\]
with corresponding spanning tree a subset of the edges
\[
\bigcup_{v^{G_2} \in S^{G_1 \rightarrow G_2}\left( v^{G_1} \right)}
T^{G_2 \rightarrow \overline{G}}\left( v^{G_2}\right).
\]

To compute this union we have each root vertex of each supervertex corresponding
to some $v^{G_2}$ inform the entire supervertex of the new ID in $G_1$.
As each vertex of $G_2$ corresponds to the image of at most $\rho_1$ $v^{G_1}$s,
this mapping takes
$O(\rho_1 \rho_2 ( \sqrt{\overline{n}} \log{n} + D ))$ iterations.

Then the edges of $\overline{G}$ with the new labels (of vertex ID from $G_1$)
gives the corresponding supervertices.
That is, $S^{G_1 \rightarrow \overline{G}}(v^{G_1})$ is simply the set of vertices
that received the label $v^{G_1}$ after we propagated from the root vertices
of $G_2$ in $\overline{G}$.

We then need to find spanning forests among these unions of trees.
We do so with a variant of Brouvka's algorithm, combined with parallel
tree contraction.
Specifically, we iterate the following $O(\log{n})$ times:
\begin{enumerate}
\item Each remaining vertex (in each of the super vertices) pick a random priority.
\item Each vertex identify highest priority neighbor, computed
on the minor with some edges already contracted using Lemma~\ref{lemma:Communication}.
\item Contract either:
\begin{enumerate}
\item all the leaf vertices (degree $1$ nodes),
\item a vertex disjoint subset of edges with both end points having degree $2$.
\end{enumerate}
\end{enumerate}
In the second case of contracting edges with both endpoints having degree $2$,
we can find an independent set of such edges whose total size is at least a
constant factor of all such edges by simply picking a random subset with
probability $1 / 5$, and dropping the ones where another end point is picked.

This procedure reduces the number of vertices in each component a constant factor
in expectation.
This is because the first step ensures the edges found form a tree
with edge count at least half the number of vertices.
Then parallel tree contraction~\cite{MR89} ensures that either the number
of leaves, or the number of edges with both end points degree $2$ is at
least a constant factor of the tree size.
Contracting the larger set of these then gives
the desired constant factor progress, so it terminates in $O(\log{n})$ rounds.

Note that in subsequent rounds, the edges already identified to be
part of $T^{G_1 \rightarrow \overline{G}}(v^{G})$ form a spanning forest,
and we're working on the minor with this forest contracted.
This means we need to invoke Lemma~\ref{lemma:Communication} repeatedly
to do the neighborhood aggregations on this contracted graph.
\end{proof}

\begin{proof} (of Corollary~\ref{corollary:ActualMinor})
We want to simplify $F$ into a sequence of subgraphs that
have simple $1$-minor distributions into $G$.
After that, we can invoke Lemma~\ref{lemma:MinorCompose} to make progress.

We repeat the same tree contraction procedure used for finding
spanning forests in Lemma~\ref{lemma:MinorCompose} above.
It gives that at each step, we're computing $G / F$ for a set
of vertex disjoint stars $F$.

Given such a $F$, we can then construct a $1$-minor distribution of
$G / F$ into $G$ by:
\begin{enumerate}
\item Having the center vertex generate the new vertex ID.
\item Propagate this ID to vertices in the corresponding supervertex
in $\overline{G}$.
\item Have edges that declared themselves part of $F$ pass this info
from one end point to the other.
\item All leaf supervertices then pull this new vertex ID into their
root vertex.
\end{enumerate}
after which invoking Lemma~\ref{lemma:MinorCompose} gives the minor.
This halves the number of
\end{proof}

\section{Max flow and Minimum Cost Flow Algorithm}\label{appendix:flow}
In this section, we give the missing details of the max flow algorithm and minimum cost flow algorithm as Section~\ref{sec:Implications}.
\subsection{Max Flow Algorithm}\label{sec:maxflow_algo}
In this section, we give the missing subroutines of Algorithm~\ref{alg:madry}. The subroutines \Augmentation, \Fixing and \Boosting are shown in Algorithm \ref{augmentation}, \ref{fixing} and \ref{boosting} respectively.

\begin{algorithm}[H]
\KwIn{directed graph $G_0=(V, E_0,\vec{u})$ with each $e\in E_0$ having two non-negative integer capacities $u_{e}^{-}$ and $u_{e}^{+}$; $|V|={n}$ and $|E_0|={m}$; source $s$ and sink $t$; the largest integer capacity $U$; target flow value $F\ge 0$;}
\caption{\MaxFlow($G_0$, $s$, $t$, $U$, $F$)}\label{alg:madry}
\tcc{Preconditioning Edges}
Add $m$ undirected edges $(t, s)$ with forward and backward capacities $2U$ to $G_0$\;
\tcc{Initialization}
\For{each $e=(u,v)\in E_0$}{
replace $e$ by three undirected edges $(u,v)$, $(s,v)$ and $(u,t)$ whose capacities are $u_{e}$\;}
Let the new graph be $G=(V,E)$\;
Initialize $\vec{f}\leftarrow\vec{0}$ and $\vec{y}\leftarrow\vec{0}$\;
\tcc{Progress Step}
$\vec{\widetilde{f}}, \vec{\widehat{f}}, \vec{\widehat{y}}\leftarrow\Augmentation(G, s, t, F)$\;
Compute $\vec \rho$ by letting $\rho_{e}\leftarrow\frac{\widetilde{f}_{e}}{\min\{u_{e}^{+}-f_{e}, u_{e}^{-}+f_{e}\}}$\;
$\vec{f}, \vec{y}\leftarrow\Fixing\left(G, \vec{\widehat{f}}, \vec{\widehat{y}}\right)$\;
$\eta\leftarrow\frac{1}{14}-\frac{1}{7}\log_{m}{U}-O(\log\log(mU))$, 
$\widehat{\delta}\leftarrow\frac{1}{m^{\frac{1}{2}-\eta}}$\;
\For{$t=1$ \KwTo $100\cdot\frac{1}{\widehat{\delta}}\cdot\log{U}$}{
\eIf{$\left\|\vec{\rho}\right\|_{3}\le\frac{m^{\frac{1}{2}-\eta}}{33(1-\alpha)}$}
{$\delta\leftarrow\frac{1}{33(1-\alpha)\left\|\vec{\rho}\right\|_{3}}$\;
$\vec{\widetilde{f}}, \vec{\widehat{f}}, \vec{\widehat{y}}\leftarrow\Augmentation(G, s, t, F)$\;
Compute $\vec \rho$ by letting
$\rho_{e}\leftarrow\frac{\widetilde{f}_{e}}{\min\{u_{e}^{+}-f_{e}, u_{e}^{-}+f_{e}\}}$\;
$\vec{f}, \vec{y}\leftarrow\Fixing\left(G, \vec{\widehat{f}}, \vec{\widehat{y}}\right)$\;
}
{let $S^{*}$ be the edge set that contains the $m^{4\eta}$ edges with the largest $|\rho_{e}|$\;
$G\leftarrow\Boosting\left(G, S^{*}, U, \vec{f}, \vec{y}\right)$\;
}
}
$\vec{f}\leftarrow\FlowRounding(G,\vec{f}, s, t)$\;
\While{there is an augmenting path from $s$ to $t$ with respect to $\vec f$ for $G$} {
augment an augmenting path for $\vec f$\;
}
\end{algorithm}
\newpage

\begin{algorithm}[H]
\caption{\Augmentation($G$, $s$, $t$, $F$)}
\label{augmentation}
For each $e\in E$, let $r_{e}\leftarrow\frac{1}{(u_{e}^{+}-f_{e})^{2}}+\frac{1}{(u_{e}^{-}+f_{e})^{2}}$ and $w_{e}\leftarrow\frac{1}{r_{e}}$\;
Solve Laplacian linear system $\mL(G)\vec{\widetilde{\phi}}=F\cdot\vec{\chi}_{s,t}$ where $\vec{\chi}_{s,t}$ is the vector whose entry is $-1$ (resp. $1$) at vertex $s$ (resp. $t$) and $0$ otherwise\;
For each $e=(u,v)\in E$, let $\widetilde{f}_{e}\leftarrow\frac{\widetilde{\phi}_{v}-\widetilde{\phi}_{u}}{r_{e}}$ and $\widehat{f}_{e}\leftarrow f_{e}+\delta\widetilde{f}_{e}$\;
For each $v\in V$, let $\widehat{y}_{v}\leftarrow y_{v}+\delta\widetilde{\phi}_{v}$\;
\Return $\vec{\widetilde{f}}$, $\vec{\widehat{f}}$, $\vec{\widehat{y}}$\;
\end{algorithm}

\begin{algorithm}[H]
\caption{\Fixing$\left(G, \vec{\widehat{f}}, \vec{\widehat{y}}\right)$}
\label{fixing}
For each $e=(u,v)\in E$, let $r_{e}\leftarrow\frac{1}{\left(u_{e}^{+}-\widehat{f}_{e}\right)^{2}}+\frac{1}{\left(u_{e}^{-}+\widehat{f}_{e}\right)^{2}}$, $w_{e}\leftarrow\frac{1}{r_{e}}$ and
$\theta_{e}\leftarrow w_{e}\left[\left(\widehat{y}_{v}-\widehat{y}_{u}\right)-\left(\frac{1}{u_{e}^{+}-\widehat{f}_{e}}-\frac{1}{u_{e}^{-}-\widehat{f}_{e}}\right)\right]$\;
$\vec{f'}\leftarrow\vec{\widehat{f}}+\vec{\theta}$\;
Let $\vec{\widehat{\delta}}$ be $\vec{\theta}$'s residue vector\;
For each $e\in E$, let $r_{e}\leftarrow\frac{1}{(u_{e}^{+}-f'_{e})^{2}}+\frac{1}{(u_{e}^{-}+f'_{e})^{2}}$ and $w_{e}\leftarrow\frac{1}{r_{e}}$\;
Solve Laplacian linear system $\mL(G)\vec{\phi'}=-\vec{\widehat{\delta}}$\;
For each $e=(u,v)\in E$, let $\theta_{e}'\leftarrow\frac{\phi_{v}'-\phi_{u}'}{r_{e}}$\;
For each $e\in E$, $f_{e}\leftarrow f'_{e}+\theta_{e}'$\;
For each vertex $v\in V$, $y_{v}\leftarrow \widehat{y}_{v}+\phi_{v}'$\;
\Return $\vec{f}$, $\vec{y}$\;
\end{algorithm}

\begin{algorithm}[H]
\caption{\Boosting$\left(G, S^{*}, U, \vec{f}, \vec{y}\right)$}
\label{boosting}
\For{each edge $e=(u,v)\in S^{*}$}{
$\beta(e)\leftarrow 2+\lceil{\frac{2U}{\min\{u_{e}^{+}-f_{e}, u_{e}^{-}+f_{e}\}}}\rceil$\;
replace $e$ with path $u\leadsto v$ that consists of $\beta(e)$ edges $e_{1}, \cdots, e_{\beta(e)}$ oriented towards $v$ and $\beta(e)+1$ vertices $v_{0}=u, v_{1},\cdots, v_{\beta(e)-1}, v_{\beta(e)}=v$\;
$e_{1}, e_{2}\leftarrow e$\;
for $3\le i\le\beta(e)$, let $u_{e_{i}}^{+}\leftarrow +\infty$ and $u_{e_{i}}^{-}\leftarrow\left(\frac{1}{u_{e}^{+}-f_{e}}-\frac{1}{u_{e}^{-}+f_{e}}\right)^{-1}(\beta(e)-2)-f_{e}$\;
for each $1\le i\le\beta(e)$, let $f_{e_{i}}\leftarrow f_{e}$\;
$y_{v_{0}}\leftarrow y_{u}$\;
$y_{v_{\beta(e)}}\leftarrow y_{v}$\;
$y_{v_{1}}\leftarrow y_{v}$\;
$y_{v_{2}}\leftarrow y_{v}+\frac{1}{u_{e}^{+}-f_{e}}-\frac{1}{u_{e}^{-}+f_{e}}$\;
    for $3\le i\le\beta(e)$, set $y_{v_{3}},\cdots,y_{v_{\beta(e)-1}}$ such that $y_{v_{i}}-y_{v_{i-1}}=-\frac{1}{\beta(e)-2}\left(\frac{1}{u_{e}^{+}-f_{e}}-\frac{1}{u_{e}^{-}+f_{e}}\right)$\;
}
Update $G$\;
\end{algorithm}
\subsection{Unit Capacity Minimum Cost Flow Algorithm}
\label{sec:min-cost-flow-algo}
We give the detailed unit capacity minimum cost flow algorithm (see Algorithm \ref{algo:min-cost-flow-final}) proposed by Cohen et al.~\cite{CMSV17} in this subsection. The subroutines \Initialization, \Perturbation, \Progress and \Repairing are shown in Algorithm \ref{algo:initializationbipartitegraph_app}, \ref{perturbation}, \ref{progress} and \ref{repairing} respectively.

\begin{algorithm}[H]
\KwIn{directed graph ${G_0}=({V_0},{E_0},\vec{c}_0)$ with each edge having unit capacity and cost $\vec{c}_0$; $|{V_0}|={n}$ and $|{E_0}|={m}$; integral demand vector $\vec{\sigma}$; the absolute maximum cost $W$;}
\caption{\MinCostFlow($G$, $\vec{\sigma}$, $W$)}
\label{algo:min-cost-flow-final}
$G = (P\cup Q, E), \vec b, \vec{f}, \vec{y}, \vec{s}, \vec{\nu}, \widehat{\mu}$, $c_{\rho}, c_{T}, \eta\assign \Initialization(G_0, \vec \sigma)$\;
Add a new vertex $v_0$ and undirected edges $(v_0, v)$ for every $v\in P$ to $G$\;
\For{$i=1$ \KwTo $c_{T}\cdot m^{1/2-3\eta}$
}{
\For{each $v\in P$}
{\label{line:mincost17_app}
set resistance of edge $(v_0, v)$ for each $v\in P$ to be $r_{v_{0}v}\assign\frac{m^{1+2\eta}}{a(v)}$, where $a(v)\assign\sum_{u\in Q, e=(v,u)\in E}\nu_e+\nu_{\overline{e}}$; \Comment{$\overline{e}=(\overline{v},u)$ is $e$'s partner edge that is the unique edge sharing one common vertex from $Q$.}
\label{line:mincost19_app}
}
\For{$j=1$ \KwTo $m^{2\eta}$}{
\While{$\left\|\vec{\rho}\right\|_{\vec{\nu},3}>c_{\rho}\cdot m^{1/2-\eta}$}{
	$\vec{\rho}, \vec{y}, \vec{s}, \vec{\nu}\leftarrow \Perturbation(G, \vec{\rho}, \vec{f}, \vec{y}, \vec{s}, \vec{\nu})$\;}
	$\vec{f}, \vec{s}, \vec{\rho}, \widehat{\mu} \leftarrow \Progress(G, \vec{\sigma}, \vec{f}, \vec{\nu})$\;
}}
\Repairing($G$, $\vec{f}$, $\vec{y}$)\;
\end{algorithm}

\newpage
\begin{algorithm}[H]
\caption{\Initialization($G$, $\vec{\sigma}$) \label{algo:initializationbipartitegraph_app}}
Create a new vertex $v_{aux}$ with $\sigma(v_{aux})=0$\;
\For{each $v\in{V_0}$}{
$t(v)\leftarrow\sigma(v)+\frac{1}{2}\mathrm{deg}_{in}^{{G_0}}(v)-\frac{1}{2}\mathrm{deg}_{out}^{{G_0}}(v)$\;
\lIf{$t(v)>0$}{
construct $2t(v)$ parallel edges $(v, v_{aux})$ with costs $\left\|\vec{{c}}_0\right\|_1$
}
\uElseIf{$t(v)<0$}{
construct $|2t(v)|$ parallel edges $(v_{aux}, v)$ with costs $\left\|\vec{{c}}_0\right\|_1$\;
}}
Let the new graph be $G_1=(V_1, E_1, \vec{c}_1)$\;
Initialize the bipartite graph $G=(P\cup Q, E, \vec{c})$ with $E\leftarrow\emptyset
$, $P\leftarrow V_1$ and $Q\leftarrow\{e_{uv}\mid(u,v)\in E_1\}$ where $e_{uv}$ is a vertex corresponding to edge $(u,v)\in E_1$\;

\For {each $(u,v)\in E_1$} {
let $E\leftarrow E\cup\{(u, e_{uv}), (v, e_{uv})\}$ with $c(u, e_{uv})=c_1(u, v)$ and $c(v, e_{uv})=0$, and set $b(u)\leftarrow\sigma(u)+\mathrm{deg}_{in}^{{G_1}}(u)$, $b(v)\leftarrow\sigma(v)+\mathrm{deg}_{in}^{{G_1}}(v)$ and $b(e_{uv})\leftarrow 1$\;
}
For each $v\in P$, set $y_v\leftarrow\left\|\vec{c}\right\|_{\infty}$, and for each $v\notin P$, set $y_v\leftarrow 0$\;
For each $e=(u,v)\in E$, set $f_e\leftarrow\frac{1}{2}$, $s_e\leftarrow c_e+y_u-y_v$ and $\nu_e\leftarrow\frac{s_e}{2\left\|\vec{c}\right\|_{\infty}}$\;
Set $\widehat{\mu}\leftarrow\left\|\vec{c}\right\|_{\infty}$,
$c_{\rho}\leftarrow 400\sqrt{3}\cdot\log^{1/3}{W}$,
$c_{T}\assign 3c_{\rho}\log{W}$ and 
$\eta\leftarrow\frac{1}{14}$\;
\Return $G$, $\vec b$, $\vec{f}$, $\vec{y}$, $\vec{s}$, $\vec{\nu}$, $\widehat{\mu}$, $c_{\rho}$, $c_{T}$ and $\eta$\;
\end{algorithm}

\begin{algorithm}[H]
\caption{\Perturbation($G$, $\vec{\rho}$, $\vec{f}$, $\vec{y}$, $\vec{s}$, $\vec{\nu}$)}
\label{perturbation}
\For{each $v\in Q$}{
let $e=(u,v)$ and $\overline{e}=(\overline{u}, v)$\;
$y_v\assign y_v-s_e$\;
$\nu_e\assign 2\nu_e$\;
$\nu_{\overline{e}}\assign\nu_{\overline{e}}+\frac{\nu_{e}f_{\overline{e}}}{f_e}$\;
}
\end{algorithm}

\begin{algorithm}[H]
\caption{\Progress($G$, $\vec{\sigma}$, $\vec{f}$, $\vec{\nu}$)}
\label{progress}
For each $e\in E$, let $r_{e}\leftarrow \frac{\nu_{e}}{f_{e}^{2}}$\;
Solve Laplacian linear system $\mL(G)\vec{\widehat{\phi}}=\vec{\sigma}$\;
For each $e=(u,v)\in E$, let $\widehat{f}_{e}\leftarrow\frac{\widehat{\phi}_{v}-\widehat{\phi}_{u}}{r_{e}}$
and $\rho_{e}\leftarrow\frac{|\widehat{f}_{e}|}{f_{e}}$\;
$\delta\leftarrow\min\left\{\frac{1}{8\left\|\vec{\rho}\right\|_{\vec{\nu},4}},\frac{1}{8}\right\}$\;
Update $f_{e}'\leftarrow(1-\delta)f_{e}+\delta\widehat{f}_{e}$
and
$s_{e}'\leftarrow s_{e}-\frac{\delta}{1-\delta}(\widehat{\phi}_{v}-\widehat{\phi}_{u})$\;
For each $e\in E$, let $f_{e}^{\#}\leftarrow\frac{(1-\delta )f_{e}s_{e}}{s_{e}'}$\;
Obtain the flow vector $\vec{\sigma'}$ corresponding to the residue $\vec{f'}-\vec{f}^{\#}$\;
For each $e\in E$, let $r_{e}\leftarrow\frac{s_{e}'^{2}}{(1-\delta)f_{e}s_{e}}$\;
Solve Laplacian linear system $\mL(G)\vec{\widetilde{\phi}}=\vec{\sigma'}$\;
For each $e=(u,v)\in E$, let $\widetilde{f}_{e}\leftarrow\frac{\widetilde{\phi}_{v}-\widetilde{\phi}_{u}}{r_{e}}$\;
Update $f_{e}\leftarrow f_{e}^{\#}+\widetilde{f}_{e}$ and $s_{e}\leftarrow s_{e}'-\frac{s_{e}'\widetilde{f}_{e}}{f_{e}^{\#}}$\;
\end{algorithm}

\newpage
\begin{algorithm}
\caption{\Repairing($G$, $\vec{f}$, $\vec{y}$)}
\label{repairing}
Let $\vec{b}^+$ be the demand vector corresponding to the current flow $\vec{f}$\;
For each $v\in P\cup Q$, set $b_{v}^{\le}\leftarrow\min(b_{v},b_{v}^{+})$\;
For each $v\in P\cup Q$, if $f(E(v))>b_{v}^{\le}$, set $\vec f$ on $E(v)$ such that $f(E(v))= b_{v}^{\le}$, and let the resulting vector be $\vec{f}^{\le}$\;
Add source $s$ and sink $t$ to $G$, and 
connect $s$ to each $v\in P$ with $f^{\le}_{sv}\leftarrow f^{\le}(E(v))$, and connect each $v\in Q$ to $t$ with $f^\le_{vt}\leftarrow f^{\le}(E(v))$ in $G$\;
$\vec M\leftarrow\FlowRounding(G, \vec{f}^{\le}, s, t)$\;
Remove $s, t$ and related coordinates on $\vec f^\le$ and $\vec M$ from $G$\;
\For{$i=1$ to $\widetilde{O}(m^{3/7})$}{
construct graph $\overrightarrow{G}_{M}=(P\cup Q, E_M, \widetilde{c}_M)$ using $G$, $\vec{M}$ and $\vec{\widetilde{c}}$ such that for each $e=(u,v)\in E$, $\widetilde{c}_e=c_e-y_u-y_v$ and 
$E_M=\{(u, v)\in E\mid u\in P, v\in Q\}\cup\{(u,v)\mid u\in Q, v\in P, M_{uv}\neq 0\}$, $\widetilde{c}_M(u,v)=\left\{\begin{array}{ll}
     \widetilde{c}_{uv},&u\in P, v\in Q  \\
     -\widetilde{c}_{uv},&u\in Q, v\in P 
\end{array}\right.$\;
set $F_{M}\leftarrow\{v\in P\cup Q\mid M(v)<b_{v}\}$\;
compute a shortest path $\pi$ in $\overrightarrow{G}_{M}$ from $P\cap F_{M}$ to $Q\cap F_{M}$\;
\tcc{$\mathcal{D}_{\overrightarrow{G}_{M}}(P,u)$ is the distance from $P$ to $u$ in $\overrightarrow{G}_{M}$}
\tcc{Edges that are reachable in $\overrightarrow{G}_M$ from $P \cap F_M$ have non-negative weights $\widetilde{c}_e$}
\For{$u\in P\cup Q$}{
\If{$u$ can be reached from $P$ in $\overrightarrow{G}_{M}$}{
\eIf{$u\in P$}{$y_{u}\leftarrow y_{u}-\mathcal{D}_{\overrightarrow{G}_{M}}(P,u)$\;
}{$y_{u}\leftarrow y_{u}+\mathcal{D}_{\overrightarrow{G}_{M}}(P,u)$\;}
}}
augment $\vec{M}$ using the augmenting path $\pi$\;
}
\Return $\vec{M}$\;
\end{algorithm}

\end{document}